%% file: main.tex
\documentclass[11pt]{article}
\usepackage[margin=2.5cm]{geometry}
\usepackage{graphicx,amsmath,amsfonts,amssymb,amsthm,bbold,dsfont,authblk,cite,xcolor,makecell,enumerate}
\usepackage{mathtools}
\usepackage{booktabs,array}
\usepackage[normalem]{ulem}

\usepackage{thmtools}
\declaretheorem[name=Theorem, numberwithin=section]{theorem}

\declaretheorem[name=Lemma, sibling=theorem]{lemma}
\declaretheorem[name=Definition, sibling=theorem]{definition} 
\declaretheorem[name=Corollary, sibling=theorem]{corollary}

\declaretheorem[name=Remark, sibling=theorem]{remark}
\declaretheorem[name=Claim, sibling=theorem]{claim}
\usepackage[breakable]{tcolorbox}
\newtcolorbox[use counter*=theorem]{algorithm}[2][]{
    breakable,
    colback=black!1,
    colframe=black!55,
    colbacktitle=black!8,
    coltitle=black,
    fonttitle=\bfseries,
    title={Algorithm~\thetcbcounter\quad #2},
    title after break={Algorithm~\thetcbcounter\quad #2\ (continued)},
    arc=0pt,
    boxrule=0.5pt,
    boxsep=0pt,
    left=9pt,right=9pt,top=7pt,bottom=7pt,
    toptitle=5pt,bottomtitle=5pt,
    before skip=12pt,after skip=12pt,
    lines before break=4,
    #1
}

\usepackage[colorlinks]{hyperref}
\usepackage{braket}
\usepackage[mathscr]{euscript}
\usepackage{tikz}
\usepackage{pgfplots}
\pgfplotsset{compat=1.18}
\usepgfplotslibrary{groupplots}
\usepackage{pgfmath}
\usepackage{quantikz}
\input{figures/figure-style}
\usepackage{bm}
\usepackage[labelfont=bf]{caption}
\usepackage{xfrac}
\definecolor{NiceOrange}{rgb}{.9,0.55,0}
\definecolor{NicePurple}{rgb}{0.3,0.1,1}
\definecolor{DarkBlue}{rgb}{0.1,0.1,0.5}
\hypersetup{
    unicode=false,          
    pdftoolbar=true,        
    pdfmenubar=true,        
    pdffitwindow=false,      
    pdfnewwindow=true,      
    colorlinks=true,       
    linkcolor=DarkBlue,          
    citecolor=NicePurple,        
    filecolor=NiceOrange,      
    urlcolor=DarkBlue,          
    pdftitle={},
    pdfauthor={},
}
\usepackage{listings}
\newcommand{\poly}{\mathsf{poly}}

\newcommand{\QFT}{\mathrm{QFT}}

\newcommand{\FF}{\mathbb{F}}
\newcommand{\F}{\mathbb{F}}
\newcommand{\ZZ}{\mathbb{Z}}
\newcommand{\EE}{\mathbb{E}}

\newcommand{\id}{\mathbb{I}}

\newcommand{\negl}{\mathsf{negl}}

\newcommand{\Tr}{\mathrm{Tr}}
\newcommand{\Var}{\mathrm{Var}}

\DeclareMathOperator{\im}{im}
\DeclareMathOperator{\spann}{span}
\DeclareMathOperator{\diag}{diag}
\DeclareMathOperator{\rank}{rank}
\def\one{\mathbf{1}}
\def\rmcsp{\mathsf{RMSupp}}
\def\rmsyn{\mathsf{RMSyndrome}}
\def\supp{\mathrm{supp}}
\def\calD{\mathcal{D}}

\def\cnt{\mathsf{Count}}
\def\cntexp{\mathsf{CntExp}}
\def\nuba{\nu, \{B_r\}, \{\alpha_r\}}

\def\class{\mathsf{class}}
\def\calK{\mathcal{K}}

\def\dec{\mathsf{dec}}
\def\prep{\mathsf{prep}}
\def\sparse{\mathsf{sparse}}
\def\wt{\mathsf{wt}}
\def\RR{\mathbb{R}}

\def\EFW{\mathsf{EFW}}
\def\src{\mathsf{src}}
\def\tar{\mathsf{tar}}
\def\hv{\mathbf{h}}
\def\dist{\mathsf{dist}}

\def\GRS{\mathrm{GRS}}
\def\RS{\mathrm{RS}}
\def\RM{\mathrm{RM}}
\def\FRS{\mathrm{FRS}}
\def\QDP{\mathsf{QDP}}
\def\CIP{\mathsf{CIP}}
\def\err{\mathsf{err}}

\newcommand{\norm}[1]{\left\| {#1} \right\|}

\def\cv{\mathbf{c}}
\def\bv{\mathbf{b}}
\def\gv{\mathbf{g}}
\def\dv{\mathbf{d}}
\def\ev{\mathbf{e}}
\def\xv{\mathbf{x}}
\def\zv{\mathbf{z}}
\def\av{\mathbf{a}}
\def\yv{\mathbf{y}}
\def\vv{\mathbf{v}}
\def\sv{\mathbf{s}}
\def\wv{\mathbf{w}}
\def\tv{\mathbf{t}}
\def\uv{\mathbf{u}}

\def\matA{\mathbf{A}}
\def\matB{\mathbf{B}}

\allowdisplaybreaks

\def\tpoiss{\mathsf{TiltPoiss}}

\newif\ifnotes
\notesfalse

\title{Quantum Algorithms for OPI Variants Beyond Locality and Classical Decodability}
\author[1,2]{Seyoon Ragavan\footnote{\url{sragavan@mit.edu}}}
\author[1]{Noah Shutty\footnote{\url{noahshutty@gmail.com}}}
\affil[1]{\small{\it{Google Quantum AI, Venice, CA 90291}}}
\affil[2]{\small{\it{Computer Science and Artificial Intelligence Laboratory, MIT, Cambridge, MA 02139}}}
\date{\today}

\begin{document}

\maketitle

\begin{abstract}
    Regev’s reduction is a quantum algorithmic framework for finding codewords satisfying nonlinear constraints by decoding the dual code. To date, applications that have not been dequantized have relied on efficient classical decoders and coordinate-wise constraints specifying a set of allowed values for each coordinate. We push past these two restrictions via two separate contributions.

    Our first contribution uses a quantum decoder to find solutions $\mathbf{y}\in(\mathbb{F}_q\setminus\{0\})^m$ to $\mathbf{B}\mathbf{y}=0$, where $\mathbf{B}\in\mathbb{F}_q^{n\times m}$. For fixed prime $q>2$, Chen, Liu, and Zhandry (Eurocrypt 2022) solve this problem for random matrices with $m=\Omega(n^2)$, a regime now covered by classical algorithms. We adapt their template to codes (spanned by the rows of $\mathbf{B}$) with a ``two-fold multiplication property'': the coordinate-wise products of pairs of codewords span a space of dimension much smaller than $m$. Under suitable distance conditions, this algebraic structure of $\mathbf{B}$ allows us to solve instances with $m\leq n^{2-\Omega(1)}$, beyond the established guarantees of the aforementioned classical algorithms. We also give an efficient classical algorithm under a stronger three-fold multiplication property, leaving intermediate regimes as candidates for quantum advantage. One candidate instantiation takes the row span of $\mathbf{B}$ to be a dimension-$n$ Reed–Muller code punctured at $m$ random evaluation points. Equivalently, our algorithm finds a vector from the dual Reed--Muller code that is nonzero at the $m$ selected points and zero everywhere else, which can be thought of as solving a multivariate variant of OPI. Our Reed--Muller result, in using quantum decoders, extends beyond regimes where efficient \emph{classical} decoders are known.

    Our second contribution retains classical decoding but allows global constraints on symbol frequencies. We study “histogram-local” constraints, which specify the allowed numbers of occurrences of each symbol; for example, a ``max-load'' constraint permits each symbol to appear at most $T$ times, for a constant $T$. For broad families of these constraints, we show that a uniformly random satisfying vector remains satisfying with constant probability after resampling a uniformly random coordinate. This stability provides sufficient Fourier mass at decodable Hamming weights, yielding efficient quantum algorithms for variants of optimal polynomial intersection (OPI) that combine coordinate-wise and histogram-local constraints. Adapting the methods of Yamakawa--Zhandry (FOCS 2022), we prove a quantum–classical separation for these problems relative to a classical random oracle.
\end{abstract}

\clearpage
\tableofcontents

\clearpage
\section{Introduction}\label{sec:introduction}

A \emph{code intersection problem} asks the algorithm to find a vector that belongs to both a linear code $C\subseteq\mathbb{F}_q^m$ and a set $A\subseteq\mathbb{F}_q^m$ specified by nonlinear constraints. We assume that membership in $A$ is efficiently verifiable, and are particularly interested in settings where vectors in $A$ are easy to construct. Finding a vector in the intersection can nevertheless be hard: the problem is $\mathsf{NP}$-hard even when $q=3$ and $S=S_1\times\cdots\times S_m$ is a Cartesian product. We ask when quantum algorithms can exploit the structure of the code and constraints to solve this problem more efficiently than classical algorithms.

Regev's reduction~\cite{Regev05}, also known as decoded quantum interferometry~\cite{jordan2024optimization}, provides an exciting paradigm for obtaining such quantum algorithms. Given an efficiently preparable quantum state $\ket{\psi}$ supported on $A$, it reduces code intersection to a \emph{quantum decoding problem} for $C^\perp$: recover a uniformly random codeword $\mathbf{c}\in C^\perp$ from the state $X^{\mathbf{c}}\QFT^\dag\ket{\psi}$, where $X^{\mathbf{c}}$ adds $\mathbf{c}$ in the computational basis~\cite{chailloux2023quantum}. Yamakawa--Zhandry used this framework to obtain a provable quantum speedup for code intersection relative to a random oracle~\cite{yamakawa2024verifiable}. However, applications that have not been dequantized have so far shared two restrictions:

\begin{itemize}
    \item \textbf{Classical decodability:} the quantum decoding problem is solved by measuring the state in the standard basis to obtain a noisy codeword $\cv+\ev$ (for $\cv \in C^\perp$) and applying an efficient classical decoder for $C^\perp$ to the resulting noisy codeword. In our view, this is the core reason why many successful applications of Regev's reduction work with algebraic codes based on low-degree polynomials.

    It is surprising and unnatural to us that the potential of an inherently quantum decoder has not been realized in the form of stronger algorithms. Earlier attempts to obtain speedups this way~\cite{clz21,shutty2026optimizationusinglocallyquantumdecoders} have been matched by classical algorithms~\cite{ivanyos2012nil2,ivanyosSantha2015diagonal,imran2024zerosum,kothari2025sis,shutty2026optimizationusinglocallyquantumdecoders}.

    \item \textbf{Coordinate-wise nonlinear constraints:} the nonlinear conditions prescribe an allowed set of values for each coordinate (we sometimes refer to such nonlinear constraints as ``local''). In this work, we focus on the setting where every condition is required to hold and thus the set $A$ is a Cartesian product $A_1 \times \ldots \times A_m$. Other works~\cite{jordan2024optimization,jo2026efficient,horinaga2026worstcase} have considered the setting of satisfying a large fraction of constraints, but these still use coordinate-wise predicates and the underlying algorithms still work with a state $\ket{\psi}$ that is essentially in tensor product.
    In the tensor product setting, the state $\QFT^\dag \ket{\psi}$ has independent Fourier errors, making it easier to establish decoding guarantees.
\end{itemize}

We give two applications of Regev's reduction that overcome these restrictions separately. Our first finds codewords with no zero coordinates using a quantum decoder that exploits a ``multiplication property'' of $C^\perp$. This yields new candidates for quantum advantage without relying on classical decodability. Our second retains classical decoding but permits global constraints on symbol frequencies, extending the methods of Yamakawa--Zhandry to obtain a provable quantum--classical separation relative to a random oracle.

\subsection{Contribution 1: Beyond Classical Decodability}

Our first contribution concerns a simple code intersection problem: given $\mathbf{B}\in\mathbb{F}_q^{n\times m}$, find
\[
    \mathbf{y}\in(\mathbb{F}_q\setminus\{0\})^m
    \qquad\text{such that}\qquad
    \mathbf{B}\mathbf{y}=0.
\]
Here $C=\ker\mathbf{B}$, and $C^\perp$ is the row span of $\mathbf{B}$. For fixed prime $q>2$, Chen, Liu, and Zhandry~\cite{clz21} gave a polynomial-time quantum algorithm for random $\matB$ with $m=\Omega(n^2)$. Although classical algorithms~\cite{ivanyos2012nil2,ivanyosSantha2015diagonal,imran2024zerosum,kothari2025sis} now cover this regime, we show that the quantum decoding approach of~\cite{clz21} can exploit additional structure to solve instances with polynomially fewer columns.

The structure we use is a ``two-fold multiplication property'': the coordinate-wise products of pairs of codewords in $C^\perp$ span a space of dimension much smaller than $m$. Together with suitable distance conditions, this property enables efficient quantum decoding without assuming an efficient classical decoder for $C^\perp$. Interestingly, we also show that a ``three-fold multiplication property'' enables efficient classical algorithms:

\begin{theorem}[Informal; see Theorem~\ref{thm:qalgomain}
    and Lemma~\ref{lem:classical-cube}]
\label{thm:informal-multiplication}
Fix an odd prime power $q$. Let $\mathbf{B}\in\mathbb{F}_q^{n\times m}$ have linearly independent rows, and let $C=\ker\mathbf{B}$. Write $s_2$ and $s_3$ for the dimensions of the spaces spanned by all coordinate-wise products of two and three codewords in $C^\perp$, respectively.

There are polynomial-time algorithms finding
$\mathbf{y}\in(\mathbb{F}_q\setminus\{0\})^m$ with
$\mathbf{B}\mathbf{y}=0$ under the following conditions:
\begin{itemize}
    \item \textbf{Quantum algorithm:}
    \[
        \dist(C^\perp)\log_q(q-1)\geq 2n+s_2+2.
    \]
    \item \textbf{Classical algorithm:}
    \[
        \dist(C^\perp)\geq n+s_3.
    \]
\end{itemize}
Each algorithm succeeds with probability at least $2/3$ for every matrix satisfying its respective condition.
\end{theorem}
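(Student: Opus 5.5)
The plan treats the two bullets separately: a Regev-type reduction with a quantum decoder for the first, and a direct linear-algebraic construction for the second. I expect the efficient quantum decoder to be the main obstacle.

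\textbf{Quantum algorithm.} I would follow the template of~\cite{clz21} via Regev's reduction. First prepare the product state $\ket{\psi}\propto\sum_{\yv\in(\FF_q\setminus\{0\})^m}\ket{\yv}$, possibly reweighted coordinate-wise by a phase $f(y_i)$ chosen later. Applying $\QFT^\dag$ and entangling with a uniformly random $\cv\in C^\perp=\mathrm{rowspan}(\matB)$ then reduces the task to the quantum decoding problem: recover $\cv$ from $X^{\cv}\ket{\phi}^{\otimes m}$, where $\ket{\phi}=\QFT^\dag\ket{\psi_1}$ is a single-coordinate state. The key quantity is the overlap $\bra{\phi}X^{d}\ket{\phi}$. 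For the uniform state on $\FF_q\setminus\{0\}$ it equals $-1/(q-1)$ for every $d\neq 0$, so distinct codewords $\cv\neq\cv'$ yield states with overlap $(q-1)^{-\wt(\cv-\cv')}\le q^{-\dist(C^\perp)\log_q(q-1)}$. Under the hypothesis $\dist(C^\perp)\log_q(q-1)\ge 2n+s_2+2$, a union bound over the $q^n$ codewords shows the candidate states are nearly orthogonal. Information-theoretically, decoding is therefore possible with room to spare.

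The main obstacle is making the decoder \emph{efficient} without a classical decoder for $C^\perp$. This is where I would use the two-fold multiplication property. The measurement statistics of the shifted product state, or of a second Fourier-type measurement on a copy, depend on $\cv$ only through quantities that are at most quadratic in the coordinates of $\cv$. Each such outcome therefore gives a linear equation in the lifted vector $(\cv,\cv\ast\cv)$. Writing $\cv\ast\cv$ in a basis of $\mathrm{span}(C^\perp\ast C^\perp)$ gives at most $n+s_2$ unknowns. I would linearize: collect such equations, solve the linear system, and read off $\cv$. The distance condition is what should guarantee a unique solution: if two lifted solutions agree on the equations, their difference has low weight, which the distance forbids. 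That argument explains the additive $n+s_2$ in the bound, and the second factor of $n$ together with the $+2$ absorbs the union bound and the $2/3$ success probability. Once $\cv$ is uncomputed coherently, measuring in the Fourier basis returns $\yv\in\ker\matB$ supported on $(\FF_q\setminus\{0\})^m$.

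\textbf{Classical algorithm.} Let $D=\mathrm{span}(C^\perp\ast C^\perp\ast C^\perp)$, which has dimension $s_3$. The key identity is that for $\zv\in D^\perp$ and $\av,\av'\in C^\perp$, every row $\bv$ of $\matB$ satisfies $\langle \bv,\zv\ast\av\ast\av'\rangle=\langle \bv\ast\av\ast\av',\zv\rangle=0$. Hence $V=\mathrm{span}\{\zv\ast\av\ast\av'\}\subseteq\ker\matB$, and $V$ is explicitly computable by linear algebra. It then remains to find an element of $V$, or a simple product such as $\zv\ast\av\ast\av$, with full support. For $\av$ I would use the distance bound: any nonzero $\av\in C^\perp$ has at most $m-n-s_3$ zeros, and forcing $\av$ to be nonzero coordinate by coordinate is feasible by counting. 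For $\zv\in D^\perp$, note $\dim D^\perp=m-s_3$, so we can impose $\zv$ nonzero on chosen coordinates and combine a few such pieces. I would proceed greedily, at each step killing the current zero set while preserving previously fixed nonzero coordinates. The inequality $\dist(C^\perp)\ge n+s_3$ should leave exactly enough degrees of freedom to do so. The success probability of $2/3$ comes from randomizing the choices at each step.
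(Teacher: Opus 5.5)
\textbf{Quantum part.} Your outer framework matches the paper: Regev's reduction on the uniform state over $(\FF_q\setminus\{0\})^m$, then relinearization with $\cv\odot\cv$ replaced by an unknown in $\spann(C^\perp*C^\perp)$, for $n+s_2$ unknowns. The gap is the step that produces the equations; the mechanism you sketch cannot supply them.
\begin{itemize}
\item Measurement \emph{statistics} that depend quadratically on $\cv$ are not equations.
\item The decoder receives a single copy of $X^{\cv}\QFT^\dag\ket{A}$, so there is no second copy for a further Fourier-type measurement.
\item A standard-basis measurement corrupts each $c_i$ with probability $1/q$, which is exactly the classical decoding task you are trying to avoid.
\end{itemize}
What you need is a per-coordinate measurement whose outcome is an \emph{exact} constraint. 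The overlap $-1/(q-1)$ you already computed is the key. It coincides with the overlap of the states $\frac{1}{\sqrt{2(q-1)}}\sum_{a\neq c}(\ket{c,a}-\ket{a,c})$, so an isometry maps one ensemble to the other. Measuring and forgetting the order then returns an unordered pair $\{a_i,a_i'\}$ that always contains $c_i$, with the other element uniform on $\FF_q\setminus\{c_i\}$. This gives $(c_i-a_i)(c_i-a_i')=0$ with no error, hence the linear system $\wv-(\av+\av')\odot\vv+\av\odot\av'=\mathbf 0$.

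Your uniqueness argument also needs replacing. A spurious solution yields $\Delta_\vv\in C^\perp\setminus\{\mathbf 0\}$ and $\Delta_\wv\in\spann(C^\perp*C^\perp)$ with $\Delta_\wv=(\av+\av')\odot\Delta_\vv$. Since $a_i+a_i'$ is uniform over $q-1$ values, each coordinate of $\supp(\Delta_\vv)$ is consistent with probability at most $1/(q-1)$. A union bound over $q^{n+s_2}$ such pairs gives failure at most $q^{n+s_2}(q-1)^{-\dist(C^\perp)}$. So the distance counts independent coincidences; it does not ``forbid'' a low-weight difference, since every nonzero $\Delta_\vv$ has weight at least $\dist(C^\perp)$ anyway. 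The second $n$ in the threshold comes from the worst-case Regev reduction, whose failure bound $2|C^\perp|(1-P_\dec)$ requires decoder failure far below $q^{-n}$. Under $(q-1)^{\dist(C^\perp)}\geq q^{2n+s_2+2}$ this total is at most $2/q^2\leq 2/9$.

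\textbf{Classical part.} Your identity $\langle\bv,\zv\odot\av\odot\av'\rangle=\langle\zv,\bv\odot\av\odot\av'\rangle=0$ is exactly the cancellation the paper uses. However, you then reduce to finding a full-support vector in a linear space: either in $V$, or via $\zv\odot\av\odot\av$ with $\zv\in D^\perp$ and $\av\in C^\perp$ both of full support. That is the original problem again. Neither $C^\perp$ nor $D^\perp$ need contain a full-support vector. Greedily ``killing zeros while preserving fixed nonzero coordinates'' has no polynomial-time guarantee over a fixed small field, and nothing in $\dist(C^\perp)\geq n+s_3$ supplies the degrees of freedom you hope for.

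The missing idea is a nonlinear ansatz that is nonzero \emph{by construction}:
\begin{itemize}
\item Sample $\hv$ uniformly from $D^\perp$, let $T=\supp(\hv)$ and $\hv'=\hv|_T$, and let $\matB',\matB''$ be the columns of $\matB$ in and outside $T$.
\item Set $y_i=1$ off $T$. On $T$, with $\nu$ a nonsquare, use $\yv'=\frac14\,\hv'\odot\bigl[((\matB')^\top\zv+\wv\odot(\hv')^{-1})^{\odot 2}-\nu\one\bigr]$. Every entry is nonzero for every $\zv$.
\item Your orthogonality identity kills the quadratic term of $\matB'\yv'$. What remains is a linear system $\matB'\diag(\wv)(\matB')^\top\zv=-2(\matB''\one+\matB'\yv'(\mathbf 0))$.
\item Taking $\wv$ to be the indicator of $n$ independent columns $I$ of $\matB'$ turns the coefficient matrix into $\matB_I\matB_I^\top$, which is invertible.
\end{itemize}
The distance condition is used only to ensure $\rank\matB'=n$. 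A fixed nonzero $\cv\in C^\perp$ vanishes on $\supp(\hv)$ with probability $q^{\dim(D[Z(\cv)])-\wt(\cv)}\leq q^{s_3-\dist(C^\perp)}\leq q^{-n}$. A union bound over the $(q^n-1)/(q-1)$ scalar classes then bounds failure by $1/(q-1)\leq 1/2$. Failure is detectable, so one retry gives success at least $2/3$.
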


\paragraph{A Reed--Muller instantiation.}
We instantiate this approach by taking the row span of $\matB$ to be a dimension-$n$ Reed--Muller code punctured at $m$ random evaluation points. These codes evaluate low-degree multivariate polynomials, so their coordinate-wise products are evaluations of polynomials of at most twice the original degree. For suitable parameters, this yields polynomial-time quantum algorithms in regimes with $m\leq n^{2-\Omega(1)}$. 
On the other hand, the classical guarantee of Theorem~\ref{thm:informal-multiplication} requires a space of polynomials of \emph{three times} the original degree to have dimension much less than $m$. This difference leaves intermediate parameter regimes as candidates for quantum advantage.

The Reed--Muller instantiation has a particularly appealing equivalent formulation. Extending $\mathbf{y}$ by zero outside the sampled evaluation points gives a word in the dual of the \emph{unpunctured} Reed--Muller code (which is itself a Reed--Muller code). Our task is therefore to find a dual codeword whose support is exactly a prescribed random set of $m$ points: it must be nonzero at every selected point and zero everywhere else. 
This is a variant of the ``multivariate OPI'' problem considered in~\cite{jordan2024optimization}, but as we explain below our algorithm works for regimes beyond those considered therein because we work in regimes beyond those where $C^\perp$ has a known efficient classical decoder. 
We depict these formulations in Figure~\ref{fig:rm-support-size}.

\input{figures/rm-support-size}

Our Reed--Muller instantiation makes the departure from classical decodability concrete. 
If one were to attempt to solve the dual decoding problem by measuring in the standard basis then applying a classical decoder for $C^\perp$, we would need to be able to efficiently decode from independent symbol corruptions each with probability $1/q$ (we will see this in Equation~\eqref{eq:qdpoverview} in the technical overview). When $q$ is constant, the resulting Reed--Muller decoding task is to recover a degree $\Theta(v)$ polynomial in $v$ variables from a noisy codeword with corruption probability $\Theta(1)$. 
We are not aware of an efficient classical decoder for this task in this regime. Even for unpunctured binary Reed--Muller codes, extending efficient decoding to degree $\Theta(v)$ at a constant random-error rate is a recognized open problem, despite substantial progress in the sublinear-degree regime~\cite{saptharishi2017decoding,abbe2020rm}. Our quantum decoder uses the two-fold multiplication property to recover the codeword directly from the quantum state, bypassing this classical decoding challenge and allowing Regev's reduction to proceed.

\subsection{Contribution 2: Beyond Product Constraints}\label{sec:introcontribution2}
Our second contribution permits constraints on symbol frequencies after applying an arbitrary permutation of the alphabet in each coordinate. Fix permutations $\pi_1,\ldots,\pi_m$ of $\mathbb{F}_q$ and, for each symbol $a\in\mathbb{F}_q$, a set $B_a\subseteq\{0,\ldots,m\}$ of allowed occurrence counts. A word $\mathbf{x}$ belongs to our \emph{histogram-local} target set $A$ if its relabeling $\mathbf{y}=(\pi_1(x_1),\ldots,\pi_m(x_m))$ satisfies
\[
    c_a(\mathbf{y})
    :=\bigl|\{i\in[m]:y_i=a\}\bigr|
    \in B_a
    \qquad\text{for every }a\in\mathbb{F}_q.
\]
We call this target set ``histogram-local'' because $(c_a(\yv): a \in \FF_q)$ is the histogram of $\yv$.
We seek a word $\mathbf{x}\in C\cap A$; the code constraint applies before relabeling by the $\pi_i$'s, and the histogram constraints afterward. 

For example, this captures the full-satisfaction version of OPI with equally sized allowed sets $A_1,\ldots,A_m$: take $C$ to be a Reed--Solomon code, choose $\pi_i$ such that $\pi_i(A_i)=A_1$, and forbid symbols $a \notin A_1$ by setting their set of allowed counts $B_a$ to $\{0\}$, leaving $B_a = \ZZ_{\geq 0}$ for $a \in A_1$. The resulting conditions are exactly $x_i\in A_i$ for every coordinate. 
More general choices of $B_a$ take us beyond OPI by imposing global restrictions on the relabeled word. For example, we can include a \emph{max-load} constraint by requiring that after relabeling, no symbol should appear more than $3$ times.

For a concrete example, partition the alphabet into three classes in proportions $1/2$, $1/3$, and $1/6$, up to rounding, and impose the following requirements:
\[
\begin{array}{c|ccc}
    \text{Fraction of the alphabet}
        & 1/2 & 1/3 & 1/6 \\ \hline
    \text{Allowed occurrence counts for each symbol}
        & \{0\} & \{1,2\} & \{3,4\}
\end{array}
\]
The first class gives the familiar OPI condition that half the symbols are forbidden in each coordinate. The other two classes prescribe how often each of the remaining symbols must occur. 
We illustrate this example in Figure~\ref{fig:histogram-csp}.
For this example, we show the following theorem. The latter random-oracle separation adapts the template by Yamakawa--Zhandry~\cite{yamakawa2024verifiable}.

\input{figures/histogram-csp}

\begin{theorem}[Informal; see Theorem~\ref{thm:plainrs}
    and Corollary~\ref{cor:heterogeneous-separation}]
For suitable Reed--Solomon code families of asymptotic rate $7/8$, the following hold.
\begin{itemize}
    \item \textbf{Plain model.}
    Over prime fields, with independent uniformly random coordinate permutations $\pi_i$ supplied by their complete truth tables, a polynomial-time quantum algorithm finds a codeword whose relabeling satisfies all three classes of constraints with inverse-polynomial probability. This uses ordinary Reed--Solomon codes without folding.

    \item \textbf{Random-oracle model.}
    In a variant where independent random functions are applied to blocks of the codeword before computing the histogram, a polynomial-time quantum algorithm satisfies all three classes of constraints with constant probability. Every classical algorithm making polynomially many oracle queries has exponentially small success probability, even with unbounded computation and even if any two of the three constraint classes are removed.
\end{itemize}
\end{theorem}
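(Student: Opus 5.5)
The plan is to run Regev's reduction on the histogram-local target set and treat the two bullets separately.

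\textbf{State preparation.} Let $A$ be the set of words whose relabeling satisfies the three-class histogram constraint. I would prepare (approximately) the uniform superposition $\ket{\psi}$ over $A$. Because membership depends only on the histogram of $\yv=(\pi_i(x_i))_i$, one can first sample a valid histogram with the correct multinomial weight. This can be done by dynamic programming over the symbols, since each $B_a\subseteq\{0,\dots,4\}$ has constant size. One then prepares a uniform arrangement consistent with that histogram via a standard permutation-state preparation, and finally applies $\pi_i^{-1}$ coordinatewise using the supplied truth tables.

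\textbf{Fourier side and decoding.} Next I would apply $\QFT^\dag$ and bound the Fourier mass of $\ket{\psi}$ on Hamming weights $\le w$, where $w$ is the unique-decoding radius of $C^\perp$. For rate $7/8$ Reed--Solomon codes, $C^\perp$ has rate $1/8$, so $w\approx \tfrac{7}{16}m$, and the Berlekamp--Welch decoder is efficient.

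This mass bound is the key step. The abstract's stability property says that resampling a uniformly random coordinate of a uniform element of $A$ keeps it in $A$ with constant probability $p$. In Fourier language this lower-bounds $\sum_{\ev}\Pr[\ev]\,(1-\wt(\ev)/m)$ by roughly $p$ up to a factor involving $q/(q-1)$. I would then combine this with a permutation-symmetry/concentration argument: $\QFT^\dag\ket{\psi}$ is symmetric under permuting coordinates once the random $\pi_i$ are averaged out. This should yield an inverse-polynomial (plain model) or constant (blockwise random oracle) probability that the error weight is at most $w$.

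I expect this step to be the main obstacle. The constants $1/2,1/3,1/6$ with counts $\{0\},\{1,2\},\{3,4\}$ are presumably chosen so that the resulting weight bound lands just below $\tfrac{7}{16}m$.

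\textbf{Completing the reduction (plain model).} Conditioned on a decodable error, the standard analysis of Regev's reduction (as in~\cite{chailloux2023quantum}) lets us coherently decode, uncompute the codeword register, and apply $\QFT$. Measuring then yields $\xv\in C\cap A$. Over prime fields with uniformly random $\pi_i$, I would verify that the relevant Fourier coefficients behave like those of a random set, so the cross terms vanish in expectation. This gives inverse-polynomial success.

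\textbf{Random-oracle separation.} Here I would follow Yamakawa--Zhandry~\cite{yamakawa2024verifiable}. Replace the $\pi_i$ by random functions applied to blocks of a folded Reed--Solomon codeword; by list-recoverability this makes the Fourier errors essentially independent across blocks, which gives constant success probability. For the classical lower bound, I would reuse their argument that a query-bounded algorithm cannot find a codeword whose oracle images land in a prescribed sparse subset. It suffices to treat the single most restrictive remaining constraint class, which already forbids a constant fraction of values per block. Their compression/list-recovery bound then gives exponentially small success even after the other two classes are dropped.
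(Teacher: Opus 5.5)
There is a genuine gap, and it sits at the step you flagged as the main obstacle. Stability does more than lower-bound the Fourier-side quantity. Corollary~\ref{cor:fourier-exit} (with Lemma~\ref{lem:foldpreservesexit}) gives the exact identity that the expected relative weight of the measured error equals the one-step exit probability. Theorem~\ref{thm:maininfinite}, evaluated on the three-class data, puts that exit probability at $1-p_{\mathsf{stay}}$ with $p_{\mathsf{stay}}=(\sqrt5-1)/9+O(q^{-1/2})\approx0.137$. So the expected relative error weight is $(10-\sqrt5)/9\approx0.863$. That is about twice the unique-decoding radius $7/16$ of the rate-$1/8$ dual, and well above even the Guruswami--Sudan radius $1-\sqrt{1/8}\approx0.646$.

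Nothing in your argument produces noticeable mass at weight $7m/16$. Markov's inequality is vacuous there, and any concentration statement would concentrate the weight near $0.863m$, away from that radius. The rate $7/8$ is chosen instead so that the \emph{capacity} radius $7/8-\delta$ of the dual exceeds $0.863$. The paper then proceeds as follows:
\begin{itemize}
\item it applies Markov at $\zeta=7/8-\delta$ to get constant mass within that radius;
\item it list-decodes the ordinary Reed--Solomon dual up to capacity over prime fields (Theorem~\ref{thm:rscapacity}, Jeronimo's refinement of Brakensiek et al.) and outputs a uniformly random list element;
\item it invokes the average-case reduction through Corollary~\ref{cor:ctlistdecode}, whose translation-invariance hypothesis holds because precomposing uniformly random bijections with translations preserves their law.
\end{itemize}
Your cross-term cancellation is essentially what translation invariance buys there. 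However, the prime-field restriction and the inverse-polynomial success probability both come from the decoder (polynomial list size $L$, success $P_{\mathsf{sparse}}/L$), not from the Fourier analysis.

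Relatedly, you cannot simply import the stability constant from the abstract. The claim at rate $7/8$ hinges on $p_{\mathsf{stay}}>1/8$, which $0.137$ only barely clears. Proving it requires three ingredients: the hard-wall law (histograms of a uniform element of $A$ are independent tilted Poissons with $\nu=\sqrt5-1$ conditioned on their sum, Lemma~\ref{lem:hard-wall-histogram-law}); a local CLT to decouple source and target loads (Lemma~\ref{lem:infinitedecoupling}); and the size-biased source law.

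The oracle part has two further problems. On the quantum side, the block Fourier errors are not independent, since the constraint is global, and list-recoverability plays no role there. The paper reruns the same exit-probability identity through balanced block maps (Corollary~\ref{cor:efwfoldedstate}, after swapping random for balanced functions via Lemma~\ref{lem:distinguishbalanced}). It then decodes the folded dual up to block capacity with a constant final list (Lemma~\ref{lem:frscapacity}), which is what gives constant success.

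On the classical side, your reduction to ``a class that forbids a constant fraction of values per block'' covers only the class with $B_1=\{0\}$. The relaxations retaining only $B_2=\{1,2\}$ or only $B_3=\{3,4\}$ forbid no value in any block, so the Yamakawa--Zhandry per-coordinate completion bound does not apply to them. What is needed is Lemma~\ref{lem:prangeresistance}. After an adversary fixes all but $\epsilon m$ blocks, counting shows that $\Omega(n)$ symbols of the retained class have count below a fixed threshold. After Poissonizing the fresh values, each such symbol independently receives, with constant probability, exactly the number of extra occurrences that lands its final count outside $B_r$. Hence a random completion is valid with probability $2^{-\Omega(m)}$, and only then does the online list-recovery union bound of Lemma~\ref{lem:yzclassicalsoundness} give $2^{-\Omega(\lambda)}$.
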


Our result shows that we can add additional constraints to OPI to compound the classical hardness, while retaining the quantum easiness. Indeed, we may be living in a (currently hypothetical) world where there is an efficient classical algorithm for OPI that still fails to satisfy the second and third classes of nonlinear constraints. 
Our oracle separation provides evidence that these extra classes of nonlinear constraints present obstacles in their own right.

\subsection{Related Work}\label{sec:related-work}

\paragraph{Quantum decoders for Regev's reduction and classical algorithms.}
Chen, Liu, and Zhandry~\cite{clz21} combine filtered quantum measurements with Arora--Ge relinearization~\cite{arorage}.
Our quantum decoder retains these ingredients but uses the dimension of a coordinate-wise product space in place of the full space of quadratic monomials.
The single-coordinate measurements also belong to the existing literature on unambiguous and zero-error list decoding~\cite{chailloux2023quantum,buzet2025finegrained,dalai2026zeroerror}.
A different use of quantum decoding appears in Shutty et al.~\cite{shutty2026optimizationusinglocallyquantumdecoders}, who exploit coherent errors in LDPC codes and identify an enhanced Prange algorithm matching their optimization guarantee.

Classical algorithms for related finite-field zero-sum and polynomial-equation problems appear in work of Ivanyos, Sanselme, and Santha, Ivanyos and Santha, and Imran and Ivanyos~\cite{ivanyos2012nil2,ivanyosSantha2015diagonal,imran2024zerosum}.
Kothari, O'Donnell, and Wu give classical algorithms covering the constrained-solution regimes of Chen--Liu--Zhandry~\cite{kothari2025sis}.
For the problem of finding a kernel vector with no zero coordinates, these algorithms kick in at the quadratic scale $m=\Omega(n^2)$ for uniformly random dense matrices.
Our columns instead come from evaluations of low-degree monomials at random points, and our quantum guarantee extends to suitable subquadratic regimes. 
To our knowledge, the preceding classical guarantees do not apply in such subquadratic regimes.

\paragraph{Multiplication in codes.}
Coordinate-wise multiplication is central to classical algebraic decoding and applications of algebraic codes in theoretical computer science.
For Reed--Solomon codes, multiplying the received word by the evaluations of an error-locator polynomial eliminates the errors, while the product of the locator and message polynomials has controlled degree.
The Berlekamp--Welch algorithm exploits this to recover the locator and product by linear algebra~\cite{guruswami2025essential}.
Error-correcting pairs abstract the required product containment and distance conditions~\cite{pellikaan1992decoding,marquez2012pairs}.
The list decoders of Sudan and Guruswami--Sudan also use degree bounds on products and powers~\cite{sudan1997decoding,guruswamiSudan1999}: weighted-degree interpolation and root bounds ensure that substituting a sufficiently agreeing message polynomial yields an identically zero polynomial, after which polynomial factorization recovers the candidate messages. 
Multiplication properties also appear in probabilistically checkable proofs (PCPs) and testing of Reed--Muller codes~\cite{meir2016combinatorial,harsha2016robust}.

Recent constructions combine Reed--Solomon local constraints with expanders to build new codes with multiplication properties.
Dinur, Liu, and Zhang~\cite{dinur2025newcodes} place these constraints on a high-dimensional expander, so that local product containment passes to the global code.
They obtain constant rate and relative distance in one parameter regime, and local testability in another.
Golowich and Lin~\cite{golowich2025products} construct classical locally testable codes with a multiplication property using algebraic codes on spectral expanders and balanced products, alongside their quantum-code constructions with transversal gates.
Kopparty and Tamo~\cite{kopparty2026algebraic} construct codes on spectral expander graphs with positive rate and relative distance even when their Reed--Solomon local codes have arbitrarily small fixed positive rate.

Our use of multiplication is to reduce the number of variables in a quantum decoder through the dimension of a two-fold product space, and to cancel quadratic terms in a classical search algorithm through orthogonality to triple products.
The resulting two-fold and three-fold conditions give different sufficient guarantees for finding full-support codewords.
Applying our algorithms to the newer expander constructions would require appropriate bounds on distance and product-space dimension over the relevant fields.
We give further connections to restricted syndrome decoding and constrained-codeword search in Section~\ref{sec:classical-families}.

\paragraph{DQI beyond coordinate-wise constraints.} 
Our work is not the first to consider the applicability of Regev's reduction beyond coordinate-wise non-linear constraints.
Recently, Krajenbrink et al.~\cite{krajenbrink2026beyond} extend DQI to translation association schemes and develop a rank-metric construction that seeks an encoded matrix close to a target in rank distance.
Since rank depends jointly on the matrix entries, this is a substantial departure from coordinate-wise nonlinear constraints.
Their construction uses superpositions of fixed-rank matrices and rank-metric decoding, and they do not claim quantum advantage.
Our histogram setting gives a different global target, analyzed through replacement stability and Hamming decoding, for which we also obtain a random-oracle separation.

\paragraph{Recent advances on OPI.}
Jordan et al.\ introduce DQI and its application to OPI for Reed--Solomon codes~\cite{jordan2024optimization}. 
Their algorithm targets worst-case instances and therefore is limited to regimes where $C^\perp$ can be uniquely decoded. A complementary follow-up work by~\cite{chailloux2024quantum} relaxes this to average-case instances, with the benefit of being applicable to list-decoding regimes. Our work builds heavily on this list-decoding infrastructure.

An exciting recent line of work has built a bridge towards a best-of-both-worlds result, showing algorithms for \emph{worst}-case OPI beyond the unique-decoding regime (but not quite covering the full list-decoding regime yet). This was kickstarted by improved existential bounds for worst-case OPI by Sun and Wootters~\cite{sunwootters2026}.
Later, Horinaga and Yamakawa and, independently, Jo give quantum algorithms exceeding the original DQI guarantees in this setting~\cite{horinaga2026worstcase,jo2026efficient}. 
Our work is in the average-case, list-decoding regime; these works inspire us to hope that future work might be able to extend our techniques and results to worst-case results for histogram-local constraints as well. Even worst-case existential results would be very interesting.

Other extensions include weighted optimization using multivariate states~\cite{bu2026multivariate} and function intersection on algebraic curves~\cite{matthews2026cofi}.
These works retain allowed-value predicates at individual coordinates.
Their prepared states need not be literal tensor products: conditioning on a weight or superposing weight shells already introduces dependence.
Our histogram constraints instead restrict the entire vector of symbol counts in the output.

\paragraph{Decoders for Reed--Solomon and Reed--Muller codes.} 
By the 1970s, Reed--Solomon codes had efficient algebraic decoders correcting any error pattern of weight below half their minimum distance~\cite{sugiyama1975method,justesen1976complexity}.
Sudan~\cite{sudan1997decoding} broke this unique-decoding barrier at low rates using polynomial-time list decoding, and Guruswami--Sudan~\cite{guruswamiSudan1999} extended this to the Johnson radius at all rates.
Guruswami--Rudra~\cite{guruswami2008explicit} subsequently achieved list-decoding capacity for folded Reed--Solomon codes.

The landscape of list-decoding for Reed--Solomon codes changed recently with the breakthrough work of Brakensiek et al.~\cite{brakensiek2026capacity} obtaining capacity list decoding for ordinary Reed--Solomon codes at every fixed rate, with arbitrary evaluation points over sufficiently large prime fields.
For sufficiently large block lengths, Jeronimo~\cite{jeronimo2026capacity} refined this to allow every prime field at least as large as the block length.
We use this refinement for the length-$q-1$ ordinary Reed--Solomon codes in our plain-model result.

On the Reed--Muller side, Saptharishi, Shpilka, and Volk~\cite{saptharishi2017decoding} decode a fraction $1/2-o(1)$ of random errors in time polynomial in the full block length for unpunctured binary codes of degree $o(\sqrt{v})$, where $v$ is the number of variables.
See Abbe, Shpilka, and Ye~\cite{abbe2020rm} for a broader survey of Reed--Muller decoding.
These guarantees do not suffice for our randomly punctured codes of degree $\Theta(v)$ over odd prime fields; our quantum decoder instead relies on their two-fold multiplication property without leaning on classical decoding.

\paragraph{Random-oracle separations for Regev's reduction.}
Yamakawa and Zhandry prove quantum advantage for code intersection relative to a random oracle~\cite{yamakawa2024verifiable}.
Their classical lower bound uses list recovery of folded Reed--Solomon codes~\cite{guruswami2008explicit,rudra2007list}.
We retain this oracle and list-recovery architecture, but replace the coordinate-wise completion calculation with a bound on the probability that fresh random symbols complete an adversarially fixed partial histogram.
This permits constraints on occurrence counts and shows that each individual class in our three-class example remains classically hard even when the other two classes are discarded.

\subsection{Open Questions}\label{sec:open-questions}

The obvious open question is whether our quantum decoding result or plain-model histogram-constraints result are actually classically hard, or whether they can be dequantized. 
Most tantalizingly (in our view), can a classical search algorithm exploit only the two-fold multiplication property, or otherwise avoid the three-fold condition used by our algorithm? This could give a full dequantization of our quantum algorithm for Reed--Muller codes. 
Beyond that, we catalog other interesting directions for future work below:

\paragraph{Combining our two contributions.}
Can Regev's reduction give a candidate quantum advantage using both a quantum decoder and global nonlinear constraints?
Each of our two applications only removes one of these restrictions. 
There is a fundamental technical challenge here; the analysis of our quantum decoder relies heavily on the fact that it works with independent single-coordinate states, which arise best from local nonlinear constraints. 
A quantum decoder for global nonlinear constraints would likely need to work with highly entangled states, so any measurements on a single symbol would disturb the remainder of the state.

\paragraph{Worst-case guarantees for histogram instances.}
Can the plain-model guarantee be strengthened from success averaged over random coordinate permutations to success on most such instances, or to a worst-case guarantee under suitable promises? 
Our use of the Regev reduction still relies on averaging in order to freely lean on list-decoding results, but the works mentioned in Section~\ref{sec:related-work} on worst-case algorithms for OPI in some list-decoding regimes inspire us to hope that this may be improved. 
One particularly appealing special case is where we just remove the relabeling step from Figure~\ref{fig:histogram-csp} (or equivalently, set all $\pi_i$'s to be the identity): do our methods give an efficient quantum algorithm in this setting?

\paragraph{Applying our methods to more general codes with multiplication structure.} 
As discussed in Section~\ref{sec:related-work}, the study of linear codes with multiplication structure is very rich and extends well beyond Reed--Solomon and Reed--Muller codes. It is therefore interesting to ascertain whether our methods give interesting quantum and/or classical algorithms for problems related to these codes.

\subsection{Statement on AI Use}

This project was built around two central goals. Our first goal was to extend existing code-intersection problems for Reed--Solomon codes beyond local nonlinear constraints, in the plain model. 
To arrive at this goal, we interacted heavily with GPT-5.5 Pro to ascertain intermediate technical lemmas, arrive at the ``stability'' formalism for reasoning about expected Fourier weight, and obtain a formal version of Theorem~\ref{thm:maininfinite} with proof. The authors completely rewrote the proof and then noticed a way to adapt the Yamakawa--Zhandry oracle separation to this setting. Details were filled in with the help of GPT-5.5 Pro, GPT-5.6 Sol Pro, and GPT-5.6 Sol Ultra.

Our second goal was motivated by the repeated dequantizations~\cite{ivanyos2012nil2,ivanyosSantha2015diagonal,imran2024zerosum,kothari2025sis,shutty2026optimizationusinglocallyquantumdecoders} of algorithms~\cite{clz21,shutty2026optimizationusinglocallyquantumdecoders} using Regev's reduction with local quantum decoding followed by a linear solver. This includes earlier unpublished work by the authors which we had dequantized ourselves. 
We wanted to show a generic dequantization theorem for any algorithm following this template, and gave this task to GPT-5.6 Sol Ultra with adversarial sub-agents, each tasked with either proving or disproving this conjecture (but not both). 
To our surprise, after some iterations the sub-agents landed on quantum algorithms leveraging a two-fold multiplication property as a candidate that resisted their initial dequantization efforts, which we have presented in Section~\ref{sec:q2}.

We then made an extended human-in-the-loop effort over several weeks---using GPT-5.6 Sol Pro/Ultra and GPT-6 Astra Pro/Ultra---to find a classical algorithm that relied only on a two-fold multiplication property. Ultimately, we did not achieve this but instead found an algorithm relying on a stronger three-fold multiplication property, which we present in Section~\ref{sec:classical}. GPT-6 Astra did find other slightly more performant classical algorithms specializing to Reed--Muller codes that were significantly more complicated but nevertheless failed to fully cover our quantum algorithm; we have not digested or verified those algorithms so we do not present them here. We are happy to privately share LLM-generated details of these other classical algorithms on request.

The aforementioned GPT models were used to assist with writing throughout. These models and Claude Fable 5 were also used for proofreading, and Claude Opus 5.5 was used for figure generation. The authors are fully accountable for the contents and correctness of this paper.

\subsection{Acknowledgements}

This work was done while SR was a student researcher at Google Quantum AI. We thank Ryan Babbush, Aparna Gupte, Stephen P. Jordan, Sascha Schmidhuber, Kimi Sun, Vinod Vaikuntanathan, and Mary Wootters for useful discussions. We also thank the members of the Google Quantum AI Algorithms team who attended internal talks by the authors on this work and provided stimulating questions and feedback. 
SR's access to ChatGPT Pro and Claude Max were respectively supported by the UK AISI alignment project, and Jane Street.

\section{Technical Overview}

Before we dive into our contributions, we take a moment to recall the quantum decoding problem ($\QDP$) and what guarantee we need for a $\QDP$ algorithm to be usable in Regev's reduction. Our work and the remainder of this technical overview will then treat Regev's reduction as a black box.
Consider a state $\ket{\psi_\err} = \sum_{\ev \in \FF_q^m} \alpha(\ev) \ket{\ev}$. Then $\QDP_{C^\perp, \ket{\psi_\err}}$ samples a uniform codeword $\cv \gets C^\perp$ and provides the algorithm the state
$$X^\cv \ket{\psi_\err} = \sum_{\ev \in \FF_q^m} \alpha(\ev) \ket{\cv+\ev}$$
as input. The algorithm succeeds if and only if it correctly outputs $\cv$ in its entirety. 
Its \emph{success probability} $P_\dec$ is averaged over any internal randomness and the randomness of $\cv$.

Given an efficiently-preparable state $\ket{\psi}$ supported on $A$, Regev's reduction~\cite{Regev05} reduces the task of finding a word in $C \cap A$ to solving $\QDP_{C^\perp, \QFT^\dag \ket{\psi}}$. 
For worst-case sets $A$, even $P_\dec$ very close to 1 need not suffice; the amplitudes of decoding errors could add constructively~\cite[Sections~6.1 and~6.3.1]{chailloux2023quantum}. 

However, recent average-case versions of the reduction significantly relax this requirement~\cite{chailloux2024quantum,chaillouxhermouet,chailloux2025opixsoftdecoders,shutty2026optimizationusinglocallyquantumdecoders}. If the target set $A$ is drawn from a translation-invariant distribution---for example, a fixed set $A_0$ translated by a uniformly random vector $\Delta \gets \FF_q^m$---then averaging over translations controls the coherent accumulation of decoding errors. We restate this average-case Regev reduction in Theorem~\ref{thm:chailloux}; informally, we can solve the code intersection problem with probability $\geq P_\dec$ (where ``probability'' here is taken over the randomness of $A$ as well). With this relaxation, inverse-polynomial $P_\dec$ therefore suffices. Going forward, we will therefore focus entirely on algorithms for $\QDP$ and we will be content with any inverse-polynomial success probability.\footnote{A careful reader might notice that our multiplication-property/Reed--Muller result does not rely on any translation-invariance of the underlying distribution. This is because in this case, our quantum decoder's failure probability is exponentially small and we can actually prove that the decoding failures can be tolerated, via the worst-case Regev reduction which we restate in Theorem~\ref{thm:worstcasecip}. This distinction is not crucial to our main ideas so we ignore it in this technical overview.}

\subsection{Algorithms Exploiting Multiplication Properties}
\label{sec:overview-quantum-decoder}

We explain how the two-fold and three-fold multiplication properties give the quantum and classical algorithms of Theorem~\ref{thm:informal-multiplication}, respectively, and compare their guarantees for Reed--Muller codes.

Recall that in this setting we have a matrix $\matB \in \FF_q^{n \times m}$, the code $C$ is $\ker \matB$, and we want to find a vector in $C \cap A$ where $A = (\FF_q \setminus \{0\})^m$. 
Following Chen--Liu--Zhandry~\cite{clz21}, we do this using Regev's reduction with the uniform superposition $\ket{A}$ over $A$. Hence our $\QDP$ input for a hidden $\cv \gets C^\perp$ has the form
\begin{equation}\label{eq:qdpoverview}
    X^\cv\QFT^\dag\ket{A}
    =\bigotimes_{i=1}^m\ket{\phi_{c_i}},
    \qquad
    \ket{\phi_c}
    =\sqrt{\frac{q-1}{q}}\ket{c}
      -\frac{1}{\sqrt{q(q-1)}}\sum_{a\neq c}\ket{a}.
\end{equation}
Measuring each register in the standard basis would give independent symbol errors with probability $1/q$.
We instead, for each coordinate $i \in [m]$, use the amplitudes of $\ket{\phi_{c_i}}$ to obtain a two-element list that always contains $c_i$. Then we will recover the codeword from these $m$ lists by linear algebra.

\paragraph{An unambiguous measurement that returns a pair.}
To execute the first step and obtain a pair containing $c_i$ for each $i \in [m]$, we use the filtered or unambiguous measurements used in prior work~\cite{clz21,chailloux2023quantum,buzet2025finegrained,shutty2026optimizationusinglocallyquantumdecoders,dalai2026zeroerror}. 
These measurements work out of the box, but we give a simple self-contained formulation of the measurement below for completeness. We simply use the isometry
\[
    \ket{\phi_c}\longmapsto
    \frac{1}{\sqrt{2(q-1)}}
    \sum_{a\neq c}\bigl(\ket{c,a}-\ket{a,c}\bigr).
\]
To see why this map exists, the inner product of the states indexed by $c$ and $c'$ is $1$ when $c=c'$ and $-1/(q-1)$ otherwise, on both sides of the display. 
Measuring the two output registers and forgetting their order gives $\{c,c'\}$, where $c'$ is uniform on $\FF_q\setminus\{c\}$.

Applied separately to the $m$ registers, this produces pairs $\{c_i,c_i'\}$ whose false entries $c_i'$ are independent, conditional on $\cv$.
The measurement does not identify which entry is correct.
Its benefit is that every coordinate now supplies an exact constraint on the hidden codeword with no error whatsoever.

\paragraph{Recovering the codeword using Arora-Ge relinearization~\cite{arorage}.}
Label the observed pairs $\{a_i,a_i'\}$, in any order. 
Then the hidden codeword $\cv \in C^\perp$ necessarily satisfies:
\begin{equation}\label{eq:agproduct}
    (c_i-a_i)(c_i-a_i')=0
    \qquad\text{for every }i\in[m].
\end{equation}
Recall that we can write $\cv = \matB^\top \xv$ for some $\xv \in \FF_q^n$. Then the observation by~\cite{clz21} is that we have $m$ quadratic equations in $n$ variables (the coordinates of $\xv$). When $m \geq \Omega(n^2)$, these can be solved by creating auxiliary variables for each quadratic monomial $x_ix_j$ in the $n$ variables and solving the resulting linear system. This is the idea underlying~\cite{arorage} that was then utilized by~\cite{clz21}. 
Unfortunately, the same $m \geq \Omega(n^2)$ requirement enables matching classical algorithms~\cite{ivanyos2012nil2,ivanyosSantha2015diagonal,imran2024zerosum,kothari2025sis}.

What we have discussed up to now is the algorithm of~\cite{clz21}; next, we will see how we can handle $m \leq n^{2-\Omega(1)}$ for suitable $\matB$.

\paragraph{Our idea: reducing auxiliary variables using a two-fold multiplication property.} 
The high-level idea is that if coordinate-wise products of pairs of codewords from $C^\perp$ fall in a low-dimensional space, then we do not really need $O(n^2)$ linear variables to span the quadratic terms from Equation~\eqref{eq:agproduct}. Indeed, let us aggregate the $m$ coordinate-wise equations from Equation~\eqref{eq:agproduct} and write them as follows:
\begin{equation}
    (\cv - \av) \odot (\cv-\av') = 0 \Leftrightarrow \cv \odot \cv - (\av+\av') \odot \cv + \av \odot \av' = 0.
\end{equation}
Now, let
\[
    W=\spann\{\uv\odot\vv:\uv,\vv\in C^\perp\},
    \qquad s_2=\dim W,
\]
where $\odot$ denotes coordinate-wise multiplication. So we replace the $\cv \odot \cv$ term with an unknown $\wv \in W$ and solve the linear system
\[
    \wv-(\av+\av')\odot\vv+\av\odot\av'=\mathbf0,
    \qquad \vv\in C^\perp,\quad \wv\in W.
\]
After choosing bases for $C^\perp$ and $W$ (the latter of which can be computed from the products of pairs of rows of $\matB$), there are only $n+s_2$ unknown coefficients, which could potentially be much smaller than $m$ even if $m = n^{2-\Omega(1)}$. The ``two-fold multiplication property'' informally stipulates that $s_2$ should be small, and we show that we can benefit from this. Following the above intuition, we show that:
\[
    \Pr[\text{decoder fails}]
    \leq
    \frac{q^{n+s_2}}{(q-1)^{\dist(C^\perp)}},
\]
which essentially yields the quantum side of Theorem~\ref{thm:informal-multiplication}. (The threshold there is $2n+s_2+2$ instead of $n+s_2$ in order to absorb decoding failures, but we will not worry about this detail here.)

The above bound turns out to be loose; a more careful analysis gives a tighter bound that we state in Theorem~\ref{thm:qalgomain} that actually allows us to obtain improved results for Reed--Muller codes.

\paragraph{A classical algorithm using a three-fold multiplication property.}
The classical algorithm uses multiplication structure to construct a vector with no zero coordinates whose syndrome depends linearly on a set of unknowns. 
To accomplish this, there are two main ideas:
\begin{itemize}
    \item Restrict $\yv$ to a degree-2 ansatz that is nonzero in every coordinate \emph{by design}. For example, the ansatz could be of the form $x^2 - \nu$ where $x \in \FF_q$ is a parameter that is free to vary and $\nu$ is a quadratic non-residue.

    \item This then gives a system of seemingly degree-2 equations for which we need to find a solution. However, we will carefully set up the ansatz so that the degree-2 terms vanish and we are left with a linear system of equations that can easily be solved. Setting up the ansatz in this way relies on there being several vectors in the orthogonal complement of $(C^\perp)^{*3}$, hence the requirement that $s_3$ be ``small.''
\end{itemize}

Related reductions from quadratic to linear equations occur in the Unbalanced Oil and Vinegar signature scheme~\cite{kipnis1999unbalanced}, collision attacks on quadratic hash functions by Ding and Yang~\cite{ding2008multivariates} and Applebaum et al.~\cite{applebaum2017lowcomplexity}, and Zhandry's construction of affine spaces of colliding inputs for quantum lightning~\cite{zhandry2019lightning}.

In more detail, assume that $q$ is odd and that $\matB$ has linearly independent rows.
Let $U=\spann((C^\perp)^{*3})$ be the space spanned by all coordinate-wise products of three codewords in $C^\perp$, with $s_3=\dim U$.
(We can compute $U$ from triple products of rows of $\matB$, then sample a vector $\hv$ uniformly from $U^\perp$.)
Write $T=\supp(\hv)$ and split $\matB$ into the columns $\matB'$ on $T$ and $\matB''$ outside $T$.
We set the desired output to $1$ outside $T$, so it remains to find a vector $\yv'$ with no zero coordinates satisfying
$$\matB'\yv'=-\matB''\one.$$
If $\rank\matB'<n$, we resample $\hv$. The requirement that $\dist(C^\perp) \geq n+s_3$ is in order to ensure that $\rank \matB' \geq n$ with good probability; once we successfully sample $\hv$ such that $\rank \matB' \geq n$, the remainder of the algorithm is deterministic and does not rely on this condition.

On $T$, the vector $\hv'=\hv|_T$ has no zero entries.
Choose a nonsquare $\nu\in\FF_q^\times$ and consider the quadratic ansatz
\[
    \yv'(\xv)=\hv'\odot
    \left[\left((\matB')^\top\xv+\av\right)^{\odot 2}-\nu\one\right],
    \qquad \xv\in\FF_q^n,
\]
where $\av\in\FF_q^T$ is a shift to be chosen. It is clear that this ansatz ensures that $\yv'$ is nonzero in every coordinate. 
Moreover, $\matB' \yv'(\xv)$ appears to have a degree-2 dependence on $\xv$ but on closer inspection it is actually linear! This is because each row of $\matB' \yv'(\xv)$ is the inner product of $\hv$ with a product of three codewords in $C^\perp$, which is zero.
Consequently,
\[
    \matB'\yv'(\xv)
    =2\matB'\diag(\hv'\odot\av)(\matB')^\top\xv
      +\matB'\yv'(\mathbf0).
\]
Thus we can enforce the desired syndrome by solving a linear system, provided its coefficient matrix is invertible.

We can select the shift $\av$ to ensure this.
Choose $n$ independent columns of $\matB'$ with index set $I\subseteq T$, and set $a_i=h_i^{-1}$ for $i\in I$ and $a_i=0$ for $i\in T\setminus I$.
Then
\[
    \matB'\diag(\hv'\odot\av)(\matB')^\top
    =\matB_I\matB_I^\top
\]
is invertible, and therefore we can solve the linear system, establishing Theorem~\ref{thm:informal-multiplication}.
Thus the classical algorithm uses three-fold orthogonality to cancel quadratic terms, while the quantum decoder uses the two-fold property to reduce the number of auxiliary variables.

\paragraph{Application to randomly-punctured Reed--Muller codes.}
The Reed--Muller settings that we consider can be specified by the field size $q > 2$ and some constant parameters $\rho \in (0, q-1)$ and $c > 1$. We then let the number of variables $v$ grow.
Take the rows of $\matB$ to be the evaluations of all reduced monomials of total degree at most $r=\lfloor\rho v\rfloor$ on a uniformly random $m$-element subset $S\subseteq\FF_q^v$.
Here \emph{reduced} means that each individual degree is at most $q-1$, so the number of rows and sampled points are
\[
    n=\cnt_q(v,\lfloor\rho v\rfloor),
    \qquad m=\min(\lfloor n^c\rfloor,q^v),\text{ where}
\]
\[
        \cnt_q(v,r)
        :=\#\{\boldsymbol{\alpha}\in\{0,\ldots,q-1\}^v:
        |\boldsymbol{\alpha}|\leq r\}.
    \]
Our task is still to find $\yv$ with no zero entries and $\matB\yv=0$.
As in Figure~\ref{fig:rm-support-size}, extending $\yv$ by zero outside $S$ gives the equivalent task of finding a dual Reed--Muller codeword whose support is exactly $S$. 

In these terms, our task is to identify sufficient regimes of $(q, \rho, c)$ in which there are efficient quantum (or classical) algorithms. 
To do this, we use the counting exponent $\cntexp_q(\rho)$, which records how the number of these monomials grows:
\[
    n=q^{v(\cntexp_q(\rho)+o(1))}.
\]
Definition~\ref{def:count-exponent} gives its explicit formula, and Lemma~\ref{lem:count-asymptotics} gives this estimate.
Products of two or three degree-$r$ polynomials have degree at most $2r$ or $3r$, so $\cntexp_q(2\rho)$ and $\cntexp_q(3\rho)$ control the dimensions of the respective product spaces.
Our tightest results give the following sufficient conditions (Theorems~\ref{thm:rm-pair-dqi} and~\ref{thm:cubecond}):
\begin{equation}\label{eq:rm-overview-thresholds}
\begin{aligned}
    \text{Quantum:}\qquad
    c&>\frac{\max\{\cntexp_q(2\rho),\,\cntexp_q(\rho)+\rho/(q-1)\}}
                 {\cntexp_q(\rho)},\\
    \text{Classical ($q$ odd):}\qquad
    c&>\frac{\max\{\cntexp_q(3\rho),\,\cntexp_q(\rho)+\rho/(q-1)\}}
                 {\cntexp_q(\rho)}.
\end{aligned}
\end{equation}
Each condition gives an algorithm running in $\poly(m)$ time which, with probability $1-o(1)$ over the randomness of $S$, outputs a solution with probability $1-o(1)$ as $v\to\infty$.
The sole difference is the doubled versus tripled degree in the first term of each maximum.
The shared second term ensures that the sampled code $C^\perp$ has sufficiently large distance: the relative distance before puncturing is at least $q^{-r/(q-1)}$, and this term requires $m q^{-r/(q-1)}$ to grow faster than $n$ in the nontrivial regime.

\input{figures/rm-phase-diagrams}

Figure~\ref{fig:rm-phase-diagrams} plots these guarantees for $q=3,5,7$.
The horizontal coordinate $\rho$ is the degree per variable, and the vertical coordinate $c$ controls the number of sampled points relative to the number of constraints.
Smaller $c$ means that an algorithm needs fewer columns.
Blue shading lies above the quantum threshold, and green hatching marks the three-fold classical guarantee.
Above the dashed curve $c\cntexp_q(\rho)=1$, we eventually have $m=q^v$: the sampled set is the entire grid and the all-ones vector is a solution.
This trivial region is gray.
Since $\cntexp_q(\rho)=1$ for $\rho\geq(q-1)/2$, larger $\rho$ than those plotted also give the trivial case for every fixed $c>1$.

The blue region without green hatching is covered by our quantum guarantee but not by our three-fold classical guarantee.
Its portion with $c<2$ also lies below the quadratic scale of the prior classical algorithms discussed in the introduction~\cite{ivanyos2012nil2,ivanyosSantha2015diagonal,imran2024zerosum,kothari2025sis}, giving the candidate regimes for quantum advantage.
Orange hatching marks the nontrivial region with $c>2$: those algorithms require $m\geq Kn^2$ for a constant $K$, which every fixed $c>2$ eventually satisfies, while at $c=2$ this depends on $K$.
Their guarantee is for uniformly random $\matB$ and does not cover our Reed--Muller matrices, so the orange region is a benchmark for comparison and not a classical guarantee for our instances.
All these boundaries describe sufficient conditions; neither the blue-only region nor the unshaded region asserts classical hardness.

\subsection{Histogram-Local Constraints}

We retain the notation from Section~\ref{sec:introcontribution2}. We first discuss the plain-model setting. Let:
$$T=\{\yv\in\F_q^m:c_a(\yv)\in B_a\text{ for every }a\in\F_q\}$$
be the set of all $\yv \in \F_q^m$ whose histograms satisfy the given constraints, and let:
$$A=\{\xv \in \FF_q^m:(\pi_1(x_1),\ldots,\pi_m(x_m))\in T\}$$
be the set of all words that map to a word in $T$ under the relabeling given by the $\pi_i$'s. We need $\ket{A}$ to be efficiently preparable, which can be done using dynamic programming and Grover--Rudolph state preparation~\cite{groverRudolph2002} as we check in Theorem~\ref{thm:Aprep}. So for the remainder of this overview, we focus on the more challenging question of the feasibility of $\QDP_{C^\perp, \QFT^\dag \ket{A}}$.

\paragraph{From list decoding to replacement stability.}

Assuming $T$ is nonempty, our goal is to prepare $\ket{A}$ and show that its Fourier transform has noticeable mass within the classical decoding radius of $C^\perp$ (then we will solve $\QDP_{C^\perp, \QFT^\dag \ket{A}}$ in the classical way of measuring in the standard basis then running a classical decoder). The independent permutations $\{\pi_i\}$ make the distribution of $A$ translation-invariant, ensuring the applicability of the average-case Regev reduction. 

A useful observation---to our knowledge, first leveraged by~\cite{chailloux2024quantum}---is that even a \emph{list-}decoder for $C^\perp$ can be compiled into a $\QDP$ algorithm: the algorithm can simply run a list-decoding subroutine then output a uniformly random element of the list as its guess!\footnote{Previously,~\cite{yamakawa2024verifiable} also worked in the list-decoding regime but their work carries out calculations to check that their error distribution led to list size 1 with overwhelming probability; we would like to be able to also handle situations where such a stringent claim may be difficult to check or even just false.}
In more detail, let $\ev$ be the outcome of measuring $\QFT^\dag \ket{A}$ in the standard basis, and let $W=\wt(\ev)/m$. If a list decoder corrects up to relative distance $\zeta$ with lists of size at most $L$, a uniformly random guess from its output list succeeds with probability at least
\[
    P_\dec\geq\frac{\Pr[W\leq\zeta]}{L}.
\]
Thus constant mass below $\zeta$ and a polynomial list bound are enough for the average-case Regev reduction (we formally state this in Corollary~\ref{cor:ctlistdecode}). Consequently, we only need a bound on the expected relative Hamming weight of $\ev$, rather than a sharp concentration estimate. We also sometimes refer to this expected relative Hamming weight as the \emph{expected Fourier weight} of $\ket{A}$, or $\EFW(\ket{A})$.

To obtain that bound, consider the walk on $\F_q^m$ that replaces a uniformly chosen coordinate by an independent uniform field element, and let $\calK$ be its transition operator on the computational basis. This walk diagonalizes in the Fourier basis: averaging coordinate $i$ leaves $\QFT\ket{\ev}$ unchanged if $e_i=0$ and annihilates it otherwise, by character orthogonality. Averaging over the choice of $i$ therefore gives
\[
    \calK\QFT\ket{\ev}
    =\left(1-\frac{\wt(\ev)}m\right)\QFT\ket{\ev}.
\]
This is the standard relation between coordinate averaging and Fourier support~\cite[Corollary~8.21 and Proposition~8.23]{odonnell2014analysis}. Now choose $\xv$ uniformly from $A$ and take one step to obtain $\xv'$. Expanding $\ket{A}$ in the Fourier basis gives
\begin{align*}
    \Pr[\xv'\in A]
    &=\braket{A|\calK|A}\\
    &=\sum_{\ev\in\F_q^m}
        |\braket{\ev|\QFT^\dag|A}|^2
        \left(1-\frac{\wt(\ev)}m\right)
      =1-\EE[W].
\end{align*}
Thus the expected relative Fourier weight is exactly the one-step exit probability (Corollary~\ref{cor:fourier-exit}). Under the coordinate permutations, this experiment becomes uniform coordinate replacement in $T$: a uniform element of $A$ maps to a uniform element of $T$, and a fresh uniform field element remains uniform after applying any $\pi_i$. If $p_\mathsf{stay}$ is the probability that this replacement stays in $T$, then
\[
    \EE[W]=1-p_{\mathsf{stay}},\text{ so Markov's inequality yields }
    \Pr[W\leq\zeta]\geq 1-\frac{1-p_\mathsf{stay}}{\zeta}.
\]
Therefore it suffices to ensure a fixed positive gap between $1-p_\mathsf{stay}$ and $\zeta$.

\paragraph{Computing the replacement stability of $\ket{A}$.}
Replacing a symbol $a$ (the ``source symbol'') by a different symbol $b$ (the ``target symbol'') changes only two counts: $c_a$ decreases by one and $c_b$ increases by one. A slight hindrance is that the counts of a uniform word in $T$ are dependent, since they sum to $m$. Moreover, histograms are not equally likely: a legal histogram $\cv$ occurs with probability proportional to
\[
    \binom{m}{c_0, \ldots, c_{q-1}} = \frac{m!}{\prod_{a\in\F_q}c_a!}.
\]
We use Poissonization~\cite{kolchin1978random} to represent this distribution by independent tilted Poisson variables conditioned on their sum. For a tilting parameter $\nu>0$, take independent $\operatorname{Poisson}(\nu)$ variables and condition each to lie in its allowed set. The resulting loads satisfy
\begin{equation}\label{eq:kalaw}
    \Pr[K_a=k]
    =\begin{cases}
        \frac{\nu^k/k!}{\sum_{\ell\in B_a}\nu^\ell/\ell!} \text{, if }k\in B_a; \\
    0\text{, otherwise.}
    \end{cases}
\end{equation}
Conditioning further on $\sum_a K_a=m$ gives each legal histogram $\cv$ probability proportional to $\nu^m/\prod_a c_a!$. Since $\nu^m$ is common to every such histogram, this is exactly the histogram distribution of a uniform word in $T$, for every $\nu>0$.

The representation is exact, but conditioning makes the loads dependent. To navigate this, we choose $\nu$ so that the unconditioned sum has mean $m+O(1)$ (i.e., close to the sum $m$ being conditioned on).  A local central limit theorem then lets us approximately remove the conditioning on the sum being $m$ when computing the source and target survival probabilities (Theorem~\ref{thm:maininfinite}).

We illustrate this with the three-class example from Section~\ref{sec:introcontribution2}, where $m=q-1$. Write $(\alpha_1,\alpha_2,\alpha_3)=(1/2,1/3,1/6)$ for the class proportions and $B_1=\{0\}$, $B_2=\{1,2\}$, $B_3=\{3,4\}$ for their allowed load sets; class $r$ contains $\alpha_rq+O(1)$ symbols. Let $\mu_r$ be the mean of the load distribution with allowed set $B_r$ as specified in Equation~\eqref{eq:kalaw}. We choose $\nu$ so that the mean load per symbol is one; a direct computation yields:
\begin{align*}
    \nu &= \sqrt{5} - 1, \\
    (\mu_1, \mu_2, \mu_3) &= \left(0, \frac{2(1+\nu)}{2+\nu}, \frac{4(3+\nu)}{4+\nu}\right) \\
    &= \left(0, \frac{5-\sqrt{5}}{2}, 1+\sqrt{5}\right).
\end{align*}
In this case, we show that the following experiment's probability of outputting 1 describes $p_\mathsf{stay}$ up to an additive $o(1)$ error:
\begin{itemize}
    \item \textbf{Choose the target symbol.} Draw its class $I_\tar=r \in \{1, 2, 3\}$ with probability $\alpha_r$, then draw its load $K_\tar$ according to
    \[
        \Pr[K_\tar=k\mid I_\tar=r]
        =\begin{cases}\frac{\nu^k/k!}{\sum_{\ell\in B_r}\nu^\ell/\ell!},\text{ if }k\in B_r;\\
        0,\text{ otherwise.}
        \end{cases}
    \]
    The target symbol is uniform over the alphabet, so its class is sampled according to the class proportions.
    \item \textbf{Choose the source independently.} Draw its class $I_\src=r \in \{2, 3\}$ with probability $\alpha_r\mu_r$, then draw its load $K_\src$ according to
    \[
        \Pr[K_\src=k\mid I_\src=r]
        =\begin{cases}
            \frac{k}{\mu_r}\,
          \frac{\nu^k/k!}{\sum_{\ell\in B_r}\nu^\ell/\ell!},\text{ if }k\in B_r; \\
          0,\text{ otherwise.}
        \end{cases}
    \]
    The source is selected by choosing a uniform coordinate, so symbols are sampled in proportion to their loads. This biases both the class $r \in \{2, 3\}$ and the conditional sample $K_\src$.
    \item \textbf{Check both load changes.} Output $1$ if and only if $K_\src-1\in B_{I_\src}$ and $K_\tar+1\in B_{I_\tar}$.
\end{itemize}
In summary, we have the following (which we state in Theorem~\ref{thm:maininfinite}):
\[
    p_\mathsf{stay}
    =\Pr[K_\src-1\in B_{I_\src}]
     \Pr[K_\tar+1\in B_{I_\tar}]
     +O(q^{-1/2}).
\]
For our three-class example, decrements are legal at loads $2$ and $4$, while increments are legal at loads $1$ and $3$. The two factors are
\begin{align*}
    \Pr[K_\src-1\in B_{I_\src}]
    &=\frac13\frac{2\nu}{2+\nu}
      +\frac16\frac{4\nu}{4+\nu}
      =\frac{\nu}{3},\\
    \Pr[K_\tar+1\in B_{I_\tar}]
    &=\frac13\frac{2}{2+\nu}
      +\frac16\frac{4}{4+\nu}
      =\frac13.
\end{align*}
Consequently, $p_\mathsf{stay}=(\sqrt5-1)/9+O(q^{-1/2})$.

More generally, under the fixed-class hypotheses of Theorem~\ref{thm:maininfinite}, it suffices for one class occupying a constant fraction of the alphabet to allow two consecutive loads $k,k+1$ that each occur with noticeable probability. There is then constant probability that the target symbol has load $k$ and the source symbol has load $k+1$ in this class, allowing both changes. Thus $p_\mathsf{stay}$ is bounded away from zero even if the other classes have rigid load constraints. A single exceptional symbol with this flexibility would not give the same guarantee.

\paragraph{Specializing to Reed--Solomon codes in the plain model.}
For a Reed--Solomon code of asymptotic rate $\rho$, the dual has rate approaching $1-\rho$. Capacity list decoding of the dual permits any fixed radius below $\rho$. Consequently, our aforementioned stability estimate suffices whenever we can choose, with fixed positive gaps,
\[
    1-p_\mathsf{stay}<\zeta<\rho.
\]
For the three-class example, our calculation gives
\[
    1-p_\mathsf{stay}
    =\frac{10-\sqrt5}{9}+O(q^{-1/2})<\frac78
\]
for all sufficiently large alphabets. This explains the rate $7/8$ in the informal theorem.

In the plain model, the permutations are supplied by their complete truth tables and $m=q-1$. The capacity decoder of Brakensiek et al.~\cite{brakensiek2026capacity}, together with Jeronimo's field-size refinement~\cite{jeronimo2026capacity}, applies to these ordinary Reed--Solomon codes over prime fields (Theorem~\ref{thm:rscapacity}). Its polynomial list bound gives inverse-polynomial average success. The restriction to prime fields and the averaging over random permutations are both part of this guarantee; the uniform stability bound alone does not give a worst-case algorithm for fixed permutations.

\paragraph{A folded Reed--Solomon oracle separation a la Yamakawa--Zhandry.}

For the oracle separation, the relabeling of $\xv$ into $\yv$ instead proceeds as follows: given $\xv \in \F_q^{ms}$, we group its coordinates into $m$ blocks of length $s$. 
Then the relabeled string $\yv \in [n]^m$ is obtained by applying functions $H_1, \ldots, H_m:\F_q^s\to[n]$ to each block. These functions are supplied as random oracles. Here, $n = m + O(1)$. The histogram constraints are then enforced on $\yv$, while requiring $\xv$ to be a Reed--Solomon codeword. The same Fourier stability argument we sketched in the plain model extends readily to show that Regev's reduction succeeds in this setting.

To show classical soundness in this random oracle model, we require a different property from replacement stability. A closer look at the existing result by Yamakawa--Zhandry indicates that at the end of the day it only requires two things:
\begin{itemize}
    \item Sufficiently good list-recoverability of the code $C$, which we know for folded Reed--Solomon codes~\cite{guruswami2008explicit,rudra2007list};
    \item $A$ should be resistant to random completions: consider an experiment where an adversary fixes a large constant fraction of the entries of $\xv \in (\FF_q^s)^m$ however they want, then the remaining entries are sampled independently and uniformly at random. Then the property needed by~\cite{yamakawa2024verifiable} is that the resulting word falls in $A$ with probability $2^{-\Omega(m)}$.
    
    In the case of~\cite{yamakawa2024verifiable} where the constraints enforced on $\yv$ are local, this is straightforward to show. In our case, it is more delicate but nevertheless we can show the desired resistance to random completions. In fact, it holds whenever some class occupying a constant fraction of the alphabet has a fixed allowed-count set that omits infinitely many counts (Lemma~\ref{lem:prangeresistance}). In particular, this holds when that class allows only a bounded range of counts, as in each of the three classes in our example.
\end{itemize}

For bounded allowed counts (i.e., all the $B_a$'s are bounded), the reason is particularly simple: even after an adversary fixes most of the entries, the remaining random entries can cause a symbol to occur too often. Poissonizing the number of fresh entries, with its mean unchanged, makes their occurrence counts independent across symbols. Since $m=\Theta(n)$ and a constant fraction of entries remain fresh, each symbol receives a Poisson number of additional occurrences with constant positive mean. Thus each symbol in the constrained class has a constant probability of exceeding its allowed upper bound, regardless of the fixed entries. There are $\Omega(n)$ such symbols, so the probability that all avoid this violation is $2^{-\Omega(n)}$. Conditioning the total number of fresh entries back to its original value costs only a polynomial factor, preserving the exponentially small bound.

A consequence of the above is that our concrete three-class example remains classically hard in the random-oracle model even if we discard any two of the three constraint classes. Each retained class still occupies a constant fraction of the alphabet and imposes a constant upper bound on its symbols' occurrence counts: $0$, $2$, or $4$, respectively. Thus satisfying any one class remains classically hard, while our quantum algorithm satisfies all three simultaneously.

\section{Preliminaries}

\paragraph{Notation.}
For a nonnegative function $f$, we write $f(n)=\negl(n)$ to mean $f(n)=n^{-\omega(1)}$: for every constant $c>0$, there is an $n_0$ such that $f(n)\leq n^{-c}$ for all $n\geq n_0$.
The same convention applies when another asymptotic parameter replaces $n$.
Over $\F_q$, write $p=\operatorname{char}(\F_q)$, $\omega_p=e^{2\pi i/p}$, and $\Tr=\Tr_{\F_q/\F_p}$.
For $\xv,\yv\in\F_q^M$, put $\langle\xv,\yv\rangle=\sum_jx_jy_j$ and use
\begin{equation}\label{eq:fourier-convention}
    \QFT\ket{\yv}
    =q^{-M/2}\sum_{\xv\in\F_q^M}
       \omega_p^{\Tr(\langle\xv,\yv\rangle)}\ket{\xv}.
\end{equation}
Thus $\QFT^\dag\ket{\yv}=\QFT\ket{-\yv}$.

\subsection{Coding Theory}

Much of this section is adapted from~\cite{yamakawa2024verifiable}.

\begin{definition}[Folded linear codes]
    A code of length $M$ over an alphabet $\Sigma$ is a subset $C \subseteq \Sigma^M$.
    A linear code $C$ is one where $\Sigma = \FF_q$ for some prime power $q$ and $C$ is a linear subspace of $\FF_q^M$.
    Finally, for a parameter $s$ dividing $M$, the $s$-folded linear code $C^{(s)}$ is defined to have length $m := M/s$ over the alphabet $\Sigma = \FF_q^s$.
    Any codeword $(x_1, \ldots, x_M) \in C$ is converted into a codeword in $C^{(s)}$ by clustering blocks of symbols together:
    $$((x_1, \ldots, x_s), (x_{s+1}, \ldots, x_{2s}), \ldots, (x_{M-s+1}, \ldots, x_M)).$$
\end{definition}

\begin{definition}[Rate]
    The rate of a code $C$ is $\frac{\log |C|}{M \log |\Sigma|}$.
    When $C$ is a linear code, it can also be characterized as $\frac{\dim C}{M}$.
\end{definition}

\paragraph{Reed-Solomon codes.}
\begin{definition}[Generalized Reed-Solomon codes]
    Let $\FF_q$ be a field, $k \in [1, q-1]$ an integer, $\gamma$ a generator of $\FF_q^*$, and $\vv \in \FF_q^{q-1}$ some vector of nonzero elements.
    Then the generalized Reed-Solomon code $\GRS_{\FF_q, \gamma, k, \vv}$ is defined as:
    $$\left\{(v_1 f(\gamma), v_2 f(\gamma^2), \ldots, v_{q-1} f(\gamma^{q-1})): f \in \FF_q[X], \deg f \leq k-1\right\}.$$
    It is linear, has length $q-1$ over alphabet $\FF_q$, and has rate $k/(q-1)$.
    When $\vv$ is the all 1's vector, we simply refer to this as the Reed-Solomon code $\RS_{\FF_q, \gamma, k}$.
\end{definition}

\begin{lemma}[Generalized Reed--Solomon duality~{\cite[Exercise~5.10(2)]{guruswami2025essential}}]\label{lem:grsduality}
Let $1\leq k<M=q-1$, put $a_i=\gamma^i$, and let every $v_i$ be nonzero.
Define
\[
    u_i=\left(v_i\prod_{j\neq i}(a_i-a_j)\right)^{-1}.
\]
Then
$\GRS_{\F_q,\gamma,k,\vv}^{\perp}
 =\GRS_{\F_q,\gamma,M-k,\uv}$.
\end{lemma}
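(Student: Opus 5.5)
Both codes have dimension $k + (M-k) = M$, so it suffices to show that they are orthogonal: $\dim\GRS_{\F_q,\gamma,M-k,\uv} = M-k = M-\dim \GRS_{\F_q,\gamma,k,\vv}$. The dimension count uses the fact that the evaluation points $a_1,\ldots,a_M$ are distinct, being the $M=q-1$ distinct powers of the generator $\gamma$. A nonzero polynomial of degree less than $M$ cannot vanish at all of them, so the evaluation map is injective on each message space and each code has the stated dimension. All weights $v_i$ and $u_i$ are nonzero, since the $a_i$ are distinct, so the weighting does not affect injectivity.

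For orthogonality, take $f$ with $\deg f\le k-1$ and $g$ with $\deg g\le M-k-1$. The inner product of the two codewords is
\[
\sum_{i=1}^M v_i f(a_i)\, u_i g(a_i) = \sum_{i=1}^M \frac{h(a_i)}{\prod_{j\ne i}(a_i-a_j)},
\]
where $h=fg$ has degree at most $M-2$. The key step is to show that this sum vanishes. I would invoke Lagrange interpolation: since $\deg h \le M-1$, we have
\[
h(X)=\sum_i h(a_i)\prod_{j\ne i}\frac{X-a_j}{a_i-a_j}.
\]
Comparing coefficients of $X^{M-1}$ on both sides gives
\[
\sum_i h(a_i)\Big/\prod_{j\ne i}(a_i-a_j) = [X^{M-1}]\,h = 0,
\]
because $\deg h\le M-2$. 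Hence every codeword of $\GRS_{\F_q,\gamma,M-k,\uv}$ lies in $\GRS_{\F_q,\gamma,k,\vv}^\perp$.

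Combining the inclusion with the equality of dimensions gives the claimed equality of codes. There is no real obstacle here. The only care needed is checking the degree bound $\deg h\le M-2<M-1$ and the distinctness of the evaluation points. Optionally, one can note that $\prod_{j\ne i}(a_i-a_j)$ equals the derivative of $X^{q-1}-1$ at $a_i$, which is $-a_i^{-1}$; this gives an explicit formula for $u_i$ but is not needed for the argument.
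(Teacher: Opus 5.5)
Your proof is correct, and it matches the standard argument for this fact. The paper does not prove the lemma itself; it cites it as a textbook exercise. Your route is the standard proof of that exercise: the inclusion comes from the vanishing of the $X^{M-1}$ coefficient in the Lagrange interpolation of $fg$, which has degree at most $M-2$, and equality then follows from the dimension count.

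One wording slip: your opening sentence says both codes have dimension $k+(M-k)=M$. What you mean, and what you actually use, is that their dimensions sum to $M$.
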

Dividing coordinate $i$ by $v_i$ preserves Hamming agreement, including agreement of entire folded blocks, so the decoders below apply to generalized codes.

\begin{definition}[Asymptotic rate]
    In all parameter settings we consider in this paper, $k, q$ will grow together and we will have $k/q \to R$ as $k, q \to \infty$.
    We refer to $R \in (0, 1)$ as the asymptotic rate.
\end{definition}

\begin{definition}[Folded Reed-Solomon codes]\label{def:frs}
    When $s$ divides $q-1$, we can define the folded Reed-Solomon code $\FRS_{\FF_q, \gamma, k, s}$ to be the $s$-folding $\RS_{\FF_q, \gamma, k}^{(s)}$.
    We regard its alphabet as $\F_q^s$.  The code remains linear over $\F_q$; no linearity over an extension field of size $q^s$ is asserted.
\end{definition}

\paragraph{Reed-Muller codes.}
\begin{definition}[Reed-Muller codes]
    Let $\FF_q$ be a field, $v$ a positive integer, and $r \in [0, v(q-1)-1]$ an integer.
    Then the Reed-Muller code $\RM_{\FF_q, v, r} \subseteq \FF_q^{q^v}$ is defined as:
    $$\left\{(f(\xv): \xv \in \FF_q^v): f \in \FF_q[X_1, \ldots, X_v], \deg f \leq r\right\}.$$
    Given a subset $S \subseteq \FF_q^v$ of size $m$, we define $\RM_{\FF_q, v, r, S} \subseteq \FF_q^m$ analogously, except we only write down the evaluations of the polynomial $f$ at points in $S$.
    Looking ahead, this is just the puncturing of $\RM_{\FF_q, v, r}$ at the coordinate set $S$.
\end{definition}

\begin{lemma}[{\cite[Proposition~4.2 and Remark~4.7]{heijnenPellikaan1998}}]\label{lem:rmdual}
    We have $\RM_{\FF_q, v, r}^\perp = \RM_{\FF_q, v, v(q-1)-r-1}$.
\end{lemma}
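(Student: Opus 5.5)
The plan is to prove the two containments through orthogonality plus a dimension count, working entirely with reduced monomials. Set $r' := v(q-1)-r-1$.

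\textbf{Step 1: reduce to reduced monomials.} Since $x^q=x$ as functions on $\F_q$, every polynomial of total degree at most $r$ agrees on $\F_q^v$ with a linear combination of reduced monomials $\xv^{\boldsymbol\alpha}$, $\boldsymbol\alpha\in\{0,\ldots,q-1\}^v$. The reduction $x^q\mapsto x$ only lowers degree, so these monomials also have $|\boldsymbol\alpha|\le r$. Conversely, every such monomial has degree at most $r$. Hence $\RM_{\F_q,v,r}$ is spanned by the evaluation vectors of reduced monomials of degree at most $r$. The $q^v$ reduced monomials evaluate to a basis of the space of all functions $\F_q^v\to\F_q$: they span it by Lagrange interpolation, and there are exactly $q^v$ of them. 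In particular they are linearly independent, so $\dim\RM_{\F_q,v,r}=\cnt_q(v,r)$.

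\textbf{Step 2: orthogonality.} Take reduced monomials with $|\boldsymbol\alpha|\le r$ and $|\boldsymbol\beta|\le r'$. Their inner product factors as
\[
\prod_{i=1}^v \sum_{x\in\F_q} x^{\alpha_i+\beta_i}.
\]
We use the standard fact that $\sum_{x\in\F_q}x^e$ equals $-1$ if $e\ge1$ and $(q-1)\mid e$, and equals $0$ otherwise. The case $e=0$ gives $q=0$ in $\F_q$; for $0<e$ not divisible by $q-1$, a generator $\gamma$ of $\F_q^*$ gives $\gamma^e\ne1$, so the geometric sum vanishes. Now $\sum_i(\alpha_i+\beta_i)\le v(q-1)-1$, so some coordinate has $\alpha_i+\beta_i<q-1$. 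That factor is zero, hence the inner product is zero. This gives $\RM_{\F_q,v,r'}\subseteq\RM_{\F_q,v,r}^\perp$.

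\textbf{Step 3: dimensions.} The involution $\boldsymbol\alpha\mapsto(q-1)\one-\boldsymbol\alpha$ on $\{0,\ldots,q-1\}^v$ maps $\{|\boldsymbol\alpha|\le r'\}$ bijectively onto $\{|\boldsymbol\alpha|\ge r+1\}$. That set is the complement of $\{|\boldsymbol\alpha|\le r\}$, so
\[
\cnt_q(v,r)+\cnt_q(v,r')=q^v .
\]
Combined with Step 1, $\dim\RM_{\F_q,v,r'}=q^v-\dim\RM_{\F_q,v,r}=\dim\RM_{\F_q,v,r}^\perp$, and the containment from Step 2 is an equality.

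The only delicate point is Step 1's claim that evaluations of reduced monomials are independent, which the dimension count relies on. The interpolation-plus-counting argument handles it cleanly. The range condition $0\le r\le v(q-1)-1$ ensures $r'\ge0$, so both codes are well defined.
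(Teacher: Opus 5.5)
Your proof is correct. The power-sum identity kills every pairing of reduced monomials with $|\boldsymbol\alpha|+|\boldsymbol\beta|\le v(q-1)-1$, and the involution $\boldsymbol\alpha\mapsto(q-1)\one-\boldsymbol\alpha$ gives $\cnt_q(v,r)+\cnt_q(v,v(q-1)-r-1)=q^v$, which upgrades the containment to equality.

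The paper does not prove this lemma itself; it only cites Heijnen--Pellikaan. Your orthogonality-plus-dimension-count argument is a correct, self-contained proof that the citation stands in for. As a side benefit, your Step~1 also justifies the identity $\dim\RM_{\FF_q,v,r}=\cnt_q(v,r)$, which the paper later uses without comment.
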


\begin{lemma}[Lemma 9.4.1 of~\cite{guruswami2025essential}]\label{lem:rmdistance}
    The distance of $\RM_{\FF_q, v, r}$ is at least $q^{v-r/(q-1)}$.
\end{lemma}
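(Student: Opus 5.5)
Since this lemma is quoted from \cite[Lemma~9.4.1]{guruswami2025essential}, the plan is the standard Schwartz--Zippel-type argument for multivariate polynomials, applied to a minimum-weight codeword. By linearity, the distance of $\RM_{\FF_q, v, r}$ equals the minimum number of nonzeros of a nonzero evaluation vector $(f(\xv):\xv\in\FF_q^v)$ with $\deg f\leq r$. The first step is to reduce $f$ modulo the ideal $(X_j^q-X_j)_j$. This does not change the evaluation vector or increase the degree. Afterwards every individual degree is at most $q-1$, and $f$ is nonzero as a reduced polynomial, because a reduced polynomial vanishing on all of $\FF_q^v$ is zero.

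The main step is the claim that a nonzero reduced polynomial $f$ of total degree $d$ has at least $q^{v-d/(q-1)}$ nonzeros on $\FF_q^v$. I would prove this by induction on $v$.

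\emph{Base case ($v=1$).} A nonzero univariate $f$ of degree $d\leq q-1$ has at most $d$ roots, so it has at least $q-d$ nonzeros. The inequality $q-d\geq q^{1-d/(q-1)}$ holds at $d=0$ and at $d=q-1$. Since $q^{1-t/(q-1)}$ is convex in $t$ while $q-t$ is linear, the inequality holds on the whole interval $[0,q-1]$.

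\emph{Inductive step.} Write $f=\sum_{i=0}^{e}X_v^i f_i(X_1,\ldots,X_{v-1})$ with $f_e\neq 0$, $e\leq q-1$, and $\deg f_e\leq d-e$. By induction, $f_e$ is nonzero on at least $q^{v-1-(d-e)/(q-1)}$ points $\mathbf{a}\in\FF_q^{v-1}$. For each such $\mathbf{a}$, the univariate polynomial $f(\mathbf{a},X_v)$ is nonzero of degree $e$, so by the base case it has at least $q^{1-e/(q-1)}$ nonzeros. Multiplying the two counts gives the bound $q^{v-d/(q-1)}$.

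The only mildly delicate point is the base-case convexity inequality; everything else is routine bookkeeping. Applying the claim with $d\leq r$ and using that the bound is monotone in $d$ yields distance at least $q^{v-r/(q-1)}$.
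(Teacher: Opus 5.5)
Your proof is correct: it is the standard induction-on-variables (Schwartz--Zippel-type) argument, and the univariate inequality $q-t\geq q^{1-t/(q-1)}$ on $[0,q-1]$ follows from convexity exactly as you say. The paper gives no proof of its own and simply cites \cite[Lemma~9.4.1]{guruswami2025essential}. One tiny nit: the reduced polynomial is nonzero simply because reduction preserves the nonzero evaluation vector, not because of the converse fact you quote.
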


\begin{definition}[Reduced-monomial count and counting exponent]\label{def:count-exponent}
    For integers $q\geq 2$, $v\geq 1$, and a real threshold $r\geq 0$, define
    \[
        \cnt_q(v,r)
        :=\#\{\boldsymbol{\alpha}\in\{0,\ldots,q-1\}^v:
        |\boldsymbol{\alpha}|\leq r\}.
    \]
    Thus $\cnt_q(v,r)=\cnt_q(v,\lfloor r\rfloor)$.
    For a real $x\geq 0$, define the counting exponent
    \[
        \cntexp_q(x)
        :=\inf_{0<z\leq 1}
        \left\{\log_q(1+z+\cdots+z^{q-1})-x\log_q z\right\}.
    \]
\end{definition}

The following estimate is a standard consequence of the large-powers
saddle-point estimates of Flajolet and Sedgewick~\cite[Section~VIII.8,
in particular Theorem~VIII.8]{flajolet2009analytic}.

\begin{lemma}[Reduced-monomial count asymptotics]\label{lem:count-asymptotics}
    Fix an integer $q\geq 2$ and a real $\rho\geq 0$, and let $r=r(v)\geq 0$
    satisfy $r=\rho v+O(1)$. Then, as $v\to\infty$,
    \[
        \cnt_q(v,r)
        =q^{v\cntexp_q(\rho)+O(\log v)}
        =q^{v(\cntexp_q(\rho)+o(1))}.
    \]
    Moreover, $\cntexp_q(0)=0$, $0<\cntexp_q(\rho)<1$ for
    $0<\rho<(q-1)/2$, and $\cntexp_q(\rho)=1$ for $\rho\geq(q-1)/2$.
\end{lemma}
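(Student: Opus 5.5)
The count is a coefficient extraction: $\cnt_q(v,r)=\sum_{k\leq r}[z^k]P(z)^v$, where $P(z)=1+z+\cdots+z^{q-1}$. My plan is to prove matching upper and lower bounds on the exponent and then derive the listed properties of $\cntexp_q$ from its variational formula.

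\textbf{Upper bound.} This is the Chernoff/Rankin trick. For every $0<z\leq 1$,
\[
\cnt_q(v,r)\leq\sum_{k\leq r}[z^k]P(z)^v\,z^{k-r}\leq P(z)^v z^{-r}.
\]
Taking $\log_q$ gives $v(\log_q P(z)-\rho\log_q z)+O(1)\cdot|\log_q z|$. The $O(1)$ shift in $r$ is harmless as long as we restrict to $z$ bounded away from $0$. The minimizer $z^*$ of the convex function $g(t)=\log P(e^t)-\rho t$ lies in $(0,1]$, and it is strictly positive when $\rho>0$. For $\rho=0$ the count is exactly $1$. So taking the infimum yields $\cnt_q(v,r)\leq q^{v\cntexp_q(\rho)+O(1)}$.

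\textbf{Lower bound.} Use exponential tilting. Assume first that $0<\rho<(q-1)/2$. Let $X_1,\dots,X_v$ be i.i.d.\ on $\{0,\dots,q-1\}$ with $\Pr[X=j]=z^{*j}/P(z^*)$, where $z^*\in(0,1)$ solves $z P'(z)/P(z)=\rho$. This is the stationarity condition, and a root exists because the tilted mean increases from $0$ to $(q-1)/2$ as $z$ goes from $0$ to $1$.

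For any $\alpha$ with $|\alpha|=k$ we have $\Pr[X=\alpha]=z^{*k}/P(z^*)^v$. Hence
\[
\cnt_q(v,r)\geq \#\{|\alpha|=\lfloor r\rfloor\}=P(z^*)^v z^{*-\lfloor r\rfloor}\Pr\Bigl[\textstyle\sum X_i=\lfloor r\rfloor\Bigr].
\]
The sum has mean $\rho v$, and $\lfloor r\rfloor$ is within $O(1)$ of it. The local CLT for lattice variables of span $1$ (or simply a crude bound: the mode of the sum carries probability at least $1/(v(q-1)+1)$, and the mass near the mean is $\Omega(v^{-1/2})$) gives probability $\Omega(v^{-1/2})$. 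This yields $q^{v\cntexp_q(\rho)-O(\log v)}$.

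For $\rho\geq(q-1)/2$, use $z=1$ instead. By symmetry, at least half of all $q^v$ vectors have $|\alpha|\leq v(q-1)/2$, so $\cnt\geq q^v/2$ up to the $O(1)$ shift. Meanwhile $\cnt\leq q^v$ trivially. The $O(1)$ shift in $r$ changes the count by at most a $\poly(v)$ factor, since neighbouring coefficient layers differ by bounded ratios near the tilted mean. I expect the local limit step to be the main technical point, but it is standard (this is exactly Flajolet--Sedgewick Theorem VIII.8).

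\textbf{Properties of $\cntexp_q$.}
\begin{itemize}
\item At $\rho=0$, letting $z\to 0$ gives $\log_q P(z)\to 0$, so $\cntexp_q(0)=0$.
\item For $\rho\geq(q-1)/2$, the derivative of $g$ at $t=0$ is $(q-1)/2-\rho\leq 0$. Convexity then puts the infimum over $t\leq 0$ at $t=0$, giving the value $\log_q q=1$.
\item For $0<\rho<(q-1)/2$, the derivative at $t=0$ is positive, so the value is strictly below $1$. Positivity follows because $g(t)>0$ for all $t\leq 0$: indeed $P(z)>1\geq z^{\rho}$ for $z\in(0,1]$, and $g(t)\to\infty$ as $t\to-\infty$ when $\rho>0$, so the infimum is attained at a point where the value is positive.
\end{itemize}
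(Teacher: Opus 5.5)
Your argument is correct in substance. It is also more self-contained than the paper's treatment: the paper gives no proof beyond citing the Flajolet--Sedgewick large-powers saddle-point theorem. You use the same saddle point $z^*$, but replace complex-analytic coefficient asymptotics by two elementary bounds: a Chernoff (Rankin) upper bound, and an exponentially tilted lower bound through the single layer $\{|\boldsymbol\alpha|=\lfloor r\rfloor\}$.

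This buys two things the citation leaves implicit.
\begin{itemize}
\item You pass from one coefficient $[z^N]P(z)^v$ to the cumulative count. The natural analytic route, $[z^{\lfloor r\rfloor}]P(z)^v/(1-z)$, breaks down exactly when the saddle point reaches the pole at $z=1$, i.e.\ for $\rho\geq(q-1)/2$. You handle that regime separately via the symmetry $\boldsymbol\alpha\mapsto(q-1)\mathbf 1-\boldsymbol\alpha$.
\item You actually prove the stated properties of $\cntexp_q$, using convexity of $t\mapsto\log P(e^t)-\rho t$.
\end{itemize}
Your only analytic input is a lattice local CLT for i.i.d.\ tilted variables on $\{0,\dots,q-1\}$. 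This is the $R=1$ case of the paper's own Theorem~\ref{thm:1dlocalclt}, since the tilted law has positive variance and its support contains $0$ and $1$.

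Two small corrections.

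First, for $\rho=0$ the count is not exactly $1$. The hypothesis $r=O(1)$ allows $r$ to be a positive constant, and then $\cnt_q(v,r)=\Theta(v^{\lfloor r\rfloor})$. So your claimed $O(1)$ error in the upper bound fails in this case. This is exactly what the lemma's $O(\log v)$ term absorbs. You recover it by taking $z=1/v$ in your Chernoff bound, or from $\cnt_q(v,r)\leq(v+1)^{\lfloor r\rfloor}$, together with the trivial bound $\cnt_q(v,r)\geq1$.

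Second, drop the parenthetical ``crude bound'' via the mode. Pigeonhole puts mass at least $1/(v(q-1)+1)$ on the mode, but says nothing about where the mode sits relative to $\lfloor r\rfloor$. Using a layer at distance $d$ below $\lfloor r\rfloor$ costs a factor $z^{*\,d}$, and layers above $r$ cannot be used at all. To keep the loss at $q^{O(\log v)}$ you therefore need local information near $\lfloor r\rfloor$. That information is precisely what the local CLT supplies.
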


\paragraph{List-decodability.}
\begin{definition}
    A code $C \subseteq \Sigma^M$ is $(\zeta, L)$-list-decodable if for all $\xv \in \Sigma^M$, we have:
    $$\left|\{\cv \in C: \dist(\cv, \xv) \leq \zeta M\}\right| \leq L.$$
    We say $C$ is \emph{efficiently} $(\zeta, L)$-list-decodable if $L = \poly(M, |\Sigma|)$ and there exists a $\poly(M, |\Sigma|)$ time algorithm that on input $\xv$, outputs a list of all codewords $\cv$ in this list.
\end{definition}

The algorithmic part of the following theorem is the Guruswami--Sudan decoder~\cite{guruswamiSudan1999}.
The list-size estimate is the large-alphabet Johnson bound~\cite[Theorem~3.3]{guruswami2006algorithmic}.
\begin{theorem}[Guruswami--Sudan and the large-alphabet Johnson bound]\label{thm:guruswamisudan}
    Let $C = \GRS_{\FF_q, \gamma, k, \vv} \subseteq \FF_q^M$, where $M=q-1$, and write $R_C=k/M$.
    For every $\delta>0$, there is an algorithm running in time $\poly(q,1/\delta)$ that outputs every codeword at relative Hamming distance at most $1-\sqrt{R_C}-\delta$ from a given word.
    If $R_C$ is bounded away from $0$, the output list has size $O(1/\delta)$.
\end{theorem}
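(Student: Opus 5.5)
The proof has three steps. First reduce to ordinary Reed--Solomon codes, then invoke the Guruswami--Sudan interpolation/factorization decoder for the algorithmic claim, and finally bound the list size separately via the large-alphabet Johnson bound.

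\textbf{Step 1 (reduction to plain RS).} Given a received word $\yv\in\FF_q^M$, form $\yv'$ with $y'_i=y_i/v_i$; this is well defined since every $v_i\neq 0$. A codeword $(v_if(a_i))_i$ of $\GRS_{\FF_q,\gamma,k,\vv}$ agrees with $\yv$ at coordinate $i$ if and only if $f(a_i)=y'_i$. Hence decoding $\yv$ in the generalized code is equivalent to finding all $f$ with $\deg f\le k-1$ agreeing with $\yv'$ on at least $t:=\lceil(\sqrt{R_C}+\delta)M\rceil$ of the points $a_i=\gamma^i$. This reduction also preserves list sizes.

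\textbf{Step 2 (Guruswami--Sudan).} Fix a multiplicity $s$ and a $(1,k-1)$-weighted degree bound $D$. Find a nonzero $Q(X,Y)$ of weighted degree at most $D$ vanishing with multiplicity at least $s$ at each $(a_i,y'_i)$. This is a homogeneous linear system with $M\binom{s+1}{2}$ constraints and more than $D^2/(2(k-1))$ unknowns, so choosing $D\approx\sqrt{(k-1)Ms(s+1)}$ guarantees a nonzero solution.

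For any $f$ with $t$ agreements, $Q(X,f(X))$ has degree at most $D$ but at least $ts$ roots counted with multiplicity. It therefore vanishes identically once $ts>D$, which means $Y-f(X)$ divides $Q$. Such factors are found by polynomial-time bivariate factorization (or Roth--Ruckenstein root finding), after which we keep the candidates within the radius.

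The condition $ts>D$ reads roughly $t/M>\sqrt{R_C(1+1/s)}$. Since $t/M\ge\sqrt{R_C}+\delta$, it suffices to take $s=O(1/\delta^2)$ (indeed $s=O(\sqrt{R_C}/\delta)$-type choices work up to constants). The system then has $\poly(q,1/\delta)$ size, giving the $\poly(q,1/\delta)$ running time.

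\textbf{Step 3 (list size).} The code has relative distance $\delta_C=(M-k+1)/M\ge 1-R_C$. The large-alphabet Johnson bound~\cite[Theorem~3.3]{guruswami2006algorithmic} says that whenever the relative radius $e$ satisfies $(1-e)^2>1-\delta_C$, the number of codewords within relative distance $e$ of any word is at most $\delta_C/\bigl((1-e)^2-(1-\delta_C)\bigr)$. Plugging in $1-e=\sqrt{R_C}+\delta$ gives
\[
(1-e)^2-(1-\delta_C)\ge(\sqrt{R_C}+\delta)^2-R_C\ge 2\delta\sqrt{R_C}.
\]
So the list size is at most $1/(2\delta\sqrt{R_C})$, which is $O(1/\delta)$ when $R_C$ is bounded away from $0$. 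This bounds the decoder's output as well, since every output is a genuine codeword within the radius.

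The main obstacle I expect is bookkeeping rather than conceptual. One part is choosing $s$ and $D$ precisely so that both the counting inequality and $ts>D$ hold at the exact radius $1-\sqrt{R_C}-\delta$, using $k-1$ versus $k$ and ceilings. The other is matching the exact form of the cited Johnson bound, where I would first try the version stated in terms of agreement $\sqrt{1-\delta_C}$.
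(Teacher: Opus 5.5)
Your proposal is correct and takes the same approach as the paper, which gives no separate proof: it cites the Guruswami--Sudan decoder for the algorithm and the large-alphabet Johnson bound for the list size, and notes elsewhere that dividing coordinate $i$ by $v_i$ reduces generalized Reed--Solomon codes to ordinary ones. Your interpolation count and the bound $L\le 1/(2\delta\sqrt{R_C})$ check out, and $s=\lceil 1/\delta\rceil$ already suffices, so $O(1/\delta^2)$ is more than needed; the one case you leave implicit is $k=1$, where the $(1,k-1)$-weighting degenerates but brute force over the $q$ constant polynomials works.
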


For the oracle application, decoding errors in entire folded blocks gives a larger radius.
We state the runtime in terms of the scalar length and base field, since a bound polynomial in the folded alphabet size $q^s$ would not suffice.

\begin{lemma}[Folded decoding with a constant final list]\label{lem:frscapacity}
    Let $M=q-1$, let $s\mid M$, and let
    $C=\GRS_{\F_q,\gamma,k,\vv}$ have rate $R_C=k/M$.
    Group the coordinates into $m=M/s$ consecutive blocks.
    Fix $\epsilon\in(0,1-R_C)$, put $t=\lceil2/\epsilon\rceil$, and assume
    \[
        s\geq t+\left\lceil\frac{2t}{\epsilon}\right\rceil.
    \]
    There is a deterministic algorithm that lists all codewords differing from a given word in at most $(1-R_C-\epsilon)m$ blocks.
    Its output list has size at most $t$, and its running time is
    \[
        q^{t-1}\poly(M,t,\log q).
    \]
    These claims hold over every finite field $\F_q$.
\end{lemma}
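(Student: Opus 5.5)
We prove Lemma~\ref{lem:frscapacity} with the Guruswami--Rudra approach: Parvaresh--Vardy/Guruswami--Rudra interpolation, run directly on the unfolded scalar codeword. The rest of this paragraph fixes the setup. By the remark after Lemma~\ref{lem:grsduality}, we may divide coordinate $i$ by $v_i$. This does not change which blocks agree, so we may assume $\vv=\mathbf{1}$, and $C$ becomes $\RS_{\F_q,\gamma,k}$. A codeword is then $(f(\gamma^{j}))_{j}$ for some $f$ with $\deg f\leq k-1$. Block $b$ is the tuple $(f(\gamma^{bs+1}),\ldots,f(\gamma^{bs+s}))$. Write $\alpha_b=\gamma^{bs+1}$. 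The received word gives, for each block, values $y_{b,1},\ldots,y_{b,s}$.

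The first step is interpolation. We look for a nonzero polynomial
\[
Q(X,Y_1,\ldots,Y_t)=A_0(X)+A_1(X)Y_1+\cdots+A_t(X)Y_t
\]
with $\deg A_0\leq D+k-1$ and $\deg A_j\leq D$ for $j\geq1$. We impose the constraints $Q(\gamma^{i}\alpha_b,y_{b,i+1},\ldots,y_{b,i+t})=0$ for every block $b$ and every window $0\leq i\leq s-t$. There are $m(s-t+1)$ homogeneous linear constraints and $(t+1)(D+1)+k-1$ unknowns. We choose
\[
D=\left\lfloor\frac{m(s-t+1)-k+1}{t+1}\right\rfloor,
\]
which guarantees a nonzero solution.

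The second step shows that $Q$ vanishes at the true codeword. Take $f$ agreeing with the received word on at least $(R_C+\epsilon)m$ blocks. Then $P(X)=Q(X,f(X),f(\gamma X),\ldots,f(\gamma^{t-1}X))$ has degree at most $D+k-1$. It also has at least $(R_C+\epsilon)m(s-t+1)$ distinct roots: the points $\gamma^i\alpha_b$ are pairwise distinct because $\gamma$ has order $M$. We then need
\[
(R_C+\epsilon)m(s-t+1)>D+k-1.
\]
This follows from $D\leq m(s-t+1)/(t+1)$, $k=R_Cms$, $t\geq 2/\epsilon$ and $s\geq t+2t/\epsilon$: the last two make $s/(s-t+1)\leq 1+\epsilon/2$, and $1/(t+1)<\epsilon/2$. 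This routine inequality check is where the stated hypotheses on $t$ and $s$ get used. Once it holds, $P\equiv0$.

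The third step recovers candidate messages. The condition $P\equiv0$ is an $\F_q$-linear equation in the coefficients of $f$, because $f(\gamma^jX)$ is linear in $f$. Here we use that the identity $\sum_j A_j(X)f(\gamma^{j-1}X)=-A_0(X)$ is linear in $f$, which we rewrite in the standard way as
\[
f(X)^{\text{coeffs}}\mapsto \sum_{j}A_j(X)f(\gamma^{j-1}X)\pmod{E(X)}
\]
with $E(X)=X^{q-1}-\gamma$ irreducible, the Guruswami--Rudra trick. Since $f(\gamma X)\equiv f(X)^q\pmod{E}$, the equation becomes a univariate polynomial equation in $f\bmod E$ over the extension field $\F_q[X]/(E)$, of degree at most $q^{t-1}$. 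Solving this outright costs time polynomial in $q^{t-1}$, which is acceptable here. Since we only need list size $t$, we instead follow Guruswami's linear-algebraic alternative. The solution space of the affine linear system in the $k$ coefficients of $f$ is an affine subspace. Taking $A_j$ with the last $j$ nonzero and reducing modulo $E$ shows the dimension is at most $t-1$. So the system has at most $q^{t-1}$ solutions, and we enumerate them all. For each, we check block agreement, which takes time $\poly(M)$. The retained list has at most $t$ codewords; we obtain this bound from the known combinatorial list-size bound for folded RS codes at this radius.

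The main obstacle is the final list size: the subspace bound alone gives $q^{t-1}$ candidates, not $t$. To get $t$ we would use the pruning argument of Kopparty--Ron-Zewi--Saraf--Wootters. Their bound says the number of codewords in a dimension-$(t-1)$ affine subspace that agree with the received word on at least $(R_C+\epsilon)m$ blocks is at most $O(1/\epsilon)$. With the given choice of $s$ this matches $t$, and we would verify the constants against their bound. All steps are field-independent linear algebra, apart from irreducibility of $E$, which holds for every $q$ with $\gamma$ primitive. This gives the claim over every $\F_q$, with running time $q^{t-1}\poly(M,t,\log q)$.
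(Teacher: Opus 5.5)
\textbf{Gap: the list-size bound of exactly $t$ is not established.} You flag this yourself. The Kopparty--Ron-Zewi--Saraf--Wootters pruning argument does not give $O(1/\epsilon)$ here: their bound on codewords in a low-dimensional affine subspace is exponential in $1/\epsilon$. Moreover, even a genuine $O(1/\epsilon)$ bound could not deliver the exact bound $t=\lceil 2/\epsilon\rceil$ the lemma states, so ``we would verify the constants'' cannot be discharged with that tool.

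\textbf{The missing ingredient} is the Chen--Zhang relaxed generalized Singleton bound: folded Reed--Solomon codes are $\left(\frac{L}{L+1}\left(1-\frac{sR}{s-L+1}\right),L\right)$-list decodable. With $L=t$ this radius is exactly $r_{s,t}=\frac{t}{t+1}\left(1-\frac{sR_C}{s-t+1}\right)$. Your step-two inequality, $\frac{1}{t+1}+\frac{t}{t+1}\cdot\frac{sR_C}{s-t+1}<R_C+\epsilon$, is precisely $1-R_C-\epsilon<r_{s,t}$. So the closed ball you filter to lies inside the Chen--Zhang ball, and at most $t$ codewords survive. This is how the paper closes the argument. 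The combinatorial bound is used only for the output size, while enumeration and distance checks supply the algorithm, so nothing else in your plan needs to change. For the downstream use in Theorem~\ref{thm:yzpositive}, any list bound independent of $\lambda$ would suffice, but the lemma as stated claims at most $t$.

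\textbf{The algorithmic part is sound and follows the paper's route.} The paper does not re-derive the interpolation; it cites Guruswami's linear-algebraic decoder, which outputs an affine space of dimension at most $t-1$ containing every message within relative block distance $r_{s,t}$. It then enumerates and filters the $q^{t-1}$ candidates exactly as you do.

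\textbf{Two side remarks on that part.} First, the detour through $E(X)=X^{q-1}-\gamma$ and root-finding over the extension field is unnecessary, and deterministic root-finding there would need its own justification. Irreducibility of $E$ plays no role in the linear-algebraic route. Second, the dimension bound does not come from reducing modulo $E$. After dividing $Q$ by the largest power of $X$ common to all $A_j$ (harmless, since the interpolation points are nonzero), the coefficient of $X^r$ in $P$ is $B(\gamma^r)f_r$ plus terms in $f_0,\ldots,f_{r-1}$. Here $B(Z)=\sum_{j=1}^{t}A_j(0)Z^{j-1}$. Since the points $\gamma^r$ for $0\le r<k$ are distinct, $B$ vanishes at most $t-1$ of them, and this bounds the dimension. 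If $B=0$, the system is inconsistent.
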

\begin{proof}
    Divide coordinate $i$ of the received word by $v_i$.
    This converts $C$ into an ordinary Reed--Solomon code and preserves agreement in every block.
    For this folded code, the linear-algebraic decoder of~\cite[Theorem~7]{guruswami2011linear} computes an affine space over $\F_q$ of dimension at most $t-1$ containing every message polynomial whose codeword has relative block distance less than
    \[
        r_{s,t}=\frac{t}{t+1}\left(1-\frac{sR_C}{s-t+1}\right).
    \]
    Computing this space takes $\poly(M,t,\log q)$ time.
    Here $t$ is the interpolation parameter; it need not grow with the folding parameter $s$.
    The chosen parameters give
    \begin{align*}
        (1-R_C)-r_{s,t}
        &=\frac{1-R_C}{t+1}
          +\frac{tR_C(t-1)}{(t+1)(s-t+1)}\\
        &\leq\frac1{t+1}+\frac{t-1}{s-t+1}<\epsilon.
    \end{align*}
    Thus the affine space contains every codeword in the required closed decoding ball.

    Enumerate the at most $q^{t-1}$ elements of this space, encode them, and retain precisely those at relative block distance at most $1-R_C-\epsilon$.
    This gives a complete list in the stated time.
    The combinatorial bound of Chen and Zhang~\cite[Theorem~2.20]{chenzhang2025folded} says that the list at radius $r_{s,t}$ has size at most $t$; hence so does the retained list.
    Their bound is used only to bound the output size, while enumeration and distance checks supply the algorithm.
    Finally, multiply coordinate $i$ of each output by $v_i$ to return codewords of $C$.
\end{proof}

For ordinary Reed--Solomon codes, Brakensiek et al.~\cite[Corollary~5.1]{brakensiek2026capacity} give deterministic capacity decoding over prime fields: at fixed rate and fixed positive slack, both time and list size are polynomial in $q$.
Their all-rates statement requires $q$ to be at least a sufficiently large rate-and-slack-dependent multiple of the scalar length, for any distinct evaluation points.
To retain length $q-1$ in the plain model, we use the following refinement, which permits every prime field large enough to contain the evaluation set.

\begin{theorem}[Ordinary RS capacity decoding~{\cite[Theorem~1.1]{jeronimo2026capacity}}]\label{thm:rscapacity}
    Fix $\epsilon\in(0,1)$.
    For all sufficiently large $M$, every prime $q\geq M$, every set of $M$ distinct evaluation points in $\F_q$, and every $1\leq k\leq(1-\epsilon)M$, there is a deterministic algorithm listing all degree-below-$k$ polynomials with agreement at least $k+\lceil\epsilon M\rceil$ with a given received word.
    The list has size at most $M^{O_\epsilon(1)}$, and the algorithm uses $q^{O_\epsilon(1)}$ field operations.
    Equivalently, its relative decoding radius is $1-k/M-\epsilon$.
    The same bounds hold for generalized Reed--Solomon codes after dividing out the nonzero coordinate multipliers.
\end{theorem}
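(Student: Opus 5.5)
The statement is imported essentially verbatim from Jeronimo~\cite[Theorem~1.1]{jeronimo2026capacity}, so the plan is to check that our phrasing matches his, then derive the two add-on claims: the relative-radius reformulation and the extension to generalized Reed--Solomon codes. For the ordinary case we apply his theorem directly. For fixed $\epsilon$ and all sufficiently large $M$, it lists all polynomials of degree below $k$ that agree with the received word on at least $k+\lceil\epsilon M\rceil$ of the $M$ evaluation points. This holds over any prime $q\geq M$ and any distinct evaluation points, with list size $M^{O_\epsilon(1)}$ and $q^{O_\epsilon(1)}$ field operations. The only thing to confirm is that his hypotheses really are ``prime $q\ge M$'' and ``$k\le(1-\epsilon)M$''. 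This is the one step I cannot settle from the material in hand and would need to verify against the source. If his condition on $k$ differs slightly, I would adjust by shrinking $\epsilon$ by a constant factor.

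For the equivalence of radii, note that agreement at least $k+\lceil\epsilon M\rceil$ is the same as Hamming distance at most $M-k-\lceil\epsilon M\rceil$. This lies between $(1-k/M-\epsilon)M-1$ and $(1-k/M-\epsilon)M$. The rounding of $\lceil\epsilon M\rceil$ can be absorbed by replacing $\epsilon$ with a slightly smaller constant, say $\epsilon/2$, for all large $M$. So the decoder handles relative radius $1-k/M-\epsilon$, up to the harmless reparametrization of the constant.

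For generalized Reed--Solomon codes $\GRS_{\F_q,\gamma,k,\vv}$, I follow the remark after Lemma~\ref{lem:grsduality} and the first step of the proof of Lemma~\ref{lem:frscapacity}. Given a received word $\yv$, form $y_i/v_i$. This is well defined because every $v_i\neq 0$. A codeword $(v_if(a_i))_i$ agrees with $\yv$ in coordinate $i$ exactly when $f(a_i)=y_i/v_i$. Hence agreement sets, and so the list, the list size, and the decoding radius, are preserved exactly. We run the ordinary decoder on the rescaled word with evaluation points $a_i=\gamma^i$, which are distinct since $\gamma$ generates $\F_q^*$, and then multiply each output coordinatewise by $\vv$. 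The extra cost is $O(M)$ field operations, which is within $q^{O_\epsilon(1)}$.
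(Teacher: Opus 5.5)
Your proposal is correct and takes the same route as the paper, which likewise imports the ordinary case directly from Jeronimo's Theorem~1.1 and handles generalized Reed--Solomon codes by dividing each coordinate by its nonzero multiplier. One small simplification: the radius reformulation needs no shrinking of $\epsilon$, because Hamming distance is an integer and $\lfloor M-k-\epsilon M\rfloor=M-k-\lceil\epsilon M\rceil$, so agreement at least $k+\lceil\epsilon M\rceil$ is exactly equivalent to relative distance at most $1-k/M-\epsilon$.
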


Theorem~\ref{thm:rscapacity} applies only to prime fields.
The characteristic-two fields in the oracle construction require the folded decoder instead.
The slack $\epsilon$ is fixed as the length grows, and the exponents may depend on it.
In particular, the theorem gives polynomial time in the input length when $q=M^{O(1)}$, but gives a polynomial rather than a constant bound on the output list.

Table~\ref{tab:rsdecoders} summarizes the three algorithmic guarantees used in the paper.
The code rate in the table is the rate of the code being decoded, which is the dual code in our applications.

\begin{table}[tb]
\centering
\small
\setlength{\tabcolsep}{4pt}
\renewcommand{\arraystretch}{1.2}
\begin{tabular}{@{}
>{\raggedright\arraybackslash}p{2.6cm}
>{\raggedright\arraybackslash}p{2.75cm}
>{\raggedright\arraybackslash}p{2.5cm}
>{\raggedright\arraybackslash}p{1.65cm}
>{\raggedright\arraybackslash}p{2.85cm}
>{\raggedright\arraybackslash}p{2.4cm}@{}}
\toprule
Decoder & Field and length & Relative radius & Final list & Running time & Application\\
\midrule
Guruswami--Sudan\newline Theorem~\ref{thm:guruswamisudan}
& Any $\F_q$;\newline $M=q-1$
& $1-\sqrt{R_C}-\epsilon$\newline scalar errors
& $O(1/\epsilon)$
& $\poly(q,1/\epsilon)$
& Plain model,\newline constant success\\
Ordinary RS capacity\newline Theorem~\ref{thm:rscapacity}
& Prime $q\geq M$;\newline sufficiently large $M$
& $1-R_C-\epsilon$\newline scalar errors
& $M^{O_\epsilon(1)}$
& $q^{O_\epsilon(1)}$\newline field operations
& Plain model,\newline inverse-polynomial success\\
Folded RS\newline Lemma~\ref{lem:frscapacity}
& Any $\F_q$;\newline $M=q-1$;\newline $s\mid M$
& $1-R_C-\epsilon$\newline block errors
& At most $t$
& $q^{t-1}$ times\newline $\poly(M,t,\log q)$
& Oracle model,\newline constant success\\
\bottomrule
\end{tabular}
\caption{Decoder guarantees at positive rate $R_C=k/M$ and positive slack $\epsilon$.
The Guruswami--Sudan list bound assumes $R_C$ is bounded away from zero.
In the folded row, $t=\lceil2/\epsilon\rceil$ and $s\geq t+\lceil2t/\epsilon\rceil$.
The ordinary capacity row uses Jeronimo's refinement of Brakensiek et al.
The folded algorithm enumerates $q^{t-1}$ affine candidates before filtering to at most $t$ outputs; its exponent depends on the slack, not on $s$.}
\label{tab:rsdecoders}
\end{table}

\paragraph{List-recoverability.}
The classical lower bound uses the following combinatorial list-recovery guarantee for the primal folded code.
It bounds the number of compatible codewords; our soundness proof does not require an algorithm that enumerates them.
\begin{definition}
    A code $C \subseteq \Sigma^m$ is $(\zeta, \ell, L)$-list-recoverable if for any subsets $S_1, \ldots, S_m \subseteq \Sigma$ each of size at most $\ell$, we have:
    $$\left|\{\cv \in C: c_i \in S_i\text{ for at least }(1-\zeta)m\text{ values of }i\}\right| \leq L.$$
\end{definition}

The following folded-Reed--Solomon list-recovery bound is due to Guruswami and Rudra~\cite{guruswami2008explicit}; see also Rudra's thesis~\cite[Section~3.6]{rudra2007list}.
We use the formulation in~\cite[Lemma~4.3]{yamakawa2024verifiable}, with $\zeta$ denoting the fraction of positions at which a codeword may fall outside the input lists.
\begin{theorem}\label{thm:rudra}
    Let $q, \gamma, k, s$ be as in Definition~\ref{def:frs}.
    Let $\ell, r, d$ be positive integers such that $d \leq s$ and let $\zeta \in (0, 1)$ be real.
    Assume both of the following inequalities hold:
    \begin{align}
        \frac{(1-\zeta)(q-1)}{s} &\geq \left(1+\frac{d}{r}\right)\frac{\sqrt[d+1]{(q-1)\ell (k-1)^d}}{s-d+1} \label{eq:gr1} \\
        (r+d) \cdot \sqrt[d+1]{\frac{(q-1)\ell}{k-1}} &< q.\label{eq:gr2}
    \end{align}
    Then $\FRS_{\F_q, \gamma, k, s}$ is $(\zeta, \ell, q^d)$-list-recoverable.
\end{theorem}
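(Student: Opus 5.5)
The plan is to follow the Guruswami--Rudra interpolation argument for folded Reed--Solomon codes, adapted to list recovery with input lists of size $\ell$. Everything is combinatorial, since we only need to bound the number of codewords. Throughout, identify each codeword of $\FRS_{\F_q,\gamma,k,s}$ with its message polynomial $f$ of degree at most $k-1$. Block $i$ of the codeword is $(f(\gamma^{(i-1)s+1}),\ldots,f(\gamma^{is}))$.

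\textbf{Step 1 (interpolation).} We find a nonzero polynomial $Q(X,Y_1,\ldots,Y_d)$ with $(1,k-1,\ldots,k-1)$-weighted degree at most $D$. For every block $i$, every tuple $\yv\in S_i\subseteq\F_q^s$, and every window $j\in\{1,\ldots,s-d+1\}$, we require $Q$ to vanish with multiplicity $r$ at the point
\[
    \bigl(\gamma^{(i-1)s+j},y_j,\ldots,y_{j+d-1}\bigr).
\]
The number of linear constraints is at most $\frac{q-1}{s}\,\ell\,(s-d+1)\binom{r+d}{d+1}$. The number of monomials is at least roughly $D^{d+1}/\bigl((d+1)!(k-1)^d\bigr)$. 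We choose $D$ just large enough that monomials exceed constraints. Up to rounding, this gives
\[
    D\approx\sqrt[d+1]{(q-1)\ell(k-1)^d\,(s-d+1)\,r(r+1)\cdots(r+d)/s},
\]
which is the source of the $d+1$-st roots in \eqref{eq:gr1} and \eqref{eq:gr2}.

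\textbf{Step 2 (vanishing of $R$).} Let $f$ agree with the lists on at least $(1-\zeta)m$ blocks, and set
\[
    R(X)=Q\bigl(X,f(X),f(\gamma X),\ldots,f(\gamma^{d-1}X)\bigr).
\]
Then $\deg R\le D$. Each agreeing block contributes $s-d+1$ windows, and each window gives a root of $R$ of multiplicity at least $r$. So $R$ has at least $(1-\zeta)\frac{q-1}{s}(s-d+1)r$ roots with multiplicity. After dividing by $r(s-d+1)$ and bounding $\prod_{t\le d}(r+t)\le r^{d+1}(1+d/r)^{d+1}$, condition \eqref{eq:gr1} is exactly what forces this root count to exceed $D$. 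Hence $R\equiv0$.

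\textbf{Step 3 (counting solutions).} We use that $E(X)=X^{q-1}-\gamma$ is irreducible over $\F_q$ and that $f(\gamma X)\equiv f(X)^q\pmod{E}$. Then $R\equiv0$ gives
\[
    Q\bigl(X,\bar f,\bar f^{\,q},\ldots,\bar f^{\,q^{d-1}}\bigr)=0
\]
in the field $\F_q[X]/(E)$, where $\bar f$ is the residue of $f$. Thus $\bar f$ is a root of the univariate polynomial $T(Z)=Q(X,Z,Z^q,\ldots,Z^{q^{d-1}})\bmod E$. Since $\deg f<q-1$, the map $f\mapsto\bar f$ is injective, so it suffices to bound the number of roots of $T$.

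Two things need to be checked. First, $T\ne0$. We may assume $Q$ is not divisible by $E$ (divide out powers of $E$ beforehand; this preserves the property $R\equiv0$). The $Y$-degree of $Q$ is at most $D/(k-1)$, and condition \eqref{eq:gr2} should give $D/(k-1)<q$. Then the monomials $Y^{\boldsymbol\beta}$ map to distinct powers $Z^{\sum\beta_tq^{t-1}}$, so no cancellation occurs. Second, $\deg_Z T\le(q-1)(1+q+\cdots+q^{d-1})<q^d$. Together these bound the list size by $q^d$.

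The main obstacle is the parameter bookkeeping that makes Steps 1 and 2 exactly match \eqref{eq:gr1}, and Step 3 exactly match \eqref{eq:gr2}: the rounding in the choice of $D$, the multiplicity factor $\binom{r+d}{d+1}$, and the $(1+d/r)$ slack. I would handle these by fixing $D$ as the smallest integer making the linear system underdetermined, then verifying both inequalities with crude binomial bounds. Nonvanishing of $Q\bmod E$ is the other delicate point, and it is exactly where the degree bound below $q$ from \eqref{eq:gr2} is used.
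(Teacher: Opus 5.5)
The paper does not prove this statement itself; it imports it from Guruswami--Rudra (via Rudra's thesis and Lemma~4.3 of Yamakawa--Zhandry), and your proposal is a correct reconstruction of exactly that argument: multiplicity-$r$ interpolation on length-$d$ windows, forcing $R\equiv 0$, and counting roots of $Q(X,Z,Z^q,\ldots,Z^{q^{d-1}})$ in the field $\F_q[X]/(X^{q-1}-\gamma)$ via $f(\gamma X)\equiv f(X)^q$. For the rounding you flag, use the lattice-point bound that at least $(D'+1)^{d+1}/((d+1)!(k-1)^d)$ monomials have weighted degree at most $D'$; applying it at $D'=D-1$ for the minimal $D$, together with $\prod_{i=0}^{d}(r+i)<(r+d)^{d+1}$ and $(s-d+1)/s\le 1$, gives $D<(r+d)\sqrt[d+1]{(q-1)\ell(k-1)^d}$, so \eqref{eq:gr1} yields more than $D$ roots and \eqref{eq:gr2} yields $D/(k-1)<q$ with no additive slack needed.
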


\paragraph{Shortened and punctured codes.}

\begin{definition}
    For a vector $\vv \in \F_q^M$, let $Z(\vv) \subseteq [M]$ be the set of indices where is $\vv = 0$.
\end{definition}

\begin{definition}[Shortening]
    For a linear code $C \subseteq \F_q^M$ and a subset $S \subseteq [M]$, we define the \emph{shortened} code $C[S]$ to be:
    \[
        \left\{\cv \in C: c_i = 0\forall i \in S\right\}.
    \]
    Note that $C[S]$ is a linear subcode of $C$.
    We will often abuse notation and remove the coordinates in $S$ to view $C[S]$ as a code in $\F_q^{M-|S|}$.
\end{definition}

\begin{definition}[Puncturing]\label{def:puncturing}
    For a linear code $C \subseteq \F_q^M$ and a subset $S \subseteq [M]$, we define the \emph{punctured} code $C|_{S} \subseteq \F_q^{|S|}$ to be:
    \[
        \left\{(c_i: i \in S): \cv \in C\right\}.
    \]
    Note that $C|_S$ is also a linear code, and its dimension is at most $\dim C$.
\end{definition}

\begin{lemma}[Puncturing and shortening are duals]\label{lem:puncshort}
    For any linear code $C$ and subset $S \subseteq [M]$, we have $(C[[M] \backslash S])^\perp = (C^\perp)|_{S}$.
    (Here, we are viewing $C[[M] \backslash S]$ as a subspace of $\F_q^{|S|}$.)
\end{lemma}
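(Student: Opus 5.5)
We prove the two inclusions $(C^\perp)|_S \subseteq (C[[M]\setminus S])^\perp$ and equality of dimensions. Throughout, we identify $C[[M]\setminus S]$ with its restriction to the coordinates in $S$. This identification is injective, since the dropped coordinates are all zero.

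\textbf{Inclusion.} Take $\uv\in C^\perp$ and $\cv\in C[[M]\setminus S]$. Since $c_i=0$ for $i\notin S$,
\[
    \sum_{i\in S}u_ic_i=\sum_{i\in[M]}u_ic_i=\langle\uv,\cv\rangle=0 .
\]
Hence $\uv|_S$ is orthogonal to $\cv|_S$, which gives $(C^\perp)|_S\subseteq (C[[M]\setminus S])^\perp$ inside $\F_q^{|S|}$.

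\textbf{Dimensions.} We show the two sides have the same dimension. Let $\phi:C^\perp\to\F_q^{|S|}$ be restriction to $S$. Its image is $(C^\perp)|_S$, and its kernel is $C^\perp[S]$, the codewords of $C^\perp$ vanishing on $S$. Therefore
\[
    \dim (C^\perp)|_S = (M-\dim C) - \dim C^\perp[S].
\]
For the right side, $C[[M]\setminus S] = C\cap \{\xv : \supp(\xv)\subseteq S\}$. Write $E_S=\{\xv:\supp(\xv)\subseteq S\}$, so that $E_S^\perp = E_{[M]\setminus S}$. Using $(U\cap V)^\perp=U^\perp+V^\perp$ in $\F_q^M$, we get
\[
    \dim(C\cap E_S) = M - \dim(C^\perp + E_{[M]\setminus S}).
\]
This is valid for nondegenerate bilinear forms even when $U\cap U^\perp\neq 0$ over finite fields. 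Next,
\[
    \dim(C^\perp+E_{[M]\setminus S}) = \dim C^\perp + (M-|S|) - \dim(C^\perp\cap E_{[M]\setminus S}),
\]
and $C^\perp\cap E_{[M]\setminus S}=C^\perp[S]$. Hence
\[
    \dim C[[M]\setminus S] = |S| - \dim C^\perp + \dim C^\perp[S].
\]
Its orthogonal complement in $\F_q^{|S|}$ therefore has dimension $\dim C^\perp - \dim C^\perp[S]$, which matches $\dim(C^\perp)|_S$ above.

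\textbf{Conclusion.} An inclusion between subspaces of equal dimension is an equality. The only subtle step is the duality identity $(U\cap V)^\perp=U^\perp+V^\perp$ over finite fields. It follows from $\dim W^\perp = M-\dim W$ and $(W^\perp)^\perp=W$ for the standard nondegenerate form: apply $\perp$ to $(U^\perp+V^\perp)^\perp = U\cap V$.
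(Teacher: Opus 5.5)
Your proof is correct, and it rests on the same key step as the paper's proof: the identity $(U\cap V)^\perp=U^\perp+V^\perp$ applied to $U=C$ and $V=E_S$, which gives $C[[M]\setminus S]=C\cap E_S$ the orthogonal complement $C^\perp+E_{[M]\setminus S}$ in $\F_q^M$. The paper reads the equality directly off this description: a vector lies in $C^\perp+E_{[M]\setminus S}$ exactly when it agrees on $S$ with some codeword of $C^\perp$, and one then restricts to $S$. So your separate inclusion and dimension count are valid, but they are a longer route to the same conclusion.
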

\begin{proof}
    It is well-known that for any subspaces $A, B \subseteq \F_q^M$, we have $(A \cap B)^\perp = A^\perp + B^\perp$.
    To see this in one line, let $H_A, H_B$ be parity check matrices for $A, B$, then we have:
    $$A \cap B = \ker \begin{bmatrix} H_A \\ H_B \end{bmatrix} \Rightarrow (A \cap B)^\perp = \mathrm{im} \begin{bmatrix} H_A^\top & H_B^\top \end{bmatrix} = \mathrm{im} H_A^\top + \mathrm{im} H_B^\top = A^\perp + B^\perp.$$

    We will show the lemma by taking $A = C$ and $B = \left\{\vv \in \F_q^M: v_i = 0\forall i \in [M] \backslash S\right\}$.
    In the following, we will temporarily view $C[[M] \backslash S]$ as a subset of $\F_q^M$.
    We have $A \cap B = C[[M] \backslash S]$ by definition, so the LHS is $(C[[M] \backslash S])^\perp$.
    On the RHS, $A^\perp = C^\perp$ and $B^\perp$ consists of all vectors supported on (a subset of) $[M] \backslash S$.
    Hence a vector $\vv$ is in $(C[[M] \backslash S])^\perp$ if and only if there exists a vector in $C^\perp$ that agrees with $\vv$ in all coordinates in $S$.
    Puncturing out coordinates outside $S$ (and thus returning to the convention where $C([M] \backslash S)$ is a subset of $\F_q^{|S|}$) implies the conclusion.
\end{proof}

\begin{lemma}[Random puncturing approximately preserves distance]\label{lem:randompuncturedistance}
    Let $C \subseteq \F_q^M$ be a nonzero linear code of distance $d$.
    Let $S \subseteq [M]$ be uniformly random among subsets of size $m$.
    Then, for every $\eta \in (0, 1)$,
    \[
        \Pr\left[\exists \cv \in C: \wt(\cv|_S) < (1-\eta) \frac{\wt(\cv) m}{M}\right]
        \leq
        \min\left(
            1,\,
            \frac{|C|-1}{q-1}
            \exp\left(-\frac{\eta^2dm}{2M}\right)
        \right).
    \]
    Consequently, the same expression also upper bounds $\Pr\left[\dist(C|_S) < (1-\eta)\frac{dm}{M}\right]$.
\end{lemma}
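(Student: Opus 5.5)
The plan is a union bound over codewords, up to scalar multiples, combined with a hypergeometric lower-tail Chernoff bound for each one.

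Fix a nonzero $\cv\in C$ and write $w=\wt(\cv)\geq d$. Then $\wt(\cv|_S)=|S\cap\supp(\cv)|$ is hypergeometric: we draw $m$ coordinates without replacement from $[M]$, and $w$ of them are "successes". Its mean is $\mu=wm/M$. The multiplicative Chernoff lower-tail bound, $\Pr[X<(1-\eta)\mu]\leq\exp(-\eta^2\mu/2)$, is proved for sums of independent Bernoullis. It transfers to sampling without replacement by Hoeffding's classical result that the expectation of any continuous convex function, in particular $e^{-tX}$, is no larger without replacement than with replacement. So the moment-generating-function argument goes through unchanged, and we get
\[
    \Pr\left[\wt(\cv|_S)<(1-\eta)\frac{wm}{M}\right]\leq\exp\left(-\frac{\eta^2wm}{2M}\right)\leq\exp\left(-\frac{\eta^2dm}{2M}\right),
\]
using $w\geq d$.

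Next, the event for $\cv$ is identical to the event for $\lambda\cv$ for every $\lambda\in\F_q^\times$, since scaling changes neither the support nor the weight. So it suffices to union bound over one representative of each of the $(|C|-1)/(q-1)$ one-dimensional subspaces of $C$. Combining this with the trivial bound $1$ gives the stated $\min$.

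For the consequence, suppose every nonzero $\cv\in C$ satisfies $\wt(\cv|_S)\geq(1-\eta)\wt(\cv)m/M\geq(1-\eta)dm/M$. Every nonzero codeword of $C|_S$ is of the form $\cv|_S$ with $\cv\neq0$, so $\dist(C|_S)\geq(1-\eta)dm/M$. (If $(1-\eta)dm/M>0$, this also shows that the puncturing map is injective; either way, the distance bound holds.) Hence the event $\dist(C|_S)<(1-\eta)dm/M$ is contained in the event already bounded.

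The only step that is not routine is justifying the Chernoff bound for the hypergeometric distribution. I would cite Hoeffding (1963, Theorem 4) for the convex-domination comparison rather than reprove it.
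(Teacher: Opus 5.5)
Your proposal is correct and matches the paper's proof. Both arguments take a union bound over the $(|C|-1)/(q-1)$ scalar classes of nonzero codewords and apply, for each one, the multiplicative Chernoff lower-tail bound to the hypergeometric variable $\wt(\cv|_S)$, whose mean is $\wt(\cv)m/M\geq dm/M$. Your appeal to Hoeffding's convex-domination theorem only makes explicit the without-replacement Chernoff bound that the paper cites directly, and your argument for the distance consequence spells out a step the paper leaves implicit.
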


\begin{proof}
    We will take a union bound over all nonzero codewords $\cv$ of the probability that the puncturing of that codeword has Hamming weight at most $(1-\eta)\wt(\cv) m/M$.
    Since codewords that are a constant nonzero multiple of each other have the same support under puncturing, we only need to account for $(|C|-1)/(q-1)$ such codewords.
    So let $w=\wt(\cv)\geq d$.
    The random variable $X_{\cv}:=\wt(\cv|_S)$
    is hypergeometric, with expected value $\frac{wm}{M}\geq\frac{dm}{M}$.
    The multiplicative Chernoff bound for sampling without replacement gives
    \[
        \Pr\left[
            X_{\cv}\leq(1-\eta)\EE[X_{\cv}]
        \right]
        \leq
        \exp\left(-\frac{\eta^2\EE[X_{\cv}]}{2}\right)
        \leq
        \exp\left(-\frac{\eta^2dm}{2M}\right).
    \]
    Taking the aforementioned union bound then proves the result.
\end{proof}

\paragraph{The Schur square.}

\begin{definition}[Schur square]
    For a code $C \subseteq \F_q^M$, let $C * C \subseteq \F_q^M$ denote the collection of vectors $\{\cv_1 \odot \cv_2: \cv_1, \cv_2 \in C\}$.
    Note that $C * C$ need not be a linear subspace of $\F_q^M$.
    Accordingly, we let $V_{C * C}$ denote the linear subspace spanned by the vectors in $C * C$.
\end{definition}

\subsection{Regev's Reduction}

\paragraph{Computational conventions.}
Codes are given by explicit generator or parity-check matrices over an explicitly represented finite field.
A quantum decoder is supplied as a uniform circuit: measurements and classical randomness can be retained in work registers, giving coherent access to a unitary and its inverse.
We count the cost of this coherent implementation, including its work registers.
Classical function oracles use reversible evaluation,
$\ket{x,z}\mapsto\ket{x,z\oplus H(x)}$, with a fixed binary encoding of the output.
Their inverses have the same query cost, and controlled evaluations use at most two queries.
Thus reversing a decoder circuit is also possible in the oracle model.

Preparation error means trace distance, with the convention
$D(\sigma,\tau)=\frac12\norm{\sigma-\tau}_1$.
Quantum Fourier transforms, rotations, and other nonexact gates are compiled to the stated precision; their gate costs include the dependence on its logarithm.
The following elementary bound records the effect of conditioning.

\begin{lemma}[Approximation under postselection]\label{lem:postselection-error}
Suppose two input states have trace distance at most $\delta$, and apply the same quantum operation with a success flag to both.
If the ideal success probability is $p>\delta$, then the actual success probability is at least $p-\delta$, and the normalized states conditioned on success have trace distance at most $\delta/(p-\delta)$.
\end{lemma}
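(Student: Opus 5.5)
We prove the two claims in turn; the first is pure bookkeeping and the second is where the argument lives. Write $\rho$ and $\tau$ for the two input states, $\rho$ the ideal one, with $D(\rho,\tau)\leq\delta$. Model the operation with a success flag as a quantum instrument: a pair of completely positive maps $\mathcal{E}_{\mathrm{succ}}$ and $\mathcal{E}_{\mathrm{fail}}$ whose sum is trace preserving. Equivalently, apply a channel $\Phi$ that writes the flag into a classical register, $\Phi(\sigma)=\mathcal{E}_{\mathrm{succ}}(\sigma)\otimes\ket{1}\bra{1}+\mathcal{E}_{\mathrm{fail}}(\sigma)\otimes\ket{0}\bra{0}$. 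Set $p=\Tr\mathcal{E}_{\mathrm{succ}}(\rho)$ and $p'=\Tr\mathcal{E}_{\mathrm{succ}}(\tau)$.

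\textbf{Success probability.} Trace distance is monotone under channels, so $D(\Phi(\rho),\Phi(\tau))\leq\delta$. Project onto the flag register and measure it. The two flag distributions are then within total variation $\delta$, which gives $|p-p'|\leq\delta$ and in particular $p'\geq p-\delta>0$.

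\textbf{Conditioned states.} Let $A=\mathcal{E}_{\mathrm{succ}}(\rho)$ and $B=\mathcal{E}_{\mathrm{succ}}(\tau)$. Since $\Phi(\rho)-\Phi(\tau)$ is block diagonal in the flag, its trace norm splits as $\norm{A-B}_1+\norm{\mathcal{E}_{\mathrm{fail}}(\rho)-\mathcal{E}_{\mathrm{fail}}(\tau)}_1\leq 2\delta$. This yields two estimates. First, $\norm{A-B}_1\leq 2\delta-\norm{\mathcal{E}_{\mathrm{fail}}(\rho)-\mathcal{E}_{\mathrm{fail}}(\tau)}_1\leq 2\delta-|p-p'|$, where the last step uses that the failure traces are $1-p$ and $1-p'$. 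Second, the normalized states satisfy
\[
  \frac{A}{p}-\frac{B}{p'}=\frac{A-B}{p'}+A\left(\frac1p-\frac1{p'}\right).
\]
The second term has trace norm exactly $|p'-p|/p'$. Combining,
\[
  \norm{\frac{A}{p}-\frac{B}{p'}}_1\leq\frac{\norm{A-B}_1+|p-p'|}{p'}\leq\frac{2\delta}{p'}\leq\frac{2\delta}{p-\delta}.
\]
Halving gives the stated trace-distance bound $\delta/(p-\delta)$.

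The step that needs care is this last estimate. The crude bound $\norm{A-B}_1\leq 2\delta$ together with $|p-p'|\leq\delta$ only gives $3\delta/(2(p-\delta))$ after halving. The argument therefore has to use the block-diagonal splitting, so that the discrepancy in success probability is charged against the failure branch rather than counted twice.
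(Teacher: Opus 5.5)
Your proof is correct and is essentially the paper's argument. The paper also uses contractivity of trace distance to obtain $\norm{\sigma_0-\sigma_1}_1+|p-p'|\leq 2\delta$; it does this by keeping the success branch and replacing the failure branch with a fixed flag, which is equivalent to your lower bound of the failure-branch term by $|p-p'|$. It then applies the same triangle-inequality estimate $\frac12\norm{\sigma_0/p-\sigma_1/p'}_1\leq(\norm{\sigma_0-\sigma_1}_1+|p-p'|)/(2p')\leq\delta/(p-\delta)$.
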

\begin{proof}
Let $\sigma_0,\sigma_1$ be the subnormalized success states, with traces $p,p'$, respectively.
Retain the success state and replace the failure state by a fixed flag.
Contractivity of trace distance gives
\[
    \norm{\sigma_0-\sigma_1}_1+|p-p'|\leq2\delta,
    \qquad |p-p'|\leq\delta.
\]
Consequently,
\[
 \frac12\left\|\frac{\sigma_0}{p}-\frac{\sigma_1}{p'}\right\|_1
 \leq\frac{\norm{\sigma_0-\sigma_1}_1+|p-p'|}{2p'}
 \leq\frac{\delta}{p'}\leq\frac{\delta}{p-\delta}.
\]
The same argument applies when $\delta$ also includes circuit-approximation errors before the success test.
\end{proof}

\begin{definition}[Quantum decoding problem]
    Let $C \subseteq \F_q^M$ be a linear code, and $\ket{\psi_\mathsf{err}} = \sum_{\ev \in \F_q^M} \alpha(\ev) \ket{\ev}$ be some state.
    The quantum decoding problem $\QDP_{C^\perp, \ket{\psi_{\err}}}$ is defined as follows:
    \begin{itemize}
        \item Input: for a uniformly random codeword $\cv\gets C^\perp$, the input is the state
        \[
            X^\cv\ket{\psi_\err}
            =\sum_{\ev\in\F_q^M}\alpha(\ev)\ket{\cv+\ev}.
        \]
        \item Output: the algorithm succeeds if and only if it guesses $\cv$ successfully.
    \end{itemize}
\end{definition}

\begin{definition}[Code intersection problem]\label{def:cip}
    Let $C\subseteq\F_q^M$ be a linear code.
    An instance representation $\mathsf I$ consists of public classical data and, optionally, access to fixed classical function oracles; it specifies a target set $A(\mathsf I)\subseteq\F_q^M$.
    For a distribution $\calD$ on such representations, $\CIP_{C,\calD}$ asks, on input $C$ and $\mathsf I\gets\calD$, for a vector in $C\cap A(\mathsf I)$.
    We write $\CIP_{C,A}$ for a fixed target with a specified representation (which we will refer to as \emph{worst-case} $\CIP$).
    If the intersection is empty, no output is successful.
\end{definition}

\begin{definition}[Translation-invariant represented instances]\label{def:translation-invariant}
    A translation action assigns a representation $\mathsf I+\vv$ to each $\mathsf I$ and $\vv\in\F_q^M$, with
    \[
        A(\mathsf I+\vv)=A(\mathsf I)+\vv,\qquad
        (\mathsf I+\vv)+\wv=\mathsf I+(\vv+\wv),\qquad
        \mathsf I+\mathbf0=\mathsf I.
    \]
    The distribution $\calD$ is translation-invariant if $\mathsf I+\vv$ has law $\calD$ for every fixed $\vv$ when $\mathsf I\gets\calD$.
    We require that the public data and oracle access for $\mathsf I+\vv$ can be constructed or simulated from $\mathsf I,\vv$.
    All costs of this simulation and any preprocessing are charged to the preparation and decoding costs below.
\end{definition}

For explicit sets with canonical descriptions, this recovers the usual definition of a translation-invariant distribution on sets.
For our oracle instances, the action is
$H_i^{(\vv)}(x)=H_i(x-v_i)$, with $v_i$ a scalar or block as appropriate.
A translated query uses one original query and reversible subtraction and addition on the input register.
For public truth tables the same action is implemented by reindexing or reversible lookup.
In each application we verify invariance of these representations, not only of the resulting target sets.

\paragraph{Average-case $\CIP$ using Regev's reduction.}

\begin{theorem}[Average-case $\CIP$~\cite{chailloux2024quantum,chaillouxhermouet,chailloux2025opixsoftdecoders}]\label{thm:chailloux}
    Let $C\subseteq\F_q^M$ be a linear code and let $\calD$ be translation-invariant in the sense of Definition~\ref{def:translation-invariant}, with $A(\mathsf I)\neq\emptyset$ on its support.
    Write $\ket{A}$ for the uniform state on $A=A(\mathsf I)$.
    Assume, uniformly for every represented instance in this support:
    \begin{itemize}
        \item A coherent decoder solves $\QDP_{C^\perp,\QFT^\dag\ket A}$ with probability at least a known lower bound $P_\dec>0$, using $T_\dec$ gates and $Q_\dec$ oracle queries.
        \item For every $\delta>0$, a state $\delta$-close to $\ket A$ can be prepared using $T_\prep(q,M,\log(1/\delta))$ gates and $Q_\prep(q,M,\log(1/\delta))$ queries.
    \end{itemize}
    These are worst-case requirements over represented instances; decoder success is averaged over the uniform hidden codeword as in the definition of $\QDP$.
    Fix $\epsilon\in(0,1/4)$ and put $\tau=\log(16/(\epsilon P_\dec))$.
    There is a quantum algorithm for $\CIP_{C,\calD}$ with success probability at least $P_\dec-\epsilon$, expected gate complexity
    \[
        O\left(\frac{T_\dec+T_\prep(q,M,\tau)
                    +\poly(M,\log q,\tau)}{P_\dec}\right),
    \]
    and expected query complexity
    \[
        O\left(\frac{Q_\dec+Q_\prep(q,M,\tau)}{P_\dec}\right).
    \]
    A bounded-time version has the same success guarantee and gate and query bounds larger by a factor $O(\log(1/\epsilon))$.
\end{theorem}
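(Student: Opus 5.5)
This is the standard average-case Regev reduction, proved in three stages: prepare a coherent superposition, uncompute the code register using the decoder, and control the decoding errors by averaging over translations.

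\textbf{Stage 1: the coherent state.} Prepare $\delta$-close to $\ket{A}$ in a register, and a uniform superposition over $C^\perp$ in another register,
\[
    q^{-\dim C^\perp/2}\sum_{\cv\in C^\perp}\ket{\cv}.
\]
Apply $\QFT^\dag$ to the $A$ register and add $\cv$ into it. This gives
\[
    \sum_{\cv}\ket{\cv}\otimes X^{\cv}\QFT^\dag\ket{A}.
\]
Run the coherent decoder on the second register to compute a guess $\hat\cv$, subtract $\hat\cv$ from the first register, and uncompute the decoder's work registers by running it in reverse. Measure the first register and postselect on the all-zero outcome; this is where the $1/P_\dec$ factor in the cost comes from. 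Finally apply $\QFT$ to the remaining register and measure.

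\textbf{Stage 2: the ideal analysis.} With a perfect decoder, the postselected state is
\[
    \sum_{\cv}X^{\cv}\QFT^\dag\ket{A}.
\]
This equals $\QFT^\dag$ applied to $\ket{A}$ multiplied pointwise by the indicator of $(C^\perp)^\perp = C$. The final measurement therefore yields a uniform element of $C\cap A$. I would verify this directly from the convention \eqref{eq:fourier-convention}, using the character sum $\sum_{\cv\in C^\perp}\omega_p^{\Tr\langle\xv,\cv\rangle}=|C^\perp|\,\one[\xv\in C]$.

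\textbf{Stage 3: handling decoding errors, the main obstacle.} With an imperfect decoder, the projection onto the good branch has squared norm $P_\dec$ on average. The error branch can still interfere coherently, and this is the obstacle. Here translation-invariance is essential. I would run the algorithm on the instance $\mathsf I+\vv$ for a uniformly random $\vv$ drawn internally, which is simulable by Definition~\ref{def:translation-invariant}, and output $\xv-\vv$ as the candidate solution.

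Averaging over $\vv$ multiplies the $A$-state by the phase $\omega_p^{\Tr\langle\cdot,\vv\rangle}$ in the Fourier picture. This decoheres the distinct error patterns $\ev$, so the output mixture becomes incoherent across $\ev$. I would follow Chailloux--Hermouet's calculation here: the probability that the final sample lies in $C\cap A$ is at least the decoder's average success probability $P_\dec$. Success is checkable classically, since membership in $A$ is verifiable via the representation and membership in $C$ via the parity-check matrix.

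\textbf{Error accounting and cost.} Preparation error $\delta=\epsilon P_\dec/16$ enters through Lemma~\ref{lem:postselection-error}, losing at most $\epsilon$ in total; this yields $\tau=\log(16/(\epsilon P_\dec))$. Repeating until success gives the expected costs, since each repetition has cost $T_\dec+T_\prep+\poly(M,\log q,\tau)$, the last term covering the QFTs and linear algebra. Truncating the repetition after $O(\log(1/\epsilon)/P_\dec)$ rounds gives the bounded-time version.
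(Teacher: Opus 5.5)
Your Stages 1--2 and the error accounting match the paper. Postselecting the retained codeword register on $0$ is the paper's equality test, and it succeeds with probability at least $P_\dec$ on every represented instance. Allocating preparation error $\epsilon P_\dec/16$ gives $\tau$.

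The gap is in Stage 3. If you run on $\mathsf I+\vv$, a successful output satisfies $\xv\in C\cap A(\mathsf I+\vv)=C\cap(A(\mathsf I)+\vv)$. Hence $\xv-\vv\in A(\mathsf I)\cap(C-\vv)$, which lies in a coset of $C$ and not in $C$ unless $\vv\in C$. Restricting to $\vv\in C$ does not rescue this. The phase $\omega_p^{\Tr\langle\ev,\vv\rangle}$ you describe factors into a known diagonal phase on the received word $\cv+\ev$ times $\omega_p^{-\Tr\langle\cv,\vv\rangle}$ on the hidden-codeword branch $\cv\in C^\perp$. Only the latter does any averaging, and it depends only on the coset $\vv+C$. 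For $\vv\in C$ it is identically $1$, and the translation is just a symmetry that relabels solutions. As a sanity check, any scheme that draws a translate internally and undoes it at the end would, by your Stage 3 claim, succeed with probability $P_\dec$ on every fixed instance. That is a worst-case reduction from average decoding success, which the technical overview recalls is false in general because decoding-error amplitudes can add constructively.

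The missing idea is that the random translate must come from the instance distribution, and the output is never shifted back. The paper samples $\vv$ and hands the reduction the representation $\mathsf I'=\mathsf I-\vv$. Translation invariance gives $\Pr[\mathsf I'=\iota,\ \vv=v]=q^{-M}\Pr[\mathsf I=\iota]$, so $\mathsf I'\sim\calD$ independently of the uniform $\vv$. It then applies the coset (uniform-syndrome) form of the Chailloux et al.\ reduction to the fixed set $A'=A(\mathsf I')$ and the uniform translate $\vv$. This uses the decoder for $\mathsf I'$, carries the syndrome phases of $\vv$ on the code register, and returns an element of $C\cap(A'+\vv)=C\cap A(\mathsf I)$ directly. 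The bound $P_\dec$ holds for each $\mathsf I'$ on average over $\vv$, hence only on average over $\mathsf I\sim\calD$, which is exactly what the theorem asserts. Relatedly, keep your Stage 1 postselection as the restart criterion rather than a final membership check of the output. The latter need not succeed with probability $P_\dec$ on a given instance, so restarting on it would not give the claimed expected cost.
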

\begin{proof}
    First suppose preparation and the other nondecoder operations are exact.
    Theorem~1 of~\cite{chailloux2025opixsoftdecoders}, based on~\cite{chaillouxhermouet}, takes a fixed target $A'$ and a uniform syndrome for a full-row-rank parity-check matrix of $C$.
    It finds an element of the corresponding coset intersected with $A'$, with average success at least $P_\dec$ when the prepared state has exact support in $A'$.
    Uniformly translating the coset gives the equivalent task of finding an element of $C\cap(A'+\vv)$ for uniform $\vv\in\F_q^M$.

    On input $\mathsf I\gets\calD$, sample a uniform $\vv$ and supply the reduction with the translated representation $\mathsf I'=\mathsf I-\vv$.
    For every representation $\iota$ and vector $v$,
    \[
        \Pr[\mathsf I'=\iota,\ \vv=v]
        =q^{-M}\Pr[\mathsf I=\iota+v]
        =q^{-M}\Pr[\mathsf I=\iota].
    \]
    Thus $\mathsf I'$ and $\vv$ are independent, with laws $\calD$ and uniform, respectively.
    Conditional on any $\mathsf I'$, the cited reduction applies to the fixed set $A'=A(\mathsf I')$ and the uniform translate $\vv$.
    Its output lies in $C\cap(A'+\vv)=C\cap A(\mathsf I)$ whenever successful.
    Averaging over $\mathsf I'$ gives success at least $P_\dec$.

    For completeness, the reduction keeps a uniform hidden codeword in one register while running the decoder on its noisy encoding.
    It compares the decoded word with the retained register and restarts on inequality.
    This test succeeds with the decoder's actual average success probability, at least $P_\dec$, independently of the chosen syndrome.
    After equality, it erases the retained copy, applies the inverse decoder, and Fourier transforms.
    These are coherent circuit operations with the same oracle access.
    In particular, the test uses the retained codeword and does not require an oracle that recognizes correct decoding.

    Now allocate total trace-distance error at most $\delta=\epsilon P_\dec/8$ before the equality test: use preparation error at most $\delta/2$ and compile the other operations there to total error at most $\delta/2$.
    The preparation cost at this accuracy has logarithmic argument $\tau$.
    By Lemma~\ref{lem:postselection-error}, the test succeeds with probability at least $P_\dec-\delta\geq P_\dec/2$, and conditioning changes the state by at most
    \[
        \frac{\delta}{P_\dec-\delta}
        \leq \frac{\epsilon}{8-\epsilon}<\frac{\epsilon}{4}.
    \]
    Compile the remaining Fourier transforms and arithmetic to total error at most $\epsilon/4$.
    The successful-trial output distribution is therefore within $\epsilon/2$ of the ideal one.
    This comparison does not assume that the approximate state is supported on $A'$.

    The expected number of trials is at most $2/P_\dec$.
    Each trial costs $T_\dec+T_\prep(q,M,\tau)+\poly(M,\log q,\tau)$ gates and $Q_\dec+Q_\prep(q,M,\tau)$ queries, up to constant factors.
    The inverse decoder and final operations have no larger cost.
    This proves the expected bounds and success at least $P_\dec-\epsilon/2$.
    Alternatively, abort after
    \[
        \left\lceil\frac{2}{P_\dec}\ln\frac{2}{\epsilon}\right\rceil
    \]
    trials. The abort probability is at most $\epsilon/2$, proving the bounded-time claim.
\end{proof}

\begin{corollary}[Code intersection from block list decoding]\label{cor:ctlistdecode}
    Use the represented-instance model of Theorem~\ref{thm:chailloux}.
    Assume $\calD$ is translation-invariant and, for every $\mathsf I$ in its support, $A=A(\mathsf I)$ is nonempty and $\ket A$ is preparable to precision $\epsilon$ with $\poly(q,M,\log(1/\epsilon))$ gates and oracle queries.
    Fix a divisor $s$ of $M$, put $m=M/s$, and let $\wt_s$ count nonzero consecutive blocks of length $s$.
    Let $\zeta\in(0,1)$ be a constant, and assume:
    \begin{itemize}
        \item For every represented instance in the support of $\calD$, measuring $\QFT^\dag\ket A$ gives a vector of block weight at most $\zeta m$ with probability at least a known bound $P_\sparse\geq1/\poly(q,M)$.
        \item A deterministic algorithm lists all codewords of $C^\perp$ within $\zeta m$ block errors of any received word, in time $T_{\mathsf{list}}(q,M)\leq\poly(q,M)$, with list size at most $L\leq\poly(q,M)$.
    \end{itemize}
    Then a quantum algorithm for $\CIP_{C,\calD}$ succeeds with probability at least
    \[
        \frac{P_\sparse}{L}-2^{-\Omega(q+M)}
    \]
    and has expected running time
    \[
        O\left(\frac{L}{P_\sparse}
          \left(T_{\mathsf{list}}(q,M)+\poly(q,M)\right)\right).
    \]
    In the oracle model its expected query cost is $(L/P_\sparse)\poly(q,M)$.
    A bounded-time version has the same success guarantee and gate and query bounds larger by a factor $O(q+M)$.
    The case $s=1$ gives the ordinary Hamming-weight statement.
    For $s>1$, the Fourier transform and dual code are still over $\F_q$; the blocks only specify the decoding metric.
\end{corollary}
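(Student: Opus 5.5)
The plan is to apply Theorem~\ref{thm:chailloux} with a concrete coherent decoder built from the list decoder, and then choose $\epsilon$ exponentially small. The decoder for $\QDP_{C^\perp,\QFT^\dag\ket A}$ works as follows. It measures the input register in the standard basis and obtains $\cv+\ev$, where $\ev$ is distributed as the measurement of $\QFT^\dag\ket A$ and is independent of the uniform codeword $\cv$. It then runs the deterministic block list decoder at radius $\zeta m$ and outputs a uniformly random element of the returned list, outputting an arbitrary codeword if the list is empty. The list size and measurement randomness are kept in work registers so the procedure is coherent, as the computational conventions permit.

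The success bound is direct. Whenever $\wt_s(\ev)\leq\zeta m$, the list contains $\cv$ and has size at most $L$, so the guess is correct with probability at least $1/L$. Averaging over $\ev$, the success probability is at least $P_\sparse/L$, so we set $P_\dec:=P_\sparse/L$. Blocking affects only the metric, not the Fourier transform over $\F_q^M$. The gate cost is $T_\dec=T_{\mathsf{list}}(q,M)+\poly(q,M)$; this covers measuring, sampling a uniform index below the list size (exactly, or to exponentially small error absorbed below), and reversible bookkeeping. In the oracle model the decoder makes no queries, so $Q_\dec=0$.

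Next I apply Theorem~\ref{thm:chailloux} with $\epsilon=2^{-\Theta(q+M)}$ (capped below $1/4$). Then $\tau=\log(16/(\epsilon P_\dec))=O(q+M)+O(\log\poly(q,M))$, so by hypothesis the preparation cost $T_\prep(q,M,\tau)$ and query cost $Q_\prep$ are $\poly(q,M)$, and $\poly(M,\log q,\tau)$ is also $\poly(q,M)$. Substituting into the theorem gives success probability at least $P_\sparse/L-2^{-\Omega(q+M)}$ and expected cost $O\bigl((L/P_\sparse)(T_{\mathsf{list}}+\poly(q,M))\bigr)$, with query cost $(L/P_\sparse)\poly(q,M)$. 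The bounded-time version multiplies these bounds by $O(\log(1/\epsilon))=O(q+M)$.

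The only real subtlety is checking that the decoder meets the theorem's uniformity requirement: $P_\dec$ must be a known lower bound valid for every represented instance. This holds because $P_\sparse$ is assumed to be such a uniform known bound and $L$ is a uniform list-size bound. A second point to check is that the coherent implementation of "uniform element of a list of data-dependent size" does not add error beyond the $\epsilon$ budget. Using exact rejection sampling over $[L]$ restricted to the actual list length, with a bounded number of trials, has failure probability $2^{-\Omega(q+M)}$, which is absorbed into the error term.
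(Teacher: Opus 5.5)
Your proposal is correct and follows essentially the same route as the paper: the same measure, list-decode, and guess-uniformly decoder with ideal success $P_\sparse/L$, fed into Theorem~\ref{thm:chailloux} with an exponentially small error parameter so that $\tau=O(q+M)$ and the bounded-time overhead is $O(q+M)$. The only bookkeeping difference concerns the declared bound. The paper prepares the uniform list index to error $\delta=2^{-2(q+M)}P_\sparse/(16L)$ and declares $P_\dec=P_\sparse/L-\delta$ rather than exactly $P_\sparse/L$; your bounded rejection sampler likewise guarantees only this reduced value, not exactly $P_\sparse/L$. The paper then reuses the same $\delta$ as the theorem's error parameter.
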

\begin{proof}
    An ideal decoder measures the input to $\QDP_{C^\perp,\QFT^\dag\ket A}$ in the standard basis to obtain $\cv+\ev$, runs the list decoder, and chooses a uniformly random list element; an empty list gives $\bot$.
    Whenever $\wt_s(\ev)\leq\zeta m$, the list contains $\cv$ and has at most $L$ elements, so this guess is correct with probability at least $1/L$.
    Thus ideal decoding succeeds with probability at least $P_\sparse/L$ on every represented instance.

    Retain the measurement outcome, random coins, and intermediate work to implement this decoder coherently.
    For a list of size $\ell\leq L$, a uniform index in $[\ell]$ can be prepared to error $\delta$ with $\poly(L,\log(1/\delta))$ gates.
    Take $\delta=2^{-2(q+M)}P_\sparse/(16L)$.
    The implemented decoder has success at least $P_\dec=P_\sparse/L-\delta>0$.
    Apply Theorem~\ref{thm:chailloux} with error parameter $\delta$.
    Its logarithmic precision parameter is $O(q+M)$, because $L/P_\sparse$ is polynomial.
    The success is at least $P_\sparse/L-2\delta$, and the expected restart factor is $O(L/P_\sparse)$.
    The bounded-time version costs the additional factor $O(\log(1/\delta))=O(q+M)$.
    Reversible list decoding and all remaining work are polynomial, giving the claimed bounds.
\end{proof}

\paragraph{Worst-case $\CIP$ using Regev's reduction.}

\begin{theorem}[Worst-case $\CIP$~\cite{Regev05,jordan2024optimization}]\label{thm:worstcasecip}
    Let $C\subseteq\F_q^M$ be a linear code, let $A\subseteq\F_q^M$ be nonempty, and let $\ket{\psi_A}$ be a normalized state supported on $A$.
    Assume:
    \begin{itemize}
        \item A coherent decoder solves $\QDP_{C^\perp,\QFT^\dag\ket{\psi_A}}$ with probability at least $P_\dec\geq1/2$ and gate cost $T_\dec(q,M)$.
        \item For every $\delta>0$, preparation to trace-distance error $\delta$ takes time $T_\prep(q,M,\log(1/\delta))$.
    \end{itemize}
    For every $\epsilon\in(0,1/4)$, there is an algorithm for $\CIP_{C,A}$ with success probability at least
    \[
        1-2|C^\perp|(1-P_\dec)-4\epsilon
    \]
    and expected gate complexity
    \[
        O\left(T_\dec(q,M)+T_\prep(q,M,\log(4/\epsilon))
                    +\poly(M,\log q,\log(1/\epsilon))\right).
    \]
    Aborting after $\lceil4\ln(1/\epsilon)\rceil$ trials gives a bounded-time algorithm with the same cost multiplied by $O(\log(1/\epsilon))$ and success at least $1-2|C^\perp|(1-P_\dec)-5\epsilon$.
    The factor $|C^\perp|$ requires decoder success extremely close to one for these bounds to be nontrivial.
\end{theorem}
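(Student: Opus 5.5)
The plan is the standard Regev-reduction argument. It relates the actual algorithm to an ideal one built from a perfect decoder, and then bounds the distance between the two by the decoder's amplitude of failure.

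\textbf{Ideal algorithm.} Let $\mathbf{H}$ be a parity-check matrix of $C$. Write the decoder input as the state
\[
\sum_{\cv\in C^\perp}|C^\perp|^{-1/2}\ket{\cv}\otimes X^\cv\QFT^\dag\ket{\psi_A}.
\]
It is prepared as follows.
\begin{enumerate}[(i)]
\item Prepare $\ket{\psi_A}$ and apply $\QFT^\dag$.
\item Prepare a uniform superposition over $C^\perp$ (equivalently over messages $\xv$ with $\cv=\matB^\top\xv$).
\item Add $\cv$ into the error register.
\end{enumerate}
An ideal decoder $D^\ast$ maps $\ket{\cv+\ev}\mapsto\ket{\cv}\ket{\ev}$ exactly. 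Running it coherently and then uncomputing the codeword register from the decoded copy leaves
\[
\Big(\textstyle\sum_{\cv}\ket{\cv}\Big)\otimes\QFT^\dag\ket{\psi_A},
\]
with the codeword register disentangled. Applying $\QFT$ to the full word register then gives, by the duality relation $\QFT(\sum_{\cv\in C^\perp}\ket{\cv})\propto\sum_{\yv\in C}\ket{\yv}$, the convolution identity: the state becomes proportional to the pointwise product of $\ket{\psi_A}$ and the uniform state on $C$. Measuring therefore yields an element of $C\cap A$. The event that this product is nonzero is exactly the postselection step whose success probability is $\|\Pi_C\ket{\psi_A}\|^2$. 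Here we must be careful. Because the overall state before measurement is $\sum_{\yv\in C}\psi_A(\yv)\ket{\yv}$ up to normalization, every outcome lies in $C\cap A$. I would phrase the postselection as the test ``syndrome $=0$'', performed before the decoder, exactly as in Theorem~\ref{thm:chailloux}.

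\textbf{Comparing real and ideal.} The real decoder $D$ agrees with $D^\ast$ on each branch $\cv$ up to an error vector of squared norm $1-p_\cv$, where $\frac{1}{|C^\perp|}\sum_{\cv} p_\cv=P_\dec$. The difference between the two final states is a sum of these branch errors weighted by the postselection normalization. The worst case is that the branch errors add coherently: by Cauchy--Schwarz, $\|\sum_\cv\epsilon_\cv\|^2\le|C^\perp|\sum_\cv\|\epsilon_\cv\|^2$. Relative to the normalized superposition this is $|C^\perp|$ times the average failure, giving a trace distance of order $|C^\perp|(1-P_\dec)$. This is the source of the factor $2|C^\perp|(1-P_\dec)$. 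I expect this coherent-accumulation bound, with correct tracking of normalizations, to be the main obstacle. I would follow Jordan et al.'s bookkeeping: write the real final state as the ideal one plus a residual, and bound the residual's squared norm by $\sum_\cv(1-p_\cv)$ times $|C^\perp|$ over the normalization.

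\textbf{Approximation errors and cost.} Preparation error $\epsilon/4$ and compilation errors contribute $O(\epsilon)$ in total via contractivity and Lemma~\ref{lem:postselection-error}. If $P_\dec\ge1/2$, the success test (the decoded codeword agrees with the retained one) passes with probability at least $1/2$, so the expected number of restarts is constant. This gives the stated expected complexity, and truncating after $\lceil4\ln(1/\epsilon)\rceil$ trials adds at most $\epsilon$ failure probability.
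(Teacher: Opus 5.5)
Your skeleton (retain the codeword, run the equality test, use Cauchy--Schwarz to produce the factor $|C^\perp|$) is the right one, but two steps fail as written.

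\textbf{The ideal algorithm is set up incorrectly.} There is no postselection with success probability $\norm{\Pi_C\ket{\psi_A}}^2$, where $\Pi_C$ projects onto $\mathrm{span}\{\ket{\yv}:\yv\in C\}$. A ``syndrome $=0$'' test on $\ket{\psi_A}$ would in fact be fatal. It succeeds with probability exactly $\norm{\Pi_C\ket{\psi_A}}^2$, and this equals $1/|C^\perp|$ whenever decoding is perfect. Indeed, in that case the states $\ket{\phi_{\cv}}:=X^{\cv}\QFT^\dag\ket{\psi_A}$ are orthonormal, so $\QFT\,|C^\perp|^{-1/2}\sum_{\cv}\ket{\phi_{\cv}}=\sqrt{|C^\perp|}\,\Pi_C\ket{\psi_A}$ is already normalized. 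Your expected cost would therefore be exponential. Theorem~\ref{thm:chailloux} does not do this either: it takes a uniform syndrome as input and uses the decoded-equals-retained test, which you invoke only in your cost paragraph.

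Your ideal decoder $D^\ast\colon\ket{\cv+\ev}\mapsto\ket{\cv}\ket{\ev}$ also causes trouble. It is not well defined unless the support of $\QFT^\dag\ket{\psi_A}$ meets each coset of $C^\perp$ in at most one point. Even when it exists, a real decoder is not branchwise close to it. Write $U_{\dec}\ket{\phi_{\cv}}\ket{0}=\sqrt{1-\epsilon_{\cv}}\ket{\xi_{\cv}}\ket{\cv}+\sqrt{\epsilon_{\cv}}\ket{\delta_{\cv}}$. Success controls only $\epsilon_{\cv}$, while the garbage $\ket{\xi_{\cv}}$ is arbitrary and depends on $\cv$.

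The missing step is to uncompute with the real decoder's own inverse, after acceptance and erasure of the retained copy:
\[
U_{\dec}^\dag\bigl(\sqrt{1-\epsilon_{\cv}}\ket{\xi_{\cv}}\ket{\cv}\bigr)=\ket{\phi_{\cv}}\ket{0}-\sqrt{\epsilon_{\cv}}\,U_{\dec}^\dag\ket{\delta_{\cv}}.
\]
This removes the garbage exactly and gives $\ket{\Psi}=\ket{\Phi_{\mathsf{success}}}-\ket{\Delta}$, where $\ket{\Phi_{\mathsf{success}}}=(p|C^\perp|)^{-1/2}\sum_{\cv}\ket{\phi_{\cv}}\ket{0}$ and $p\geq P_{\dec}$ is the acceptance probability of the equality test.

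\textbf{Your comparison step does not give a bound linear in $|C^\perp|(1-P_{\dec})$.} Cauchy--Schwarz bounds the \emph{squared} norm, $\norm{\ket{\Delta}}^2\leq|C^\perp|(1-p)/p$. Any trace-distance or contractivity argument therefore yields only failure $O\bigl(\sqrt{|C^\perp|(1-P_{\dec})}\bigr)$. Moreover, $\ket{\Phi_{\mathsf{success}}}$ is generally not normalized, so ``trace distance to the ideal state'' is not the right quantity.

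The linear bound comes from a different observation. After the final Fourier transform, the ideal component $\sqrt{|C^\perp|/p}\,\Pi_C\ket{\psi_A}\ket{0}$ is supported on $C\cap A$, so it lies entirely in the success subspace. With $\Pi$ the success projector, this gives
\[
\Pr[\text{failure}]=\norm{(\id-\Pi)(\QFT\otimes\id)\ket{\Psi}}^2=\norm{(\id-\Pi)(\QFT\otimes\id)\ket{\Delta}}^2\leq\norm{\ket{\Delta}}^2.
\]
The factor $2$ is then $1/p\leq1/P_{\dec}\leq2$, coming from normalization by the acceptance probability. With these two corrections your outline becomes the paper's proof, and your approximation and restart accounting is fine.
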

\begin{proof}
    We give the circuit from~\cite[Section~7]{shutty2026optimizationusinglocallyquantumdecoders} explicitly.
    First assume exact preparation and exact Fourier transforms.
    For $\dv\in C^\perp$, write
    $\ket{\phi_\dv}=X^\dv\QFT^\dag\ket{\psi_A}$.
    A coherent decoder $U_\dec$ has the form
    \[
        U_\dec\ket{\phi_\dv}\ket0
        =\sqrt{1-\epsilon_\dv}\ket{\xi_\dv}\ket{\dv}
         +\sqrt{\epsilon_\dv}\ket{\delta_\dv},
    \]
    where the decoded-word register of $\ket{\delta_\dv}$ is orthogonal to $\ket{\dv}$.
    Let $p$ be the actual average decoder success, so
    \[
        p=\frac1{|C^\perp|}\sum_{\dv\in C^\perp}(1-\epsilon_\dv)
        \geq P_\dec.
    \]
    Each trial prepares a uniform codeword $\dv$ in a retained register and $\QFT^\dag\ket{\psi_A}$ in a second register, adds $\dv$ to the second register, and runs $U_\dec$.
    Measure only whether its decoded-word register equals the retained register.
    On inequality, restart; equality occurs with probability $p$.
    On equality, subtract the decoded word from the retained register and discard that zero register, then apply $U_\dec^\dag$.
    The resulting normalized state is
    \[
        \ket\Psi
        =\frac1{\sqrt{p|C^\perp|}}\sum_{\dv\in C^\perp}
          \sqrt{1-\epsilon_\dv}\,U_\dec^\dag\ket{\xi_\dv}\ket{\dv}
        =\ket{\Phi_{\mathsf{success}}}-\ket\Delta,
    \]
    where
    \[
        \ket{\Phi_{\mathsf{success}}}
        =\frac1{\sqrt{p|C^\perp|}}\sum_{\dv\in C^\perp}\ket{\phi_\dv}\ket0,
        \qquad
        \ket\Delta
        =\frac1{\sqrt{p|C^\perp|}}\sum_{\dv\in C^\perp}
              \sqrt{\epsilon_\dv}\,U_\dec^\dag\ket{\delta_\dv}.
    \]
    Character orthogonality shows that applying $\QFT$ to the first register of $\ket{\Phi_{\mathsf{success}}}$ gives $\sqrt{|C^\perp|/p}$ times the projection of $\ket{\psi_A}$ onto $C$, with zero work registers.
    It is therefore supported on $C\cap A$.
    If $\Pi$ projects onto outcomes that are successful after this final Fourier transform, then
    \[
        \Pr[\text{failure}]
        =\norm{(\id-\Pi)\ket\Psi}^2
        =\norm{(\id-\Pi)\ket\Delta}^2
        \leq\norm{\ket\Delta}^2.
    \]
    The triangle and Cauchy--Schwarz inequalities give
    \begin{align*}
        \norm{\ket\Delta}
        &\leq\frac1{\sqrt{p|C^\perp|}}
              \sum_{\dv\in C^\perp}\sqrt{\epsilon_\dv}\\
        &\leq\frac{\sqrt{|C^\perp|\sum_\dv\epsilon_\dv}}
                         {\sqrt{p|C^\perp|}}
         =\sqrt{\frac{|C^\perp|(1-p)}p}.
    \end{align*}
    Hence the ideal failure probability is at most
    $|C^\perp|(1-p)/p\leq2|C^\perp|(1-P_\dec)$.

    A generator matrix prepares the uniform codeword state using polynomially many field operations, without enumerating $C^\perp$.
    A trial uses one preparation, one decoder call, and polynomially many gates for codeword addition and the equality test.
    After acceptance it uses one inverse-decoder call and the final Fourier transform.
    The expected number of exact trials is $1/p\leq2$.

    For the actual implementation, allocate error $\epsilon/4$ to state preparation and $\epsilon/4$ to the other operations before the equality test.
    Its acceptance probability is at least $p-\epsilon/2>1/4$.
    Lemma~\ref{lem:postselection-error} bounds the conditioned-state error by
    \[
        \frac{\epsilon/2}{p-\epsilon/2}
        \leq\frac{\epsilon}{1-\epsilon}.
    \]
    Allocate an additional error at most $\epsilon/2$ to the final operations.
    The change in success probability is at most
    $\epsilon/(1-\epsilon)+\epsilon/2<4\epsilon$.
    Thus at most four trials are needed in expectation, giving the stated cost.
    If instead we stop after $\lceil4\ln(1/\epsilon)\rceil$ trials, the probability of no acceptance is at most
    $(1-1/4)^{\lceil4\ln(1/\epsilon)\rceil}\leq\epsilon$.
    This proves the bounded-time version.
\end{proof}

\subsection{Other Quantum Computation Background}

\paragraph{Amplitude amplification.}

\begin{lemma}[Phase-correct amplitude amplification with auxiliary input]\label{lemma:ampamp}
    Let $M\geq 2$ and $N\geq 1$ be integers, and let $H: [MN] \to [M]$ be balanced, i.e., $|H^{-1}(y)| = N$ for every $y \in [M]$.
    For every $\delta\in(0,1)$, there is a unitary algorithm $U$ that makes $O(\sqrt{M})$ queries to $H$, uses
    \[
        O\left(\sqrt M\log(MN)+\log(1/\delta)\right)
    \]
    Clifford+$T$ gates, and has the following input-output behavior up to Euclidean error $\delta$:
    \begin{align*}
        U\left(\ket{\psi_{\mathsf{in}}} \ket{0}\right) &= \ket{\psi_{\mathsf{out}}} \ket{0},\text{ where}\\
        \ket{\psi_\mathsf{in}} &= \sum_{y \in [M]} \alpha_y \ket{y} \ket{\mathsf{aux}_y},\text{ and}\\
        \ket{\psi_{\mathsf{out}}} &= \frac{1}{\sqrt{N}} \sum_{x \in [MN]} \alpha_{H(x)} \ket{x} \ket{\mathsf{aux}_{H(x)}}.
    \end{align*}
\end{lemma}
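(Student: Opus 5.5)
The idea is to reduce the lemma to ordinary fixed-point-free amplitude amplification, applied coherently in each branch $y$. We need a unitary $V_y$ that maps $\ket{0}$ to $\frac{1}{\sqrt N}\sum_{x\in H^{-1}(y)}\ket{x}$ while leaving the auxiliary register untouched. Linearity then handles the superposition over $y$ automatically, provided two conditions hold. First, the success probability and the number of iterations must not depend on $y$. Second, the phase must be the same in every branch, so that no relative phases $\alpha_y$ get corrupted.

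\textbf{Construction.} Start from the uniform superposition $\ket{u}=\frac{1}{\sqrt{MN}}\sum_{x\in[MN]}\ket{x}$. Preparing it to precision $\delta$ costs $O(\log(MN)+\log(1/\delta))$ gates; when $MN$ is not a power of two we use an exact or approximately exact uniform-state preparation. Conditioned on the $y$ register, mark the good states by the phase oracle
\[
x\mapsto (-1)^{[H(x)=y]}.
\]
We implement this by computing $H(x)$ into a fresh register, comparing it with $y$, applying the phase, and uncomputing. This uses two queries and $O(\log M)$ gates.

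Balancedness gives initial good amplitude exactly $\sin\theta=1/\sqrt M$ for every $y$, so $\theta$ is independent of $y$. Standard Grover iteration with reflection about $\ket{u}$ rotates within the two-dimensional span of the good and bad uniform states. Because every $y$ sees the same angle, every branch ends at the same point. The Grover iterate maps the good-state direction to itself with a global phase that is common to all $y$. The auxiliary register is never touched, since the oracle only reads $y$.

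\textbf{Main obstacle: exact success.} Grover's algorithm with $\lfloor\pi/(4\theta)\rfloor$ iterations only gets close to the good state, with error $O(1/\sqrt M)$. That is not the error $\delta$ we need. The fix is the standard exact amplitude amplification trick (Brassard–Høyer–Mosca–Tapp). Choose $k=\lceil \pi/(4\theta)-1/2\rceil$ iterations, and replace the last iteration by one with generalized phases $e^{i\phi}$ in the oracle and in the reflection. The phases are chosen so that the final rotation lands exactly on the good subspace. Equivalently, first reduce the initial amplitude slightly by appending a qubit rotated by a fixed angle, so that an integer number of iterations is exact.

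Since $\theta$ is known and independent of $y$, these phase angles are fixed constants. Synthesizing them as Clifford+$T$ rotations to precision $\delta/O(\sqrt M)$ per gate would cost $O(\sqrt M\log(\sqrt M/\delta))$, which is too much. Instead, only $O(1)$ nonstandard rotations are needed: the adjusting rotation and the final phases. Each is synthesized to error $\delta/4$ with $O(\log(1/\delta))$ gates. The remaining iterations are exact Clifford+$T$ circuits (comparators and reflections) costing $O(\log(MN))$ each. The totals are then $O(\sqrt M)$ queries and $O(\sqrt M\log(MN)+\log(1/\delta))$ gates.

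Finally, the common global phase is absorbed into the final phase rotation, which makes the phase exactly correct. If an ancilla qubit was used for the adjustment, it ends in $\ket{0}$ deterministically after the exact rotation, so it factors out and returns to $\ket{0}$.

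The resulting map sends $\ket{y}\ket{\mathsf{aux}_y}\ket{0}$ to $\ket{y}\ket{\mathsf{aux}_y}\frac{1}{\sqrt N}\sum_{H(x)=y}\ket{x}$. To reach $\ket{\psi_{\mathsf{out}}}$, the $y$ register still has to be erased. Compute $H(x)$ with one more query, subtract it from $y$ to zero that register, and move $\ket{0}$ to the ancilla position. Summing over $y$ with weights $\alpha_y$ then gives exactly $\ket{\psi_{\mathsf{out}}}\ket{0}$, up to the Euclidean error $\delta$ from the $O(1)$ approximated rotations and the state preparation.
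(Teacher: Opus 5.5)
Your plan matches the paper's. You run Brassard--H{\o}yer--Mosca--Tapp exact amplitude amplification coherently in every $y$-branch. Balancedness makes the rotation angle, and hence the final-iteration phases and the global phase, independent of $y$. You pay $O(\log(1/\delta))$ only for the two nonstandard phase rotations of the last iteration, and you erase $y$ with one extra query.

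The gap is the source state. You amplify from the uniform superposition $\ket{u}$ over $[MN]$ and assert that all iterations other than the last are exact Clifford+$T$ circuits. When $MN$ is not a power of two this is false. Amplitudes of states exactly preparable by Clifford+$T$ circuits lie in $\ZZ[i,1/\sqrt2]$, and $1/\sqrt{MN}$ lies there only when $MN$ is a power of two. So neither $\ket u$ nor the reflection about it can be implemented exactly. Your fallback, an approximate preparation, must be run inside each of the $\Theta(\sqrt M)$ reflections about $\ket u$. Its nonstandard rotation is then synthesized $\Theta(\sqrt M)$ times, costing order $\sqrt M\log(1/\delta)$ gates in general. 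It is in fact worse: a perturbed $\ket u$ can make the good amplitude $y$-dependent, and the resulting angle error grows with the number of iterations. This is the cost you yourself rejected, so the bound $O(\sqrt M\log(MN)+\log(1/\delta))$ is not established for general $M,N$.

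The paper's fix is to pad. Let $K=2^{\lceil\log_2(MN)\rceil}$ and start from $\ket{+_{[K]}}$, which Hadamards prepare exactly. Call $x$ good for $y$ iff $x\in[MN]$ and $H(x)=y$; the range check costs $O(\log(MN))$ gates and no queries. Balancedness still gives exactly $N$ good $x$ out of $K$ for every $y$: for the good-pair projector $\Pi$, we have $\bra{+_{[K]}}\Pi\ket{+_{[K]}}=(N/K)\,\id$ on the $y$-register. So the angle $\arcsin\sqrt{N/K}$ is $y$-independent, and $K/N<2M$ keeps the iteration count $O(\sqrt M)$. Every source reflection is now an exact Clifford circuit, and the rest of your argument goes through unchanged.

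Separately, discard the ``equivalently, append a rotated qubit'' variant. That rotation is part of the source state, so it enters every source reflection, giving $\Theta(\sqrt M)$ nonstandard rotations rather than $O(1)$. Only the final-iteration phase version meets the gate bound.
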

\begin{proof}
    Let $K=2^{\lceil\log_2(MN)\rceil}$, so $MN\leq K<2MN$.
    Define:
    \begin{align*}
        \ket{\psi_1} &= \frac{1}{\sqrt{K}} \sum_{y \in [M]} \sum_{x \in [K]} \alpha_y \ket{x} \ket{y} \ket{\mathsf{aux}_y} \\
        \ket{\psi_2} &= \frac{1}{\sqrt{N}} \sum_{x \in [MN]} \alpha_{H(x)} \ket{x} \ket{H(x)} \ket{\mathsf{aux}_{H(x)}}.
    \end{align*}
    We now describe $U$ in three steps.
    First, we prepare the ancillary register in $\ket{+_{[K]}}$ and make the required swaps.
    This uniform state is prepared exactly by $\log_2 K=O(\log(MN))$ Hadamard gates.

    Secondly, once we have $\ket{\psi_1}$, we use exact amplitude amplification with a final phase adjustment~\cite[Section~2.1]{brassard2002quantum} to prepare $\ket{\psi_2}$. This is done with the following source and target projectors:
    \begin{align*}
        \Pi_\src &= \proj{+_{[K]}} \otimes \id \otimes \id \\
        \Pi_\tar &= \sum_{x \in [MN]} \proj{x} \otimes \proj{H(x)} \otimes \id.
    \end{align*}
    We implement the target reflection by first checking whether $x\in[MN]$ and querying $H$ only on that subspace; equivalently, we may extend the oracle arbitrarily to the padding basis states, whose values do not affect $\Pi_\tar$.
    To justify the applicability of amplitude amplification, note that $\braket{\psi_1|\Pi_\tar|\psi_1} = N/K$ and $\ket{\psi_{2}} = \sqrt{K/N} \Pi_\tar \ket{\psi_1}$.
    We also have $\ket{\psi_1} \in \mathrm{im} \Pi_\src$ and
    \[
        \Pi_\src \Pi_\tar \Pi_\src = \frac{N}{K} \Pi_\src.
    \]
    Thus every source component has the same source--target angle, so the same phase adjustment applies simultaneously to the entire auxiliary superposition.
    Since $K/N<2M$, the procedure uses $O(\sqrt M)$ source and target reflections.
    A source reflection uses $O(\log(MN))$ Clifford+$T$ gates: conjugate a phase conditioned on the all-zero state by the Hadamards that prepare $\ket{+_{[K]}}$.
    A target reflection first checks whether $x\in[MN]$ and, if so, computes $H(x)$ into a work register, compares it with $y$, applies a conditional phase, and uncomputes.
    It uses two queries to $H$ and $O(\log(MN))$ additional gates.
    All but the final source and target phases are sign flips.
    The two nonstandard one-qubit phase rotations in the final iteration can each be synthesized to operator-norm error $O(\delta)$ using $O(\log(1/\delta))$ Clifford+$T$ gates~\cite{kliuchnikov2013asymptotically}.
    The ideal amplitude-amplification sequence is exact up to a common, input-independent global phase, which we fix as part of $U$.
    Replacing the two nonstandard rotations by their approximations changes the output by at most $\delta$ after adjusting the constant implicit in the synthesis precision.
    This proves the stated query and gate bounds.

    Finally, once we have $\ket{\psi_2}$ we can coherently uncompute $H(x)$ and swap some registers to arrive at the state $\ket{\psi_{\mathsf{out}}}$.
\end{proof}

\begin{corollary}[Decompression lemma]\label{cor:decompress}
    Let $M\geq 2$ and $N\geq 1$ be integers, and let $H_1, \ldots, H_\ell: [MN] \to [M]$ be balanced functions, i.e., $|H_i^{-1}(y)| = N$ for all $i \in [\ell], y \in [M]$.
    Suppose we have some state:
    \[
        \ket{\psi_\mathsf{in}} = \sum_{y_1, \ldots, y_\ell \in [M]} \alpha_{y_1, \ldots, y_\ell} \ket{y_1, \ldots, y_\ell}.
    \]
    Then there is a quantum algorithm that prepares the state:
    \[
        \ket{\psi_\mathsf{out}} = \frac{1}{N^{\ell/2}} \sum_{x_1, \ldots, x_\ell \in [MN]} \alpha_{H_1(x_1), \ldots, H_\ell(x_\ell)} \ket{x_1, \ldots, x_\ell}.
    \]
    For every $\delta\in(0,1)$, it can prepare this state to Euclidean error $\delta$ using $O(\ell \sqrt{M})$ queries in total to the $H_i$'s and
    \[
        O\left(\ell\sqrt M\log(MN)+\ell\log(\ell/\delta)\right)
    \]
    Clifford+$T$ gates.
\end{corollary}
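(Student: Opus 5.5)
The plan is to apply Lemma~\ref{lemma:ampamp} once per register, each call acting on a single register $y_i$ while the other $\ell-1$ registers serve as auxiliary data. Write the input as
\[
    \ket{\psi_\mathsf{in}}=\sum_{y_1\in[M]}\ket{y_1}\otimes\Bigl(\sum_{y_2,\ldots,y_\ell}\alpha_{y_1,\ldots,y_\ell}\ket{y_2,\ldots,y_\ell}\Bigr).
\]
To match the lemma's form $\sum_y\alpha_y\ket{y}\ket{\mathsf{aux}_y}$, set $\alpha'_{y_1}$ to be the norm of the bracketed vector and let $\ket{\mathsf{aux}_{y_1}}$ be its normalization (choose an arbitrary fixed state when the norm is zero). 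The lemma applied with $H=H_1$ then outputs $N^{-1/2}\sum_{x_1}\ket{x_1}\otimes\sum_{y_2,\ldots}\alpha_{H_1(x_1),y_2,\ldots}\ket{y_2,\ldots,y_\ell}$, up to Euclidean error $\delta/\ell$. This is the key point: the lemma is stated for an arbitrary auxiliary superposition that is correlated with $y$, and it is phase-correct uniformly in $y$. So it acts as the linear map $\ket{y}\mapsto N^{-1/2}\sum_{x\in H^{-1}(y)}\ket{x}$ on the first register and the identity on the rest. This holds regardless of how the state is decomposed, since the unitary is fixed and does not depend on $\alpha$.

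The clean way to formalize this is to note that $U$ does not depend on the input. For each basis input $\ket{y}\ket{a}$, the lemma (taking $\alpha_y=1$ and a trivial aux) gives $U\ket{y}\ket{a}\ket0\approx\ket{\mathrm{dec}(y)}\ket{a}\ket0$. Linearity then extends this to the general input, but the error bounds have to be handled with care. Instead of summing basis errors, I would invoke the lemma directly on the full input with the aux decomposition above, because it gives error $\delta$ for every such input.

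Next, iterate over $i=1,\ldots,\ell$. After step $i-1$, registers $1,\ldots,i-1$ hold $x$-values, and the state is
\[
    N^{-(i-1)/2}\sum\alpha_{H_1(x_1),\ldots,H_{i-1}(x_{i-1}),y_i,\ldots}\ket{x_1,\ldots,x_{i-1},y_i,\ldots,y_\ell}.
\]
Apply the lemma with $H=H_i$, treating register $i$ as $y$ and all other registers as auxiliary. A single step takes the ideal state to the ideal next state within error $\delta/\ell$. Unitaries preserve norms, so errors accumulate only additively, and the total Euclidean error is at most $\delta$. Each step costs $O(\sqrt M)$ queries and $O(\sqrt M\log(MN)+\log(\ell/\delta))$ gates, so summing over the $\ell$ steps gives the stated bounds. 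The ancilla returns to $\ket0$ after each step up to the allowed error, which means it can be reused; to be safe, one can instead use fresh ancillas for each step.

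The only subtle step is making sure the lemma really applies when the auxiliary state depends on $y$ and is entangled with it. The lemma statement explicitly allows this, so no further work is needed beyond normalizing the aux vectors.
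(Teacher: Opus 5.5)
Your proposal is correct and matches the paper's proof, which applies Lemma~\ref{lemma:ampamp} to each of the $\ell$ registers in sequence with error $\delta/\ell$ per application, treating the other registers as the $y$-dependent auxiliary state. Your normalization of the auxiliary vectors, and your additive error accounting (valid because each step's unitary is fixed and norm-preserving), fill in the details the paper leaves implicit.
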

\begin{proof}
    Apply Lemma~\ref{lemma:ampamp} to each of the $\ell$ registers in sequence with error at most $\delta/\ell$ per application.
\end{proof}

\begin{remark}
    As written, the proof of Lemma~\ref{lemma:ampamp} relies on the fact that $H$ is exactly balanced to ensure that every source component has the same source--target angle.
    This is not strictly necessary; one can prove a variant of Corollary~\ref{cor:decompress} that handles imbalanced $H_1, \ldots, H_\ell$ as long as every preimage set occupies at least an inverse-polynomial fraction of the domain.
    To do this, one can switch out Lemma~\ref{lemma:ampamp} for a more general QSVT-based amplitude amplification procedure~\cite[Theorems 26 and 27]{10.1145/3313276.3316366}.
\end{remark}

\paragraph{Distinguishing random functions from balanced functions.}
\begin{lemma}\label{lem:distinguishbalanced}
    Suppose a quantum algorithm is given oracle access to a function $H: [MN] \to [M]$ such that either:
    \begin{itemize}
        \item $H$ is uniformly random; or
        \item $H$ is uniformly random subject to the added constraint that it is perfectly balanced.
    \end{itemize}
    If the quantum algorithm makes $Q$ queries to $H$, it distinguishes these with advantage at most $O(Q^3/(MN))$.
\end{lemma}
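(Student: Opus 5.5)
The plan is to reduce to a known result: distinguishing a random function from a random \emph{small-range} or \emph{bounded-collision} function. The standard route goes through Zhandry's technique. A $Q$-query quantum algorithm's acceptance probability is a polynomial of degree at most $2Q$ in the oracle's truth-table indicator variables. Equivalently, by Zhandry's lemma, the output distribution of a $Q$-query algorithm depends only on the joint distributions of the oracle values on sets of at most $2Q$ inputs. So it suffices to compare the two oracle distributions through their $2Q$-wise marginals.

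First I will compute, for any fixed set $S\subseteq[MN]$ of $k\leq 2Q$ distinct inputs, the marginal law of $(H(x))_{x\in S}$ under a uniformly random balanced $H$. Sampling a balanced $H$ amounts to assigning a uniformly random arrangement of the multiset containing $N$ copies of each $y\in[M]$. Fix $S$ and a target assignment with counts $n_y$ (summing to $k$). The probability of that assignment is
\[
    \prod_{j=0}^{k-1}\frac{N-n^{(j)}}{MN-j},
\]
where $n^{(j)}$ is the number of earlier inputs already mapped to the same value. Comparing each factor with $1/M$ gives
\[
    \frac{N-n^{(j)}}{MN-j}\cdot M=\frac{1-n^{(j)}/N}{1-j/(MN)},
\]
so the ratio to the uniform probability $M^{-k}$ lies between $\prod_j(1-n^{(j)}/N)$ and $\prod_j(1-j/(MN))^{-1}$.

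The statistical distance between the $k$-wise marginals is then bounded by the expected deviation of this ratio under uniform sampling. Under uniform $H$ we have $\EE[n^{(j)}]=j/M$, so the sum over $j$ of $n^{(j)}/N$ has expectation about $k^2/(2MN)$. The denominator product contributes $O(k^2/(MN))$ as well. Hence the $k$-wise statistical distance is $O(k^2/(MN))$, and for $k=2Q$ this is $O(Q^2/(MN))$.

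Finally, I convert this to a distinguishing bound. The main obstacle is that Zhandry's ``$2Q$-wise marginals determine the output'' statement gives identical outputs only when the marginals are exactly equal. Closeness of marginals does not immediately transfer. To handle this, I will use a hybrid or coupling argument instead. One option is to compare a balanced $H$ to a uniformly random $H$ via a small-range-style simulation: realize the balanced $H$ as a random permutation composed with a fixed balanced map, and bound by an $O(Q^3/\cdot)$ argument in the style of the collision-finding/small-range distribution lemmas. Alternatively, and more in the spirit of the target bound $O(Q^3/(MN))$, I will use the polynomial method directly. The acceptance probability is a degree-$2Q$ polynomial $P$, so the advantage equals $\sum_{|S|\leq 2Q}$ of monomial expectations differences. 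Grouping these via the marginal bound above, each of the at most $Q$ query rounds loses $O(Q^2/(MN))$, and summing over rounds gives $O(Q^3/(MN))$. Concretely, I would replace the oracle one query at a time in a hybrid, charging each step $O(Q^2/(MN))$ from the single-step conditional marginal comparison. I expect making this per-query hybrid rigorous to be the delicate step.
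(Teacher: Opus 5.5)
\textbf{Gap: the final step.} The step that turns your marginal bound into the $O(Q^3/(MN))$ advantage is missing, and that step is where all the content of the lemma lies. Your marginal computation is fine: the $2Q$-wise marginals of a uniform and a balanced $H$ are $O(Q^2/(MN))$-close in statistical distance. But this does not convert into a distinguishing bound by expanding the degree-$2Q$ acceptance polynomial into monomials. Only the polynomial as a whole is bounded on truth tables. Its monomial coefficients are not individually bounded, so summing coefficient times per-monomial marginal error gives nothing useful.

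The per-query hybrid is not an argument either. The oracle is one fixed function across all queries. A query-by-query hybrid therefore has to couple a uniform $H$ with a balanced $H'$ and pay for query magnitude on $\{x : H(x)\neq H'(x)\}$. Each value count of a uniform $H$ fluctuates by about $\sqrt N$, so every coupling differs on order $M\sqrt N$ inputs in expectation. A BBBV-style bound charged to such a set does not produce anything like $Q^3/(MN)$.

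That ``$Q$ rounds times $O(Q^2/(MN))$'' lands on the right exponent is a coincidence. For a random function versus a random permutation on a domain of size $D$, the $2Q$-wise marginals are also only $O(Q^2/D)$ apart. Yet the Brassard--H\o yer--Tapp collision algorithm attains advantage $\Omega(Q^3/D)$. So the extra factor of $Q$ is a global quantum effect. It must come from a sharp argument, such as the polynomial-method argument behind Zhandry's random-function versus random-permutation bound, not from per-round marginal accounting.

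\textbf{The fix.} Your option (a) points the right way but stops short. What you are missing is that the same composition also produces the \emph{uniform} case, after which the known theorem applies as a black box. Fix a balanced $G:[MN]\to[M]$.
\begin{itemize}
\item If $F:[MN]\to[MN]$ is a uniformly random function, then $G\circ F$ is uniformly random, because the values $G(F(x))$ are independent and uniform on $[M]$.
\item If $F$ is a uniformly random permutation, then $G\circ F$ is a uniformly random balanced function, because every balanced function has exactly $(N!)^M$ permutation preimages.
\end{itemize}
Each query to $H=G\circ F$ is simulated by two queries to $F$: compute $F(x)$, apply $G$, then uncompute. So a $Q$-query distinguisher for the two distributions on $H$ yields a $2Q$-query distinguisher between a random function and a random permutation on $[MN]$. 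By~\cite{DBLP:journals/qic/Zhandry15}, its advantage is $O(Q^3/(MN))$. This is exactly the paper's proof, and it needs neither the marginal computation nor any hybrid.
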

\begin{proof}
    Let $G: [MN] \to [M]$ be a fixed perfectly balanced function.
    If $F: [MN] \to [MN]$ is uniformly random, then $G \circ F$ is a uniformly random function from $[MN] \to [M]$.
    If $F: [MN] \to [MN]$ is a uniformly random permutation, then $G \circ F$ is a uniformly random perfectly balanced function from $[MN] \to [M]$.
    So any algorithm for distinguishing the two modes of $H$ can also distinguish between a random permutation vs. a random function on $[MN]$.
    The conclusion now follows from~\cite{DBLP:journals/qic/Zhandry15}. 
\end{proof}

\subsection{Probability Lemmas}

\paragraph{Tilted Poisson distributions.}

\begin{definition}[Tilted Poisson distribution]
    Fix a parameter $\nu > 0$ and a nonempty subset $B \subseteq \ZZ_{\geq 0}$.
    Define the tilted Poisson distribution $\tpoiss_{\nu, B}$ to be the discrete random variable $X$ with probabilities:
    \begin{equation}
        p_k := \frac{1}{Z(\nu, B)} \one[k \in B] \frac{\nu^k}{k!},
    \end{equation}
    where we define the partition function:
    \begin{equation}
        Z(\nu, B) := \sum_{k \in B} \frac{\nu^k}{k!},
    \end{equation}
    so that the probabilities add to 1.
\end{definition}

\begin{definition}[Multi-class tilted Poisson distribution]
    Fix a parameter $\nu > 0$, nonempty subsets $B_1, \ldots, B_R \subseteq \ZZ_{\geq 0}$, and constants $\alpha_1, \ldots, \alpha_R \in \RR_{> 0}$ that sum to 1.
    Define the tilted Poisson distribution $\tpoiss_{\nuba}$ to be the labeled $(\alpha_1, \ldots, \alpha_R)$-weighted probabilistic mixture of $\tpoiss_{\nu, B_1}, \ldots, \tpoiss_{\nu, B_R}$.
    More formally, if we let $p_k^{(i)}$ denote the probability of $k$ under $\tpoiss_{\nu, B_i}$, we have
    $$\Pr_{(I, K) \gets \tpoiss_{\nuba}} [I = i \land K = k]  = \alpha_i p_k^{(i)}.$$
\end{definition}

\begin{lemma}[Tilted Poisson tail bound]\label{lem:tpoisstailbound}
    For a fixed choice of $\nu$, $B_1,\ldots,B_R$, and $\alpha_1,\ldots,\alpha_R$, let $X$ have distribution $\tpoiss_{\nu,B_r}$ for some $r$, or let $(I,X)$ have distribution $\tpoiss_{\nuba}$.
    Then $\EE[X^3 \one[X\geq n]]=\negl(n)$.
\end{lemma}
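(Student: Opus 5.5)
The plan is to dominate every tilted Poisson variable by a quantity built from the untilted Poisson weights $\nu^k/k!$, and then use the super-exponential decay of these weights. It suffices to treat a single class. For the mixture $\tpoiss_{\nuba}$ we have
\[
    \EE[X^3\one[X\geq n]]=\sum_{r=1}^R\alpha_r\,\EE_{X\sim\tpoiss_{\nu,B_r}}[X^3\one[X\geq n]].
\]
Here $R$ and the $\alpha_r$ are fixed, so a negligible bound for each of the finitely many $B_r$ gives a negligible bound for the mixture. The same reduction covers the case where $X$ is drawn from a single $\tpoiss_{\nu,B_r}$.

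Fix a nonempty $B$. The partition function satisfies $Z(\nu,B)\geq \nu^{k_0}/k_0!>0$, where $k_0=\min B$. This is a positive constant independent of $n$. We then bound the tail sum from above by extending it to all integers:
\[
    \EE[X^3\one[X\geq n]]\leq\frac{1}{Z(\nu,B)}\sum_{k\geq n}k^3\frac{\nu^k}{k!}.
\]
The right-hand side is $1/Z(\nu,B)$ times the tail of $e^{\nu}\EE[Y^3]$, where $Y\sim\mathrm{Poisson}(\nu)$. That full series converges, so the quantity is at least $o(1)$; we need the stronger negligible bound.

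For the tail estimate, take $n\geq 2\nu+3$. Then for $k\geq n$ the ratio of consecutive terms is
\[
    \frac{(k+1)^3\nu^{k+1}/(k+1)!}{k^3\nu^k/k!}=\Big(1+\frac1k\Big)^3\frac{\nu}{k+1}\leq\frac{8\nu}{n}.
\]
Once $n\geq16\nu$ this ratio is at most $1/2$. The tail is then at most twice its first term, $2n^3\nu^n/n!$. By $n!\geq(n/e)^n$, this is at most $2n^3(e\nu/n)^n$, which decays faster than any inverse polynomial. Hence it is $\negl(n)$.

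The argument is routine. The one point that needs care is that the constants are uniform, since the $o$-bounds must not depend on $n$. This holds because $\nu$, the sets $B_r$, and the weights $\alpha_r$ are all fixed before $n\to\infty$.
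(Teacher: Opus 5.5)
Your proof is correct and follows essentially the same route as the paper: both bound the tilted-Poisson tail by a constant multiple of $\sum_{k\geq n}k^3\nu^k/k!$ and then note that this tail is $O(n^3\nu^n/n!)=\negl(n)$. The differences are cosmetic. The paper handles both cases at once by writing $p_k = w(k)\nu^k/(k!\,Z)$ with a bounded weight $w$, instead of decomposing the mixture class by class as you do. Your ratio-test argument also spells out the tail estimate that the paper states without justification.
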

\begin{proof}
    In either case, the probability of $X$ being $k$ is given by
    $$p_k = \frac{w(k) \nu^k/k!}{Z},$$
    for a fixed bounded function $w: \ZZ_{\geq 0} \to \RR_{\geq 0}$ and a constant $Z = \sum_{k = 0}^{\infty} w(k)\nu^k/k!>0$.
    Let $M$ be a bound on $w$. Then
    \begin{align*}
        \EE_{X \gets \tpoiss}[X^3 \cdot \one[X \geq n]] &= \sum_{k = n}^\infty k^3p_k \\
        &\leq \frac{M}{Z} \sum_{k = n}^{\infty} \frac{k^3 \nu^k}{k!} \\
        &\leq O(n^3\nu^n/n!) \\
        &= \negl(n),
    \end{align*}
    as claimed.
\end{proof}

\paragraph{Local central limit theorem.}

The following finite-type local CLT is a standard specialization of the lattice local limit theory in~\cite[Chapter~VII]{Petrov1975}.
\begin{theorem}[Finite-type one-dimensional local CLT]\label{thm:1dlocalclt}
    Fix some distributions $\calD_1, \ldots, \calD_R$ supported on $\ZZ$ and scalars $\alpha_1, \ldots, \alpha_R \in \RR_{> 0}$ summing to 1.
    For $r \in [R]$, let $\mu_r, \sigma_r$ be the mean and standard deviation of $\calD_r$.
    Let $n$ grow, and let $n_1, \ldots, n_R$ be such that $n_r = \alpha_r n + O(1)$ for all $r$ and $\sum_r n_r = n$.
    Let $\{Z_{r, i}: r \in [R], i \in [n_r]\}$ be $n$ independent random variables with $Z_{r, i} \sim \calD_r$ for all $i$.
    Let $Z = \sum_{r \in [R]} \sum_{i \in [n_r]} Z_{r, i}$.
    Assume all of the following:
    \begin{itemize}
        \item There exists $r \in [R]$ such that $\Var[\calD_r] > 0$;
        \item every $\calD_r$ has finite third absolute moment; and
        \item the additive subgroup generated by
        $\bigcup_{r\in[R]}(\operatorname{supp}(\calD_r)-\operatorname{supp}(\calD_r))$ is $\ZZ$.
    \end{itemize}
    Then for any $z \in \ZZ$, we have the following bound uniformly over $z$:
    \[
        \left|\Pr[Z = z] - \frac{1}{\sqrt{2\pi \sum_{r \in [R]} n_r \sigma_r^2}} \exp\left(-\frac{(z-\sum_{r \in [R]} n_r \mu_r)^2}{2\sum_{r \in [R]} n_r \sigma_r^2}\right)\right| = O\left(\frac{1}{n}\right).
    \]
\end{theorem}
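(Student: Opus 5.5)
The plan is to derive Theorem~\ref{thm:1dlocalclt} from the standard lattice local limit theorem for independent, non-identically distributed summands in \cite[Chapter~VII]{Petrov1975}. I would not reprove it from scratch. The main task is to check that the finite-type setting meets that theorem's uniform hypotheses. First I center and scale. Put $A_n=\sum_r n_r\mu_r$ and $B_n^2=\sum_r n_r\sigma_r^2$. Since $n_r=\alpha_r n+O(1)$ and some $\sigma_r^2>0$ with $\alpha_r>0$, we get $B_n^2=n\sigma^2+O(1)$, where $\sigma^2=\sum_r\alpha_r\sigma_r^2>0$. So $B_n\asymp\sqrt n$.

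I will carry out the argument directly by Fourier inversion, which is how Petrov's proof goes. Let $\varphi_r$ be the characteristic function of $\calD_r$ and $\Phi_n(t)=\prod_r\varphi_r(t)^{n_r}$. Then
\[
\Pr[Z=z]=\frac1{2\pi}\int_{-\pi}^{\pi}e^{-itz}\Phi_n(t)\,dt .
\]
I split the integral into $|t|\le\epsilon$ and $\epsilon<|t|\le\pi$.

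On the small range, the finite third moments give $\log\varphi_r(t)=i\mu_r t-\sigma_r^2t^2/2+O(|t|^3)$. Hence
\[
\Phi_n(t)=e^{iA_nt-B_n^2t^2/2}\bigl(1+O(n|t|^3)\bigr)\quad\text{for }|t|\le n^{-1/3},
\]
together with the bound $|\Phi_n(t)|\le e^{-cnt^2}$ for all $|t|\le\epsilon$. Comparing with the Gaussian integral gives the normal density term with error $O(1/n)$. For this I use the standard Edgeworth-type estimate: the cubic correction $\int t^3e^{-B_n^2t^2/2}\,dt$ vanishes by oddness after shifting. If that refinement is too delicate, I still get $O(n\cdot n^{-2})=O(1/n)$ from $\int n|t|^3e^{-cnt^2}\,dt=O(1/n)$. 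So the crude bound already suffices.

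On the large range I use the lattice hypothesis. The subgroup generated by all support differences is $\ZZ$. Therefore, for each $t\in[\epsilon,\pi]$, some $r$ with $\operatorname{supp}(\calD_r)-\operatorname{supp}(\calD_r)$ not contained in $(2\pi/t)\ZZ$ has $|\varphi_r(t)|<1$. By continuity and compactness, $\max_r|\varphi_r(t)|$ is bounded by some $\theta<1$ uniformly on this range, at least for the relevant $r$. Because every $n_r\ge\alpha_rn-O(1)$ grows linearly, $|\Phi_n(t)|\le\theta^{cn}$ there, which is exponentially small.

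The main obstacle is this last step. A single $r$ need not have aperiodic support; only the union generates $\ZZ$. So I need that for each $t\ne0$ in $[-\pi,\pi]$ some class has $|\varphi_r(t)|<1$. That holds because $|\varphi_r(t)|=1$ for all $r$ would force every support difference into $(2\pi/t)\ZZ$, contradicting generation by $\ZZ$. The linear growth of every $n_r$, guaranteed by $\alpha_r>0$, then converts the per-class gap into exponential decay of the product. Uniformity in $z$ is automatic, since $|e^{-itz}|=1$ throughout.
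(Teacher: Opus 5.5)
Your proposal is correct and is essentially the paper's proof: Fourier inversion on $[-\pi,\pi]$; near the origin, a cubic-remainder bound whose crude integral $\int n|t|^3e^{-cnt^2}\,dt=O(1/n)$ already gives the rate; away from the origin, a compactness argument that uses the lattice condition only in the form ``not every $|\varphi_r(t)|$ equals $1$'', together with $n_r=\Theta(n)$ for every $r$. Two small fixes are needed. First, the compactness step should be applied to the continuous function $\min_r|\varphi_r(t)|$, or, as the paper does, to $\prod_r|\varphi_r(t)|^{\alpha_r/2}$; it cannot be applied to $\max_r|\varphi_r(t)|$, which can equal $1$ at some $t\neq0$. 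Second, the cubic Edgeworth term does not vanish by oddness once the factor $e^{-it(z-A_n)}$ is present, but your crude bound makes that remark unnecessary.
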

The proof, including the Fourier-inversion estimate, is given in Appendix~\ref{app:1dlocalclt}.

\begin{corollary}[Probability near the center]
\label{cor:lclt-central-ratio}
    In the setting of Theorem~\ref{thm:1dlocalclt}, consider $z \in \ZZ$ such that $|z - n\sum_{r \in [R]} \alpha_r \mu_r| \leq O(n^{1/4})$.
    Then we have:
    \[
        \Pr[Z = z] = \frac{1}{\sqrt{2\pi n\sum_{r \in [R]} \alpha_r \sigma_r^2}} \left(1+O(1/\sqrt{n})\right).
    \]
\end{corollary}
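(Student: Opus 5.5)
This is a direct specialization of Theorem~\ref{thm:1dlocalclt}. First I identify the parameters. Write $\mu := \sum_{r} n_r\mu_r$ and $\sigma^2 := \sum_r n_r\sigma_r^2$ for the exact mean and variance of $Z$. Since $n_r = \alpha_r n + O(1)$, we get $\mu = n\sum_r\alpha_r\mu_r + O(1)$ and $\sigma^2 = n\bar\sigma^2 + O(1)$, where $\bar\sigma^2 := \sum_r \alpha_r\sigma_r^2$. The first hypothesis of the theorem (some $\Var[\calD_r]>0$, together with $\alpha_r>0$) gives $\bar\sigma^2>0$, a constant. Hence $\sigma^2 = \Theta(n)$. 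The third-moment hypothesis ensures every $\mu_r$ and $\sigma_r$ is finite, so all of these quantities are constants.

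Next I apply Theorem~\ref{thm:1dlocalclt} at the given $z$:
\[
\Pr[Z=z] = \frac{1}{\sqrt{2\pi\sigma^2}}\exp\!\left(-\frac{(z-\mu)^2}{2\sigma^2}\right) + O(1/n).
\]
I then estimate each factor.

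\textbf{The exponential.} We have $|z-\mu| \le |z - n\sum_r\alpha_r\mu_r| + O(1) = O(n^{1/4})$. So $(z-\mu)^2/(2\sigma^2) = O(n^{1/2}/n) = O(n^{-1/2})$, and therefore $\exp(\cdot) = 1+O(n^{-1/2})$.

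\textbf{The prefactor.} We have $\sigma^2 = n\bar\sigma^2(1+O(1/n))$, so
\[
\frac{1}{\sqrt{2\pi\sigma^2}} = \frac{1}{\sqrt{2\pi n\bar\sigma^2}}\,\bigl(1+O(1/n)\bigr).
\]

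Combining these, the main term equals $\frac{1}{\sqrt{2\pi n\bar\sigma^2}}(1+O(n^{-1/2}))$. The additive error $O(1/n)$ is, relative to the main term (which is of order $n^{-1/2}$), a multiplicative error of $O(n^{-1/2})$. Absorbing it gives the claim.

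There is no real obstacle here. The only points needing care are tracking the $O(1)$ discrepancies between $n_r$ and $\alpha_r n$ in both the mean and the variance, and checking that $\bar\sigma^2$ is bounded away from zero so that the relative error conversion is valid. The dominant error comes from the $n^{1/4}$ window, since $(n^{1/4})^2/n = n^{-1/2}$, and this matches the stated rate exactly.
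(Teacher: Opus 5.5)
Your proposal is correct and follows the paper's proof essentially step for step. You apply Theorem~\ref{thm:1dlocalclt}, use the $O(n^{1/4})$ window and $\sum_r n_r\sigma_r^2=\Theta(n)$ to make the Gaussian factor $1+O(n^{-1/2})$, expand the prefactor with its $O(1)$ corrections, and absorb the additive $O(1/n)$ into the relative error; your remark that $\sum_r\alpha_r\sigma_r^2>0$ is what licenses these relative-error conversions makes explicit a point the paper uses only implicitly.
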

\begin{proof}
    Note that we have:
    \begin{align*}
        |z - \sum_{r \in [R]} n_r \mu_r| &\leq \underbrace{|z - n\sum_{r \in [R]} \alpha_r \mu_r|}_{\leq O(n^{1/4})} + \sum_{r \in [R]} \underbrace{|n_r - n\alpha_r|}_{\leq O(1)} \cdot |\mu_r| \\
        &= O(n^{1/4}) \\
        \Rightarrow \frac{(z - \sum_{r \in [R]} n_r \mu_r)^2}{2\sum_{r \in [R]} n_r \sigma_r^2} &\leq \frac{O(\sqrt{n})}{\sum_{r \in [R]} n_r \sigma_r^2} \\
        &= O(1/\sqrt{n}) \\
        \Rightarrow \exp\left(-\frac{(z - \sum_{r \in [R]} n_r \mu_r)^2}{2\sum_{r \in [R]} n_r \sigma_r^2}\right) &= 1 + O(1/\sqrt{n}).
    \end{align*}
    Secondly, we have:
    \begin{align*}
        2\pi \sum_{r \in [R]} n_r \sigma_r^2 &= 2\pi n \sum_{r \in [R]} \alpha_r \sigma_r^2 + O(1) \\
        \Rightarrow \sqrt{2\pi \sum_{r \in [R]} n_r \sigma_r^2} &= \sqrt{2\pi n \sum_{r \in [R]} \alpha_r \sigma_r^2 + O(1)} \\
        &= \sqrt{2\pi n \sum_{r \in [R]} \alpha_r \sigma_r^2} \cdot \sqrt{1+O(1/n)} \\
        &= \sqrt{2\pi n \sum_{r \in [R]} \alpha_r \sigma_r^2} (1+O(1/n)).
    \end{align*}
    It follows that:
    \begin{align*}
        \frac{1}{\sqrt{2\pi \sum_{r \in [R]} n_r \sigma_r^2}} \exp\left(-\frac{(z-\sum_{r \in [R]} n_r \mu_r)^2}{2\sum_{r \in [R]} n_r \sigma_r^2}\right) &= \frac{1}{\sqrt{2\pi n \sum_{r \in [R]} \alpha_r \sigma_r^2}} (1+O(1/\sqrt{n})) \\
        \Rightarrow \Pr[Z = z] &= \frac{1}{\sqrt{2\pi n \sum_{r \in [R]} \alpha_r \sigma_r^2}} (1+O(1/\sqrt{n})) + O(1/n) \\
        &= \frac{1}{\sqrt{2\pi n \sum_{r \in [R]} \alpha_r \sigma_r^2}} (1+O(1/\sqrt{n})),
    \end{align*}
    as desired.
\end{proof}

\clearpage
\part{Reed--Muller Codes with Quantum Decoding}

\section{Problem Definition}\label{sec:classical-families}

The formulations below specialize established code-search problems to a random
set of Reed--Muller evaluation points. The dual problem is syndrome decoding
with the requested Hamming weight equal to the entire code length, as in
Bricout, Chailloux, Debris-Alazard, and Lequesne~\cite[Problem~1]{bricout2019ternary}.
Equivalently, it is restricted syndrome decoding with allowed alphabet
$\FF_q^\times$; see Baldi et al.~\cite[Problem~1]{baldi2023restricted} and
Burle and Udovenko~\cite[Problem~1]{burle2026restricted}.
The primal problem is constrained-codeword search in a shortened Reed--Muller
code, in the terminology of Chailloux~\cite[Problems~3--4]{chailloux2025opixsoftdecoders}.
Jordan et al.~\cite[Definition~15.1]{jordan2024optimization} also formulate
multivariate Reed--Muller optimal polynomial intersection on the full grid.
Their lists have a common size, whereas here we prescribe zeros off a random
set and nonzeros on it. These earlier formulations do not supply the
random-mask search exponents studied below.

\begin{definition}[Primal formulation]\label{def:rmprimal}
    Let $q$ be a fixed field size and $\rho \in (0, q-1), c > 1$ be real constants.
    For a growing positive integer $v$, an instance of the problem $\rmcsp_{q, \rho, c}(v)$ is constructed as follows:
    \begin{itemize}
        \item Let $n = \cnt_q(v, \lfloor \rho v \rfloor)$.
        By Lemma~\ref{lem:count-asymptotics}, $n=q^{\Theta(v)}$.
        \item Let $m = \min(\lfloor n^c \rfloor, q^v)$.
        \item Sample a uniformly random subset $S \subseteq \FF_q^v$ of size $m$.
    \end{itemize}
    The algorithm's task is to find a polynomial $f \in \FF[X_1, \ldots, X_v]$ of total degree at most $v(q-1) - \lfloor \rho v \rfloor - 1$ such that for all $\xv \in S$ we have $f(\xv) \neq 0$, and for all $\xv \notin S$ we have $f(\xv) = 0$.
\end{definition}

\begin{definition}[Dual formulation]\label{def:rmdual}
    An instance of the problem $\rmsyn_{q, \rho, c}(v)$ is constructed as follows: compute $n, m$ and sample $S \subseteq \FF_q^v$ as in Definition~\ref{def:rmprimal}.
    Let $\matB \in \FF_q^{n \times m}$ have rows indexed by monomials in $\FF[X_1, \ldots, X_v]$ of individual degree $\leq q-1$ and total degree $\leq \lfloor \rho v \rfloor$, and columns indexed by the points of $S$.
    Each entry of $\matB$ is the row's monomial evaluated at the column's point.
    Then the algorithm's task is to find a vector $\yv \in \FF_q^m$ such that none of its entries are 0, and $\matB \yv = 0$.
    (In the generic lemmas of the subsequent sections, one should think of the code $C$ as referring to $\ker \matB$ and $C^\perp$ is therefore $\im \matB^\top$.)
\end{definition}

\begin{remark}\label{remark:rmtrivial}
    Note that when $m = q^v$ both of the above problems are completely trivial; the all 1's vector works for both tasks.
\end{remark}

The following equivalence is the standard consequence of Reed--Muller duality
and puncturing--shortening duality, recalled in Lemmas~\ref{lem:rmdual}
and~\ref{lem:puncshort}.

\begin{lemma}\label{lem:rmequivalence}
    The problems $\rmcsp_{q, \rho, c}(v)$ and $\rmsyn_{q, \rho, c}(v)$ are equivalent in difficulty (i.e., there exist $\poly(m)$ time reductions in both directions).
\end{lemma}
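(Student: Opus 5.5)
The plan is to prove both directions through one identification. Write $N=q^v$ and index $\FF_q^N$ by the points of $\FF_q^v$. Set $r=\lfloor\rho v\rfloor$. The solutions of $\rmcsp$ correspond exactly to the words of $\RM_{\FF_q,v,r}^\perp$ that are nonzero on $S$ and vanish off $S$; this uses Lemma~\ref{lem:rmdual}, which gives $\RM_{\FF_q,v,r}^\perp=\RM_{\FF_q,v,v(q-1)-r-1}$. The solutions of $\rmsyn$ are the full-support vectors of $\ker\matB$. The rows of $\matB$ are evaluations on $S$ of the reduced monomials of degree at most $r$. Since every function $\FF_q^v\to\FF_q$ has a unique reduced polynomial representative, these rows span $\RM_{\FF_q,v,r}|_S$. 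Hence $\ker\matB=(\RM_{\FF_q,v,r}|_S)^\perp$. Applying Lemma~\ref{lem:puncshort} with $C=\RM_{\FF_q,v,r}^\perp$ gives
\[
(\RM_{\FF_q,v,r}|_S)^\perp=\RM_{\FF_q,v,r}^\perp[\FF_q^v\setminus S],
\]
with the shortened code viewed on the coordinates $S$. So $\ker\matB$ is exactly the set of restrictions to $S$ of dual Reed--Muller words vanishing off $S$. Zero-extension and restriction to $S$ are mutually inverse bijections between the two solution sets, and they preserve ``nonzero at every point of $S$''.

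The reductions are then explicit. From $\rmsyn$ to $\rmcsp$: given $S$, build $\matB$ in time $\poly(m)$, since $n\leq m$ and each entry is a monomial evaluation over constant $q$. Call the $\rmcsp$ solver to get $f$. Output $\yv=(f(\xv))_{\xv\in S}$, which lies in $\ker\matB$ by the identification above. From $\rmcsp$ to $\rmsyn$: call the $\rmsyn$ solver on the $\matB$ built from $S$ to get $\yv$. Zero-extension gives a word $\wv\in\RM_{\FF_q,v,r}^\perp$ of length $N$. We need to output a polynomial $f$ of degree at most $v(q-1)-r-1$ whose evaluation vector is $\wv$.

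The main obstacle is doing this interpolation in time $\poly(m)$ rather than $\poly(q^v)$. This only matters when $m<q^v$, since Remark~\ref{remark:rmtrivial} covers $m=q^v$. I would use Lagrange interpolation over the support:
\[
f=\sum_{\av\in S}w_{\av}\prod_{j=1}^v\bigl(1-(X_j-a_j)^{q-1}\bigr).
\]
This is the unique reduced polynomial with evaluation vector $\wv$. Because $\wv$ lies in the dual Reed--Muller code, $f$ automatically has degree at most $v(q-1)-r-1$ by the uniqueness of reduced representatives. Each summand expands into up to $q^v$ monomials, so I would not output the dense expansion. Two options: output $f$ in this sparse sum-of-products form, or argue that $f$'s dense representation has at most $\cnt_q(v,v(q-1)-r-1)$ terms and compute its coefficients via the Möbius-type formula on the $\leq m$ support points. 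I would commit to the first option and state that the polynomial is represented as an arithmetic formula of size $O(mvq)$, which supports evaluation. That makes both reductions $\poly(m)$.
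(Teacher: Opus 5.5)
Your proposal is correct and follows the paper's proof. Both use zero-extension and restriction as the bijection, and both identify $\ker\matB=(\RM_{\FF_q,v,\lfloor\rho v\rfloor}|_S)^\perp$ with the dual Reed--Muller code shortened off $S$, via Lemmas~\ref{lem:rmdual} and~\ref{lem:puncshort}. However, the interpolation ``obstacle'' you worry about is not real: $m\geq n=q^{\Theta(v)}$, so $q^v=\poly(m)$, and the paper simply interpolates the full truth table densely in $\poly(q^v)=\poly(m)$ time. You therefore do not need the arithmetic-formula output, whose syntactic degree $v(q-1)$ also hides the required degree bound.
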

\begin{proof}
    For any $\yv \in \FF_q^m$, define $\yv' \in \FF_q^{q^v}$ to be the same as $\yv$ at entries in $S$ and 0 elsewhere.
    We will show that $\yv$ is a solution to $\rmsyn_{q, \rho, c}(v)$ if and only if $\yv'$ is the truth table of a solution to $\rmcsp_{q, \rho, c}(v)$.
    If we can show this, then we will be done (converting a truth table into a polynomial can be done in $\poly(q^v) = \poly(m)$ time via standard interpolation).
    
    First, note that $\yv$ is nonzero in all positions if and only if the support of $\yv'$ is exactly $S$.
    So it remains to show that $\matB \yv = 0 \Leftrightarrow \yv' \in \RM_{\FF_q, v, v(q-1)-\lfloor \rho v \rfloor - 1}$.
    Indeed, we have $\matB \yv = 0 \Leftrightarrow \yv \in \ker \matB \Leftrightarrow \yv \in (\im \matB^\top)^\perp \Leftrightarrow \yv \in (\RM_{\FF_q, v, \lfloor \rho v \rfloor, S})^\perp$.
    By Lemmas~\ref{lem:rmdual} and~\ref{lem:puncshort}, we have:
    $$(\RM_{\FF_q, v, \lfloor \rho v \rfloor, S})^\perp = \RM_{\FF_q, v, \lfloor \rho v \rfloor}^\perp [\FF_q^v \backslash S] = \RM_{\FF_q, v, v(q-1) - \lfloor \rho v \rfloor - 1} [\FF_q^v \backslash S],$$
    so now it is clear by definition that $\yv \in (\RM_{\FF_q, v, \lfloor \rho v \rfloor, S})^\perp \Leftrightarrow \yv' \in \RM_{\FF_q, v, v(q-1)-\lfloor \rho v \rfloor - 1}$.
\end{proof}

\begin{remark}[Arbitrary target syndromes]\label{rem:arbitrary-syndromes}
The generic algorithms below also find full-support solutions to $\matB\yv=\sv$ for arbitrary compatible targets $\sv\in\im\matB$.
For the quantum algorithm, first find $\zv$ with $\matB\zv=\sv$ by Gaussian elimination and apply Theorem~\ref{thm:qalgomain} with forbidden values $v_i=-z_i$.
Its output $\xv\in\ker\matB$ satisfies $x_i\neq-z_i$, so
\[
    \yv=\xv+\zv,\qquad
    \matB\yv=\sv,\qquad y_i\neq0\quad\text{for every }i.
\]
The classical algorithm in Lemma~\ref{lem:classical-cube} already accepts a target syndrome.
If $\matB$ has dependent rows, first check compatibility and retain an equivalent system with independent rows.
When $\matB$ has full row rank, every $\sv\in\F_q^n$ is compatible.

The good puncturing events in the nonsaturated RM proofs below are independent of the target, and their guarantees therefore hold for every target on the same good instance.
For a nonzero target, the vector obtained by zero extension outside $S$ belongs to a prescribed coset of the full-grid dual RM code.
The full-grid shortcut in Remark~\ref{remark:rmtrivial} solves the zero-syndrome problem; it does not establish the arbitrary-syndrome extension in that saturated case.
\end{remark}

\section{Quantum Algorithm via Two-Fold Multiplication}\label{sec:q2}

In the generic results of this and the next section, $m$ denotes the physical code length and $n=\dim C^\perp$.
For $C=\ker\matB$, this is $\rank\matB$; a matrix with dependent rows can first be reduced to a basis of its row space.
In the RM specializations we retain the problem definition's $n=\cnt_q(v,\lfloor\rho v\rfloor)$, the dimension before puncturing.
Initially $\rank\matB\leq n$; the simultaneous distance event used below makes puncturing injective and hence gives $\rank\matB=n$.

\begin{definition}\label{def:vandA}
    In the following, let $v_1, \ldots, v_m \in \F$ be arbitrary and let $A_i = \F \backslash \{v_i\}$.
    Let $A = A_1 \times A_2 \ldots \times A_m \subseteq \F_q^m$ and let $\ket{A}$ be the uniform superposition over $A$.
\end{definition}

\subsection{Single-Symbol Unambiguous Measurements}

The following measurement is the tight, equal-overlap case of the Gram-matrix
factorization of Dalai, Girardi, and Lami~\cite[Lemma~3]{dalai2026zeroerror}.
Here the diagonal entries are $1$ and every off-diagonal entry has modulus
$1/(q-1)$, so their construction has only pair outcomes, with a uniform
choice of partner. We include a direct proof using an isometry on the span
of the input states. This applies even though their Gram matrix is singular;
no inverse of that matrix is needed.

\begin{lemma}\label{lem:usd}
    For any $i \in [m]$, there is a quantum measurement that has the following input-output behavior: when its input has the form $X^c \QFT^\dag \ket{A_i}$ for some $c \in \F$, it outputs the unordered pair $\{c, c'\}$ for uniformly random $c' \gets \F \backslash \{c\}$.
\end{lemma}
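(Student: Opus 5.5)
We first compute the input states explicitly. Since $\ket{A_i}=(q-1)^{-1/2}\sum_{a\neq v_i}\ket{a}$ and $\QFT^\dag\ket{a}=\QFT\ket{-a}$, we write $\ket{A_i}=(q-1)^{-1/2}\bigl(\sqrt q\,\QFT\ket{0}-\ket{v_i}\bigr)$ up to the sign conventions of \eqref{eq:fourier-convention}. Applying $\QFT^\dag$ then shows that $\QFT^\dag\ket{A_i}$ is a character-weighted version of the state $\ket{\phi_0}$ from \eqref{eq:qdpoverview}. Concretely, it equals $\sqrt{(q-1)/q}\,\ket{0}-\frac{1}{\sqrt{q(q-1)}}\sum_{a\neq 0}\chi(a)\ket{a}$ for the unimodular character $\chi(a)=\omega_p^{\Tr(\pm a v_i)}$. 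Hence $\ket{\phi_c}:=X^c\QFT^\dag\ket{A_i}$ has amplitude $\sqrt{(q-1)/q}\,\chi(c)^{-1}$-type phase on $\ket{c}$ and modulus $1/\sqrt{q(q-1)}$ elsewhere. Because $\QFT^\dag$ and $X^c$ are unitary, and translation commutes with computing inner products of shifted copies, we get the Gram matrix
\[
\braket{\phi_c|\phi_{c'}} = \braket{A_i|\QFT X^{c'-c}\QFT^\dag|A_i} = \frac{1}{q-1}\sum_{a\neq v_i}\omega_p^{\Tr(\pm a(c'-c))},
\]
which is $1$ if $c=c'$ and $-\frac{1}{q-1}\omega_p^{\Tr(\mp v_i(c'-c))}$ otherwise. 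Thus the Gram matrix is $G_{c,c'}=\overline{\theta_c}\theta_{c'}\,g_{c,c'}$, with $g$ the real matrix from the overview ($1$ on the diagonal, $-1/(q-1)$ off it) and $\theta_c$ unimodular phases.

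Next, we define target states on two $q$-dimensional registers,
\[
\ket{\Phi_c}=\frac{\theta_c}{\sqrt{2(q-1)}}\sum_{a\neq c}\bigl(\ket{c,a}-\ket{a,c}\bigr).
\]
We check that their Gram matrix equals $G$. The squared norm is $2(q-1)/(2(q-1))=1$. For $c\neq c'$, the only overlapping basis terms are $\ket{c,c'}$ and $\ket{c',c}$. These contribute $\frac{1}{2(q-1)}\bigl(\braket{c,c'|}(-\ket{c,c'})+\braket{c',c|}(-\ket{c',c})\bigr)$ times the phases, which gives $-\frac{1}{q-1}\overline{\theta_c}\theta_{c'}$. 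Since the two families $\{\ket{\phi_c}\}$ and $\{\ket{\Phi_c}\}$ have identical Gram matrices, the linear map sending $\sum_c\beta_c\ket{\phi_c}\mapsto\sum_c\beta_c\ket{\Phi_c}$ is well defined on $\spann\{\ket{\phi_c}\}$. Indeed, any linear relation among the $\ket{\phi_c}$ gives zero norm on the image, and the map preserves inner products. It is therefore an isometry on that span, even though $G$ is singular, since $\sum_c\ket{\phi_c}$ may vanish. We extend it to an isometry (e.g.\ a unitary into a larger space with an ancilla) on all of $\F_q$.

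Finally, the measurement applies this isometry, measures both output registers in the computational basis, and outputs the unordered set of the two outcomes. On input $\ket{\phi_c}$, the output is exactly $\ket{\Phi_c}$. Its support consists of the $2(q-1)$ ordered pairs $(c,a),(a,c)$ with $a\neq c$, each of probability $1/(2(q-1))$. So the unordered pair is $\{c,a\}$ with $a$ uniform on $\F\setminus\{c\}$, as claimed. The main obstacle is the first step: tracking the phases from the Fourier convention and the forbidden value $v_i$ correctly. The plan handles this by absorbing them into the unimodular $\theta_c$, which do not affect measurement statistics. If a sign mismatch arose, one would instead conjugate by the phase unitary $\ket{a}\mapsto\theta_a\ket{a}$ after the isometry, which commutes with the computational-basis measurement.
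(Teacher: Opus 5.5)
Your proposal is correct and follows essentially the same route as the paper. You compute the off-diagonal Gram entries $\braket{\phi_{c'}|\phi_c}=-\frac{1}{q-1}\omega_p^{\Tr((c-c')v_i)}$, match them with phase-twisted antisymmetric two-register states $\frac{\theta_c}{\sqrt{2(q-1)}}\sum_{a\neq c}(\ket{c,a}-\ket{a,c})$, and use the isometry on the span, which is well defined even though the Gram matrix is singular. The one slip is harmless: in your off-diagonal entry the sign in the exponent should agree with the sign in the character sum rather than be flipped, but your unimodular $\theta_c$ absorb it anyway. The paper also notes that the isometry can be synthesized in $\poly(q,\log(1/\delta))$ time; the later algorithm uses this, but the lemma as stated does not require it.
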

\begin{proof}
    Let $\ket{\phi_c} = X^c \QFT^\dag \ket{A_i}$.
    Then for any distinct $c, c'$, we have:
    \begin{align*}
        \braket{\phi_{c'} | \phi_c} &= \braket{A_i|\QFT X^{c-c'} \QFT^\dag | A_i} \\
        &= \frac{1}{|A_i|} \sum_{y, y' \in A_i} \braket{y' | \QFT X^{c-c'} \QFT^\dag | y} \\
        &= \frac{1}{|A_i|} \sum_{\substack{y, y' \in A_i \\ x, x' \in \F}} \braket{y' | \QFT | x'} \braket{x' | X^{c-c'} | x} \braket{x | \QFT^\dag | y} \\
        &= \frac{1}{q |A_i|} \sum_{\substack{y, y' \in A_i \\ x, x' \in \F}} \omega_p^{\Tr(x'y'-xy)} \one[x+c-c'=x'] \\
        &= \frac{1}{q |A_i|} \sum_{\substack{y, y' \in A_i \\ x \in \F}} \omega_p^{\Tr((x+c-c')y'-xy)} \\
        &= \frac{1}{q |A_i|} \sum_{\substack{y, y' \in A_i \\ x \in \F}} \omega_p^{\Tr(x(y'-y) + (c-c')y')} \\
        &= \frac{1}{|A_i|} \sum_{\substack{y \in A_i}} \omega_p^{\Tr((c-c')y)} \\
        &= \frac{1}{q-1} \sum_{\substack{y \in \F \backslash \{v_i\}}} \omega_p^{\Tr((c-c')y)} \\
        &= -\frac{1}{q-1} \omega_p^{\Tr((c-c')v_i)}.
    \end{align*}
    Now, for any $c \in \F$, define the state:
    \begin{align*}
        \ket{\psi_c} &:= \frac{1}{\sqrt{2(q-1)}} \omega_p^{\Tr(cv_i)} \sum_{c' \in \F \backslash \{c\}} \left(\ket{c, c'} - \ket{c', c}\right).
    \end{align*}
    Then for any $c' \neq c$, we have:
    \[
        \braket{\psi_{c'} | \psi_c} = -\frac{1}{q-1} \omega_p^{\Tr((c-c')v_i)} = \braket{\phi_{c'} | \phi_c}.
    \]
    Hence the ensembles of states $\{\ket{\phi_c}: c \in \F\}$ and $\{\ket{\psi_c}: c \in \F\}$ are isometric, so there exists an isometry mapping $\ket{\phi_c} \mapsto \ket{\psi_c}$ for all $c$.
    Measuring $\ket{\psi_c}$ therefore gives the claimed uniformly random pair.

    The isometry and a unitary extension can be computed by linear algebra in dimension $\poly(q)$.
    Standard unitary synthesis then implements the extension and its inverse to operator-norm error $\delta$ in $\poly(q,\log(1/\delta))$ time~\cite[Section~4.5]{nielsen2010quantum}; see also~\cite[Propositions~14--15]{clz21}.
\end{proof}

\subsection{Relinearization Exploiting the Multiplication Property}

The algorithm below uses Arora--Ge relinearization~\cite{arorage}: encode the
allowed values by a polynomial equation, then treat its monomials as independent
linear unknowns. Earlier uses of this lifting idea for excluded-value
constraints appear in Friedl, Ivanyos, Magniez, Santha, and
Sen~\cite[Theorem~3.5]{friedl2014hidden} and
Ivanyos~\cite[Theorem~1]{ivanyos2008disequations}.
Its combination with quantum filtering and learning from a subset of field
values appears in Ivanyos, Prakash, and Santha~\cite[Proposition~5.5 and
Theorem~5.7]{ivanyos2018subset} and Chen, Liu, and Zhandry~\cite[Section~5.1]{clz21}.
Here the multiplication property restricts the quadratic unknown to the
Schur square of $C^\perp$. The additional analysis below bounds the ambiguity
of this relaxation using shortened Schur squares of the given code.

\begin{lemma}\label{lem:arorage}
    Let the code $C \subseteq \F_q^m$ be public.
    There exists an efficient classical algorithm for the following task:
    \begin{itemize}
        \item There is a hidden $\cv \in C^\perp$ (possibly worst-case).
        \item The input to the algorithm consists of the unordered pairs $(c_1, c'_1), \ldots, (c_m, c'_m)$, where $c'_i \gets \F \backslash \{c_i\}$ is sampled uniformly at random.
        \item The algorithm's goal is to output $\cv$.
    \end{itemize}
    The algorithm succeeds with probability (over the randomness of the input pairs) at least:
    \begin{align*}
        1 - \sum_{\cv \in C^\perp \backslash \{0\}} \frac{q^{\dim (V_{C^\perp*C^\perp}[Z(\cv)])}}{(q-1)^{\wt(\cv)}} \geq 1 - \frac{q^{\dim(C^\perp) + \dim(V_{C^\perp*C^\perp})}}{(q-1)^{\dist(C^\perp)}}.
    \end{align*}
\end{lemma}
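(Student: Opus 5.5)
The algorithm is the relinearized linear system from the technical overview. Label each received pair as $\{a_i,a_i'\}$ in an arbitrary order, and fix bases of $C^\perp$ and $W:=V_{C^\perp*C^\perp}$, computed from pairwise products of a basis of $C^\perp$. Then solve, by Gaussian elimination, the system
\[
    \wv-(\av+\av')\odot\vv+\av\odot\av'=\mathbf0,\qquad \vv\in C^\perp,\ \wv\in W .
\]
The pair $(\vv,\wv)=(\cv,\cv\odot\cv)$ is always a solution, because $(c_i-a_i)(c_i-a_i')=0$ for every $i$. The algorithm outputs $\vv$ whenever the $\vv$-component of the solution set is the single point $\cv$, and declares failure otherwise.

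For the analysis, take any other solution $(\vv,\wv)$ and subtract the true one. Writing $\dv=\vv-\cv\in C^\perp$ and $\uv=\wv-\cv\odot\cv\in W$, we get
\[
    \uv=(\av+\av')\odot\dv \quad\text{coordinatewise}.
\]
Failure means some $\dv\neq0$ admits such a $\uv\in W$. Outside $\supp(\dv)$ this forces $u_i=0$, so $\uv\in W[Z(\dv)]$. On $\supp(\dv)$ we have $a_i+a_i'=u_i/d_i$, so $c_i'=u_i/d_i-c_i$ is determined by $u_i$.

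The next step is a union bound over $\dv\in C^\perp\setminus\{0\}$ and $\uv\in W[Z(\dv)]$. Fix $\dv$ and $\uv$. Each $c_i'$ is uniform on $\F\setminus\{c_i\}$ and independent across $i$. So the event that $c_i'=u_i/d_i-c_i$ for all $i\in\supp(\dv)$ has probability at most $(q-1)^{-\wt(\dv)}$: the probability is $0$ if some required value equals $c_i$, and exactly $(q-1)^{-\wt(\dv)}$ otherwise. The number of choices of $\uv$ is $q^{\dim W[Z(\dv)]}$, and summing over $\dv$ gives the first bound.

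The second inequality is crude: $\dim W[Z(\dv)]\le\dim W$, $\wt(\dv)\ge\dist(C^\perp)$, and there are fewer than $q^{\dim C^\perp}$ nonzero $\dv$.

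The main subtlety is making sure the union bound is taken over pairs $(\dv,\uv)$ with $\dv\ne0$ and not over whole solutions. A nonzero solution difference with $\dv=0$ is not a problem for the output. It forces $\uv=0$, so the $\vv$-component stays unique. Note also that the hidden $\cv$ is arbitrary and fixed, and all randomness comes from the $c_i'$.

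Efficiency is routine: the system has $n+s_2$ unknowns and $m$ equations over $\F_q$. Uniqueness of the $\vv$-component is checked from the kernel of the homogeneous system. Computing $W$ takes $O(n^2)$ products followed by rank reduction.
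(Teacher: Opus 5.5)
Your proposal is correct and follows essentially the same argument as the paper: relinearize over $C^\perp\times V_{C^\perp*C^\perp}$, reduce non-uniqueness to a nonzero shift $\dv$ together with some $\uv\in V_{C^\perp*C^\perp}[Z(\dv)]$, and union bound using a per-coordinate matching probability of at most $1/(q-1)$. Your explicit remark that a solution difference with $\dv=0$ forces $\uv=0$ spells out a step the paper leaves implicit when it passes from non-uniqueness to the existence of a shift with $\Delta_{\vv}\neq 0$.
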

\begin{proof}
    Label the unordered pairs $(a_1, a'_1), \ldots, (a_m, a'_m)$.
    Then we know that $\cv \in C^\perp$ and also that:
    \begin{align*}
        (\cv - \av) \odot (\cv-\av') &= 0 \\
        \Leftrightarrow \cv \odot \cv - (\av + \av') \odot \cv + (\av \odot \av') &= 0.
    \end{align*}
    The algorithm will find (a succinct description of) all solutions to the linear relaxation:
    \begin{align*}
        \wv - (\av+\av') \odot \vv + (\av \odot \av') &= 0,\text{ where} \\
        \wv &\in V_{C^\perp * C^\perp},\text{ and }\\
        \vv &\in C^\perp.
    \end{align*}
    Clearly $\vv = \cv, \wv = \cv \odot \cv$ gives a solution to this problem.
    We will argue that this is a unique solution with high probability.
    If not, then there must exist shifts $\Delta_\wv \in V_{C^\perp * C^\perp}$ and $\Delta_{\vv} \in C^\perp \backslash \{0\}$ such that:
    \begin{equation}\label{eq:aroragenotunique}
        \Delta_{\wv} = (\av+\av') \odot \Delta_{\vv}.
    \end{equation}
    First, we note an immediate constraint implied by~\eqref{eq:aroragenotunique}: in any position where $\Delta_{\vv}$ is 0, $\Delta_{\wv}$ must also be 0 in that entry.
    In other words, we have $\Delta_{\wv} \in V_{C^\perp*C^\perp}[Z(\Delta_\vv)]$.
    Bearing this in mind, we will take a union bound over all choices of $\Delta_\vv \in C^\perp \backslash \{0\}$ and $\Delta_\wv \in V_{C^\perp*C^\perp}[Z(\Delta_\vv)]$ of the probability that $\av+\av'$ satisfies~\eqref{eq:aroragenotunique}.

    So first, let us fix such $\Delta_\vv$ and $\Delta_\wv$.
    Consider any index $i$ where the $i$th entry of $\Delta_\vv$ is nonzero.
    Then there is at most one value of $a_i+a'_i$ that would not contradict~\eqref{eq:aroragenotunique}.
    This sum is uniform over a set of size $q-1$, so the probability that index $i$ is satisfied is $\leq 1/(q-1)$.
    Multiplying over all such indices $i$ gives a probability of $\leq (q-1)^{-\wt(\Delta_\vv)}$.
    Now taking the aforementioned union bound gives:
    \begin{align*}
        \Pr[\text{algorithm fails}] &\leq \sum_{\Delta_\vv \in C^\perp \backslash \{0\}} \sum_{\Delta_\wv \in V_{C^\perp*C^\perp}[Z(\Delta_\vv)]} (q-1)^{-\wt(\Delta_\vv)} \\
        &= \sum_{\Delta_\vv \in C^\perp \backslash \{0\}} q^{\dim (V_{C^\perp*C^\perp}[Z(\Delta_\vv)])} (q-1)^{-\wt(\Delta_\vv)},
    \end{align*}
    as desired.
\end{proof}

\subsection{Putting It Together}

\begin{theorem}\label{thm:qalgomain}
    Let $C\subseteq\F_q^m$ be given by a generator or parity-check matrix, let $v_1,\ldots,v_m\in\F_q$ be arbitrary, and define $A$ as in Definition~\ref{def:vandA}.
    Suppose first that $C^\perp\neq\{0\}$.
    For every $\epsilon\in(0,1/4)$ there is a quantum algorithm for $\CIP_{C,A}$ running in $\poly(m,q,\log(1/\epsilon))$ time with failure probability at most
    \begin{align*}
        &2|C^\perp|\sum_{\cv\in C^\perp\setminus\{0\}}
          \frac{q^{\dim(V_{C^\perp*C^\perp}[Z(\cv)])}}
               {(q-1)^{\wt(\cv)}}+\epsilon\\
        &\hspace{2cm}\leq
          \frac{2q^{2\dim C^\perp+\dim V_{C^\perp*C^\perp}}}
               {(q-1)^{\dist(C^\perp)}}+\epsilon.
    \end{align*}
    When $C^\perp=\{0\}$, the algorithm simply outputs any element of $A$ with no failure; the distance bound is needed only for nonzero $C^\perp$.
\end{theorem}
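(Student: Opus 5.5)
The plan is to invoke the worst-case reduction, Theorem~\ref{thm:worstcasecip}, with $\ket{\psi_A}=\ket{A}$. This state is a product of uniform superpositions over $A_i=\F_q\setminus\{v_i\}$, so it can be prepared coordinate by coordinate to any precision $\delta$ in $\poly(m,q,\log(1/\delta))$ time. The decoder for $\QDP_{C^\perp,\QFT^\dag\ket A}$ is the composition of Lemma~\ref{lem:usd} and Lemma~\ref{lem:arorage}. Since $\ket A=\bigotimes_i\ket{A_i}$, the input $X^{\cv}\QFT^\dag\ket A$ equals $\bigotimes_i X^{c_i}\QFT^\dag\ket{A_i}$. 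Applying the measurement of Lemma~\ref{lem:usd} separately to each register yields unordered pairs $\{c_i,c_i'\}$ with $c_i'$ uniform on $\F_q\setminus\{c_i\}$. These partners are independent across $i$ conditional on $\cv$, because the input is a product state and the measurements act on disjoint registers. This is exactly the input distribution that Lemma~\ref{lem:arorage} assumes for a (worst-case) hidden $\cv$.

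Next we feed the pairs to the relinearization algorithm of Lemma~\ref{lem:arorage}. We first compute a basis of $C^\perp$ from the given generator or parity-check matrix. We then compute a basis of $V_{C^\perp*C^\perp}$ from pairwise products of basis vectors, which takes $\poly(m,q)$ time by Gaussian elimination. The algorithm outputs $\cv$ whenever the relaxed linear system has a unique solution in the $\vv$-coordinate. Hence, for every fixed $\cv$, the failure probability is at most
\[
\eta:=\sum_{\cv\in C^\perp\setminus\{0\}}\frac{q^{\dim(V_{C^\perp*C^\perp}[Z(\cv)])}}{(q-1)^{\wt(\cv)}}.
\]
Averaging over the uniform hidden codeword gives $P_\dec\geq1-\eta$. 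The decoder must be made coherent by keeping measurement outcomes and coins in work registers, as in the computational conventions. The isometry of Lemma~\ref{lem:usd} is synthesized to precision $\delta$ in $\poly(q,\log(1/\delta))$ time per coordinate, and this compilation error is absorbed into the $\epsilon$ budget. Substituting into Theorem~\ref{thm:worstcasecip} gives failure at most $2|C^\perp|\eta+4\epsilon$. Rescaling $\epsilon$ by a constant yields the first stated bound.

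There is one technical point. Theorem~\ref{thm:worstcasecip} requires $P_\dec\geq1/2$. If $\eta>1/2$, then $2|C^\perp|\eta\geq1$ and the claimed bound is vacuous, so in that case we may output anything. Otherwise the hypothesis holds. The approximate-decoder error also needs care: I would treat it through the trace-distance allowance in the proof of Theorem~\ref{thm:worstcasecip}, rather than through $P_\dec$. This is the step I expect to need the most care, although it is routine.

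For the inequality, we bound each summand separately. For nonzero $\cv$ we have $\wt(\cv)\geq\dist(C^\perp)$ and $V_{C^\perp*C^\perp}[Z(\cv)]\subseteq V_{C^\perp*C^\perp}$. The sum has at most $|C^\perp|=q^{\dim C^\perp}$ terms, so
\[
2|C^\perp|\eta\leq\frac{2q^{2\dim C^\perp+\dim V_{C^\perp*C^\perp}}}{(q-1)^{\dist(C^\perp)}}.
\]
Finally, when $C^\perp=\{0\}$ we have $C=\F_q^m$. Every element of $A$, for example the vector with entries $v_i+1$, then lies in $C\cap A$, and outputting it never fails.
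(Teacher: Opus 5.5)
Your plan matches the paper's proof. You use the same decoder (Lemma~\ref{lem:usd} on each register, then the relinearization of Lemma~\ref{lem:arorage}), the same worst-case reduction (Theorem~\ref{thm:worstcasecip}), and the same termwise bound for the second inequality.

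\textbf{Compilation error.} The one real difference is how you pay for synthesizing the single-symbol isometries. The paper uses Theorem~\ref{thm:worstcasecip} as a black box. It treats the implemented circuit as the coherent decoder of the hypothesis, so the compilation error $\delta$ lowers $P_{\dec}$ and is multiplied by $2|C^\perp|$. That is why it takes $\delta=\epsilon/(16\,q^{\dim C^\perp})$, which costs $O(\dim C^\perp\log q+\log m+\log(1/\epsilon))$ bits of precision; this is still polynomial. Your route instead compares against the exact-isometry decoder and charges the compilation error additively through the trace-distance allowance and Lemma~\ref{lem:postselection-error}. That is sound and needs only $O(\log(m/\epsilon))$ precision. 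However, it requires reopening the proof of Theorem~\ref{thm:worstcasecip}, whose statement measures the success of the implemented decoder and charges $T_{\dec}(q,M)$ with no precision parameter.

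\textbf{The case $\eta>1/2$ needs tightening.} Your algorithm cannot test this condition, because $\eta$ is a sum over all of $C^\perp$. So ``output anything'' can only be a case split in the analysis; the same circuit has to run on every input. Your failure bound $2|C^\perp|\eta+4\epsilon$ shows you are using the expected-time version of Theorem~\ref{thm:worstcasecip}. That version restarts until the equality test passes, and its polynomial cost relies on $P_{\dec}\geq1/2$. When this fails, the cost can be unbounded. For example, if $V_{C^\perp*C^\perp}=\F_q^m$, then $\Delta_{\wv}=(\av+\av')\odot\Delta_{\vv}$ is always an admissible spurious shift. The relaxed system is then never uniquely solvable, the decoder never succeeds, and the restart loop never terminates. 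This breaks the claimed $\poly(m,q,\log(1/\epsilon))$ running time. The fix is the bounded-time version of Theorem~\ref{thm:worstcasecip}: cap the restarts at $\lceil4\ln(1/\epsilon)\rceil$, accept the failure term $5\epsilon$, and absorb it by rescaling $\epsilon$. This is exactly what the paper does, and it is why the paper notes that the algorithm never needs to compute the failure sum.
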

\begin{proof}
    We will instantiate Theorem~\ref{thm:worstcasecip} with the uniform superposition over $A$ as our state $\ket{\psi_A}$.
    The $\QDP$ algorithm first applies Lemma~\ref{lem:usd} in every coordinate and then provides the result as input to Lemma~\ref{lem:arorage}, whose output is the $\QDP$ algorithm's answer.

    \smallskip\noindent
    \emph{Standard implementation and precision details.}
    Handle $C^\perp=\{0\}$ as stated, and otherwise put $n=\dim C^\perp$.
    Conditional on the hidden codeword, the input is a product state, so the coordinate pairs have the independent distribution required by Lemma~\ref{lem:arorage}.
    Let
    \[
        E=\sum_{\cv\in C^\perp\setminus\{0\}}
          q^{\dim(V_{C^\perp*C^\perp}[Z(\cv)])}(q-1)^{-\wt(\cv)}
    \]
    be its failure upper bound.
    A basis of $C^\perp$ is obtained by Gaussian elimination.
    Taking all pairwise products of basis vectors and reducing their span constructs $V_{C^\perp*C^\perp}$ in polynomial time.
    The relinearized system is then solved by Gaussian elimination, returning its unique solution or a failure flag.
    Keeping the pair outcomes, random coins, and intermediate computations in work registers makes this a coherent decoder with a polynomial-size inverse.

    Implement the coordinate isometries to total operator-norm error at most
    $\delta=\epsilon/(16q^n)$.
    A telescoping bound for the $m$ coordinate operations gives this accuracy by allocating $\delta/m$ to each.
    The resulting decoder's success is at least $1-E-\delta$.
    The necessary precision satisfies
    \[
        \log(m/\delta)=O(n\log q+\log m+\log(1/\epsilon)),
    \]
    so its gate cost remains polynomial in $m,q,\log(1/\epsilon)$.

    The state $\ket A=\bigotimes_i\ket{A_i}$ is prepared by preparing each uniform superposition over $q-1$ allowed symbols.
    Preparing each factor to trace-distance error at most $\epsilon/(64m)$ gives total preparation error at most $\epsilon/64$.
    Its cost is $\poly(m,q,\log(1/\epsilon))$.

    Use the bounded-time implementation of Theorem~\ref{thm:worstcasecip} with precision parameter $\epsilon/16$.
    If $2q^nE+\epsilon<1$, then
    $1-E-\delta>1/2$, so its decoder-success hypothesis holds.
    The resulting failure bound is at most
    \[
        2q^n(E+\delta)+5\epsilon/16
        =2q^nE+7\epsilon/16
        \leq2q^nE+\epsilon.
    \]
    The same circuit, with the prescribed restart cap, is run on every input; when $2q^nE+\epsilon\geq1$ the claimed probability bound is vacuous.
    Thus the algorithm need not compute $E$.
    Finally, bounding the sum by the distance and the full Schur-square dimension gives the second inequality.
\end{proof}

\subsection{Specializing to Reed-Muller Codes}

Recall the counting exponent $\cntexp_q$ and the asymptotic estimate in
Lemma~\ref{lem:count-asymptotics}.

\begin{theorem}[Quantum algorithm for randomly punctured Reed--Muller codes]
\label{thm:rm-pair-dqi}
    Fix a prime power $q>2$ and real constants
    $\rho\in(0,q-1)$ and $c > 1$ such that:
    \begin{align}
        c &> \frac{
            \max\left\{
                \cntexp_q(2\rho),
                \cntexp_q(\rho)+\rho/(q-1)
            \right\}
        }{\cntexp_q(\rho)}. \label{eq:cond2}
    \end{align}
    Then there is
    a quantum algorithm running in $\poly(m)$ time for
    $\rmsyn_{q,\rho,c}(v)$ with the following guarantee:
    with probability $1-o(1)$ over the uniformly random
    $m$-element subset $S\subseteq\FF_q^v$, the algorithm
    outputs a solution with probability $1-o(1)$ over its
    internal measurements, as $v\to\infty$.

    Consequently, the same guarantee holds for
    $\rmcsp_{q,\rho,c}(v)$ by
    Lemma~\ref{lem:rmequivalence}.
\end{theorem}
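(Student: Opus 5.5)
The plan is to apply Theorem~\ref{thm:qalgomain} with $C=\ker\matB$ and all forbidden values $v_i=0$. Then $A=(\FF_q^\times)^m$, every element of $C\cap A$ solves $\rmsyn_{q,\rho,c}(v)$, and the conclusion transfers to $\rmcsp$ by Lemma~\ref{lem:rmequivalence}. First dispose of the saturated case $m=q^v$, which holds eventually when $c\,\cntexp_q(\rho)>1$: there the all-ones vector is a solution by Remark~\ref{remark:rmtrivial}. In the remaining case $C^\perp=\RM_{\FF_q,v,r,S}$ with $r=\lfloor\rho v\rfloor$, and $V_{C^\perp*C^\perp}\subseteq\RM_{\FF_q,v,2r,S}$, so $s_2\le \cnt_q(v,2r)=q^{v(\cntexp_q(2\rho)+o(1))}$ by Lemma~\ref{lem:count-asymptotics}. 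Since the running time of Theorem~\ref{thm:qalgomain} is $\poly(m,q,\log(1/\epsilon))$, it remains only to show that, with probability $1-o(1)$ over $S$, the failure expression there is $o(1)$. We then take $\epsilon=o(1)$, say $\epsilon=1/v$.

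\emph{Step 1 (good puncturing event).} Apply Lemma~\ref{lem:randompuncturedistance} to the full code $\RM_{\FF_q,v,r}$ with $\eta=1/2$. That code has $M=q^v$, distance $d\ge q^{v-r/(q-1)}$ by Lemma~\ref{lem:rmdistance}, and at most $q^n$ codewords. The exceptional probability is then at most $q^{n}\exp(-\Omega(m q^{-r/(q-1)}))$. The second term of~\eqref{eq:cond2}, $c\,\cntexp_q(\rho)>\cntexp_q(\rho)+\rho/(q-1)$, says exactly that $m q^{-\rho v/(q-1)}\ge n\cdot q^{\Omega(v)}$, so this probability is $o(1)$. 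On the good event, every nonzero $\cv$ (the puncturing of some degree-$\le r$ polynomial $f$) has $\wt(\cv)\ge \tfrac12\wt(f)\,m/q^v \ge \tfrac12 m q^{-r/(q-1)}$. In particular puncturing is injective, so $\dim C^\perp=n$.

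\emph{Step 2 (bounding the failure sum).} The crude bound $q^{2n+s_2}/(q-1)^{\dist}$ would require $m q^{-\rho v/(q-1)}\gg s_2$, which is stronger than~\eqref{eq:cond2}. So I would use the refined sum $2q^n\sum_{\cv}q^{\dim V[Z(\cv)]}(q-1)^{-\wt(\cv)}$ from Theorem~\ref{thm:qalgomain}, where $V=V_{C^\perp*C^\perp}$. The goal is to show that, on a (slightly strengthened) good event, $\dim V[Z(\cv)]\le \theta\,\wt(\cv)+O(n)$ for every nonzero $\cv$, with $\theta<\log_q(q-1)$ fixed. Given that, each term is at most $q^{O(n)}\beta^{\wt(\cv)}$ with $\beta<1$ fixed. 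Because every weight is at least $n\,q^{\Omega(v)}$ by Step 1, the whole sum, including the $q^{2n}$ prefactor, is $o(1)$. For the per-codeword bound, $V[Z(\cv)]$ embeds into the degree-$\le 2r$ functions $g$ that vanish on $S\cap\{f=0\}$. I would split such $g$ into two kinds. The first kind vanishes on all of $\{f=0\}\cap\FF_q^v$; these form a subspace whose restriction to $S$ lies in $\spann\{\cv\odot\uv:\uv\in\RM_{\FF_q,v,r',S}\}$ for suitable $r'$, and whose dimension I would bound using $\wt(\cv)$ and the punctured Schur-square count. For the second kind, the first term of~\eqref{eq:cond2}, namely $m\gg\cnt_q(v,2r)$, makes it unlikely that $S$ hits no nonzero of $g$ on $\{f=0\}$. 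Since $|\{f=0\}|\ge q^v-\wt(f)$ and $S$ is uniform, I would control this by a union bound over the at most $q^{\cnt_q(v,2r)}$ polynomials $g$ and the at most $q^n$ polynomials $f$, using a Chernoff bound as in Lemma~\ref{lem:randompuncturedistance}.

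\emph{Main obstacle.} The hard step is the per-codeword bound on $\dim V[Z(\cv)]$ in Step 2. The set $Z(\cv)$ is not random, because it depends on $f$, and the degree-$2r$ functions may vanish on large algebraic subsets of $\{f=0\}$. Getting the exponent to match exactly $\max\{\cntexp_q(2\rho),\cntexp_q(\rho)+\rho/(q-1)\}$ requires the first kind of $g$ to contribute $O(\wt(\cv))$ with a constant strictly below $\log_q(q-1)$; it must not contribute $s_2$. If the direct ideal-membership argument fails, my fallback is to split by weight. For $\wt(\cv)\ge C s_2$ the trivial bound $\dim V[Z(\cv)]\le s_2$ suffices. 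For lighter codewords I would use the joint random-sampling union bound over pairs $(f,g)$, which is where $m\gg s_2$ enters.
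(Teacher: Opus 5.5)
Your outer structure matches the paper's proof. The shared pieces are Theorem~\ref{thm:qalgomain} with $v_i=0$, the distance event from Lemma~\ref{lem:randompuncturedistance} with $\eta=1/2$, and the embedding of $V_{C^\perp*C^\perp}[Z(\gv|_S)]$ into the degree-$\le 2r$ full-grid functions vanishing on $S\cap\{\gv=0\}$. Your target bound $\dim V[Z(\cv)]\le\theta\,\wt(\cv)$ with $\theta<\log_q(q-1)$ is also the paper's; it gets $\theta=1/4$. The gap is the step you flag yourself: you never prove this per-codeword bound, and the tool you propose for it fails.

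Write $N=q^v$, $\Delta=q^{-r/(q-1)}$, $L=\cnt_q(v,2r)$. A union bound over the $q^{L}$ polynomials $g$ with a per-$g$ hitting estimate needs every $g$ not vanishing on all of $\{f=0\}$ to satisfy $|\supp(g)\cap\{f=0\}|\gtrsim LN/m$. This is false. Take $f=\prod_{j\le a}(1-X_j^{q-1})$ with $a=\lfloor r/(q-1)\rfloor$, and $g=X_1^{q-1}\prod_{2\le j\le 2a}(1-X_j^{q-1})$. Then $g$ has degree $\le 2r$, is supported inside $\{f=0\}$, and has only $\Theta(N\Delta^2)$ support points. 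So $g$ escapes $S$ with probability about $e^{-\Theta(m\Delta^2)}$. Both terms of the maximum in~\eqref{eq:cond2} lie strictly below $\cntexp_q(2\rho)+2\rho/(q-1)$, so near the threshold $m\Delta^2\ll L$. The escape probability is then nowhere near the $q^{-L}$ you need.

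Union bounding over whole surviving subspaces of dimension $k$ is worse: it costs $q^{k(L-k)}$ and needs roughly $m\gg L^2$. Your weight-splitting fallback hits the same obstruction. Condition~\eqref{eq:cond2} does not give $\Delta m\gg s_2$, so codewords lighter than $s_2$ do occur. Your first-kind bound via $\spann\{\cv\odot\uv\}$ is also unsupported. The evident factorization of such a $g$ is $f\cdot(gf^{q-2})$, whose cofactor can have degree up to $qr$, so that span gives no useful dimension bound.

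The missing idea is Lemma~\ref{lem:translationsymmetry}. Every subspace $T\subseteq\RM_{\FF_q,v,2r}$ satisfies $\dim T\le (L/N)\,|\supp(T)|$, proved by averaging an information set over all translates. This alone bounds your first kind by $L\,\wt(f)/N$. The paper combines it with sequential exposure of $S$ in Lemma~\ref{lem:randproc}:
\begin{itemize}
\item Let $U_i$ be the full-grid degree-$\le 2r$ words vanishing at the exposed zeros of $\gv$, with $h_i=\dim U_i$ and $\delta=\wt(\gv)/N$.
\item The next sampled point lands in $\supp(U_i)\setminus\supp(\gv)$, and so lowers $h_i$, with probability at least $h_i/L-\delta$.
\item Set $k=\lceil\delta m/8\rceil$; then $k\ge2\delta L$ once $m\ge16L$. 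The quantity $\binom{h_i}{k}$ contracts in expectation by $1-k/(2L)$ per step.
\item This gives $\Pr[h_m\ge k]\le\binom{L}{k}(1-k/(2L))^m\le\exp(-\delta m^2/(32L))$ once $m\ge4L\ln L$.
\end{itemize}
For $m\ge4L$ this tail beats the distance tail $\exp(-\Delta m/8)$. So a single union bound over the $q^n$ words $\gv$ uses only $m\gg L\log L$ and $m\Delta\gg n$, exactly the two terms of~\eqref{eq:cond2}. The failure sum is then at most $2q^{2n}q^{-\Delta m/8}=o(1)$.
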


Before we begin, note by Remark~\ref{remark:rmtrivial} that it suffices for us to address the case where $m = \lfloor n^c \rfloor < q^v$.
The idea of the proof is to apply Theorem~\ref{thm:qalgomain} with $v_1=\cdots=v_m=0$, $C=\ker\matB$, and $C^\perp =\im\matB^\top =\RM_{\FF_q,v,\lfloor\rho v\rfloor,S}.$
Write $W:=V_{C^\perp*C^\perp},$ then we have $W \subseteq \RM_{\FF_q, v, 2 \lfloor \rho v \rfloor, S}$ since multiplying two polynomials of total degree $\leq r$ yields a polynomial of total degree $\leq 2r$.
For brevity, we denote:
\[
    r:=\lfloor\rho v\rfloor,\qquad
    N:=q^v,\qquad
    L:=\cnt_q(v,2r) = \dim \RM_{\FF_q, v, 2r},\qquad \Delta:=q^{-r/(q-1)}.
\]
By Lemma~\ref{lem:rmdistance}, any nonzero codeword in $\RM_{\FF_q, v, r}$ has Hamming weight at least $N\Delta$.
The key point is to control the square code after imposing zeros at the sampled zeros of a fixed low-degree word.
If the remaining space still has large dimension, translation symmetry forces it to be supported on many full-grid points.
A fresh sampled zero of the fixed word then has a substantial probability of imposing a new independent constraint.
Iterating this dimension decrease gives a tail strong enough to union-bound over all low-degree words.
We now make this argument precise.
\begin{lemma}\label{lm:rm-pair-growth}
    Equation~\eqref{eq:cond2} implies that:
    \begin{equation}\label{eq:rm-pair-growth}
        \frac{m}{L(1+\ln L)} = \omega(1),
        \qquad
        \frac{m\Delta}{n} = \omega(1),
    \end{equation}
    as $v$ grows.
\end{lemma}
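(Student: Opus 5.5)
The plan is to take logarithms base $q$ and compare exponents, using Lemma~\ref{lem:count-asymptotics} for both $n$ and $L$. We are in the nontrivial case $m=\lfloor n^c\rfloor<q^v$, so $m=n^c(1-o(1))$. Since $r=\lfloor\rho v\rfloor=\rho v+O(1)$ and $2r=2\rho v+O(1)$, Lemma~\ref{lem:count-asymptotics} gives
\[
    \log_q n = v\,\cntexp_q(\rho)+O(\log v),\qquad
    \log_q L = v\,\cntexp_q(2\rho)+O(\log v).
\]
Also, $\log_q\Delta=-r/(q-1)=-\rho v/(q-1)+O(1)$. Hence $\log_q m = c\,v\,\cntexp_q(\rho)+O(\log v)$, using $\log_q(n^c-1)=c\log_q n+o(1)$ because $n\to\infty$.

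For the first claim we have $\ln L=O(v)$, so $\log_q(1+\ln L)=O(\log v)$. This gives
\[
    \log_q\frac{m}{L(1+\ln L)} = v\bigl(c\,\cntexp_q(\rho)-\cntexp_q(2\rho)\bigr)-O(\log v).
\]
Equation~\eqref{eq:cond2} gives $c\,\cntexp_q(\rho)>\cntexp_q(2\rho)$ strictly. Here I use $\cntexp_q(\rho)>0$, which holds because $\rho>0$: if $\rho<(q-1)/2$ it is in $(0,1)$, and otherwise it equals $1$. Both $c$ and $\rho$ are fixed, so the gap is a positive constant $\gamma$. The exponent is therefore $\gamma v-O(\log v)\to\infty$, and the ratio is $q^{\Omega(v)}=\omega(1)$.

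For the second claim the same bookkeeping gives
\[
    \log_q\frac{m\Delta}{n}=v\bigl((c-1)\cntexp_q(\rho)-\rho/(q-1)\bigr)-O(\log v).
\]
Equation~\eqref{eq:cond2} gives $c\,\cntexp_q(\rho)>\cntexp_q(\rho)+\rho/(q-1)$, so the bracket is again a positive constant and the expression diverges.

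There is no substantive obstacle. The only care needed is with the error terms. The $O(\log v)$ corrections from Lemma~\ref{lem:count-asymptotics} and the floors in $r$, $2r$ and $m$ must be absorbed by a strict linear-in-$v$ gap, and that gap comes from the strict inequality in~\eqref{eq:cond2} with fixed constants.

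Two small points need checking. First, the lemma is applied at $2\rho$, which may be $\geq(q-1)/2$. The lemma covers any real $\rho\geq0$, so this is fine, with $L\leq q^v$ in that case. Second, the floor in $m$ matters only when $n^c<q^v$, which is exactly the case assumed.
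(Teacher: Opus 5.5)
Your proposal is correct and follows essentially the same route as the paper. Both arguments use Lemma~\ref{lem:count-asymptotics} to write $n$, $L$, $m$ and $\Delta$ as powers of $N=q^v$, and then let the strict inequalities in~\eqref{eq:cond2} supply a linear-in-$v$ exponent gap that absorbs the $O(\log v)$ corrections and the $1+\ln L=O(v)$ factor.
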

\begin{proof}
    By Lemma~\ref{lem:count-asymptotics}, we have:
    \[
        n=N^{\cntexp_q(\rho)+o(1)},
        \qquad
        L=N^{\cntexp_q(2\rho)+o(1)}.
    \]
    Since we are assuming that $m=\lfloor n^c \rfloor < q^v$, we can write $\gamma = c\cntexp_q(\rho) \leq 1$ and thus obtain $m = N^{\gamma + o(1)}$.
    Equation~\eqref{eq:cond2} directly tells us that:
    \[
        \gamma>
        \max\left\{
            \cntexp_q(2\rho),
            \cntexp_q(\rho)+\frac{\rho}{q-1}
        \right\}.
    \]
    Consequently, we have:
    \begin{align*}
        \frac{m}{L} &= N^{\gamma-\cntexp_q(2\rho)+o(1)} = N^{\Omega(1)} = \omega(\log N) = \omega(1 + \ln L),\text{ and}\\
        \frac{m\Delta}{n} &= N^{\gamma - \cntexp_q(\rho) - \rho/(q-1) + o(1)} = N^{\Omega(1)} = \omega(1),
    \end{align*}
    as desired.
\end{proof}

\begin{lemma}\label{lem:translationsymmetry}
    For any subspace $T \subseteq \RM_{\FF_q, v, 2r} \subseteq \FF_q^{q^v}$, we have:
    $$\dim T \leq \frac{L}{N}\left|\bigcup_{\cv \in T} \supp(\cv)\right|.$$
    For convenience, we denote $\supp(T) = \bigcup_{\cv \in T} \supp(\cv)$ going forward.
\end{lemma}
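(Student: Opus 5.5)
We use the translation invariance of $\RM_{\FF_q, v, 2r}$ together with a double-counting argument. Write $\mathcal{R}=\RM_{\FF_q,v,2r}$, which has dimension $L$ and length $N$. Let $U=\supp(T)$. For each $\tv\in\FF_q^v$, translation $\xv\mapsto\xv+\tv$ preserves total degree, so it maps $\mathcal{R}$ to itself. Hence the translated subspace $T_\tv:=\{\cv(\cdot+\tv):\cv\in T\}\subseteq\mathcal{R}$ has dimension $\dim T$ and support $U-\tv$.

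The key step is a counting inequality over a fixed coordinate set. For any set $P\subseteq\FF_q^v$, consider the restriction map $\mathcal{R}\to\FF_q^{P}$. For a subspace $T'\subseteq\mathcal{R}$ with $\supp(T')\subseteq \FF_q^v\setminus P$, the kernel of this map contains $T'$. The restriction of $\mathcal{R}$ to $P$ has dimension at least $L-\dim(\ker)$, which gives nothing directly. So I would instead argue via coordinates of the whole space, as follows.

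Pick a basis of $\mathcal{R}$ in reduced echelon form with respect to some information set. Because $\dim\mathcal{R}=L$, there is a set $I$ of $L$ coordinates on which restriction $\mathcal{R}\to\FF_q^I$ is injective. Any subspace $T'\subseteq\mathcal{R}$ then injects into $\FF_q^{I\cap\supp(T')}$, since a vector of $T'$ vanishing on $I\cap\supp(T')$ vanishes on all of $I$, and is therefore zero. This gives $\dim T'\le |I\cap \supp(T')|$.

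Apply this to $T'=T_\tv$ for every $\tv$ and average over uniformly random $\tv$:
\[
\dim T\le \EE_\tv|I\cap(U-\tv)|=\sum_{\xv\in I}\Pr_\tv[\xv+\tv\in U]=|I|\frac{|U|}{N}=\frac{L}{N}|U|.
\]
This is exactly the claim.

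The only step needing care is the existence of an information set $I$ of size exactly $L=\dim\mathcal{R}$, which is standard: choose $L$ linearly independent columns of a generator matrix. We must also check that translation preserves $\mathcal{R}$, using that reduced total degree does not increase under affine substitution. For the latter, note that reducing via $X^q=X$ never increases total degree, so $\RM_{\FF_q,v,2r}$ as a space of functions is translation invariant. I expect no real obstacle beyond these routine checks.
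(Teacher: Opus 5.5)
Your proof is correct and is essentially the paper's argument: take an information set $I$ of size $L$ for $\RM_{\FF_q,v,2r}$, observe $\dim T\le|I\cap\supp(T)|$, and average over all translations. The only cosmetic difference is that you translate $T$ (using translation invariance of the code), while the paper translates $I$ and notes that each $I+\yv$ is again an information set; the two are equivalent.
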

\begin{proof}
    Let $I \subseteq \FF_q^v$ be an information set for $\RM_{\FF_q, v, 2r}$ i.e., a set of $L$ points at which no nonzero codeword in $\RM_{\FF_q, v, 2r}$ vanishes identically.
    Hence each codeword in $T$ has a distinct restriction on $I$, which forces
    $$|T| \leq q^{|I \cap \supp(T)|} \Leftrightarrow \dim T \leq |I \cap \supp(T)|.$$
    Now consider any vector $\yv \in \FF_q^v$. We claim that $I+\yv$ is also an information set for $\RM_{\FF_q, v, 2r}$; this is because if $f$ is a polynomial of degree $\leq 2r$ then $\xv \mapsto f(\xv-\yv)$ is also a polynomial of degree $\leq 2r$.
    Therefore, for every translation $\yv$ we have:
    $$\dim T \leq |(I+\yv) \cap \supp(T)|.$$
    Averaging over all translations $\yv \in \FF_q^v$ now implies the conclusion (each element of $\supp(T)$ is included in $I+\yv$ with probability exactly $|I|/q^v = L/N$).
\end{proof}

\begin{lemma}\label{lem:randproc}
    Under the growth conditions of Lemma~\ref{lm:rm-pair-growth}, the following holds for all sufficiently large $v$.
    Fix a nonzero $\gv\in\RM_{\FF_q,v,r}$ and put $\delta=\wt(\gv)/N\geq\Delta$.
    Then
    \begin{equation}\label{eq:rm-pair-shortening-tail}
        \Pr_S\left[\dim(W[Z(\gv|_S)])\geq\frac{\delta m}{8}\right]
        \leq\exp\left(-\frac{\delta m^2}{32L}\right).
    \end{equation}
\end{lemma}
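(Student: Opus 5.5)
The plan is to reveal the random set $S$ one point at a time and track a full-grid space that controls $W[Z(\gv|_S)]$.

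\emph{Reduction to full-grid spaces.} Let $\xv_1,\ldots,\xv_m$ be the points of $S$, drawn uniformly without replacement. Let $T_j\subseteq\RM_{\FF_q,v,2r}$ be the full-grid codewords vanishing on $\{\xv_1,\ldots,\xv_j\}\cap Z(\gv)$.

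Since $W\subseteq\RM_{\FF_q,v,2r,S}$, every element of $W[Z(\gv|_S)]$ is the restriction to $S$ of some element of $T_m$. Restriction is linear, so
\[
    \dim W[Z(\gv|_S)]\leq\dim T_m .
\]
It therefore suffices to show that $\Pr[\dim T_m\geq D]\leq\exp(-\delta m^2/(32L))$ with $D:=\delta m/8$. Note that $\dim T_0=L$ and that the dimension is nonincreasing.

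\emph{Drop probability at each step.} Suppose $\dim T_j=k\geq D$. By Lemma~\ref{lem:translationsymmetry}, $|\supp(T_j)|\geq Nk/L$. At most $\delta N$ of these points lie outside $Z(\gv)$, so
\[
    |Z(\gv)\cap\supp(T_j)|\geq N\!\left(\frac{k}{L}-\delta\right)\geq\frac{Nk}{2L}.
\]
The last inequality uses $k\geq D=\delta m/8\geq 2\delta L$, which holds because $m/L=\omega(1)$ by Lemma~\ref{lm:rm-pair-growth}.

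Every already-sampled point of $Z(\gv)$ lies outside $\supp(T_j)$, because all of $T_j$ vanishes there. Hence all of the favourable points above are still unsampled. The fresh point $\xv_{j+1}$ is uniform on the at most $N$ remaining points, so it lands in $Z(\gv)\cap\supp(T_j)$ with probability at least $k/(2L)$. When this happens, some codeword of $T_j$ is nonzero at $\xv_{j+1}$, so the new vanishing condition is independent and the dimension drops by exactly one. This holds conditionally on the entire history.

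\emph{Tail bound (main step).} The event $\dim T_m\geq D$ means the process spends all $m$ steps in states $k\in[D,L]$ without falling below $D$. I would couple the process with independent geometric waiting times $G_k$, one for each level $k$ from $L$ down to $\lceil D\rceil$, where $G_k$ has success parameter $p_k=k/(2L)$. The time spent at level $k$ is then stochastically dominated by $G_k$, and the bad event implies $\sum_{k=\lceil D\rceil}^{L}G_k\geq m$.

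Two estimates then finish the argument.
\begin{itemize}
    \item The mean is $\sum_k 2L/k\leq 2L(1+\ln L)$, which is $o(m)$ by the first condition of Lemma~\ref{lm:rm-pair-growth}.
    \item For the tail, apply Markov's inequality to $e^{\lambda\sum G_k}$ with $\lambda\approx p_{D}/2=D/(4L)$. Each factor satisfies
    \[
        \mathbb{E}\,e^{\lambda G_k}=\frac{p_ke^\lambda}{1-(1-p_k)e^\lambda},
    \]
    which is at most $\exp\bigl(O(\lambda/p_k)\bigr)$ for $\lambda\leq p_k/2$. The product of these factors is therefore $\exp\bigl(O(L(1+\ln L)\cdot D/L)\bigr)=\exp(o(\delta m^2/L))$.
\end{itemize}
Together these give
\[
    \Pr\Bigl[\sum_k G_k\geq m\Bigr]\leq\exp\bigl(-\lambda m+o(\delta m^2/L)\bigr)=\exp\bigl(-(1-o(1))\,\delta m^2/(32L)\bigr).
\]

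The main obstacle is the constant bookkeeping needed to get exactly $32L$ in the exponent. The choice $\lambda\approx D/(4L)=\delta m/(32L)$ gives $\lambda m=\delta m^2/(32L)$, and the $o(1)$ loss must be absorbed. To do this I would either sharpen the drop probability to $k/L-\delta\geq(1-o(1))k/L$, using $D\gg\delta L$ since $m/L=\omega(1)$, which leaves room in $\lambda$ to absorb the mean term for large $v$, or accept a slightly larger constant if the subsequent union bound only needs that.
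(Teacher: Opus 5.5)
Your reduction to the full-grid spaces $T_j$ and your per-step bound coincide with the paper's. That per-step bound is a drop with probability at least $k/(2L)$ at every level $k\geq\lceil D\rceil$, obtained from Lemma~\ref{lem:translationsymmetry} and the observation that exposed zeros of $\gv$ already lie outside $\supp(T_j)$. The two routes differ only in the tail estimate.

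\textbf{The paper's tail estimate.} The paper avoids sojourn times. With $k=\lceil\delta m/8\rceil$ it uses the potential $\binom{h_i}{k}$. The identity $\binom{h-1}{k}=(1-k/h)\binom{h}{k}$, together with a drop probability of at least $h/(2L)$, gives $\EE[\binom{h_{i+1}}{k}\mid\text{past}]\leq(1-k/(2L))\binom{h_i}{k}$. Hence $\Pr[h_m\geq k]\leq(1-k/(2L))^m\binom{L}{k}\leq\exp(k\ln L-mk/(2L))$. Absorbing $\binom{L}{k}\leq L^k$ using $m\geq4L\ln L$ gives exactly $\exp(-mk/(4L))\leq\exp(-\delta m^2/(32L))$, with no coupling and no $o(1)$ loss.

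\textbf{Your tail estimate.} Your geometric-coupling plus Chernoff argument is also valid, but two points need tightening.
\begin{itemize}
\item The coupling must be joint. Drive the drop/no-drop decision at level $k$ by its own i.i.d.\ uniform stream, then sample the new point conditionally on that decision. Marginal domination of each sojourn time does not by itself bound the sum.
\item With $\lambda=D/(4L)$ you only get $\exp(-(1-o(1))\delta m^2/(32L))$. Your first repair does close this. For $k\geq D$ one has $\delta\leq 8k/m$, so the drop probability is at least $(1-8L/m)k/L$. This allows $\lambda\approx D/(2L)$ and a bound of roughly $\exp(-\delta m^2/(16L))$. Alternatively, keep $p_k=k/(2L)$ but use $\EE e^{\lambda G}\leq(1-\lambda/p)^{-1}$ with $\lambda=\tfrac34 p_{\lceil D\rceil}$, which gives an exponent of about $3\delta m^2/(64L)$.
\end{itemize}

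In both proofs the hypothesis $m=\omega(L(1+\ln L))$ enters at the analogous place: through $\binom{L}{k}$ in the paper, and through the mean $\sum_k 2L/k$ of the geometric sum in yours. Even your unrepaired $(1-o(1))$ bound would suffice downstream. Theorem~\ref{thm:rm-pair-dqi} only needs this probability to be dominated by $\exp(-\Delta m/8)$, which still holds because $m/L\to\infty$.
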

\begin{proof}
    Expose $S$ as an ordered sample $\xv_1,\ldots,\xv_m$ drawn uniformly without replacement from $\FF_q^v$.
    After $i$ points, define the full-grid subspace
    \[
        U_i=\{\uv\in\RM_{\FF_q,v,2r}:
          u_{\xv_j}=0\text{ whenever }j\leq i\text{ and }g_{\xv_j}=0\},
        \qquad h_i=\dim U_i.
    \]
    Then $U_0\supseteq\cdots\supseteq U_m$ and $h_0=L$.
    Every word of $W$ lifts to a full-grid word in $\RM_{\FF_q,v,2r}$.
    If its restriction vanishes at $Z(\gv|_S)$, that lift satisfies all the constraints defining $U_m$.
    Applying restriction to $S$ therefore gives
    \[
        W[Z(\gv|_S)]\subseteq U_m|_S,
        \qquad
        \dim(W[Z(\gv|_S)])\leq\dim(U_m|_S)\leq h_m.
    \]

    Conditional on $\xv_1,\ldots,\xv_i$, the next point is uniform on the $N-i$ unexposed points.
    Adding its constraint decreases dimension by one exactly when
    $\xv_{i+1}\in\supp(U_i)\setminus\supp(\gv)$.
    No previously exposed point lies in this set: if $g_{\xv_j}=0$, every word of $U_i$ already vanishes there, and otherwise $\xv_j\in\supp(\gv)$.
    Hence
    \begin{align*}
        \Pr[h_{i+1}=h_i-1\mid\xv_1,\ldots,\xv_i]
        &=\frac{|\supp(U_i)\setminus\supp(\gv)|}{N-i}\\
        &\geq\max\left\{0,\frac{|\supp(U_i)|-\delta N}{N}\right\}\\
        &\geq\max\{0,h_i/L-\delta\},
    \end{align*}
    by Lemma~\ref{lem:translationsymmetry}.

    Put $k=\lceil\delta m/8\rceil$.
    If $k>L$, the event in~\eqref{eq:rm-pair-shortening-tail} is impossible because $h_m\leq L$.
    Otherwise $1\leq k\leq L$.
    Take $v$ sufficiently large that $m\geq16L$ and $m\geq4L\ln L$, which is possible by Lemma~\ref{lm:rm-pair-growth}.
    Then $k\geq2\delta L$, so whenever $h_i\geq k$, the conditional probability of a decrease is at least $h_i/(2L)$.
    With $\binom hk=0$ for $h<k$, we obtain
    \begin{align*}
        \EE\left[\binom{h_{i+1}}k\mid\xv_1,\ldots,\xv_i\right]
        &\leq
          \left(1-\frac{h_i}{2L}\right)\binom{h_i}k
          +\frac{h_i}{2L}\binom{h_i-1}k\\
        &=\left(1-\frac{k}{2L}\right)\binom{h_i}k
        \qquad (h_i\geq k).
    \end{align*}
    If $h_i<k$, both sides of the desired contraction are zero.
    Taking expectations and iterating thus gives
    \begin{align*}
        \Pr[h_m\geq k]
        &\leq\EE\binom{h_m}k
         \leq\left(1-\frac{k}{2L}\right)^m\binom Lk\\
        &\leq\exp\left(k\ln L-\frac{mk}{2L}\right)
         \leq\exp\left(-\frac{mk}{4L}\right)
         \leq\exp\left(-\frac{\delta m^2}{32L}\right).
    \end{align*}
    The initial restriction inequality completes the proof.
\end{proof}

\begin{proof}[Proof of Theorem~\ref{thm:rm-pair-dqi}]
    Apply Theorem~\ref{thm:qalgomain} with precision $\epsilon=m^{-2}$.
    Its additive error is $o(1)$ and its runtime remains $\poly(m)$, since $q$ is fixed.
    It therefore suffices to show that, with probability $1-o(1)$ over $S$,
    \begin{equation}\label{eq:rm-pair-fine-failure}
        2|C^\perp|
        \sum_{\cv\in C^\perp\setminus\{0\}}
        q^{\dim(W[Z(\cv)])}(q-1)^{-\wt(\cv)}
        =o(1).
    \end{equation}
    We will do this by taking a union bound over all $\gv \in \RM_{\FF_q, v, r} \setminus \{0\}$ and considering the point $\cv = \gv|_S$.
    For a fixed such $\gv$:
    \begin{itemize}
        \item By Lemma~\ref{lem:randproc}, we have $\dim(W[Z(\gv|_S)]) \leq \frac{\wt(\gv) m}{8N}$, except with probability $\exp\left(-\frac{\Delta m^2}{32L}\right)$.

        \item By Lemma~\ref{lem:randompuncturedistance}, we have $\wt(\gv|_S) \geq \frac{\wt(\gv)m}{2N}$, except with probability $\exp\left(-\frac{\Delta m}{8}\right)$.
    \end{itemize}
    When $m \geq 4L$ (as implied by Lemma~\ref{lm:rm-pair-growth}), the latter probability dominates.
    Therefore taking a union bound implies that both of these hold for all $\gv$, except with probability at most:
    $$O\left(q^n e^{-\Delta m/8}\right) \leq o(1),$$
    by Lemma~\ref{lm:rm-pair-growth}.
    On this simultaneous event, every nonzero full-grid word has a nonzero restriction, because $\wt(\gv|_S)\geq\wt(\gv)m/(2N)>0$.
    Thus restriction on $\RM_{\FF_q,v,r}$ is injective, $\rank\matB=n$, and $|C^\perp|=q^n$.
    Before conditioning on this event, only the upper bound $|C^\perp|\leq q^n$ is needed.
    We now derive Equation~\eqref{eq:rm-pair-fine-failure}:
    \begin{align*}
        2|C^\perp|
        \sum_{\cv\in C^\perp\setminus\{0\}}
        q^{\dim(W[Z(\cv)])}(q-1)^{-\wt(\cv)}
        &\leq 2q^n \sum_{\gv \in \RM_{\FF_q, v, r} \setminus \{0\}} \frac{q^{\wt(\gv)m/(8N)}}{(q-1)^{\wt(\gv)m/(2N)}} \\
        &\leq 2q^n \sum_{\gv \in \RM_{\FF_q, v, r} \setminus \{0\}} \frac{q^{\wt(\gv)m/(8N)}}{q^{\wt(\gv)m/(4N)}} \\
        &\leq 2q^{2n} \cdot q^{-\Delta m/8} \\
        &= o(1),
    \end{align*}
    again by Lemma~\ref{lm:rm-pair-growth}.
    Thus Theorem~\ref{thm:qalgomain} is applicable and this completes our proof.
\end{proof}

\section{Classical Algorithm via Three-Fold Multiplication}\label{sec:classical}

We take this opportunity to mention some other classical algorithms that are applicable to the abstract setting of finding $\yv$ nonzero in every entry such that $\matB \yv = 0$. 
The first is the algorithm by~\cite{ivanyos2012nil2,ivanyosSantha2015diagonal,imran2024zerosum,kothari2025sis} that works for any $m \geq \Omega(n^2)$. 
Another well-known easy case is when $q = p^h$ for $h > 1$; in this case, $\FF_q$ is an extension field and $\matB \yv = 0$ can be ensured using an ansatz that is linear in the base field, since we can just select an $\FF_p$-linear subspace of $\FF_q$ that avoids 0 for each coordinate. 
In regimes where $m \leq n^{2-\Omega(1)}$ and $q$ is an odd prime, neither of these algorithms is applicable as far as we know.

\subsection{A Quadratic Ansatz with an Auxiliary Code}\label{sec:classical-aux}

Recall from Section~\ref{sec:overview-quantum-decoder} that the high-level idea of the classical algorithm is to use a degree-2 ansatz to enforce nonvanishing in every coordinate, and to carefully set up the ansatz so that the degree-2 terms cancel, leaving behind a linear system. 

The cancellation of quadratic terms has precedents in multivariate cryptography.
In the Unbalanced Oil and Vinegar signature scheme~\cite[Section~2]{kipnis1999unbalanced}, there are no products of two oil variables, so fixing the vinegar variables leaves a linear system in the oil variables.
For quadratic hash functions, fixing the difference between two inputs makes the collision equations linear; Ding and Yang~\cite{ding2008multivariates} gave this attack, and Applebaum et al.~\cite[Section~5.2]{applebaum2017lowcomplexity} analyzed its success probability for random quadratic maps.
Zhandry~\cite[Section~5.1 of the 2017 version]{zhandry2019lightning} extends this idea to affine spaces of colliding inputs over $\FF_2$, choosing directions whose mixed quadratic terms vanish and then solving linear equations for a translate.
Here, orthogonality to products of codewords eliminates the quadratic part of the syndrome, while a nonsquare ensures that every output coordinate is nonzero.

\begin{algorithm}[label=alg:classical-aux]{Full-support solutions from a quadratic ansatz}
\textbf{Input:}
\begin{itemize}
    \item An odd-order field $\FF_q$ and a nonsquare $\nu\in\FF_q^\times$;
    \item A code $V \subseteq \FF_q^m$ and a matrix $\matB \in \FF_q^{n \times m}$ with linearly independent rows such that $V = \im \matB^\top$;
    \item A code $E \subseteq \FF_q^m$ generated by the rows of $\matA \in \FF_q^{\ell \times m}$;
    \item A vector $\hv \in \FF_q^m$ that is nonzero in every coordinate such that $\hv \perp V * E^{*2}$;
    \item A vector $\wv \in \FF_q^m$ such that $\rank (\matB \diag(\wv) \matA^\top) = n$;
    \item A target syndrome $\sv\in\FF_q^n$.
\end{itemize}
\textbf{Output:} a vector $\yv\in(\FF_q^\times)^m$ with $\matB\yv=\sv$.
\begin{enumerate}
\item Compute
\[
 \yv_0=\frac14\left(\hv^{-1}\odot\wv^{\odot2}-\nu\hv\right).
\]
\item Define $f:\FF_q^\ell\to\FF_q^m$ by
\[
 f(\zv)=\frac14\hv\odot
 \left[\left(\matA^\top\zv+\wv\odot\hv^{-1}\right)^{\odot2}
       -\nu\one\right].
\]
\item Use Gaussian elimination to find a solution $\zv\in\FF_q^\ell$ to
\[
 \bigl(\matB\diag(\wv)\matA^\top\bigr)\zv
       =2(\sv-\matB\yv_0).
\]
\item Return $\yv=f(\zv)$.
\end{enumerate}
\end{algorithm}

\begin{theorem}[Correctness of the quadratic ansatz]\label{thm:classical-aux}
For every valid set of inputs, Algorithm~\ref{alg:classical-aux}
returns a full-support solution to $\matB\yv=\sv$.
Given $\nu$, the algorithm is deterministic and uses polynomially many
field operations in $m,n,\ell$.
\end{theorem}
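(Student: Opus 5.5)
The plan is to verify three things in turn: that $f(\zv)$ has no zero coordinates, that the syndrome $\matB f(\zv)$ is an affine function of $\zv$ equal to $\matB\yv_0+\frac12\matB\diag(\wv)\matA^\top\zv$, and that the linear system in step 3 is solvable. Throughout, inverses like $\hv^{-1}$ are coordinate-wise, well defined because $\hv$ has full support, and division by $2$ and $4$ is legal since $q$ is odd.

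\emph{Full support.} Coordinate $i$ of $f(\zv)$ is $\frac14 h_i(t_i^2-\nu)$ with $t_i=(\matA^\top\zv)_i+w_i/h_i\in\FF_q$. Since $\nu$ is a nonsquare, $t_i^2\neq\nu$. Since $h_i\neq 0$ and $4$ is invertible, every coordinate is nonzero.

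\emph{Linearity of the syndrome.} Write $\uv=\matA^\top\zv\in E$. Expand
\[
 f(\zv)=\tfrac14\hv\odot\uv^{\odot2}+\tfrac12\wv\odot\uv+\tfrac14\left(\hv^{-1}\odot\wv^{\odot2}-\nu\hv\right),
\]
where the last term is exactly $\yv_0$. Row $j$ of $\matB$ is a codeword $\bv_j\in V$. Then $\langle\bv_j,\hv\odot\uv^{\odot2}\rangle=\langle\hv,\bv_j\odot\uv\odot\uv\rangle=0$, because $\bv_j\odot\uv\odot\uv\in V*E^{*2}$ and $\hv$ is orthogonal to it. Hence
\[
 \matB f(\zv)=\matB\yv_0+\tfrac12\matB\diag(\wv)\matA^\top\zv .
\]
The only point needing care is that the hypothesis $\hv\perp V*E^{*2}$ is meant for the triple products $\bv\odot\ev_1\odot\ev_2$. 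Here we use it with $\ev_1=\ev_2=\uv$, which is a special case, so no span argument is needed.

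\emph{Solvability and cost.} The matrix $\matB\diag(\wv)\matA^\top$ is $n\times\ell$ of rank $n$, so its image is all of $\FF_q^n$. The system in step 3 therefore has a solution, and Gaussian elimination finds one. Any solution gives $\matB f(\zv)=\matB\yv_0+(\sv-\matB\yv_0)=\sv$.

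The operations are coordinate-wise arithmetic on length-$m$ vectors, matrix products of sizes $n\times m$ and $m\times\ell$, and one Gaussian elimination. This gives $\poly(m,n,\ell)$ field operations, deterministic given $\nu$. No step is a real obstacle; the main thing to check is the bookkeeping of the constants $\frac14$ and $\frac12$ in the expansion so that the constant term matches $\yv_0$ exactly.
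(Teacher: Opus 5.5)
Your proposal is correct and follows the paper's proof essentially step for step. Like the paper, you expand $f(\zv)$, identify its constant term with $\yv_0$, and kill the quadratic term using $\hv\perp V*E^{*2}$ together with $\bv_i\in V$ and $\matA^\top\zv\in E$. You then solve the full-rank linear system and obtain full support from $h_i\neq0$ and $\nu$ being a nonsquare, exactly as the paper does; your remark that $2$ and $4$ are invertible because $q$ is odd makes explicit a detail the paper leaves implicit.
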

\begin{proof}
    First, note that:
    \begin{align*}
        f(\mathbf{0}) &= \frac{1}{4} \hv \odot \left[(\wv \odot \hv^{-1})^{\odot 2} - \nu \one\right] \\
        &= \frac{1}{4} (\hv^{-1} \odot \wv^{\odot 2} - \nu \hv) \\
        &= \yv_0.
    \end{align*}
    Now let us write $\matB = \begin{bmatrix} \bv_1^\top \\ \ldots \\ \bv_n^\top \end{bmatrix}$.
    For $i \in [n]$, the $i$th entry of $\matB f(\zv)$ is exactly:
    \begin{align*}
        \bv_i^\top f(\zv) &= \frac{1}{4} \bv_i^\top \left(\hv \odot (\matA^\top \zv)^{\odot 2} + 2\wv \odot (\matA^\top \zv) + \hv^{-1} \odot \wv^{\odot 2} - \nu \hv\right) \\
        &= \frac{1}{4} \bv_i^\top \left(\hv \odot (\matA^\top \zv)^{\odot 2} + 2\wv \odot (\matA^\top \zv)\right) + \bv_i^\top \yv_0 \\
        &= \frac{1}{2} \bv_i^\top (\wv \odot (\matA^\top \zv)) + \bv_i^\top \yv_0.
    \end{align*}
    The last step uses the fact that $\hv \perp V * E^{*2}$ and $\bv_i \in V$ and $\matA^\top \zv \in E$, so the quadratic term in $\zv$ vanishes.
    So the equation we wish to solve is:
    $$(\matB \diag(\wv) \matA^\top) \zv= 2\sv - 2\matB \yv_0.$$
    This is linear in $\zv$. By assumption,
    $\matB\diag(\wv)\matA^\top$ has rank $n$, so its image is
    $\FF_q^n$ and the system has a solution for every target $\sv$.

    For every coordinate $i$ and every $\zv$, we also have
    \[
      f_i(\zv)=\frac{h_i}{4}
      \left[\left((\matA^\top\zv)_i+\frac{w_i}{h_i}\right)^2-\nu\right]\ne0,
    \]
    because $h_i\ne0$ and $\nu$ is a nonsquare. Computing $\yv_0$, Gaussian
    elimination, and evaluating the output all take polynomially many
    field operations.
\end{proof}

\subsection{Classical Algorithm}

Taking the auxiliary code to be $C^\perp$ itself gives a classical
algorithm using the three-fold multiplication property. The algorithm
samples an annihilator and selects coordinates on which the quadratic
ansatz applies.

The product orthogonality used here has a precedent in error-correcting pairs:
auxiliary code products orthogonal to a source code turn multiplication into
syndrome constraints~\cite[Remark~2.4]{pellikaan1992decoding};
see also Márquez-Corbella and Pellikaan~\cite[Definition~1]{marquez2012pairs}.
Those constructions decode bounded-weight errors under additional dimension
and distance conditions. Here we combine product orthogonality with a nonsquare
quadratic to construct a full-support solution for every target syndrome.

\begin{algorithm}[label=alg:classical-cube,unbreakable]{Classical algorithm via three-fold multiplication}
\textbf{Input:} a full-row-rank matrix $\matB\in\FF_q^{n\times m}$,
where $q$ is odd, a nonsquare $\nu\in\FF_q^\times$, and a target
$\sv\in\FF_q^n$.
Write $C=\ker\matB$.

\textbf{Output:} a vector $\yv\in(\FF_q^\times)^m$ with $\matB\yv=\sv$,
or \textsc{Failure}.
\begin{enumerate}
    \item Generate $U=\spann((C^\perp)^{*3})$ from the coordinatewise
    products of three rows of $\matB$, and sample $\hv$ uniformly from
    $U^\perp$. Put $T=\supp(\hv)$, and let $\matB'$ and
    $\matB''$ consist of the columns of $\matB$ in $T$ and $[m]\setminus T$,
    respectively. If $\rank\matB'<n$, return \textsc{Failure}.
    \item Choose $I\subseteq T$ of size $n$ such that $\matB_I$ is
    invertible. On $T$, let $\wv$ be the $0/1$ indicator of $I$.
    \item Run Algorithm~\ref{alg:classical-aux} with source matrix
    $\matB'$, auxiliary matrix $\matA=\matB'$, and $V=E=C^\perp|_T$,
    using $\hv|_T$, $\wv$, $\nu$, and target $\sv-\matB''\one$.
    Denote the returned vector on $T$ by $\yv'$.
    \item Return $\yv$ with $\yv|_T=\yv'$ and
    $\yv|_{[m]\setminus T}=\one$, in the original coordinate order.
\end{enumerate}
After \textsc{Failure}, the algorithm may be retried with an independent annihilator.
\end{algorithm}

\begin{lemma}[Three-fold multiplication guarantee]\label{lem:classical-cube}
Let $\matB\in\FF_q^{n\times m}$ have linearly independent rows, where
$q$ is odd, and put $C=\ker\matB$ and $U=\spann((C^\perp)^{*3})$.
Suppose that, for some $\epsilon\in(0,1]$,
\begin{equation}\label{eq:cubecondition}
    \frac1{q-1}\sum_{\cv\in C^\perp\setminus\{\mathbf0\}}
    q^{\dim(U[Z(\cv)])-\wt(\cv)}\le1-\epsilon.
\end{equation}
Given a nonsquare $\nu$ and any target $\sv\in\FF_q^n$,
Algorithm~\ref{alg:classical-cube} returns a full-support solution to
$\matB\yv=\sv$ with probability at least $\epsilon$ and uses
polynomially many field operations in $m,n$.
Moreover, if $\dist(C^\perp)\ge n+\dim((C^\perp)^{*3})$, then
Equation~\eqref{eq:cubecondition} holds for $\epsilon=1/2$.
\end{lemma}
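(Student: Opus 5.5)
The proof has three parts. First, whenever Step~1 does not fail, the output is correct. Second, Step~1 fails with probability at most $1-\epsilon$. Third, the distance hypothesis implies Equation~\eqref{eq:cubecondition} with $\epsilon=1/2$.

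\textbf{Correctness after Step~1.} Suppose $\rank\matB'=n$. Then an index set $I$ as in Step~2 exists and is found by Gaussian elimination. We verify the hypotheses of Theorem~\ref{thm:classical-aux} for Step~3.
\begin{itemize}
\item The restriction $\hv|_T$ is nonzero on $T$ by the definition of $T=\supp(\hv)$.
\item For $\uv_1,\uv_2,\uv_3\in C^\perp$, the orthogonality $\hv\perp\uv_1\odot\uv_2\odot\uv_3$ is unaffected by deleting coordinates where $\hv$ vanishes. Hence $\hv|_T\perp V*E^{*2}$ with $V=E=C^\perp|_T$.
\item We have $\matB'\diag(\wv)(\matB')^\top=\matB_I\matB_I^\top$, which is invertible, so the rank condition holds.
\item $\matB'$ has full row rank $n$.
\end{itemize}
Theorem~\ref{thm:classical-aux} then gives $\yv'\in(\FF_q^\times)^T$ with $\matB'\yv'=\sv-\matB''\one$. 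Padding with ones outside $T$ gives $\matB\yv=\sv$ with full support. Computing $U$ takes $O(n^3)$ triple products followed by Gaussian elimination. All remaining steps are polynomial.

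\textbf{Failure probability.} If $\rank\matB'<n$, then some $\xv\neq0$ has $\matB'^\top\xv=0$. Equivalently, there is a nonzero codeword $\cv=\matB^\top\xv\in C^\perp$ (nonzero because $\matB$ has full row rank) with $\supp(\cv)\cap\supp(\hv)=\emptyset$, that is, $\supp(\hv)\subseteq Z(\cv)$. We union bound over such $\cv$ up to nonzero scalars, which gives the factor $1/(q-1)$. For a fixed $\cv$, the event is that the uniform vector $\hv\in U^\perp$ lies in the subspace $U^\perp\cap\{\hv:\hv|_{\supp(\cv)}=0\}$. Its probability is the ratio of dimensions, $q^{\dim(U^\perp\cap\ldots)-\dim U^\perp}$.

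The main step is computing this dimension. By duality, as in Lemma~\ref{lem:puncshort},
\[
U^\perp\cap\{\hv:\hv|_{\supp(\cv)}=0\}=\bigl(U+\FF_q^{\supp(\cv)}\bigr)^\perp,
\]
and
\[
\dim\bigl(U+\FF_q^{\supp(\cv)}\bigr)=\dim U+\wt(\cv)-\dim\bigl(U\cap\FF_q^{\supp(\cv)}\bigr).
\]
The intersection $U\cap\FF_q^{\supp(\cv)}$ is exactly the shortened code $U[Z(\cv)]$. So the probability equals $q^{\dim(U[Z(\cv)])-\wt(\cv)}$. Summing over the scalar classes gives failure probability at most $1-\epsilon$.

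\textbf{The distance hypothesis.} We use the trivial bound $\dim U[Z(\cv)]\le\dim U=s_3$ (taking $\dim((C^\perp)^{*3})$ to mean $\dim U$) and $\wt(\cv)\ge n+s_3$. Each term is then at most $q^{-n}$. There are $(q^n-1)/(q-1)$ scalar classes, so the sum is at most $(q^n-1)q^{-n}/(q-1)\le 1/2$, since $q\ge3$.
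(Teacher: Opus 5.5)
Your proposal is correct and follows the paper's proof essentially step for step. You verify the hypotheses of Theorem~\ref{thm:classical-aux} once the rank test passes, union-bound over scalar classes of nonzero dual codewords vanishing on $\supp(\hv)$, and bound each term by $q^{-n}$ under the distance hypothesis. The only cosmetic difference is in computing $\Pr[\supp(\hv)\subseteq Z(\cv)]$: you use $(U+\FF_q^{\supp(\cv)})^\perp$ and the sum--intersection dimension formula, whereas the paper restricts $U$ to $Z(\cv)$ and applies rank--nullity to $U\to U|_{Z(\cv)}$; both give $q^{\dim(U[Z(\cv)])-\wt(\cv)}$.
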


Compare~\eqref{eq:cubecondition} with the quantum failure bound in Theorem~\ref{thm:qalgomain}.
Both involve a product space shortened at the zeros of each nonzero dual word.
Here that space is the Schur cube, and the weight factor is $q^{-\wt(\cv)}$ with prefactor $1/(q-1)$; the quantum bound uses the Schur square, $(q-1)^{-\wt(\cv)}$, and the additional factor $2|C^\perp|$ from Regev's reduction.
Thus the comparison involves both the product degree and these probability factors.
\begin{proof}
    First suppose that the sampled $\hv$ passes the rank test. The restriction
    $\hv|_T$ is nonzero in every coordinate. Since $\hv$ vanishes outside
    $T$ and is orthogonal to $(C^\perp)^{*3}$, its restriction is
    orthogonal to $V*E^{*2}$. The choice of $\wv$ gives
    \[
        \matB'\diag(\wv)(\matB')^\top
        =\matB_I\matB_I^\top,
    \]
    which is invertible.
    Thus all hypotheses of Algorithm~\ref{alg:classical-aux} hold.

    By Theorem~\ref{thm:classical-aux}, the auxiliary algorithm returns
    a full-support $\yv'$ with
    $\matB'\yv'=\sv-\matB''\one$. Setting the remaining coordinates to $1$ gives
    \[
        \matB\yv=\matB'\yv'+\matB''\one=\sv,
    \]
    with every coordinate nonzero. Every $\hv$ that passes the rank test
    works for every target. Generating the Schur cube, sampling
    its annihilator, selecting independent columns, and applying the
    auxiliary algorithm all use polynomially many field operations.

    It remains to bound the probability that the algorithm returns \textsc{Failure}.
    Since $\matB$ has linearly independent rows, $\rank\matB'<n$ if
    and only if there exists a nonzero $\cv\in C^\perp$ that vanishes
    on $T=\supp(\hv)$. This event is unaffected by multiplying $\cv$
    by a nonzero scalar.

    Fix such a codeword $\cv$. A vector supported on $Z(\cv)$ lies in
    $U^\perp$ exactly when its restriction to $Z(\cv)$ is orthogonal to
    $U|_{Z(\cv)}$. Consequently,
    \begin{align*}
        \Pr[\supp(\hv)\subseteq Z(\cv)]
        &=\frac{q^{m-\wt(\cv)-\dim(U|_{Z(\cv)})}}{q^{m-\dim U}}\\
        &=q^{\dim U-\dim(U|_{Z(\cv)})-\wt(\cv)}\\
        &=q^{\dim(U[Z(\cv)])-\wt(\cv)}.
    \end{align*}
    The last equality follows from rank--nullity for the restriction
    map $U\to U|_{Z(\cv)}$, whose kernel is $U[Z(\cv)]$.
    Taking a union bound over one representative from each nonzero
    scalar class therefore gives
    \[
        \Pr[\text{the algorithm returns \textsc{Failure}}]
        \le\frac1{q-1}\sum_{\cv\in C^\perp\setminus\{\mathbf0\}}
        q^{\dim(U[Z(\cv)])-\wt(\cv)}
        \le1-\epsilon.
    \]

    Finally, suppose that $\dist(C^\perp)\ge n+\dim((C^\perp)^{*3})$.
    Since $\dim C^\perp=n$, we have
    \begin{align*}
        \frac1{q-1}\sum_{\cv\in C^\perp\setminus\{\mathbf0\}}
        q^{\dim(U[Z(\cv)])-\wt(\cv)}
        &\le\frac{q^n-1}{q-1}\,
        q^{\dim U-\dist(C^\perp)}\\
        &\le\frac{q^n-1}{q-1}\,q^{-n}
        \le\frac1{q-1}\le\frac12,
    \end{align*}
    as claimed.
\end{proof}

\subsection{Specializing to Reed-Muller Codes}

Specializing Lemma~\ref{lem:classical-cube} to Reed-Muller codes implies the following theorem:
\begin{theorem}\label{thm:cubecond}
    Fix an odd prime power $q$ and real constants
    $\rho\in(0,q-1)$ and $c > 1$ such that:
    \begin{align}
        c &> \frac{
            \max\left\{
                \cntexp_q(3\rho),
                \cntexp_q(\rho)+\rho/(q-1)
            \right\}
        }{\cntexp_q(\rho)}. \label{eq:cubecond2}
    \end{align}
    Then there is
    a classical algorithm running in $\poly(m)$ time for
    $\rmsyn_{q,\rho,c}(v)$ with the following guarantee:
    with probability $1-o(1)$ over the uniformly random
    $m$-element subset $S\subseteq\FF_q^v$, the algorithm
    outputs a solution with probability $1-o(1)$ over its
    internal randomness, as $v\to\infty$.

    Consequently, the same guarantee holds for
    $\rmcsp_{q,\rho,c}(v)$ by
    Lemma~\ref{lem:rmequivalence}.
\end{theorem}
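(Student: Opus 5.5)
The plan mirrors the proof of Theorem~\ref{thm:rm-pair-dqi}, with the Schur square replaced by the Schur cube and the quantum failure bound replaced by condition~\eqref{eq:cubecondition} of Lemma~\ref{lem:classical-cube}.

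By Remark~\ref{remark:rmtrivial} it suffices to treat the case $m=\lfloor n^c\rfloor<q^v$. Set $r=\lfloor\rho v\rfloor$, $N=q^v$ and $\Delta=q^{-r/(q-1)}$. The new quantity is $L_3:=\cnt_q(v,3r)=\dim\RM_{\FF_q,v,3r}$, and the space $U=\spann((C^\perp)^{*3})$ is contained in $\RM_{\FF_q,v,3r,S}$.

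\textbf{Growth conditions.} First I prove the analogue of Lemma~\ref{lm:rm-pair-growth}. Using Lemma~\ref{lem:count-asymptotics} and $m=N^{\gamma+o(1)}$ with $\gamma=c\cntexp_q(\rho)$, condition~\eqref{eq:cubecond2} gives
\[
m/(L_3(1+\ln L_3))=\omega(1)\quad\text{and}\quad m\Delta/n=\omega(1).
\]

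\textbf{Shortening tail for the cube.} Next, Lemma~\ref{lem:translationsymmetry} holds verbatim with $2r$ replaced by $3r$, since translations preserve degree. Repeating the exposure-martingale argument of Lemma~\ref{lem:randproc} with $U_i\subseteq\RM_{\FF_q,v,3r}$ gives, for each fixed nonzero $\gv\in\RM_{\FF_q,v,r}$ of relative weight $\delta\ge\Delta$,
\[
\Pr_S\bigl[\dim U[Z(\gv|_S)]\ge\delta m/8\bigr]\le\exp(-\delta m^2/(32L_3)).
\]
The proof only used $m\ge16L$ and $m\ge4L\ln L$, both supplied here with $L=L_3$.

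\textbf{Good event and condition~\eqref{eq:cubecondition}.} Lemma~\ref{lem:randompuncturedistance} with $\eta=1/2$ gives $\wt(\gv|_S)\ge\wt(\gv)m/(2N)$ except with probability $e^{-\Delta m/8}$. A union bound over the $q^n$ choices of $\gv$ puts both estimates in force simultaneously with probability $1-O(q^ne^{-\Delta m/8})=1-o(1)$.

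On this good event, restriction to $S$ is injective on $\RM_{\FF_q,v,r}$. Hence $\matB$ has full row rank $n$, as Lemma~\ref{lem:classical-cube} requires, and every $\cv\in C^\perp\setminus\{0\}$ equals $\gv|_S$ for a unique $\gv$. The left side of~\eqref{eq:cubecondition} is then at most
\[
\frac{1}{q-1}\sum_{\gv\ne0}q^{\wt(\gv)m/(8N)-\wt(\gv)m/(2N)}\le q^{n}q^{-3\Delta m/8}=o(1).
\]
So Lemma~\ref{lem:classical-cube} applies with $\epsilon=1-o(1)$. Its polynomial-time algorithm therefore succeeds with probability $1-o(1)$ over its internal randomness (the sampling of $\hv$). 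A nonsquare $\nu$ is found by brute force since $q$ is fixed, and the runtime is $\poly(m)$ because $n\le m$. The $\rmcsp$ statement follows from Lemma~\ref{lem:rmequivalence}.

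\textbf{Main obstacle.} The main obstacle is re-verifying the contraction step of Lemma~\ref{lem:randproc} for the cube space, in particular that $k=\lceil\delta m/8\rceil\ge2\delta L_3$. This holds because $m\ge16L_3$, and the first term of the maximum in~\eqref{eq:cubecond2} exists precisely to guarantee that inequality. The remaining steps transfer essentially unchanged from the quantum case.
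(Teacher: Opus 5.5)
Your proposal is correct and follows the paper's proof essentially verbatim. The paper likewise reruns the proof of Theorem~\ref{thm:rm-pair-dqi} with the containing code $\RM_{\FF_q,v,3r}$ and $L=\cnt_q(v,3r)$, obtains the same simultaneous good event (which also gives $\rank\matB=n$), and bounds the left side of~\eqref{eq:cubecondition} by $\frac{q^n-1}{q-1}q^{-3\Delta m/8}=o(1)$; your slightly cruder bound $q^n q^{-3\Delta m/8}$ is equally sufficient because $m\Delta/n\to\infty$.
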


\begin{proof}
The full-grid case follows from Remark~\ref{remark:rmtrivial}.
Otherwise repeat the proof of Theorem~\ref{thm:rm-pair-dqi} with the containing product code $\RM_{\FF_q,v,3r}$ and $L=\cnt_q(v,3r)$.
Condition~\eqref{eq:cubecond2} gives the same growth inequalities.
The translation and fixed-word shortening arguments depend only on this containing code's dimension and translation symmetry, so they apply unchanged.
With probability $1-o(1)$, simultaneously for every nonzero full-grid word $\gv$, we have
\[
    \wt(\gv|_S)\geq\frac{\wt(\gv)m}{2N},
    \qquad
    \dim(U[Z(\gv|_S)])\leq\frac{\wt(\gv)m}{8N}.
\]
The first inequality makes restriction injective, so $\rank\matB=n$.
On this event the failure bound of Lemma~\ref{lem:classical-cube} is at most
\[
    \frac{q^n-1}{q-1}\,q^{-3\Delta m/8}=o(1),
\]
because $m\Delta/n\to\infty$.
This also gives the arbitrary-syndrome claim of Remark~\ref{rem:arbitrary-syndromes} in the nonsaturated case.
\end{proof}

\clearpage

\part{Reed--Solomon Codes and Histogram-Local Constraints}

\section{Fourier Spectrum of Set States}

For a nonempty set $A\subseteq\F_q^M$, we study the uniform state
$\ket A=|A|^{-1/2}\sum_{\xv\in A}\ket{\xv}$.
To use a classical decoder in Regev's reduction, we need its Fourier transform to place noticeable mass on errors within the decoding radius.
We relate this spectral question to a classical experiment: start at a uniform element of $A$, replace one randomly chosen coordinate or block, and ask whether the result remains in $A$.
The expected relative Fourier weight is exactly the probability of leaving $A$.
This is the product-space Fourier relation between influence and coordinate resampling~\cite[Chapter~8]{odonnell2014analysis}; we prove the needed form below.
We then show that balanced coordinate maps preserve the exit probability, allowing us to analyze histogram constraints on a smaller alphabet.

\subsection{Fourier Sparsity $\Leftrightarrow$ Stability}

Throughout this section, let $M$ be a dimension, $s$ a block length dividing $M$, and $m = M/s$.
We take the convention of segmenting vectors $\xv \in \F_q^M$ as a collection of vectors $\xv_1, \ldots, \xv_{m} \in \F_q^s$.
We use the Fourier convention in Equation~\eqref{eq:fourier-convention}.
We write $\wt(\yv)$ for ordinary Hamming weight and $\wt_s(\yv)$ for the number of nonzero blocks among $\yv_1,\ldots,\yv_m$.

\begin{definition}[Coordinate replacement walk]\label{def:coordinatereplacement}
    For a block $i\in[m]$, define the averaging operator $\calK_i$ by:
    \begin{equation}
        \calK_i := \frac{1}{q^s} \sum_{\xv \in \F_q^M} \sum_{\tv \in \F_q^s} \ket{\xv_1, \ldots, \xv_{i-1}, \tv, \xv_{i+1}, \ldots, \xv_{m}} \bra{\xv_1, \ldots, \xv_{m}}.
    \end{equation}
    Also define $\calK = \frac{1}{m} \sum_{i \in [m]} \calK_i$.
    The matrix entries of $\calK$ are the transition probabilities of the coordinate replacement walk: sample $i\in[m]$ uniformly and replace the $i$th block by a uniform element of $\F_q^s$.
\end{definition}

Here $\calK$ acts linearly on amplitude vectors; we use it as a stochastic matrix, not as a quantum operation to implement.
In particular, the coefficient $q^{-s}$ in $\calK_i$ is a transition probability, and $\calK_i\ket{\xv}$ need not be normalized.
On block $i$, $\calK_i$ is the orthogonal projector onto the uniform state, and it is the identity on all other blocks.
Thus each $\calK_i$, and hence $\calK$, is Hermitian.
The Fourier basis diagonalizes these projectors.

\begin{lemma}\label{lemma:perturbdiag}
    For any $\yv \in \F_q^M$, we have:
    \begin{equation}
        \calK \QFT \ket{\yv} = \left(1 - \frac{\wt_s(\yv)}{m}\right) \QFT \ket{\yv}.
    \end{equation}
    The same holds with $\QFT^\dag$ in the place of $\QFT$ everywhere (one way to see this is that $\QFT^\dag \ket{\yv} = \QFT \ket{-\yv}$ and $\wt_s(\yv) = \wt_s(-\yv)$).
\end{lemma}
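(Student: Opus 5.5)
The plan is to prove the single-block identity first and then average over blocks. Fix $i\in[m]$ and $\yv\in\F_q^M$. By the convention in Equation~\eqref{eq:fourier-convention}, $\QFT\ket{\yv}=q^{-M/2}\sum_{\xv}\omega_p^{\Tr(\langle\xv,\yv\rangle)}\ket{\xv}$. This factorizes as a tensor product over blocks:
\[
\QFT\ket{\yv}=\bigotimes_{j\in[m]}\Bigl(q^{-s/2}\sum_{\xv_j\in\F_q^s}\omega_p^{\Tr(\langle\xv_j,\yv_j\rangle)}\ket{\xv_j}\Bigr).
\]
As noted after Definition~\ref{def:coordinatereplacement}, $\calK_i$ acts as the identity on every block except block $i$. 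On block $i$ it is the projector $\ket{u}\bra{u}$ onto the uniform state $\ket u=q^{-s/2}\sum_{\tv}\ket{\tv}$. So it suffices to compute this projector applied to the block-$i$ factor $\ket{\chi_{\yv_i}}$.

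The inner product is
\[
\braket{u|\chi_{\yv_i}}=q^{-s}\sum_{\xv_i\in\F_q^s}\omega_p^{\Tr(\langle\xv_i,\yv_i\rangle)}.
\]
By character orthogonality this equals $1$ if $\yv_i=\mathbf0$ and $0$ otherwise. The only step needing care is nondegeneracy of the trace pairing. If $\yv_i\neq\mathbf0$, the map $\xv_i\mapsto\Tr(\langle\xv_i,\yv_i\rangle)$ is a surjective $\F_p$-linear functional onto $\F_p$, because $\Tr$ is surjective and $\xv_i\mapsto\langle\xv_i,\yv_i\rangle$ is onto $\F_q$. Each value of $\F_p$ therefore occurs equally often, and the sum of $p$th roots of unity vanishes.

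When $\yv_i=\mathbf0$, the factor $\ket{\chi_{\yv_i}}$ equals $\ket u$ itself and is fixed. We conclude that $\calK_i\QFT\ket{\yv}=\one[\yv_i=\mathbf0]\,\QFT\ket{\yv}$.

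Averaging over $i$ gives eigenvalue $\frac1m\#\{i:\yv_i=\mathbf0\}=1-\wt_s(\yv)/m$. For the $\QFT^\dag$ version, apply the identity to $-\yv$, using $\QFT^\dag\ket{\yv}=\QFT\ket{-\yv}$ and $\wt_s(-\yv)=\wt_s(\yv)$.

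I expect no real obstacle here. The main point to state cleanly is the vanishing of the character sum over an extension field via the surjectivity of the trace.
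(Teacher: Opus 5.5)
Your proof is correct and takes essentially the same route as the paper: reduce to showing $\calK_i\QFT\ket{\yv}=\one[\yv_i=\mathbf0]\,\QFT\ket{\yv}$ for each block using orthogonality of additive characters, then average over $i$ (and pass to $\QFT^\dag$ via $-\yv$). The only difference is presentation: the paper manipulates the sum directly by swapping $\tv$ and $\xv_{i^*}$, whereas you factor into blocks and use the projector description of $\calK_i$; your trace-surjectivity argument also spells out the character-sum vanishing that the paper only cites.
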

\begin{proof}
    By linearity, it suffices to show for every coordinate $i^* \in [m]$ that
    \begin{equation}
        \calK_{i^*} \QFT \ket{\yv} = \begin{cases}
            \QFT \ket{\yv},\text{ if $\yv_{i^*} = 0^{s}$},\\
            0,\text{ otherwise.}
        \end{cases}
    \end{equation}
    Indeed, we have:
    \begin{align*}
        \calK_{i^*} \QFT \ket{\yv} &= q^{-M/2} \sum_{\xv \in \F_q^M} \omega_p^{\Tr(\langle \xv, \yv \rangle)} \calK_{i^*} \ket{\xv} \\
        &= q^{-M/2-s} \sum_{\xv \in \F_q^M} \sum_{\tv \in \F_q^s} \omega_p^{\Tr(\langle \xv, \yv \rangle)} \ket{\xv_1, \ldots, \xv_{i^*-1}, \tv, \xv_{i^*+1}, \ldots, \xv_m} \\
        &= q^{-M/2-s} \sum_{\xv \in \F_q^M} \sum_{\tv \in \F_q^s} \omega_p^{\Tr(\langle (\xv_1, \ldots, \xv_{i^*-1}, \tv, \xv_{i^*+1}, \ldots, \xv_m), \yv \rangle)} \ket{\xv} \text{ (swapping $\tv, \xv_{i^*}$)} \\
        &= q^{-M/2-s} \sum_{\xv \in \F_q^M} \left(\prod_{i \neq i^*} \omega_p^{\Tr(\langle \xv_i, \yv_i \rangle)} \cdot \left( \sum_{\tv \in \F_q^s} \omega_p^{\Tr(\langle \tv, \yv_{i^*}\rangle )}\right)\right) \ket{\xv}.
    \end{align*}
    If $\yv_{i^*} \neq 0^s$ then orthogonality of additive characters makes the sum over $\tv$ vanish.
    Otherwise, this vector is:
    \begin{align*}
        q^{-M/2} \sum_{\xv \in \F_q^M} \left(\prod_{i \neq i^*} \omega_p^{\Tr(\langle \xv_i, \yv_i \rangle)}\right) \ket{\xv} &= q^{-M/2} \sum_{\xv \in \F_q^M} \omega_p^{\Tr(\langle \xv, \yv \rangle)} \ket{\xv} \\
        &= \QFT\ket{\yv}.
    \end{align*}
    The conclusion follows.
\end{proof}

\begin{definition}[Expected Fourier weight]
    Let $\ket{\psi}$ be any normalized state on $\F_q^M$.
    The expected Fourier weight $\EFW_s(\ket{\psi})$ is defined as the expected outcome of the following experiment: measure $\QFT\ket{\psi}$ in the standard basis to obtain $\xv\in\F_q^M$ and output $\wt_s(\xv)/m$.
    Thus this quantity is always in $[0, 1]$.
    We abbreviate $\EFW_1$ as $\EFW$.
\end{definition}

For every set state $\ket{A}$, the displayed Fourier convention gives
$\langle\yv|\QFT^\dag|A\rangle=\overline{\langle\yv|\QFT|A\rangle}$, and therefore
$|\langle\yv|\QFT|A\rangle|^2=|\langle\yv|\QFT^\dag|A\rangle|^2$ for every $\yv\in\F_q^M$.
Consequently, all Fourier-weight probabilities defined using $\QFT$ are unchanged if $\QFT^\dag$ is used instead.

\begin{claim}[Monotonicity in $s$]\label{clm:monotonicity}
    For any $s \geq 1$, we have $\EFW_1(\ket{A}) \leq \EFW_s(\ket{A})$.
    (Here, $\EFW_1$ is defined as the expected value of $\frac{1}{M} \wt(\xv)$.)
\end{claim}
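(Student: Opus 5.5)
The plan is a pointwise comparison of the two weight functions, followed by taking expectations under the same measurement distribution. Both $\EFW_1(\ket{A})$ and $\EFW_s(\ket{A})$ are expectations over the same outcome $\xv$ obtained by measuring $\QFT\ket{A}$ in the standard basis. The first averages $\wt(\xv)/M$; the second averages $\wt_s(\xv)/m$. It therefore suffices to show, for every $\xv\in\F_q^M$,
\[
    \frac{\wt(\xv)}{M}\leq\frac{\wt_s(\xv)}{m},
\]
and then take expectations. This inequality is deterministic and does not use any property of $A$, so the claim in fact holds for every normalized state $\ket{\psi}$.

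To prove the pointwise inequality, I would decompose the Hamming weight over blocks as $\wt(\xv)=\sum_{i\in[m]}\wt(\xv_i)$. For each block, $\wt(\xv_i)\leq s\cdot\one[\xv_i\neq 0^s]$, since a zero block contributes nothing and a nonzero block has at most $s$ nonzero symbols. Summing over $i$ gives $\wt(\xv)\leq s\,\wt_s(\xv)$. Dividing by $M=sm$ then yields the displayed inequality.

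No step here is a real obstacle. The only point needing care is confirming that the two quantities share the same underlying measurement distribution: the Fourier convention is fixed once by Equation~\eqref{eq:fourier-convention}, and by the remark preceding the claim, using $\QFT$ versus $\QFT^\dag$ does not matter. Alternatively, one could route the argument through Lemma~\ref{lemma:perturbdiag} and interpret both quantities as exit probabilities, but the pointwise argument is simpler.
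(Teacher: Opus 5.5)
Your proposal is correct and is the same argument as the paper's proof. Both reduce the claim to the pointwise inequality $\wt(\xv)\leq s\cdot\wt_s(\xv)$, which holds because only the $\wt_s(\xv)$ nonzero blocks can contain nonzero symbols and each contributes at most $s$ of them; you are also right that nothing about $A$ is used, so it holds for any normalized state.
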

\begin{proof}
    It suffices to check for every string $\xv \in \F_q^M$ that $\frac{1}{M} \wt(\xv) \leq \frac{1}{m} \wt_s(\xv) \Leftrightarrow \wt(\xv) \leq s \cdot \wt_s(\xv)$.
    This is in turn apparent by definition; $\xv$ has only $\wt_s(\xv)$ blocks where it can have nonzero entries, and each of these blocks contributes at most $s$ nonzero symbols in $\F_q$.
\end{proof}

\begin{corollary}[Fourier weight as one-step exit probability]
\label{cor:fourier-exit}
Let \(A\subseteq \F_q^M\) be nonempty, and define
\[
    |A\rangle=\frac1{\sqrt{|A|}}\sum_{x\in A}|x\rangle.
\]
Let $X$ be a uniform random variable over $A$, and define $X'$ to be the result of taking one step in the coordinate replacement walk from $X$.
(In more detail: choose a coordinate $i \in [m]$ uniformly at random, and replace $X_i$ by an independent uniform element of $\F_q^s$.)
Then:
\[
    \EFW_s(\ket{A})
    =
    \Pr[X'\notin A].
\]
\end{corollary}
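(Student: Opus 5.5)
The plan is to express $\Pr[X'\in A]$ as a quadratic form of the Hermitian operator $\calK$ in the state $\ket A$, and then evaluate that form in the Fourier basis using Lemma~\ref{lemma:perturbdiag}.

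\emph{Step 1: the walk as a quadratic form.} By Definition~\ref{def:coordinatereplacement}, the matrix entry $\bra{\yv}\calK\ket{\xv}$ is exactly the one-step transition probability from $\xv$ to $\yv$. Since $X$ is uniform on $A$,
\[
    \Pr[X'\in A]=\frac1{|A|}\sum_{\xv,\yv\in A}\bra{\yv}\calK\ket{\xv}=\braket{A|\calK|A}.
\]

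\emph{Step 2: expand in the Fourier basis.} Because $\QFT$ is unitary, the vectors $\{\QFT\ket{\yv}:\yv\in\F_q^M\}$ form an orthonormal basis. We write
\[
    \ket A=\sum_{\yv}\beta_{\yv}\,\QFT\ket{\yv},\qquad \beta_{\yv}=\bra{\yv}\QFT^\dag\ket A .
\]
By Lemma~\ref{lemma:perturbdiag}, each basis vector $\QFT\ket{\yv}$ is an eigenvector of $\calK$ with eigenvalue $1-\wt_s(\yv)/m$. Orthonormality therefore gives
\[
    \braket{A|\calK|A}=\sum_{\yv}|\beta_{\yv}|^2\left(1-\frac{\wt_s(\yv)}{m}\right).
\]

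\emph{Step 3: identify the Fourier weight.} The weights $|\beta_{\yv}|^2=|\bra{\yv}\QFT^\dag\ket A|^2$ are the outcome probabilities from measuring $\QFT^\dag\ket A$. The remark preceding Claim~\ref{clm:monotonicity} shows these equal $|\bra{\yv}\QFT\ket A|^2$, which are the probabilities used in the definition of $\EFW_s$. Since these probabilities sum to $1$, the right-hand side equals $1-\EFW_s(\ket A)$. Hence
\[
    \Pr[X'\notin A]=1-\braket{A|\calK|A}=\EFW_s(\ket A).
\]

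There is no serious obstacle here. The only point needing care is the $\QFT$ versus $\QFT^\dag$ convention in Step 3, and that is already settled by the earlier remark, or alternatively by the $\QFT^\dag$ version of Lemma~\ref{lemma:perturbdiag}.
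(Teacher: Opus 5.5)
Your proposal is correct and follows essentially the same route as the paper: write $\Pr[X'\in A]=\braket{A|\calK|A}$, diagonalize $\calK$ with Lemma~\ref{lemma:perturbdiag}, and read off $1-\EFW_s(\ket A)$. The only difference is that the paper expands $\ket\psi=\sum_{\yv}\alpha_{\yv}\QFT^\dag\ket{\yv}$. The coefficients are then directly the measurement amplitudes of $\QFT\ket\psi$, so the identity $\EFW_s(\ket\psi)=1-\braket{\psi|\calK|\psi}$ holds for every normalized state, and the conjugation remark you invoke for set states is not needed.
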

\begin{proof}
    By the transition-matrix interpretation of $\calK$,
    \[
        \braket{A|\calK|A}
        =\frac1{|A|}\sum_{\xv,\xv'\in A}
          \bra{\xv'}\calK\ket{\xv}
        =\Pr[X'\in A].
    \]
    It therefore suffices to show for any normalized state $\ket{\psi}$ that
    $\EFW_s(\ket{\psi})=1-\braket{\psi|\calK|\psi}$.
    Let us write $\ket{\psi} = \sum_{\yv} \alpha_{\yv} \QFT^\dag \ket{\yv}$ in the Fourier basis.
    Then we have:
    \begin{align*}
        1 - \braket{\psi|\calK|\psi} &= 1 - \sum_{\yv} |\alpha_\yv|^2 \left(1 - \frac{\wt_s(\yv)}{m}\right) \text{ (Lemma~\ref{lemma:perturbdiag})} \\
        &= 1 - \sum_{\yv} |\alpha_\yv|^2 + \frac{1}{m} \sum_{\yv} |\alpha_\yv|^2 \wt_s(\yv) \\
        &= \frac{1}{m} \sum_{\yv} |\alpha_\yv|^2 \wt_s(\yv) \\
        &= \EFW_s(\ket{\psi}),
    \end{align*}
    as claimed.
\end{proof}

\begin{corollary}[Mass within a decoding radius]\label{cor:fourier-mass}
Let $A\subseteq\F_q^M$ be nonempty and let $\zeta\in(0,1]$.
Measuring $\QFT^\dag\ket A$ gives a vector $\ev$ satisfying
\[
    \Pr[\wt_s(\ev)\leq\zeta m]
    \geq 1-\frac{\EFW_s(\ket A)}{\zeta}.
\]
In particular, a constant gap $\EFW_s(\ket A)\leq\zeta-\delta$ gives mass at least $\delta/\zeta$ within that radius.
\end{corollary}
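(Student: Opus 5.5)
The plan is to apply Markov's inequality to the relative block weight of the measured error, after identifying its expectation with $\EFW_s(\ket A)$.

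First, the choice between $\QFT$ and $\QFT^\dag$ does not matter. By the remark preceding Claim~\ref{clm:monotonicity}, for a set state $\ket A$ we have $|\langle\yv|\QFT^\dag|A\rangle|^2=|\langle\yv|\QFT|A\rangle|^2$ for every $\yv$. Hence measuring $\QFT^\dag\ket A$ yields $\ev$ with exactly the same distribution as measuring $\QFT\ket A$. By the definition of expected Fourier weight, the nonnegative random variable $W=\wt_s(\ev)/m\in[0,1]$ therefore satisfies
\[
    \EE[W]=\EFW_s(\ket A).
\]

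Next, apply Markov's inequality to $W$ at threshold $\zeta>0$. This gives $\Pr[W>\zeta]\le\Pr[W\ge\zeta]\le\EE[W]/\zeta$. Taking complements,
\[
    \Pr[\wt_s(\ev)\le\zeta m]=\Pr[W\le\zeta]\ge 1-\frac{\EFW_s(\ket A)}{\zeta}.
\]

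For the final sentence, substitute the hypothesis $\EFW_s(\ket A)\le\zeta-\delta$. The bound becomes $1-(\zeta-\delta)/\zeta=\delta/\zeta$.

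There is no real obstacle here. The only point needing care is the first step, namely that the distribution of the measurement outcome is the same under $\QFT$ and $\QFT^\dag$, and this was already recorded in the paper before Claim~\ref{clm:monotonicity}.
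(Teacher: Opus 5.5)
Your proposal is correct and matches the paper's proof. The paper likewise notes that the Fourier-weight distribution of a set state is unchanged when $\QFT$ is replaced by $\QFT^\dag$, and then applies Markov's inequality to the nonnegative random variable $\wt_s(\ev)/m$.
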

\begin{proof}
The Fourier-weight distribution is unchanged by replacing $\QFT$ with $\QFT^\dag$.
Apply Markov's inequality to the nonnegative random variable $\wt_s(\ev)/m$.
\end{proof}

\subsection{Preimage Sets and States}

Histogram constraints will first be specified on an alphabet $[n]$.
The following construction pulls them back to the field alphabet on which the code is defined.
When each block map is balanced, both the uniform distribution and the coordinate replacement experiment pass to $[n]^m$ without changing their laws.

\begin{definition}\label{def:foldedsets}
    Let $H_1, \ldots, H_{m}: \F_q^s \to [n]$ be arbitrary functions and let $A \subseteq [n]^{m}$.
    We define the \emph{blockwise preimage} of $A$ under $H_1, \ldots, H_m$ to be the set $A_{H_1, \ldots, H_{m}} \subseteq \F_q^M$ given by:
    $$A_{H_1, \ldots, H_{m}} := \{\xv \in \F_q^M: (H_1(\xv_1), H_2(\xv_2), \ldots, H_{m}(\xv_{m})) \in A\}.$$
    When this set is nonempty, we define $\ket{A_{H_1, \ldots, H_{m}}}$ to be its uniform superposition.
\end{definition}

\begin{claim}[Cardinality for balanced functions]
    If every $H_i$ is perfectly balanced, then
    $$|A_{H_1, \ldots, H_{m}}| = |A| \cdot (q^s/n)^{m}.$$
\end{claim}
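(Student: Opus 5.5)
The plan is to write $A_{H_1,\ldots,H_m}$ as a disjoint union of product fibers and count each fiber. For each $\yv=(y_1,\ldots,y_m)\in[n]^m$, set
\[
    F_\yv := H_1^{-1}(y_1)\times H_2^{-1}(y_2)\times\cdots\times H_m^{-1}(y_m)\subseteq(\F_q^s)^m=\F_q^M.
\]
By Definition~\ref{def:foldedsets}, a vector $\xv$ lies in $A_{H_1,\ldots,H_m}$ exactly when its image $(H_1(\xv_1),\ldots,H_m(\xv_m))$ lies in $A$. So
\[
    A_{H_1,\ldots,H_m}=\bigsqcup_{\yv\in A}F_\yv,
\]
and the union is disjoint because each $\xv$ has a single image tuple.

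Next we count one fiber. Each $H_i$ is perfectly balanced from a domain of size $q^s$ onto $[n]$, so $|H_i^{-1}(y)|=q^s/n$ for every $y\in[n]$; this also uses the implicit fact that $n$ divides $q^s$. Since $F_\yv$ is a Cartesian product, $|F_\yv|=\prod_i|H_i^{-1}(y_i)|=(q^s/n)^m$, and this value does not depend on $\yv$.

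Summing over the $|A|$ disjoint fibers gives $|A_{H_1,\ldots,H_m}|=|A|\cdot(q^s/n)^m$. The argument has no real obstacle; the only points needing care are that the fibers are disjoint and that balance gives fiber sizes independent of $y$.
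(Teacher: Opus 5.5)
Your proof is correct and is the same argument as the paper's: the paper simply notes that each $\xv'\in A$ has exactly $(q^s/n)^m$ preimages under $\xv\mapsto(H_1(\xv_1),\ldots,H_m(\xv_m))$. Your decomposition into disjoint product fibers $F_\yv$, and your count $|F_\yv|=(q^s/n)^m$ from perfect balance, spell out that one-line count explicitly.
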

\begin{proof}
    Every vector $\xv'\in A$ has exactly $(q^s/n)^{m}$ preimages under $\xv \mapsto (H_1(\xv_1), \ldots, H_{m}(\xv_{m}))$.
\end{proof}

\begin{lemma}[Exit probabilities under balanced block maps]\label{lem:foldpreservesexit}
    Let $H_1,\ldots,H_m:\F_q^s\to[n]$ be perfectly balanced functions, and let $A\subseteq[n]^m$ be nonempty.
    Sample $X$ uniformly from $A_{H_1,\ldots,H_m}$ and obtain $X'$ by replacing one uniformly chosen block by an independent uniform element of $\F_q^s$.
    Similarly, sample $Y$ uniformly from $A$ and obtain $Y'$ by replacing one uniformly chosen coordinate by an independent uniform symbol in $[n]$.
    Then
    \[
        \Pr[X'\notin A_{H_1,\ldots,H_m}]=\Pr[Y'\notin A].
    \]
\end{lemma}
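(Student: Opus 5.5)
The plan is a coupling: we push the whole experiment on $\F_q^M$ forward along the map $\Phi(\xv)=(H_1(\xv_1),\ldots,H_m(\xv_m))$ and show that it becomes exactly the experiment on $[n]^m$. Since membership $X'\in A_{H_1,\ldots,H_m}$ is by definition equivalent to $\Phi(X')\in A$, it suffices to show that the pair $(\Phi(X),\Phi(X'))$ has the same joint law as $(Y,Y')$.

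\textbf{Step 1: the law of $\Phi(X)$.} Every element of $A$ has exactly $(q^s/n)^m$ preimages under $\Phi$; this is the cardinality claim above. Hence if $X$ is uniform on $A_{H_1,\ldots,H_m}=\Phi^{-1}(A)$, then $\Phi(X)$ is uniform on $A$, just like $Y$. We can go further and condition on $\Phi(X)=\yv$. The blocks $X_1,\ldots,X_m$ are then independent, with $X_j$ uniform on $H_j^{-1}(y_j)$. This holds because the fiber $\Phi^{-1}(\yv)$ is the product set $\prod_j H_j^{-1}(y_j)$.

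\textbf{Step 2: one replacement step.} Choose the block index $i$ uniformly and independently of $X$, and let $\tv$ be a fresh uniform element of $\F_q^s$. Then $\Phi(X')$ agrees with $\Phi(X)$ outside coordinate $i$, and its $i$th coordinate is $H_i(\tv)$. Because $H_i$ is perfectly balanced, $H_i(\tv)$ is uniform on $[n]$. It is also independent of $(X,i)$, since $\tv$ is independent of both. So conditional on $\Phi(X)=\yv$ and on $i$, the image $\Phi(X')$ is $\yv$ with coordinate $i$ replaced by an independent uniform symbol of $[n]$. This is precisely the transition from $Y$ to $Y'$.

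Combining the two steps, $(\Phi(X),\Phi(X'))$ and $(Y,Y')$ have the same joint law. It follows that $\Pr[X'\notin A_{H_1,\ldots,H_m}]=\Pr[\Phi(X')\notin A]=\Pr[Y'\notin A]$. Step 1 in fact shows that $\Phi(X')$ depends on $X$ only through $\Phi(X)$.

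There is no serious obstacle. The only point needing care is the use of balance in two places: once so that $\Phi(X)$ is uniform on $A$ (equal fiber sizes), and once so that the pushed-forward fresh block $H_i(\tv)$ is uniform on $[n]$.
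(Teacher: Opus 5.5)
Your proposal is correct and follows essentially the same route as the paper: push the walk forward along $\Phi$, use balance once to make $\Phi(X)$ uniform on $A$ and once to make $H_i(\tv)$ a fresh uniform symbol in $[n]$, and conclude that $\Phi(X')$ has the law of $Y'$. The conditional fiber-product description in Step 1 is true but not needed, and your closing remark that $\Phi(X')$ depends on $X$ only through $\Phi(X)$ actually follows from Step 2 rather than Step 1.
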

\begin{proof}
    Every word in $A$ has exactly $(q^s/n)^m$ preimages, so $(H_1(X_1),\ldots,H_m(X_m))$ is uniform on $A$.
    Conditional on the chosen block, its fresh uniform value maps under $H_i$ to an independent uniform symbol in $[n]$.
    Thus the image of $X'$ has the law of $Y'$, and membership in the preimage set is exactly membership of this image in $A$.
\end{proof}

\section{Histograms and Load States}

\subsection{Definitions}

\begin{definition}[Histogram]\label{def:histogram}
    For $\xv \in [n]^m$, we define its \emph{histogram} $\cv(\xv) \in \ZZ_{\geq 0}^n$ by:
    \begin{equation}
        c_a(\xv) = |\{i: x_i = a\}|.
    \end{equation}
\end{definition}

\begin{definition}[Load sets and states]\label{def:loadstates}
    Let $B_1,\ldots,B_n\subseteq\{0,1,\ldots,m\}$ be given by their indicator tables or by membership circuits of size $\poly(m,n)$.
    In the circuit representation, the circuits are public input or are uniformly constructible in $\poly(m,n)$ time, and decide $k\in B_a$ for every $a\in[n]$ and $0\leq k\leq m$.
    Define the set:
    $$A = \left\{\xv \in [n]^m: c_a(\xv) \in B_a \forall a \in [n]\right\}.$$
    If $A \neq \emptyset$, we may also define the quantum state:
    \begin{equation}
        \ket{A} := \frac{1}{\sqrt{|A|}} \sum_{\xv \in A} \ket{\xv}.
    \end{equation}
    Given functions $H_1, \ldots, H_m: \F_q^s \to [n]$, we may also define the preimage set $A_{H_1, \ldots, H_m}$, and its state when it is nonempty, as in Definition~\ref{def:foldedsets}.
\end{definition}

Evaluating the membership circuits on all $n(m+1)$ possible pairs $(a,k)$ constructs the indicator tables in polynomial time; classical evaluations can also be made reversible.
Both state preparation and verification use only these finite tables.
When a load set is described as an infinite subset of $\ZZ_{\geq0}$, only its intersection with $\{0,\ldots,m\}$ is needed.

\begin{lemma}\label{lem:histogram-counting}
    Let $\cv \in \ZZ_{\geq 0}^n$ be a vector with entry sum $m$.
    Then the number of strings $\xv \in [n]^m$ such that $\cv(\xv) = \cv$ is exactly:
    \[
        \binom{m}{c_1, \ldots, c_n} = \frac{m!}{\prod_{a\in [n]} c_a!}.
    \]
    Here, we are using multinomial coefficient notation.
\end{lemma}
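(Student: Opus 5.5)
The plan is a direct counting argument: a string with histogram $\cv$ is the same thing as an ordered partition of the positions $[m]$ into labeled classes of prescribed sizes. Concretely, a string $\xv\in[n]^m$ with $\cv(\xv)=\cv$ is determined by, and determines, the tuple of disjoint sets $(P_1,\ldots,P_n)$ with $P_a=\{i:x_i=a\}$. These sets partition $[m]$ and satisfy $|P_a|=c_a$. So it suffices to count ordered tuples of disjoint subsets of $[m]$ with sizes $c_1,\ldots,c_n$ whose union is $[m]$; the union condition is automatic because the sizes sum to $m$.

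First I would check that this correspondence is a bijection. Given $\xv$, the sets $P_a$ are disjoint, cover $[m]$, and have sizes $c_a(\xv)=c_a$. Conversely, given such a tuple, set $x_i=a$ for the unique $a$ with $i\in P_a$; this string has histogram $\cv$. The two maps are clearly inverse to each other.

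Then I would count the tuples sequentially. There are $\binom{m}{c_1}$ choices for $P_1$, then $\binom{m-c_1}{c_2}$ choices for $P_2$, and so on. In general, $P_a$ is chosen among the remaining $m-c_1-\cdots-c_{a-1}$ positions. The product telescopes:
\[
\prod_{a=1}^n\binom{m-\sum_{b<a}c_b}{c_a}=\frac{m!}{\prod_a c_a!},
\]
using $m-\sum_{b\le n}c_b=0$ at the end.

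An equivalent route is to note that the $m!$ permutations of the fixed word $1^{c_1}2^{c_2}\cdots n^{c_n}$ cover each arrangement exactly $\prod_a c_a!$ times. I would mention this only as a cross-check. There is no real obstacle in this lemma; the only care needed is in handling the telescoping product and zero counts, where $0!=1$ and $\binom{k}{0}=1$, so symbols with $c_a=0$ contribute trivially.
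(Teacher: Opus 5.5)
Your proof is correct and takes the same approach as the paper. The paper identifies these strings with orderings of $c_1$ copies of $1$, \ldots, $c_n$ copies of $n$, and cites the resulting multinomial count as well known. Your bijection with ordered partitions $(P_1,\ldots,P_n)$ and the telescoping product of binomial coefficients simply supply the standard derivation of that cited fact.
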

\begin{proof}
    This quantity is exactly the number of ways of ordering $c_1$ 1's, $c_2$ 2's, \ldots, $c_{n}$ $n$'s.
    It is well-known that this is exactly the multinomial coefficient $\binom{m}{c_1, \ldots, c_n}$.
\end{proof}

\begin{corollary}
    We have:
    \[
    |A| = \sum_{\substack{\cv \in \ZZ_{\geq 0}^n \\ c_a \in B_a \forall a \in [n] \\ \sum_a c_a = m}} \binom{m}{\cv}.
    \]
\end{corollary}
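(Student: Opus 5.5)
The plan is to partition $A$ according to the histogram of its elements and then count each part with Lemma~\ref{lem:histogram-counting}. First, define the map $\xv\mapsto\cv(\xv)$ from $[n]^m$ to $\ZZ_{\geq0}^n$. Every $\xv\in[n]^m$ has exactly $m$ coordinates, and each coordinate takes exactly one value $a\in[n]$. Hence $\sum_a c_a(\xv)=m$, so the image lies in the set of nonnegative integer vectors with entry sum $m$.

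Next, observe that membership in $A$ depends only on $\cv(\xv)$. By Definition~\ref{def:loadstates}, $\xv\in A$ if and only if $c_a(\xv)\in B_a$ for every $a$. Therefore $A$ is the disjoint union, over all $\cv\in\ZZ_{\geq0}^n$ with $\sum_a c_a=m$ and $c_a\in B_a$ for every $a\in[n]$, of the fibers $\{\xv\in[n]^m:\cv(\xv)=\cv\}$. These fibers are pairwise disjoint because the histogram is a function of $\xv$.

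Finally, Lemma~\ref{lem:histogram-counting} gives the size of each fiber as $\binom{m}{\cv}$. Summing over the admissible $\cv$ then yields the stated formula. No step is a real obstacle. The only points needing care are checking that the index set in the sum matches exactly the histograms that can occur (entry sum $m$, each $c_a\in B_a$), and noting that the sum is empty, giving $|A|=0$, precisely when $A=\emptyset$.
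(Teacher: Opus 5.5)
Your proposal is correct and follows essentially the same route as the paper, which states this corollary without proof as an immediate consequence of Lemma~\ref{lem:histogram-counting}. Partitioning $A$ into histogram fibers, which is legitimate because membership in $A$ depends only on $\cv(\xv)$, and counting each fiber with that lemma is exactly the implicit argument.
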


\subsection{Efficient Preparation}

The goal of this section is to prove the following theorem:
\begin{theorem}\label{thm:Aprep}
    For nonempty $A$ given as in Definition~\ref{def:loadstates}, the state $\ket{A}$ can be prepared to trace-distance error $\epsilon$ with $\poly(m,n,\log(1/\epsilon))$ gates, including the construction of the indicator tables.
\end{theorem}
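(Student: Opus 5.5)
The plan is to build $\ket{A}$ one register at a time with Grover--Rudolph-style conditional rotations, using dynamic programming to count the legal completions of every prefix state. First evaluate the membership circuits on all $n(m+1)$ pairs $(a,k)$ to get the indicator tables. Then generate $\xv$ symbol by symbol in the order $x_1,\ldots,x_m$.

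The difficulty is that the number of legal completions of a prefix depends on its full partial histogram, which has exponentially many possibilities. To avoid this, we do not sample $\xv$ directly. We factor the uniform distribution on $A$ as a distribution on histograms $\cv$, followed by a uniform arrangement of a fixed histogram.

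\emph{Stage 1: the histogram.} By the corollary to Lemma~\ref{lem:histogram-counting}, $\cv$ must be prepared with amplitude $\sqrt{\binom{m}{\cv}/|A|}$. Write $\binom{m}{\cv}=m!\prod_a (1/c_a!)$. Define
\[
    F_j(t)=\sum_{c_j\in B_j,\ldots,c_n\in B_n,\ \sum_{a\ge j} c_a=t}\ \prod_{a\ge j}\frac{1}{c_a!}.
\]
These satisfy the recursion $F_j(t)=\sum_{k\in B_j,\,k\le t}F_{j+1}(t-k)/k!$ with $F_{n+1}(t)=\one[t=0]$. There are only $O(nm)$ entries, each a positive rational with $\poly(m,n)$ bits, so exact rational arithmetic computes them in polynomial time. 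Nonemptiness of $A$ is exactly $F_1(m)>0$.

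Now prepare $\ket{c_1}\ket{c_2}\cdots$ sequentially. Conditioned on the remaining total $t$ (computed reversibly from earlier registers), rotate register $j$ into $\sum_k \sqrt{F_{j+1}(t-k)/(k!\,F_j(t))}\,\ket{k}$. Telescoping gives the correct amplitudes. Each controlled state preparation over $m+1$ values can be synthesized to error $\epsilon/(2n)$ using $\poly(m,\log(1/\epsilon))$ gates. Classical evaluation of the needed ratios reversibly to $O(\log(nm/\epsilon))$ bits of precision suffices, because the required rotation angles only depend on these ratios.

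\emph{Stage 2: the arrangement.} Given $\ket{\cv}$, prepare the uniform superposition over strings with histogram $\cv$, then uncompute $\cv$; uncomputation is possible because $\cv(\xv)$ is a function of $\xv$. Generate $x_1,\ldots,x_m$ in order while tracking the remaining counts $\cv-\cv(x_{<i})$. The next symbol $a$ should have amplitude $\sqrt{c'_a/t}$, where $\cv'$ is the remaining histogram and $t=m-i+1$. Sampling a uniform remaining multiset element gives exactly this law, and the product of these conditional probabilities is $1/\binom{m}{\cv}$. Each step is a controlled preparation over $n$ values with polynomial-size arithmetic, to error $\epsilon/(2m)$. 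After all $m$ symbols are placed, the remaining-count register is $0$, and the histogram register can be erased by recomputing $\cv(\xv)$. Errors add by the triangle inequality, which yields total trace-distance error $\epsilon$.

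The main obstacle is the precision bookkeeping for the controlled rotations, since the $F_j$ can be tiny. Working with exact rationals and computing only the needed square-root ratios to $O(\log(mn/\epsilon))$ bits handles this. Everything else is standard Grover--Rudolph.
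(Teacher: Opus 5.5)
Your proposal is correct and follows essentially the same route as the paper: Grover--Rudolph preparation of the histogram superposition using a dynamic-programming table of completion masses (the paper's $\mathsf{Table}_{k,S}$ equals your $(m-S)!\,F_{k+1}(m-S)$), then generalized Dicke synthesis with conditional probabilities $(c_b-c_b^*)/(m-k)$, then clearing the histogram register by recomputing $\cv(\xv)$. One small detail to add: on branches where $F_j(t)=0$ the controlled rotation may be defined arbitrarily, since those branches carry zero amplitude.
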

\begin{proof}
    Apply Lemma~\ref{lemma:histsup}, followed coherently by Lemma~\ref{lemma:dicke}, to obtain
    \[
        \frac{1}{\sqrt{|A|}}
        \sum_{\cv}\sqrt{\binom m\cv}\ket{\cv}\ket{\psi_{\cv}}
        =
        \frac{1}{\sqrt{|A|}}\sum_{\xv\in A}\ket{\cv(\xv)}\ket{\xv}.
    \]
    Compute $\cv(\xv)$ reversibly from $\xv$ and subtract it from the first register.
    This clears the histogram register and leaves $\ket{A}$.
    Taking the error of each preparation step to be at most $\epsilon/2$ gives the claimed bound.
\end{proof}

\begin{corollary}\label{cor:prepfolded}
    Assume $n\geq 2$.
    For any perfectly balanced functions $H_1, \ldots, H_{m}: \F_q^s \to [n]$, the preimage state $\ket{A_{H_1, \ldots, H_{m}}}$ (as defined in Definition~\ref{def:foldedsets}) can be prepared up to error $\epsilon$ with $\poly(m, n, s, \log(1/\epsilon), \log q)$ gates and $O(m \sqrt{n})$ oracle queries to the $H_i$'s.
\end{corollary}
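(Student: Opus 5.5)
The plan is to compose Theorem~\ref{thm:Aprep} with the decompression lemma (Corollary~\ref{cor:decompress}). First, use Theorem~\ref{thm:Aprep} to prepare $\ket{A}$ on $[n]^m$ to trace-distance error $\epsilon/2$. This costs $\poly(m,n,\log(1/\epsilon))$ gates and makes no oracle queries. Write
\[
    \ket{A}=\sum_{y_1,\ldots,y_m\in[n]}\alpha_{y_1,\ldots,y_m}\ket{y_1,\ldots,y_m},
\]
where $\alpha_{\yv}=|A|^{-1/2}\one[\yv\in A]$. Each register holds a symbol of $[n]$, encoded with $O(\log n)$ qubits.

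Next, identify $\F_q^s$ with $[q^s]=[nN]$ through a fixed efficiently computable bijection, where $N=q^s/n$. This $N$ is an integer because each $H_i$ is perfectly balanced, so $n$ divides $q^s$. Apply Corollary~\ref{cor:decompress} with $M=n\geq 2$ (this is where the hypothesis $n\geq 2$ is used), $\ell=m$, and the balanced functions $H_1,\ldots,H_m$ composed with the bijection. Its output is
\[
    N^{-m/2}\sum_{\xv_1,\ldots,\xv_m\in\F_q^s}\alpha_{H_1(\xv_1),\ldots,H_m(\xv_m)}\ket{\xv_1,\ldots,\xv_m}.
\]
The nonzero amplitudes are exactly those with $\xv\in A_{H_1,\ldots,H_m}$, and each equals $|A|^{-1/2}N^{-m/2}$. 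By the cardinality claim, $|A_{H_1,\ldots,H_m}|=|A|N^m$, so this state is exactly $\ket{A_{H_1,\ldots,H_m}}$.

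Finally, account for errors and costs. Run the decompression to Euclidean error $\epsilon/2$, which bounds its trace-distance contribution by $\epsilon/2$. Since the decompression is a fixed unitary, the $\epsilon/2$ preparation error is not amplified by it, and the total error is at most $\epsilon$.

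The decompression uses $O(m\sqrt n)$ queries. Its gate count is $O(m\sqrt n\log(nN)+m\log(m/\epsilon))$, and $\log(nN)=s\log q$, so this is polynomial in $m$, $n$, $s$, $\log q$, and $\log(1/\epsilon)$.

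The main point needing care is that Corollary~\ref{cor:decompress} acts register-by-register on an arbitrary entangled input, so it applies to the non-product state $\ket{A}$. This holds because Lemma~\ref{lemma:ampamp} explicitly permits auxiliary entangled registers. The other bookkeeping detail is that the representation of $\F_q^s$ as $[q^s]$ costs $\poly(s,\log q)$ gates per use.
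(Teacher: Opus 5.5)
Your proposal is correct and matches the paper's own proof: prepare $\ket{A}$ to error $\epsilon/2$ with Theorem~\ref{thm:Aprep}, then apply Corollary~\ref{cor:decompress} with $M=n$, $N=q^s/n$, $\ell=m$ and error $\epsilon/2$. You also spell out several details the paper leaves implicit: why the output is exactly $\ket{A_{H_1,\ldots,H_m}}$, the conversion from Euclidean error to trace distance, the contractivity argument for the preparation error, and where $n\geq 2$ is needed.
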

\begin{proof}
    Prepare $\ket{A}$ to error $\epsilon/2$ using Theorem~\ref{thm:Aprep}, and then apply Corollary~\ref{cor:decompress} with error $\epsilon/2$, taking $M=n$, $N=q^s/n$, and $\ell=m$.
\end{proof}

\begin{lemma}[Histogram synthesis]\label{lemma:histsup}
    The following state can be prepared to within error $\epsilon$ with $\poly(m, n, \log(1/\epsilon))$ gates:
    \begin{align*}
        \ket{\psi_{\mathsf{hist}}} &= \frac{1}{\sqrt{|A|}} \sum_{\substack{\cv \in \ZZ_{\geq 0}^n \\ c_a \in B_a \forall a \in [n] \\ \sum_a c_a = m}} \sqrt{\binom{m}{\cv}} \ket{\cv}.
    \end{align*}
\end{lemma}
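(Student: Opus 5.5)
The plan is to prepare $\ket{\psi_{\mathsf{hist}}}$ one symbol at a time by a Grover--Rudolph style procedure driven by a dynamic program over partial histograms.

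\textbf{The counting table.} For $a\in\{0,1,\ldots,n\}$ and $t\in\{0,\ldots,m\}$, define
\[
    N_a(t)=\sum_{\substack{c_{a+1}\in B_{a+1},\ldots,c_n\in B_n\\ c_{a+1}+\cdots+c_n=t}}\frac{1}{\prod_{b>a}c_b!},
\]
with $N_n(0)=1$ and $N_n(t)=0$ for $t>0$. This table obeys the recursion
\[
    N_{a-1}(t)=\sum_{k\in B_a,\,k\leq t}\frac{N_a(t-k)}{k!}.
\]
It therefore has $O(nm)$ entries, each computable from $O(m)$ earlier ones. Since $|A|=m!\,N_0(m)$ and $A\neq\emptyset$, the target amplitude of $\ket{\cv}$ is $\sqrt{\prod_a(1/c_a!)/N_0(m)}$.

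\textbf{Sequential synthesis.} Keep a register holding the remaining budget $t$, initialized to $m$. For $a=1,\ldots,n$ in turn, conditioned on $t$, rotate a fresh register into
\[
    \sum_{k\in B_a,\,k\le t}\sqrt{\frac{N_a(t-k)/k!}{N_{a-1}(t)}}\,\ket{k}.
\]
Then reversibly subtract $k$ from $t$. Multiplying these conditional amplitudes along any path telescopes to exactly $\sqrt{\prod_a(1/c_a!)/N_0(m)}$, and the final budget is $0$ with certainty, because $N_n(t)=0$ for $t>0$. The budget register therefore returns to $\ket{0}$ and can be discarded, leaving exactly the claimed state. Each conditional step is a state preparation on $m+1$ levels, controlled by $O(\log m)$ qubits. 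It can be implemented by a binary-tree sequence of controlled rotations, whose angles are computed classically on the fly or hardcoded by enumerating the $m+1$ values of $t$. This gives $\poly(m)$ gates per symbol.

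\textbf{Main obstacle: precision.} The weights $1/k!$ and the sums $N_a(t)$ span a huge dynamic range, so the rotation angles must be represented accurately. I would handle this by computing everything exactly with rational arithmetic, using $N_a(t)\cdot m!$, which is an integer of $O(m\log m + m\log n)$ bits. Each angle is then computed to $O(\log(nm/\epsilon))$ bits from these exact integers, so each conditional rotation has error $\epsilon/(2n)$. Synthesizing single-qubit rotations to that precision costs $\poly\log(nm/\epsilon)$ gates, and a telescoping (triangle-inequality) bound over the $n$ steps gives total error at most $\epsilon$. The exact integer arithmetic is polynomial in $m$ and $n$, so the overall gate count is $\poly(m,n,\log(1/\epsilon))$.
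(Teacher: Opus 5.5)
Your proposal is correct and is essentially the paper's proof: a Grover--Rudolph preparation driven by a dynamic program over the remaining budget. Your table is the paper's $\mathsf{Table}_{k,S}$ rescaled as $\mathsf{Table}_{k,S}=(m-S)!\,N_k(m-S)$, so your conditional probabilities $N_{k+1}(t-c)/(c!\,N_k(t))$ with $t=m-S$ coincide with the paper's $\binom{m-S}{c}\mathsf{Table}_{k+1,S+c}/\mathsf{Table}_{k,S}$. The only point you leave implicit, and the paper states explicitly, is that rotations conditioned on a budget with $N_{a-1}(t)=0$ may be defined arbitrarily, since once $N_0(m)>0$ those branches carry zero amplitude.
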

\begin{proof}
    We prepare the state using the Grover--Rudolph procedure~\cite{groverRudolph2002}. The required
    input is the ability to compute the total squared amplitude mass assigned
    to any prefix of the histogram $\cv$.
    For a prefix $c_1^*\in B_1,\ldots,c_k^*\in B_k$
    define $S=\sum_{a=1}^k c_a^*$.
    If \(S>m\), the prefix has zero mass. Otherwise, the total mass of all
    completions of this prefix is
    \begin{align*}
        \mathsf{Count}(c^*_1, \ldots, c^*_k) &:= \sum_{\substack{c_{k+1}, \ldots, c_n \geq 0 \\ c_a \in B_a\forall a > k\\ \sum_{a > k} c_a = m - \sum_{a \in [k]} c_a^*}} \binom{m}{\cv} \\
        &= \binom{m}{c^*_1, \ldots, c^*_k, m - \sum_{a \in [k]} c^*_a} \sum_{\substack{c_{k+1}, \ldots, c_n \geq 0 \\ c_a \in B_a\forall a > k\\ \sum_{a > k} c_a = m - \sum_{a \in [k]} c_a^*}} \binom{m-\sum_{a \in [k]} c^*_a}{c_{k+1}, \ldots, c_n} \\
        &= \binom{m}{c_1^*,\ldots,c_k^*,m-S}
        \mathsf{Table}_{k,S},
    \end{align*}
    where we define
    \[
        \mathsf{Table}_{k,S}
        :=
        \sum_{\substack{
            c_{k+1},\ldots,c_n\ge 0\\
            c_a\in B_a\ \forall a>k\\
            \sum_{a>k}c_a=m-S
        }}
        \binom{m-S}{c_{k+1},\ldots,c_n}.
    \]
    We can compute $\mathsf{Table}_{k, S}$ for $k \in [0, n]$ and $S \in [0, m]$ efficiently with dynamic programming. The base case is $\mathsf{Table}_{n, S} = 1$ if $S = m$ and 0 otherwise.
    For \(k<n\), the recurrence is
    \[
        \mathsf{Table}_{k,S}
        =
        \sum_{\substack{
            c\in B_{k+1}\\
            0\le c\le m-S
        }}
        \binom{m-S}{c}
        \mathsf{Table}_{k+1,S+c}.
    \]
    Every table entry is at most $n^m$, so it has $O(m\log n)$ bits.
    The $O(mn)$ entries can therefore be computed exactly, and reversibly computed and uncomputed, using polynomially many gates.
    Having computed this table, we prepare the histogram register
    coordinate-by-coordinate. Suppose we have already chosen
    \(c_1,\ldots,c_k\), with $S=\sum_{a=1}^k c_a.$
    The conditional probability of choosing \(c_{k+1}=c\) is
    \[
        \Pr[c_{k+1}=c\mid c_1,\ldots,c_k]
        =
        \frac{
            \mathbf 1[c\in B_{k+1}]
            \binom{m-S}{c}
            \mathsf{Table}_{k+1,S+c}
        }{
            \mathsf{Table}_{k,S}
        },
    \]
    where the numerator is interpreted as \(0\) if \(c>m-S\).
    On a zero-mass prefix, the controlled rotation may be defined arbitrarily because that branch has zero amplitude.
    Cumulative sums of these masses give the probabilities at each node of a binary preparation tree for $c_{k+1}$.
    The relevant ratios, square roots, and rotation angles can be computed to $\poly(\log(1/\epsilon),m,n)$ bits of precision by reversible arithmetic.
    There are polynomially many rotations, so approximating each to error $\epsilon/\poly(m,n)$ and uncomputing all work registers prepares the claimed state within error $\epsilon$.
\end{proof}

\begin{lemma}[Generalized Dicke state synthesis]\label{lemma:dicke}
    There exists a unitary that, for every histogram $\cv$ with entry sum $m$, maps $\ket{\cv}\ket{0}$ to $\ket{\cv}\ket{\psi_{\cv}}$, where
    \begin{align*}
        \ket{\psi_{\cv}} &= \frac{1}{\sqrt{\binom{m}{\cv}}} \sum_{\xv \in [n]^m: c_a(\xv) = c_a \forall a \in [n]} \ket{\xv},
    \end{align*}
    to within error $\epsilon$ with $\poly(m, n, \log(1/\epsilon))$ gates.
\end{lemma}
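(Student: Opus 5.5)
The plan is to build $\ket{\psi_{\cv}}$ one position at a time, sampling coherently without replacement from the multiset described by $\cv$. Keep a working copy of the histogram, the remaining counts $\cv^{(j)}$ after $j$ symbols have been emitted, with $\cv^{(0)}=\cv$ and entry sum $m-j$. At step $j+1$, prepare the next symbol register $x_{j+1}$ in
\[
    \sum_{a\in[n]}\sqrt{\frac{c^{(j)}_a}{m-j}}\,\ket{a},
\]
controlled on the remaining-count register. Then decrement $c^{(j)}_{x_{j+1}}$ in place. Doing all $m$ steps gives amplitude $\prod_j\sqrt{c^{(j)}_{x_{j+1}}/(m-j)}$ on each string $\xv$ with histogram $\cv$. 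This product equals $\sqrt{\prod_a c_a!/m!}=\binom{m}{\cv}^{-1/2}$, which is exactly $\ket{\psi_{\cv}}$. Strings with a different histogram get amplitude zero, because a symbol with zero remaining count is never chosen.

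The single-step preparation is the same Grover--Rudolph binary-tree procedure as in Lemma~\ref{lemma:histsup}. The prefix masses are now just partial sums of the $c^{(j)}_a$. Each controlled rotation angle is computed reversibly from these $O(\log m)$-bit integers to $\poly(m,n,\log(1/\epsilon))$ bits of precision. The rotations are then approximated, with error $\epsilon/\poly(m,n)$ each, using standard synthesis. There are $mn$ rotations in total, so the errors add to at most $\epsilon$.

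The main obstacle is cleanup. At the end the working register holds $\cv^{(m)}=\mathbf 0$, a fixed state, so it is unentangled and can be discarded. However, the intermediate registers cannot be kept as separate copies. The fix is to update one register in place: each step's decrement is a reversible map $(\cv^{(j)},x_{j+1})\mapsto(\cv^{(j)}-\mathbf e_{x_{j+1}},x_{j+1})$. We also need the original $\ket{\cv}$ preserved, so we copy $\cv$ into the working register first with CNOTs. All work registers used for the angles are uncomputed after each step. Correctness holds for every basis input $\ket{\cv}$, so by linearity the construction is the claimed unitary on arbitrary superpositions over $\cv$.
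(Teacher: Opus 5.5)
Your proposal is correct and matches the paper's proof. Both prepare the symbols one at a time by binary-tree (Grover--Rudolph) rotations with conditional probability $(c_b-c^*_b)/(m-k)$ for the next symbol, update a count register in place, and uncompute all work registers, with the per-step errors summing to $\epsilon$. The only differences are cosmetic: you track the remaining counts $c_a-c^*_a$, which end at $\mathbf 0$, rather than the prefix counts; and you write out the amplitude check $\prod_j c^{(j)}_{x_{j+1}}/(m-j)=\prod_a c_a!/m!$, which the paper leaves implicit.
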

\begin{proof}
    Use the same binary-tree preparation procedure.
    Given a prefix $x^*_1, \ldots, x^*_k$ with prefix counts $c^*_a \leq c_a$, the conditional probability that the next symbol is $b$ is
    \[
        \frac{c_b-c^*_b}{m-k}.
    \]
    These probabilities and their cumulative sums have polynomial bit length and can be computed reversibly from $\cv$ and the prefix counts.
    Preparing each successive symbol to error $\epsilon/m$, updating the prefix counts, and uncomputing all work registers gives total error at most $\epsilon$ and polynomial gate complexity.
\end{proof}

\subsection{Stability}

\begin{definition}
    Let $\eta > 0$ be a constant parameter.
    Fix nonempty sets $B_1, \ldots, B_R \subseteq \ZZ_{\geq 0}$ and constants $\alpha_1, \ldots, \alpha_R \in \RR_{> 0}$ summing to $1$.
    We say that these are admissible if all of the following hold:
    \begin{itemize}
        \item Given $m$, membership circuits for $B_r\cap\{0,\ldots,m\}$ are constructible in $\poly(m)$ time and decide every input $0\leq k\leq m$ in $\poly(m)$ time, uniformly in $r$.
        \item There exists $\nu > 0$ such that the load coordinate of the multi-class infinite-tilted distribution $\tpoiss_{\nuba}$ has mean $\eta$.
        More formally:
        $$\EE_{(I, K) \gets \tpoiss_{\nuba}} [K] = \eta.$$
        \item The additive subgroup generated by $\bigcup_{r\in[R]}(B_r-B_r)$ is all of $\ZZ$.
    \end{itemize}
    Whenever these data are admissible, we fix one such witnessing value of $\nu$; every tilted-Poisson quantity below uses this fixed witness.
\end{definition}

The membership requirement supplies the computational access from Definition~\ref{def:loadstates}.
The witness $\nu$ and the infinite tilted-Poisson distributions are used only to analyze stability; the algorithms neither compute $\nu$ nor test admissibility.
The finite partition into symbol classes is supplied explicitly in each instance.

\begin{definition}[Special-form load sets]\label{def:specialloadstates}
    A special-form load set is a special case of Definition~\ref{def:loadstates} as follows: fix a partition of $[n]$ into $S_1 \cup \ldots \cup S_R$ such that $|S_r| = \alpha_r n + O(1)$ for all $r$.
    Then for every $r \in [R]$ and $a \in S_r$, we set the $a$th load set to be $B_r\cap\{0,\ldots,m\}$.
    Finally, we assume that $m = \eta n + O(1)$.
\end{definition}

\begin{theorem}[One-step stability of load sets]\label{thm:maininfinite}
    Assume $B_1,\ldots,B_R$ and $\alpha_1,\ldots,\alpha_R$ are admissible, and define $A\subseteq[n]^m$ as in Definition~\ref{def:specialloadstates}.
    For all sufficiently large $n$, the set $A$ is nonempty.
    Let $X$ be uniform on $A$, and obtain $X'$ by replacing a uniformly chosen coordinate of $X$ by an independent uniform symbol in $[n]$.
    Define independent random variables $(I_\src,K_\src)$ and $(I_\tar,K_\tar)$ by
    \[
        \Pr[I_\src=r\land K_\src=k]
        =\frac{\alpha_r}{\eta}\,k\Pr_{K\sim\tpoiss_{\nu,B_r}}[K=k],
        \qquad
        (I_\tar,K_\tar)\sim\tpoiss_{\nuba}.
    \]
    Then
    \[
        \Pr[X'\in A]
        =
        \Pr[K_\src-1\in B_{I_\src}]
        \Pr[K_\tar+1\in B_{I_\tar}]
        +O(n^{-1/2}).
    \]
    The implicit constant depends only on the fixed load data, the chosen tilt $\nu$, and the bounds in $|S_r|=\alpha_r n+O(1)$ and $m=\eta n+O(1)$.
\end{theorem}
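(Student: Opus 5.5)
The plan is to reduce $\Pr[X'\in A]$ to a pairwise computation over the histogram $\cv(X)$ and then remove the conditioning on the total load with the local CLT. Write $Z=\sum_{a\in[n]}K_a$, where the $K_a$ are independent with $K_a\sim\tpoiss_{\nu,B_r}$ for $a\in S_r$, using the fixed witness $\nu$. By Lemma~\ref{lem:histogram-counting}, a legal histogram $\cv$ has probability proportional to $\binom{m}{\cv}\propto\nu^m/\prod_a c_a!$. So the histogram of a uniform $X\in A$ has exactly the law of $(K_a)_a$ conditioned on $Z=m$, for every $\nu>0$.

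\textbf{Nonemptiness and the normalizer.} First I check that Theorem~\ref{thm:1dlocalclt} applies to $Z$.
\begin{itemize}
\item Third moments are finite by Lemma~\ref{lem:tpoisstailbound}.
\item The lattice condition is the admissibility assumption that $\bigcup_r(B_r-B_r)$ generates $\ZZ$.
\item That condition also forces some $B_r$ to have two elements, so some class has positive variance.
\end{itemize}
The mean is $\sum_r|S_r|\mu_r=n\sum_r\alpha_r\mu_r+O(1)=\eta n+O(1)=m+O(1)$, by the choice of $\nu$ and $m=\eta n+O(1)$. Corollary~\ref{cor:lclt-central-ratio} then gives $\Pr[Z=m]=\Theta(n^{-1/2})>0$. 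In particular $A\neq\emptyset$ for large $n$.

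\textbf{Reduction to a pair of symbols.} One step of the walk picks a source symbol $a$ with probability $c_a/m$ (the symbol at a uniform coordinate) and an independent uniform target $b$. If $b=a$ the word is unchanged, which contributes $O(1/n)$. If $b\ne a$, the step stays in $A$ iff $c_a-1\in B_a$ and $c_b+1\in B_b$. Hence
\[
    \Pr[X'\in A]=\frac1n\sum_{a\ne b}\EE\left[\frac{c_a}{m}\one[c_a-1\in B_a]\one[c_b+1\in B_b]\right]+O(1/n).
\]
For fixed $a\ne b$, the conditioned joint law of the pair is
\[
    \Pr[c_a=k,c_b=l]=\Pr[K_a=k]\Pr[K_b=l]\frac{\Pr[Z_{-ab}=m-k-l]}{\Pr[Z=m]},
\]
where $Z_{-ab}$ sums the other $n-2$ loads. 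The class sizes of $Z_{-ab}$ still satisfy $\alpha_rn+O(1)$. So for $k+l\le n^{1/4}$, Corollary~\ref{cor:lclt-central-ratio} applied to both numerator and denominator gives a ratio of $1+O(n^{-1/2})$, uniformly in $a,b,k,l$. Thus the conditioned pair law is the independent product law up to relative error $O(n^{-1/2})$.

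\textbf{Tails.} For $k+l>n^{1/4}$, I bound the ratio crudely by $1/\Pr[Z=m]=O(\sqrt n)$. Since $c_a/m\le 1$, the tail contribution is at most $O(\sqrt n)\cdot\Pr[K_a+K_b>n^{1/4}]$. By Lemma~\ref{lem:tpoisstailbound} this is $\negl(n)$. Hence each pair term equals
\[
    \frac1m\,\EE[K_a\one[K_a-1\in B_a]]\;\Pr[K_b+1\in B_b]+O(n^{-3/2}),
\]
with $K_a,K_b$ independent and unconditioned.

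\textbf{Summing over pairs.} The summand depends only on the classes of $a$ and $b$. Reinserting the diagonal $a=b$ costs $O(1/n)$. Using $|S_r|/m=\alpha_r/\eta+O(1/n)$ and $|S_{r'}|/n=\alpha_{r'}+O(1/n)$, the double sum factors as
\[
    \Bigl(\sum_r\frac{\alpha_r}{\eta}\EE_{K\sim\tpoiss_{\nu,B_r}}[K\one[K-1\in B_r]]\Bigr)\Bigl(\sum_{r'}\alpha_{r'}\Pr[K+1\in B_{r'}]\Bigr)+O(n^{-1/2}).
\]
The first factor is exactly $\Pr[K_\src-1\in B_{I_\src}]$ and the second is $\Pr[K_\tar+1\in B_{I_\tar}]$, which gives the statement. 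One small check: $B_r\cap\{0,\dots,m\}$ versus $B_r$ matters only for loads above $m$, which fall under the negligible tail.

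\textbf{Main obstacle.} The delicate step is the uniform local-CLT ratio for $Z_{-ab}$ versus $Z$. It needs the mean-matching $m=\EE Z+O(1)$, which is where the admissible choice of $\nu$ is used. It also needs the corollary's $O(n^{-1/2})$ error to be uniform over all $a,b$ and all shifts $k+l\le n^{1/4}$. I will verify this by applying Theorem~\ref{thm:1dlocalclt} with class counts perturbed by at most $2$.
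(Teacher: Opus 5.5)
Your proposal is correct and follows essentially the same route as the paper: it uses the Poissonized hard-wall histogram law and the exact pair identity with the $1/n$ diagonal term. It then decouples the source and target loads on the bulk region via the local CLT, with a crude $O(\sqrt n)$ ratio bound on the tails (the paper packages this as Lemmas~\ref{lem:approxindep}, \ref{lem:tailboundafterconditioning}, and~\ref{lem:infinitedecoupling}), and finishes by reinserting the diagonal and factoring by class. The differences are cosmetic: you cut the bulk at $k+\ell\leq n^{1/4}$ rather than $\max\{k,\ell\}\leq\lceil n^{1/4}\rceil$, and you bound the conditioned tail using $c_a/m\leq 1$ rather than $g(k)\leq k$.
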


\begin{corollary}\label{cor:efwfoldedstate}
    Let $H_1, \ldots, H_m: \F_q^s \to [n]$ be balanced functions, and define $A_{H_1, \ldots, H_m}$ from $A$ as in Definition~\ref{def:foldedsets}.
    Then, retaining notation from Theorem~\ref{thm:maininfinite}, we have:
    $$\EFW(\ket{A_{H_1, \ldots, H_m}}) \leq \EFW_s(\ket{A_{H_1, \ldots, H_m}}) = 1 - \Pr[K_{\src} - 1 \in B_{I_\src}] \cdot \Pr[K_{\tar} + 1 \in B_{I_\tar}] + O\left(\frac{1}{\sqrt{n}}\right).$$
\end{corollary}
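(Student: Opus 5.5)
The corollary follows by chaining three earlier results, so I would prove it in three short steps.

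First, the inequality $\EFW(\ket{A_{H_1,\ldots,H_m}})\leq\EFW_s(\ket{A_{H_1,\ldots,H_m}})$ is exactly Claim~\ref{clm:monotonicity} applied to the set state of the preimage set. This set is nonempty for large $n$, because $A$ is nonempty by Theorem~\ref{thm:maininfinite} and each balanced $H_i$ is surjective.

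Second, for the equality I would apply Corollary~\ref{cor:fourier-exit} with block length $s$. It identifies $\EFW_s(\ket{A_{H_1,\ldots,H_m}})$ with the exit probability $\Pr[X'\notin A_{H_1,\ldots,H_m}]$, where $X$ is uniform on the preimage set and one uniformly chosen block is replaced by a fresh uniform element of $\F_q^s$. Lemma~\ref{lem:foldpreservesexit}, which uses the perfect balance of the $H_i$, then transfers this to the alphabet $[n]$: the exit probability equals $\Pr[Y'\notin A]$, where $Y$ is uniform on $A$ and $Y'$ replaces one uniform coordinate by a uniform symbol of $[n]$.

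Third, Theorem~\ref{thm:maininfinite} gives
\[
    \Pr[Y'\in A]=\Pr[K_\src-1\in B_{I_\src}]\Pr[K_\tar+1\in B_{I_\tar}]+O(n^{-1/2}).
\]
Taking complements, which does not change the form of the error term, yields the claimed formula.

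No step is a real obstacle. The only points needing care are confirming the hypotheses: the balance assumption is what Lemma~\ref{lem:foldpreservesexit} requires, and the load data must be in the special form of Theorem~\ref{thm:maininfinite}, including the sufficiently-large-$n$ condition that ensures nonemptiness. One should also note that $\EFW_s$ is defined through $\QFT$ while the decoding uses $\QFT^\dag$; the remark after the definition of $\EFW$ shows these give the same distribution.
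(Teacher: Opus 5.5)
Your proposal is correct and is the paper's proof: the inequality comes from Claim~\ref{clm:monotonicity}, and the equality chains Corollary~\ref{cor:fourier-exit}, Lemma~\ref{lem:foldpreservesexit}, and Theorem~\ref{thm:maininfinite}. Your extra remarks, on nonemptiness of the preimage set and on $\QFT$ versus $\QFT^\dag$, are sound; the paper handles these points elsewhere rather than in this proof.
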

\begin{proof}
    The first inequality follows from Claim~\ref{clm:monotonicity}.
    The second step follows from chaining together the following steps:
    \begin{itemize}
        \item Corollary~\ref{cor:fourier-exit}: $\EFW_s(\ket{A_{H_1, \ldots, H_m}})$ is equal to the one-step exit probability of the coordinate replacement walk on $A_{H_1, \ldots , H_m}$.
        \item Lemma~\ref{lem:foldpreservesexit}: this is in turn equal to the one-step exit probability of the same walk on $A$.
        \item Theorem~\ref{thm:maininfinite}: this is equal to $1 - \Pr[K_{\src} - 1 \in B_{I_\src}] \cdot \Pr[K_{\tar} + 1 \in B_{I_\tar}] + O\left(\frac{1}{\sqrt{n}}\right)$, as desired.
    \end{itemize}
\end{proof}

\section{Oracle Separation for Load States}\label{sec:yzoracle}

Let $\lambda\geq 2$ be the integer security parameter, and take all logarithms in this section to base $2$.
The tilted-Poisson parameter $\nu$ remains the fixed constant supplied by admissibility.


\begin{definition}[Parameters and the code $C_\lambda$]
    Let $q = 2^{2 \lfloor \log \lambda \rfloor}$, $M = q-1$, $s = 2^{\lfloor \log \lambda \rfloor} + 1$, $n = 2^{\lfloor \log \lambda \rfloor}$, $k=\lfloor\rho M\rfloor$ for a constant $\rho \in (0, 1)$, and let $\gamma$ be an arbitrary generator of $\F_q^*$.
    We define
    \[
        C_\lambda=\RS_{\F_q,\gamma,k}\subseteq\F_q^M.
    \]
    We group its $M$ coordinates into $m:=M/s=n-1$ consecutive blocks of size $s$, using the fixed identification $\F_q^M=(\F_q^s)^m$.
    The code in the intersection problem and its dual remain the $\F_q$-linear codes $C_\lambda$ and $C_\lambda^\perp$.
    The quantum algorithm decodes $C_\lambda^\perp$ in the block Hamming metric using Lemma~\ref{lem:frscapacity}; the classical lower bound uses folded list recovery for $C_\lambda$.
\end{definition}

\begin{definition}[The oracle distribution $\calD_\lambda$]
    Fix sets $B_1, \ldots, B_R \subseteq \ZZ_{\geq 0}$ and constants $\alpha_1, \ldots, \alpha_R \in \RR_{> 0}$ summing to $1$.
    Then define $S_1$ to consist of the first $\lfloor \alpha_1 n \rfloor$ elements of $[n]$, $S_2$ the next $\lfloor \alpha_2 n \rfloor$ elements, and so on until $S_{R-1}$, and then let $S_R$ consist of the remaining elements of $[n]$.
    Let $A \subseteq [n]^m$ be the resulting fixed special-form load set.
    A sample from $\calD_\lambda$ is a tuple of independent uniformly random functions $H_1,\ldots,H_m:\F_q^s\to[n]$.
    The corresponding oracle-represented code-intersection instance consists of the public descriptions of $C_\lambda$, the partition, and the truncated-load membership circuits from Definition~\ref{def:loadstates}, together with quantum oracle access to the $H_i$'s.
    An output succeeds precisely when it belongs to $C_\lambda\cap A_{H_1,\ldots,H_m}$.
    If the latter set is empty, the instance has no successful output.
\end{definition}

\subsection{Quantum Algorithm}

\begin{theorem}\label{thm:yzpositive}
    Assume the load data are admissible with $\eta=1$.
    With the notation from Theorem~\ref{thm:maininfinite}, suppose that:
    $$\rho > 1 - \Pr[K_{\src} - 1 \in B_{I_\src}] \cdot \Pr[K_{\tar} + 1 \in B_{I_\tar}].$$
    There is a uniform quantum algorithm which, for all sufficiently large $\lambda$ and given an oracle-represented instance sampled from $\calD_\lambda$, runs in time $\poly(\lambda)$ and makes $\poly(\lambda)$ queries to the $H_i$'s.
    Its success probability over the randomness of $H_1, \ldots, H_m$ and the algorithm's internal randomness is $\Omega(1)$.
\end{theorem}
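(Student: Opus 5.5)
The plan is to instantiate Corollary~\ref{cor:ctlistdecode} with block length $s$, the dual code $C_\lambda^\perp$, and the block list decoder of Lemma~\ref{lem:frscapacity}. The one wrinkle is that this corollary assumes perfectly balanced oracles, whereas $\calD_\lambda$ samples uniformly random ones. I would therefore first analyze an idealized distribution $\calD'_\lambda$ in which each $H_i$ is a uniformly random perfectly balanced function. Balancedness is possible since $n=2^{\lfloor\log\lambda\rfloor}$ divides $q^s$.

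\textbf{Translation invariance.} Under the action $H_i^{(\vv)}(x)=H_i(x-\vv_i)$, a uniformly random balanced function stays uniformly random balanced. Also $A_{H^{(\vv)}}=A_H+\vv$, and the translated oracle is simulated with one query. Hence $\calD'_\lambda$ is translation-invariant in the sense of Definition~\ref{def:translation-invariant}.

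\textbf{Nonemptiness and preparation.} Admissibility with $\eta=1$ and $m=n-1=\eta n+O(1)$ puts us in Definition~\ref{def:specialloadstates}. Theorem~\ref{thm:maininfinite} then gives $A\neq\emptyset$ for large $n$, so $A_{H}\neq\emptyset$. Corollary~\ref{cor:prepfolded} prepares $\ket{A_H}$ with $\poly(\lambda)$ gates and $O(m\sqrt n)$ queries.

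\textbf{Fourier mass.} Write $p=\Pr[K_\src-1\in B_{I_\src}]\Pr[K_\tar+1\in B_{I_\tar}]$. By Corollary~\ref{cor:efwfoldedstate}, $\EFW_s(\ket{A_H})=1-p+O(n^{-1/2})$. By hypothesis $1-p<\rho$, so I fix a constant $\zeta$ strictly between $1-p$ and $\rho$, with slack $\epsilon=\rho-\zeta>0$. Corollary~\ref{cor:fourier-mass} then gives $P_\sparse\geq(\zeta-(1-p))/\zeta-o(1)=\Omega(1)$ for block weight at most $\zeta m$.

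\textbf{Decoding.} By Lemma~\ref{lem:grsduality}, $C_\lambda^\perp$ is a GRS code of rate $R_C=1-k/M\to1-\rho$, so $1-R_C-\epsilon\geq\zeta$ up to $o(1)$; I shrink $\epsilon$ slightly to absorb the rounding. Lemma~\ref{lem:frscapacity} needs $s\geq t+\lceil 2t/\epsilon\rceil$ with $t=\lceil2/\epsilon\rceil$ constant, which holds since $s\approx\lambda\to\infty$. It lists within $\zeta m$ block errors with list size $L\leq t=O(1)$ and time $q^{t-1}\poly(M)=\poly(\lambda)$, because $q=\poly(\lambda)$. The remaining inputs satisfy $q,M=\poly(\lambda)$, so Corollary~\ref{cor:ctlistdecode} yields success probability $P_\sparse/L-2^{-\Omega(q)}=\Omega(1)$. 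The bounded-time version gives $\poly(\lambda)$ running time and queries.

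\textbf{Transfer to $\calD_\lambda$.} The algorithm makes $Q=\poly(\lambda)$ queries. Apply Lemma~\ref{lem:distinguishbalanced} to each of the $m$ oracles in a hybrid argument, with a final membership check, implemented by querying the $H_i$ on the output, acting as the distinguisher. The change in success probability is at most $m\cdot O(Q^3/q^s)$, which is negligible since $q^s=2^{\Omega(\lambda\log\lambda)}$. The success probability therefore remains $\Omega(1)$ under $\calD_\lambda$.

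The main obstacle is the parameter bookkeeping in the decoding step. One must check that the rate of the dual GRS code, the block radius $\zeta$, and the constant interpolation parameter $t$ fit together, with $q^{t-1}$ staying polynomial. One must also check that the Fourier mass is measured in the block metric, which is where using $\EFW_s$ rather than $\EFW$ matters.
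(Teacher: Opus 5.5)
Your proposal is correct and follows the paper's proof essentially step for step. You idealize to uniformly random balanced oracles, then instantiate Corollary~\ref{cor:ctlistdecode} using Corollary~\ref{cor:prepfolded} for preparation, Corollary~\ref{cor:efwfoldedstate} plus Markov for block-metric Fourier mass, and Lemma~\ref{lem:frscapacity} on the dual GRS code with a constant interpolation parameter. You then transfer to $\calD_\lambda$ with a hybrid over Lemma~\ref{lem:distinguishbalanced}, applied to the bounded-time algorithm with an appended verifier. The only cosmetic difference is that you cite the nonemptiness clause of Theorem~\ref{thm:maininfinite} directly, whereas the paper re-derives nonemptiness via the local CLT.
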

\begin{proof}
    We first verify that the fixed load set $A$ is nonempty for all sufficiently large $\lambda$.
    Let $K_a\sim\tpoiss_{\nu,B_r}$ independently for $a\in S_r$.
    Lemma~\ref{lem:tpoisstailbound} and admissibility verify the moment, positive-variance, and lattice hypotheses of Theorem~\ref{thm:1dlocalclt}.
    Since $\sum_r\alpha_r\EE_{K\sim\tpoiss_{\nu,B_r}}[K]=\eta=1$ and $m=n-1$, Corollary~\ref{cor:lclt-central-ratio} gives
    $\Pr[\sum_aK_a=m]=\Theta(n^{-1/2})>0$.
    Hence there is a vector of allowed loads summing to $m$, and arranging the corresponding multiset gives an element of $A$.

    We next prove the theorem in the setting where the $H_i$'s are sampled as uniformly random perfectly balanced functions from $\F_q^s \to [n]$.
    Such functions exist because $n$ divides $q^s$.
    Balancedness and nonemptiness of $A$ imply that every preimage set is nonempty.
    In this setting, we will simply instantiate Corollary~\ref{cor:ctlistdecode}.
    For brevity, write
    \[
        p_{\mathsf{stay}}
        =
        \Pr[K_{\src} - 1 \in B_{I_\src}]
        \Pr[K_{\tar} + 1 \in B_{I_\tar}].
    \]
    Fix a constant $\epsilon>0$ such that
    $1-p_{\mathsf{stay}}<\zeta:=\rho-\epsilon$.
    Such a choice exists by the hypothesis.
    We check each requirement:
    \begin{itemize}
        \item By Corollary~\ref{cor:prepfolded}, $\ket{A_{H_1, \ldots, H_m}}$ can be prepared to precision $\epsilon'$ with $\poly(\lambda,\log(1/\epsilon'))$ gates and oracle queries.

        \item The distribution of represented instances, including the full tuple of balanced function oracles, is translation-invariant under every $\vv=(\vv_1,\ldots,\vv_m)\in(\F_q^s)^m$.
        Indeed, define $H_i^{(\vv)}(\xv)=H_i(\xv-\vv_i)$.
        Precomposition by a translation preserves the uniform distribution on balanced functions, and
        \[
            A_{H_1^{(\vv)},\ldots,H_m^{(\vv)}}
            =
            A_{H_1,\ldots,H_m}+\vv.
        \]
        Oracle access to each $H_i^{(\vv)}$ is simulated with one query to $H_i$.

        \item Let $W$ be normalized block Hamming weight (the fraction of nonzero length-$s$ blocks) after measuring $\QFT^\dag\ket{A_{H_1,\ldots,H_m}}$.
        The Fourier-convention observation following the definition of $\EFW$, together with Corollary~\ref{cor:efwfoldedstate}, gives
        $\EE[W]=1-p_{\mathsf{stay}}+O(n^{-1/2})$.
        Markov's inequality therefore yields
        \[
            \Pr[W\leq\zeta]
            \geq 1-\frac{\EE[W]}{\zeta}
            \geq
            \frac{\zeta-(1-p_{\mathsf{stay}})}{2\zeta}
            =
            \Omega(1)
        \]
        for all sufficiently large $n$, because $\zeta-(1-p_{\mathsf{stay}})>0$ is a constant.
        Thus $P_\sparse=\Omega(1)$.

        \item Lemma~\ref{lem:grsduality} gives
        $C_\lambda^\perp=\GRS_{\F_q,\gamma,M-k,\uv}$
        for explicit nonzero multipliers $\uv$.
        Its actual rate is $R^\perp=(M-k)/M=1-\rho+O(M^{-1})$.
        For all sufficiently large $M$, $R^\perp\leq1-\rho+\epsilon/2$, and hence
        \[
            \zeta\leq1-R^\perp-\epsilon/2.
        \]
        Apply Lemma~\ref{lem:frscapacity} with slack $\epsilon/2$.
        Its folding requirement holds for all sufficiently large $\lambda$, since $s=\Theta(\lambda)$ and the slack is fixed.
        Thus the dual is list-decodable from $\zeta m$ block errors with a final list of size $L=O(1/\epsilon)$.
        The time to enumerate and filter the affine candidate space is $q^{O(1/\epsilon)}\poly(M)=\poly(\lambda)$, because $q,M=\Theta(\lambda^2)$.
        Dividing out the multipliers $u_i$ preserves each block agreement, as in the proof of the lemma.
    \end{itemize}
    The bounded-time version of Corollary~\ref{cor:ctlistdecode} now gives an algorithm whose success probability is $P_\sparse/L-2^{-\Omega(q+M)}=\Omega(1)$ and whose running time and oracle-query complexity are $\poly(\lambda)$.

    Append an efficient verifier to this algorithm.
    On a candidate $\xv\in\F_q^M$, it checks $\xv\in C_\lambda$ using a parity-check matrix, queries $H_i(\xv_i)$ for every block, computes the resulting histogram, and checks the relevant $B_r$-membership conditions.
    It accepts exactly when $\xv\in C_\lambda\cap A_{H_1,\ldots,H_m}$ and uses only $m=\poly(\lambda)$ additional queries.
    If $Q=\poly(\lambda)$ bounds the total queries of the composed algorithm, apply Lemma~\ref{lem:distinguishbalanced} at each step of a hybrid over the $m$ oracles, conditioning on the other independent oracles.
    Its acceptance probability changes by at most
    \[
        O\left(\frac{mQ^3}{q^s}\right)
        =
        \negl(\lambda)
    \]
    when the balanced functions are replaced by independent uniformly random functions.
    For either oracle distribution, acceptance is exactly the code-intersection success event, so the success probability for $\calD_\lambda$ remains $\Omega(1)$.

    Finally, empty random-function instances are negligible under the convention in Definition~\ref{def:foldedsets}.
    Fix any $\yv\in A$.
    If $A_{H_1,\ldots,H_m}$ is empty, then $y_i$ is absent from the range of some $H_i$, and hence
    \[
        \Pr[A_{H_1,\ldots,H_m}=\emptyset]
        \leq
        m(1-1/n)^{q^s}.
    \]
    The right-hand side is $\negl(\lambda)$, consistently with the preceding success bound.
\end{proof}

\subsection{Classical Hardness}

We begin with the list-recovery property needed in the classical soundness proof.
\begin{lemma}[List recovery of the folded code]\label{lem:frslistrecovery}
    Fix constants $0<c<c'<1$ and $0<\zeta<1-\rho$.
    For all sufficiently large $\lambda$, the folded code $C_\lambda^{(s)}$ is $(\zeta,\ell,L)$-list-recoverable for $\ell=\lceil 2^{\lambda^c}\rceil$ and $L=2^{\tilde{O}(\lambda^{c'})}$.
\end{lemma}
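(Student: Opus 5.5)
We plan to instantiate Theorem~\ref{thm:rudra} directly, with $k=\lfloor\rho M\rfloor$, $q=2^{2\lfloor\log\lambda\rfloor}=\Theta(\lambda^2)$ and $s=\sqrt q+1=\Theta(\lambda)$. The parameter $d$ will be chosen so that the bound $q^d$ is at most $2^{\tilde O(\lambda^{c'})}$. This suggests taking $d=\lceil\lambda^{c''}\rceil$ for a constant $c''$ with $c<c''<c'$, which gives $q^d=2^{O(\lambda^{c''}\log\lambda)}=2^{\tilde O(\lambda^{c'})}$. Since $c''<1$, we also have $d\leq s$ for large $\lambda$. The parameter $r$ will be taken as a large multiple of $d$, say $r=\lceil d/\epsilon\rceil$ for a small constant $\epsilon>0$, so that $1+d/r\leq 1+\epsilon$.

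Next we check \eqref{eq:gr1}. Write $\sqrt[d+1]{(q-1)\ell(k-1)^d}=(k-1)\cdot\bigl((q-1)\ell/(k-1)\bigr)^{1/(d+1)}$. Here $(q-1)/(k-1)=O(1)$ and $\ell\leq 2^{\lambda^c+1}$, so the second factor is $\exp\bigl(O(\lambda^c/\lambda^{c''})\bigr)=1+o(1)$. This is exactly why we need $c''>c$. The right side of \eqref{eq:gr1} is therefore $(1+\epsilon)(1+o(1))\,\rho M/(s-d+1)$. Because $d=o(s)$, we have $s/(s-d+1)=1+o(1)$. The left side is $(1-\zeta)M/s$. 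Multiplying both sides by $s/M$, the condition reduces to $1-\zeta\geq(1+\epsilon)(1+o(1))\rho$. This holds for large $\lambda$ provided $\epsilon$ is chosen with $(1+\epsilon)\rho<1-\zeta$, which is possible because $\zeta<1-\rho$.

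For \eqref{eq:gr2}, we have $(r+d)=O(d)=O(\lambda^{c''})$, and $\sqrt[d+1]{(q-1)\ell/(k-1)}=1+o(1)$ by the same estimate as above. Since $q=\Theta(\lambda^2)$ and $c''<1$, the left side is $O(\lambda^{c''})$, which is $o(q)$. Hence \eqref{eq:gr2} holds for large $\lambda$.

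The main point needing care is the first inequality. We must track the $o(1)$ factors uniformly: the $\ell^{1/(d+1)}$ term, the ratio $(q-1)/(k-1)$ raised to $1/(d+1)$, and the ratio $s/(s-d+1)$. We must also be careful with floors. All three factors tend to $1$ precisely because $\lambda^c=o(d)$ and $d=o(s)$, and that is why the chain $c<c''<c'<1$ is used. Theorem~\ref{thm:rudra} then gives $(\zeta,\ell,q^d)$-list-recoverability. Since list-recoverability with a given $\ell$ implies it for every smaller $\ell$, the rounding of $\ell$ causes no difficulty.
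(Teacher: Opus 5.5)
Your proposal is correct and follows the paper's proof. Both instantiate Theorem~\ref{thm:rudra} with $d$ a sublinear power of $\lambda$ that grows faster than $\lambda^c$; this makes $\ell^{1/(d+1)}$, $s/(s-d+1)$ and $1+d/r$ harmless in \eqref{eq:gr1} and \eqref{eq:gr2}, and the list size is then read off as $L=q^d$. The only difference is in parameter choices. The paper takes $d=\lceil\lambda^{c'}\rceil$ and $r=\lceil\lambda^{c''}\rceil$ with $c'<c''<1$, so that $d/r=o(1)$. You take $d=\lceil\lambda^{c''}\rceil$ with $c<c''<c'$ and $r=\lceil d/\epsilon\rceil$, absorbing the constant factor $1+\epsilon$ into the slack $\zeta<1-\rho$. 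Both choices work.
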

\begin{proof}
    Apply Theorem~\ref{thm:rudra} with $d=\lceil\lambda^{c'}\rceil$ and $r=\lceil\lambda^{c''}\rceil$, where $c'<c''<1$.
    We have $d\leq s$ for all sufficiently large $\lambda$, and
    \[
        \ell^{1/(d+1)}=1+o(1),\qquad
        \frac{d}{r}=o(1),\qquad
        \frac{s}{s-d+1}=1+o(1).
    \]
    After multiplying~\eqref{eq:gr1} by $s$, its right-hand side is therefore
    \[
        \left(1+\frac{d}{r}\right)
        \sqrt[d+1]{(q-1)\ell(k-1)^d}\,
        \frac{s}{s-d+1}
        =(1+o(1))(k-1).
    \]
    Since $k/(q-1)=\rho+o(1)$ and $1-\zeta>\rho$, inequality~\eqref{eq:gr1} holds for all sufficiently large $\lambda$.
    Similarly,
    \[
        (r+d)\sqrt[d+1]{\frac{(q-1)\ell}{k-1}}
        =(1+o(1))\lambda^{c''}<q,
    \]
    so~\eqref{eq:gr2} also holds.
    The resulting list size is at most
    $q^d=2^{O(d\log\lambda)}=2^{\tilde{O}(\lambda^{c'})}$.
\end{proof}

For each $r\in[R]$, define its \emph{tail threshold}
\[
    \tau_r
    :=
    \min\{t\in\ZZ_{\geq0}:\{t,t+1,\ldots\}\subseteq B_r\},
\]
with $\tau_r=\infty$ if the set on the right is empty, and write
$\tau=\sum_{r\in[R]}\alpha_r\tau_r\in[0,\infty]$.
The following lemma gives the property of the load constraint used in the proof of soundness.

\begin{lemma}[Resistance to random completion]\label{lem:prangeresistance}
    Fix a constant $0<\epsilon\leq1$ and suppose
    $\tau>(1-\epsilon)\eta$.
    Let $\xv\in([n]\cup\{\bot\})^m$ be arbitrary with at least $\epsilon m$ coordinates equal to $\bot$.
    Independently replace every $\bot$ by a uniformly random element of $[n]$ to obtain $\zv\in[n]^m$.
    Then
    \[
        \Pr[\zv\in A]\leq 2^{-\Omega(m)}.
    \]
    The constant implicit in $\Omega(m)$ may depend on the fixed load data and on $\epsilon$, but the bound is uniform over $\xv$.
\end{lemma}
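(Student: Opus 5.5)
The plan is to Poissonize the fresh entries so that symbol counts become independent, and to show that a constant fraction of symbols each fail with constant probability.

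\emph{Reduction to independent fresh counts.} Let $F$ be the set of $\bot$ coordinates, $f=|F|\geq\epsilon m$, and let $d_a$ be the number of fixed (non-$\bot$) coordinates equal to $a$. Then $\sum_a d_a=m-f\leq(1-\epsilon)m=(1-\epsilon)\eta n+O(1)$. The fresh counts $(Z_a)_{a\in[n]}$ are multinomial with $f$ trials and uniform cell probabilities, and $\zv\in A$ iff $d_a+Z_a\in B_{r(a)}$ for every $a$, where $r(a)$ is the class of $a$. Take $Z'_a\sim\mathrm{Poisson}(\lambda)$ independent with $\lambda=f/n$. The law of $(Z'_a)$ conditioned on $\sum_a Z'_a=f$ is exactly the multinomial law. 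Since $\Pr[\mathrm{Poisson}(f)=f]=\Omega(f^{-1/2})$, we get
\[
    \Pr[\zv\in A]\leq O(\sqrt m)\prod_{a}\Pr[d_a+Z'_a\in B_{r(a)}].
\]
It then suffices to show that at least $\delta n$ symbols each have $\Pr[d_a+Z'_a\notin B_{r(a)}]\geq c$, for constants $\delta,c>0$. Note that $\lambda\in[\epsilon\eta/2,\,2\eta]$ for large $n$, since $f\leq m$.

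\emph{Choosing truncated thresholds.} Because $\tau=\sum_r\alpha_r\tau_r>(1-\epsilon)\eta$ (possibly $\tau=\infty$), fix a large constant $T$ such that $t_r:=\min(\tau_r,T)$ satisfies $\sum_r\alpha_r t_r>(1-\epsilon)\eta+\gamma$ for some $\gamma>0$. Call $a\in S_r$ \emph{deficient} if $d_a<t_r$. Suppose fewer than $\delta n$ symbols are deficient. Then
\[
    \sum_a d_a\;\geq\;\sum_r(|S_r|-\delta n)\,t_r\;=\;n\sum_r\alpha_r t_r-O(\delta RTn)-O(1),
\]
which exceeds $(1-\epsilon)\eta n+O(1)$ once $\delta$ is small in terms of $\gamma,R,T$. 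This contradicts the bound on $\sum_a d_a$, so at least $\delta n$ symbols are deficient.

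\emph{A nearby forbidden load for each deficient symbol.} Fix a deficient $a\in S_r$; we need a missing value $b_a\notin B_r$ with $d_a\leq b_a\leq K$ for a constant $K$.
\begin{itemize}
    \item If $t_r=\tau_r<\infty$: then $\tau_r\geq1$ (as $d_a<\tau_r$), and by minimality of $\tau_r$ we have $\tau_r-1\notin B_r$. Take $b_a=\tau_r-1\geq d_a$.
    \item If $t_r=T\leq\tau_r$: then $\{t,t+1,\ldots\}\not\subseteq B_r$ for every $t<\tau_r$, and in particular for $t=d_a<T$. So some value $\geq d_a$ is missing from $B_r$. Take $b_a$ to be the least missing value $\geq d_a$. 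It is at most the least missing value $\geq T-1$, a constant $K_r$ depending only on $B_r$ and $T$.
\end{itemize}
In both cases $0\leq b_a-d_a\leq K:=\max_r\max(\tau_r\mathbb{1}[\tau_r<\infty],K_r)$.

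\emph{Conclusion.} For a deficient $a$, we have $\Pr[Z'_a=b_a-d_a]\geq\min_{0\leq j\leq K}e^{-\lambda}\lambda^j/j!\geq c>0$ uniformly over $\lambda\in[\epsilon\eta/2,\,2\eta]$, and on this event $d_a+Z'_a\notin B_{r(a)}$. Multiplying over the independent deficient symbols gives
\[
    \Pr[\zv\in A]\leq O(\sqrt m)(1-c)^{\delta n}=2^{-\Omega(m)},
\]
uniformly in $\xv$. The main obstacle is the $\tau_r=\infty$ case: there a symbol can be far below the threshold, so it must be handled by the truncation at $T$, which is what keeps the gap $b_a-d_a$ bounded. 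The counting step must use the truncated thresholds rather than $\tau_r$ itself.
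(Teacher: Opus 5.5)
Your proof is correct and follows essentially the same route as the paper. Both arguments Poissonize the fresh counts, paying only an $O(\sqrt m)$ factor for conditioning on the total. Both use finite truncated thresholds (your $\min(\tau_r,T)$ is one valid choice of the paper's $u_r\leq\tau_r$) and a counting argument to find $\Omega(n)$ symbols below threshold, and then exhibit for each such symbol a forbidden load at bounded distance. The only difference is cosmetic: the paper first fills the excess $\bot$'s at random to reduce to exactly $\lceil\epsilon m\rceil$ missing coordinates, whereas you keep $f$ general and lower-bound the Poisson point probabilities uniformly over $\lambda\in[\epsilon\eta/2,2\eta]$, which works equally well.
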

\begin{proof}
    Let $h$ be the number of $\bot$ coordinates and set $t=\lceil\epsilon m\rceil$.
    If $h>t$, first fill any $h-t$ of the missing coordinates at random and condition on their values.
    The resulting partial word has exactly $t$ missing coordinates, and a bound uniform over that partial word remains valid after averaging.
    It therefore suffices to consider $h=t$.

    By the hypothesis on $\tau$, there are finite integers $u_1,\ldots,u_R\geq0$ such that $u_r\leq\tau_r$ whenever $\tau_r<\infty$ and
    \[
        \sum_{r\in[R]}\alpha_ru_r>(1-\epsilon)\eta.
    \]
    Fix $\Delta>0$ such that the left-hand side equals $(1-\epsilon)\eta+\Delta$, and let $U=\max_r u_r$.
    For $a\in S_r$, let $c_a$ be the number of occurrences of $a$ among the nonmissing coordinates of $\xv$, and call $a$ vulnerable if $c_a<u_r$.
    If $v_r$ is the number of vulnerable symbols in $S_r$ and $v=\sum_rv_r$, then
    \begin{align*}
        m-t
        =\sum_{a\in[n]}c_a
        &\geq\sum_{r\in[R]}u_r(|S_r|-v_r)\\
        &\geq n((1-\epsilon)\eta+\Delta)-Uv+O(1).
    \end{align*}
    On the other hand, $m-t\leq(1-\epsilon)m=(1-\epsilon)\eta n+O(1)$.
    Thus $v\geq\Delta n/(2U)$ for all sufficiently large $n$; note that $U>0$ by the choice of $\Delta$.

    For every $r$ and every integer $c<u_r$, the definition of $\tau_r$ gives an integer $T_{r,c}\geq c$ outside $B_r$.
    There are only finitely many relevant pairs $(r,c)$, so all the $T_{r,c}$ are bounded by a constant $D$.
    For each vulnerable $a\in S_r$, set $d_a=T_{r,c_a}-c_a\in\{0,\ldots,D\}$.
    If the random completion adds exactly $d_a$ occurrences of $a$, then the final count of $a$ is outside $B_r$.

    Let $Y_a$ be the number of missing coordinates completed with $a$. 
    We analyze the distribution of the $Y_a$'s using Poissonization.
    If $X_1,\ldots,X_n$ are independent Poisson random variables of mean $\theta=t/n$, then the standard Poisson--multinomial identity gives
    \[
        (Y_1,\ldots,Y_n)
        \mathrel{\stackrel{\mathrm{d}}{=}}
        (X_1,\ldots,X_n)
        \text{ conditioned on }\sum_{a\in[n]}X_a=t.
    \]
    Consequently,
    \begin{align*}
        \Pr[\zv\in A]
        &\leq
        \Pr[Y_a\neq d_a\text{ for every vulnerable }a]\\
        &\leq
        \frac{\prod_{a\text{ vulnerable}}\Pr[X_a\neq d_a]}
        {\Pr[\sum_{a\in[n]}X_a=t]}.
    \end{align*}
    Since $\theta=\epsilon\eta+O(1/n)$ and $d_a\in\{0,\ldots,D\}$, there is a constant $p_*>0$ such that $\Pr[X_a=d_a]\geq p_*$ for every vulnerable $a$ and all sufficiently large $n$.
    The numerator is therefore at most $(1-p_*)^v=2^{-\Omega(n)}$.
    Moreover, $\sum_aX_a$ is Poisson with mean $t$, and Stirling's formula gives
    \[
        \Pr\left[\sum_{a\in[n]}X_a=t\right]
        =e^{-t}\frac{t^t}{t!}
        =\Theta(t^{-1/2}).
    \]
    Since $m=\eta n+O(1)$ and $t=\Theta(n)$, dividing by this polynomially small probability preserves the bound $2^{-\Omega(m)}$.
\end{proof}

In particular, if $\ZZ_{\geq0}\setminus B_r$ is infinite for some $r$, then $\tau_r=\infty$ and the hypothesis of Lemma~\ref{lem:prangeresistance} holds for every constant $\epsilon>0$.

\subsection{Proof of Quantumness}

Consider the following keyless noninteractive protocol $\Pi_{\mathsf{base}}$.
The quantum prover runs the algorithm of Theorem~\ref{thm:yzpositive} and sends its output $\xv\in(\F_q^s)^m$.
On failure, it sends a fixed dummy string.
The classical verifier checks that the unfolded vector $\xv$ belongs to $C_\lambda$, queries $H_i(x_i)$ for every $i\in[m]$, and accepts exactly when
\[
    (H_1(x_1),\ldots,H_m(x_m))\in A.
\]
Thus the verifier accepts exactly the successful outputs in Definition~\ref{def:foldedsets}.

\begin{lemma}[Classical soundness]\label{lem:yzclassicalsoundness}
    Suppose $\eta=1$ and there are constants
    \[
        0<\epsilon<\zeta<1-\rho
        \qquad\text{such that}\qquad
        \tau>1-\epsilon.
    \]
    Every computationally unbounded classical prover making polynomially many classical queries to the $H_i$'s convinces the verifier of $\Pi_{\mathsf{base}}$ with probability at most $2^{-\Omega(\lambda)}$ over the uniformly random functions $H_1,\ldots,H_m$.
\end{lemma}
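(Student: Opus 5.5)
We adapt the Yamakawa--Zhandry soundness argument, replacing their coordinate-wise completion bound with Lemma~\ref{lem:prangeresistance} (applied with $\eta=1$ and the given $\epsilon$, so that $\tau>(1-\epsilon)\eta$).

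First, we normalize the prover. Fix a classical prover making $Q=\poly(\lambda)$ queries. Without loss of generality it is deterministic, since we can fix the best coins. We also assume it queries $H_i(x_i)$ for every block of its output before halting, which costs $m$ extra queries. Thus the full transcript contains, for every block $i$, a set $\mathcal Q_i\subseteq\F_q^s$ of queried inputs, with $\sum_i|\mathcal Q_i|\leq Q'=Q+m$. Now call a block of the output $\xv$ \emph{fresh} if the first query to $H_i$ at $x_i$ happens only in the final verification phase. Blocks that are not fresh had $x_i\in\mathcal Q_i$ chosen during the interactive phase.

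Second, we split on the number of non-fresh blocks. Put $\ell=\lceil2^{\lambda^{c}}\rceil\geq Q'$ for a small constant $c>0$.

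\textbf{Case A: at least $(1-\zeta)m$ blocks are non-fresh.} Then $\xv\in C_\lambda$ has $x_i\in\mathcal Q_i$ for at least $(1-\zeta)m$ values of $i$, and each list has size $|\mathcal Q_i|\leq\ell$. By Lemma~\ref{lem:frslistrecovery}, at most $L=2^{\tilde O(\lambda^{c'})}$ codewords are compatible with the lists. To make this usable, we adopt the Yamakawa--Zhandry ordering trick. We consider the moment each compatible codeword first has $(1-\epsilon)m$ of its blocks queried, and we guess which codeword and which prefix of the query sequence; there are at most $Q'\cdot L$ guesses. At that moment, at least $\epsilon m$ blocks of that codeword remain unqueried, since we stop as soon as $(1-\epsilon)m$ are queried and $\epsilon<\zeta$. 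Their $H_i$ values are fresh uniform symbols independent of the history. Lemma~\ref{lem:prangeresistance} then bounds by $2^{-\Omega(m)}$ the probability that the completed image lands in $A$, uniformly over the adversarially fixed part. Some care is needed to apply the lemma only after the prover's adaptive choice. The clean way is to union bound over the candidate codeword $\cv$, which is determined by the transcript up to that time, and over the time step. For each fixed pair, the remaining values are independent of everything seen. The union bound gives $Q'L\cdot2^{-\Omega(m)}=2^{-\Omega(\lambda)}$, since $m=n-1=\Theta(\lambda)$ while $\log(Q'L)=\tilde O(\lambda^{c'})=o(\lambda)$.

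\textbf{Case B: more than $\zeta m>\epsilon m$ blocks are fresh.} Here the output is fixed before those oracle values are revealed. Those values are uniform and independent of the prefix, so Lemma~\ref{lem:prangeresistance} applies directly with $\bot$ at the fresh blocks. This gives $2^{-\Omega(m)}$ without any union bound.

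The main obstacle is Case A, which needs a rigorous adaptive union bound. One must check that the event ``codeword $\cv$ becomes $(1-\epsilon)$-covered at step $j$'' determines a partial word whose remaining blocks are truly unqueried. One must also check that the number of candidate $(\cv,j)$ pairs that can occur is bounded by list recovery at every prefix. Since the query sets only grow, Lemma~\ref{lem:frslistrecovery} bounds the candidates at every step by $L$, applied with radius $\zeta$ to any codeword eventually covered on $(1-\zeta)m$ blocks. The total count is therefore at most $Q'L$. Combining the two cases yields $2^{-\Omega(\lambda)}$.
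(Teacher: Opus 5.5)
Your argument is essentially the paper's proof: you append the verifier's queries, use Lemma~\ref{lem:frslistrecovery} to bound by $L$ the number of codewords that ever cross a coverage threshold, condition on the transcript at the crossing time, apply Lemma~\ref{lem:prangeresistance} to the still-unqueried blocks, and union bound over candidates. Two small points: once the verification queries are appended, every accepted output eventually crosses the threshold, so your Case~B is already covered by the Case~A argument; and the paper places the threshold at $\lceil(1-\zeta)m\rceil$ precisely so that $\lfloor\zeta m\rfloor\geq\epsilon m$ blocks remain unqueried (this is where $\epsilon<\zeta$ is used), whereas your threshold $(1-\epsilon)m$ leaves only $\lfloor\epsilon m\rfloor$ blocks, which requires slightly shrinking $\epsilon$ (possible because $\tau>1-\epsilon$ is strict).
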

\begin{proof}
    This follows the online list-recovery argument of~\cite[Lemma~6.9]{yamakawa2024verifiable}.
    Let $\mathcal A$ make $Q=\poly(\lambda)$ classical queries.
    We may append the verifier's $m$ queries to $\mathcal A$: after $\mathcal A$ outputs $\xv$, query every $H_i(x_i)$ that has not yet been queried.
    This changes the query bound by only $m=\poly(\lambda)$ and ensures that an accepted output has eventually been queried in every block.

    Fix constants $0<c<c'<1$ such that $Q+m\leq2^{\lambda^c}$ for all sufficiently large $\lambda$.
    For each $i\in[m]$ and each time $j$, let $S_i^{(j)}\subseteq\F_q^s$ be the set of inputs queried to $H_i$ during the first $j$ total oracle queries.
    A folded codeword $\xv\in C_\lambda^{(s)}$ is $K$-queried at time $j$ if $x_i\in S_i^{(j)}$ for exactly $K$ indices $i$.
    A single query changes this number by at most one.
    Hence any codeword that is eventually queried in every block must at some point become
    \[
        K_0:=\lceil(1-\zeta)m\rceil
    \]
    queried.

    At the end of the execution, every $S_i^{(j)}$ has size at most $Q+m\leq\ell$, where $\ell=\lceil2^{\lambda^c}\rceil$.
    Lemma~\ref{lem:frslistrecovery} therefore implies that at most
    $L=2^{\tilde{O}(\lambda^{c'})}$ codewords ever become $K_0$-queried.
    Enumerate them in the order in which they first reach this threshold, breaking ties lexicographically.

    Simulate the random functions by lazy sampling and condition on the complete transcript at the time one such candidate first becomes $K_0$-queried.
    The candidate and the already sampled values $H_i(x_i)$ are then fixed.
    For the remaining $m-K_0=\lfloor\zeta m\rfloor$ blocks, the values $H_i(x_i)$ are still independent and uniform in $[n]$.
    Since $\epsilon<\zeta$, there are at least $\epsilon m$ such blocks for all sufficiently large $\lambda$.
    Lemma~\ref{lem:prangeresistance} shows, uniformly over the conditioned transcript, that the candidate becomes a valid proof with probability at most $2^{-\Omega(m)}$.
    The event that this candidate is eventually output and accepted is a subset of that event.

    Union bounding over the at most $L$ candidates gives
    \[
        \Pr[\mathcal A\text{ convinces the verifier}]
        \leq
        L\cdot2^{-\Omega(m)}
        =
        2^{-\Omega(\lambda)},
    \]
    because $m=\Theta(\lambda)$ and $\tilde{O}(\lambda^{c'})=o(\lambda)$.
\end{proof}

\begin{theorem}[Proof of quantumness]\label{thm:yzpoq}
    Assume that $B_1,\ldots,B_R$ and $\alpha_1,\ldots,\alpha_R$ are admissible with $\eta=1$.
    Let
    \[
        p_{\mathsf{stay}}
        =
        \Pr[K_{\src}-1\in B_{I_{\src}}]\,
        \Pr[K_{\tar}+1\in B_{I_{\tar}}],
    \]
    using the distributions from Theorem~\ref{thm:maininfinite}, and let $\tau=\sum_r\alpha_r\tau_r$.
    Suppose
    \[
        1-p_{\mathsf{stay}}<\min\{\tau,1\}.
    \]
    Choose the rate parameter to satisfy
    \[
        1-p_{\mathsf{stay}}<\rho<\min\{\tau,1\}.
    \]
    Then $\Pi_{\mathsf{base}}$ is a keyless noninteractive publicly verifiable proof of quantumness with constant completeness and negligible classical soundness: the honest quantum prover is accepted with probability $\Omega(1)$, whereas every computationally unbounded classical prover making polynomially many classical queries is accepted with probability $2^{-\Omega(\lambda)}$.
\end{theorem}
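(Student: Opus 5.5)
This theorem is a composition of two earlier results: completeness comes from Theorem~\ref{thm:yzpositive} and soundness from Lemma~\ref{lem:yzclassicalsoundness}. The only real work is choosing constants so that both hypotheses hold at once for the single rate $\rho$ fixed in the statement.

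\emph{Completeness.} The hypothesis $\rho>1-p_{\mathsf{stay}}$ is exactly the condition of Theorem~\ref{thm:yzpositive}, and admissibility with $\eta=1$ is assumed. So the honest prover outputs an element of $C_\lambda\cap A_{H_1,\ldots,H_m}$ with probability $\Omega(1)$ over the random oracles and its own coins, in $\poly(\lambda)$ time and queries. The verifier of $\Pi_{\mathsf{base}}$ accepts exactly these outputs: it checks membership in $C_\lambda$, queries $H_i(x_i)$ for each block, and tests whether the histogram lies in $A$. Hence completeness is $\Omega(1)$. On failure the prover sends a dummy string, which can only be accepted if it happens to be valid, so this does not hurt. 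Public verifiability and keylessness hold because the verifier uses only public data and oracle queries.

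\emph{Soundness.} I need constants $0<\epsilon<\zeta<1-\rho$ with $\tau>1-\epsilon$. Since $\rho<\min\{\tau,1\}$, the quantity $1-\rho$ is positive. Because $\tau>\rho$, the requirement $\tau>1-\epsilon$ is equivalent to $\epsilon>1-\tau$, and $1-\tau<1-\rho$.

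\begin{itemize}
\item If $\tau=\infty$ or $\tau\geq1$, take any $0<\epsilon<\zeta<1-\rho$; then $\tau>1-\epsilon$ holds automatically.
\item Otherwise $\tau$ is finite and $\rho<\tau<1$. Choose $\epsilon\in(\max\{0,1-\tau\},\,1-\rho)$, which is a nonempty interval because $1-\tau<1-\rho$. Then choose $\zeta\in(\epsilon,1-\rho)$.
\end{itemize}

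In both cases Lemma~\ref{lem:yzclassicalsoundness} applies and bounds every unbounded classical prover with polynomially many queries by $2^{-\Omega(\lambda)}$, which is negligible.

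The main obstacle is just this parameter bookkeeping. In particular I need to confirm that the lemma's strict inequalities and the case $\tau=\infty$ are covered, and that the same $\rho$ defining $C_\lambda$ enters both the quantum list-decoding radius and the classical list-recovery bound (Lemma~\ref{lem:frslistrecovery} requires $\zeta<1-\rho$). Both results are stated for the same code family and oracle distribution $\calD_\lambda$, so no further argument is needed.
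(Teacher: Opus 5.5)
Your proposal is correct and matches the paper's proof. Completeness comes directly from Theorem~\ref{thm:yzpositive} using $\rho>1-p_{\mathsf{stay}}$, and soundness comes from Lemma~\ref{lem:yzclassicalsoundness} after choosing constants $0<\epsilon<\zeta<1-\rho$ with $\tau>1-\epsilon$; your case split on whether $\tau\geq1$ just spells out the paper's one-line choice of these constants.
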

\begin{proof}
    The rate choice and Theorem~\ref{thm:yzpositive} give quantum completeness $\Omega(1)$.
    Since $\rho<\tau$ and $\rho<1$, we can choose constants
    $0<\epsilon<\zeta<1-\rho$ such that $1-\epsilon<\tau$.
    These constants satisfy $\tau>1-\epsilon$, so Lemma~\ref{lem:yzclassicalsoundness} gives classical soundness.
\end{proof}

If some $\ZZ_{\geq0}\setminus B_r$ is infinite, then $\tau=\infty$.
In this useful special case, the compatibility condition in Theorem~\ref{thm:yzpoq} reduces to $p_{\mathsf{stay}}>0$, and one may choose any fixed rate $\rho$ strictly between $1-p_{\mathsf{stay}}$ and $1$.

\subsection{Worst-Case Correctness}

The completeness in Theorem~\ref{thm:yzpoq} is averaged over the random oracle.
We now follow the final construction of~\cite[Section~6]{yamakawa2024verifiable} to obtain a protocol whose completeness holds for every fixed oracle.
Because the starting completeness is constant rather than $1-\negl(\lambda)$ as in~\cite{yamakawa2024verifiable}, we first amplify it using independent oracle slices.

\begin{theorem}[Worst-case-correct proof of quantumness]\label{thm:yzworstcasepoq}
    Under the hypotheses of Theorem~\ref{thm:yzpoq}, there is a keyless noninteractive publicly verifiable protocol relative to a random oracle with the following properties.
    For every fixed oracle, the honest quantum prover is accepted with probability $1-\negl(\lambda)$.
    For a uniformly random oracle, every computationally unbounded classical prover making polynomially many classical queries is accepted with probability at most $\negl(\lambda)$.
\end{theorem}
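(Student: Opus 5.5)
The plan follows the final construction of~\cite[Section~6]{yamakawa2024verifiable}, adapted in two stages. First I amplify the constant completeness of $\Pi_{\mathsf{base}}$ to $1-\negl(\lambda)$, averaged over the oracle. Then I convert average-case completeness into completeness for every fixed oracle by a random shift.

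\textbf{Amplification.} Split the random oracle into $k=\lambda$ independent slices, each supplying a tuple $(H^{(j)}_1,\ldots,H^{(j)}_m)$, for example by prefixing the slice index $j$ to the query. Over a uniform oracle these slices are independent samples from $\calD_\lambda$. In the amplified protocol the prover runs the algorithm of Theorem~\ref{thm:yzpositive} on every slice. It sends all $k$ candidate outputs, and the verifier accepts if at least one candidate is accepted by the $\Pi_{\mathsf{base}}$ verifier on its own slice. Since the slices are independent and each succeeds with probability $\Omega(1)$, completeness becomes $1-(1-\Omega(1))^k=1-2^{-\Omega(\lambda)}$. For soundness, a classical prover with $Q$ queries is, on each slice, a $Q$-query prover for $\Pi_{\mathsf{base}}$, because the other slices can be simulated by lazy sampling. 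A union bound with Lemma~\ref{lem:yzclassicalsoundness} then gives acceptance at most $k\cdot2^{-\Omega(\lambda)}=2^{-\Omega(\lambda)}$.

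\textbf{Worst-case correctness.} Here I use the Yamakawa--Zhandry trick. In addition to the slices, the oracle provides a long random string $G$, and the effective oracle on slice $j$ becomes $H\oplus G_{\text{shift}}$, where the shift is chosen by the prover. Concretely, following~\cite{yamakawa2024verifiable}, the protocol is run relative to the oracle $H'(x)=H(x)\oplus R_\sigma(x)$, where $\sigma$ is a short seed picked by the prover and $R_\sigma$ is a $\poly(\lambda)$-wise independent function family. Fix any oracle $H$. For a uniformly random $\sigma$ from a suitably large seed space, $H'$ is statistically close, against a $\poly(\lambda)$-query quantum algorithm, to a uniform oracle. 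This holds by Zhandry's result that $2Q$-wise independent functions perfectly simulate random oracles for $Q$-query quantum algorithms. The honest prover therefore samples $\sigma$ itself, runs the amplified prover relative to $H'$, and sends $\sigma$ together with its output. The verifier recomputes $H'$ and checks. Because $R_\sigma$ is uniformly random within a $2Q$-wise independent family, the output distribution of the honest prover is identical to its distribution under a uniformly random oracle. So for every fixed $H$, completeness is $1-2^{-\Omega(\lambda)}$.

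\textbf{Main obstacle: soundness once the prover chooses $\sigma$.} A classical prover may choose $\sigma$ adaptively. For each fixed $\sigma$, the oracle $H\oplus R_\sigma$ is uniformly random when $H$ is, so the amplified soundness bound $2^{-\Omega(\lambda)}$ applies to any prover committed to that $\sigma$. I would take a union bound over all seeds. This requires the seed length to be $o(\lambda)$, for example $\poly(\log\lambda)\cdot\poly(Q)$ bits is too many, so I would instead amplify soundness further. Taking $k=\lambda^{C}$ slices gives a per-$\sigma$ bound of $2^{-\Omega(\lambda^{C})}$, which comfortably beats $2^{|\sigma|}$ once $C$ is large enough relative to $|\sigma|=\poly(\lambda)$. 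However, Lemma~\ref{lem:yzclassicalsoundness} only gives $2^{-\Omega(\lambda)}$ per slice. What makes the amplification work is that the slices are independent: requiring success on a constant fraction of slices multiplies the bounds, and completeness on a constant fraction still holds by a Chernoff bound. So I would change the acceptance rule to \emph{at least half of the slices accept}. Completeness is still $1-2^{-\Omega(k)}$ provided the per-slice success probability exceeds $1/2$. If it does not, I would threshold at half the per-slice success probability instead. Soundness then becomes $\binom{k}{k/2}2^{-\Omega(\lambda k)}$ via a lazy-sampling argument slice by slice. This is the step I expect to require the most care: the adaptive-query coupling across slices must be set up so that the bound of Lemma~\ref{lem:yzclassicalsoundness} multiplies across slices.
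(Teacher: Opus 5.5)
There is a genuine gap: your soundness step is circular. You take $k=\lambda^{C}$ slices and argue that the per-seed bound beats $2^{|\sigma|}$ ``once $C$ is large enough relative to $|\sigma|=\poly(\lambda)$.'' But $|\sigma|$ depends on $C$.

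For worst-case completeness, $R_\sigma$ must make the whole amplified experiment look uniform: all $k$ slices of the prover \emph{and} the verification, whose acceptance depends on oracle values at the outputs. The tool you invoke therefore requires $R_\sigma$ to be $2Q$-wise independent with $Q\geq km$, since the verifier alone reads $km$ oracle values. Any such family into $[n]$ has at least $n^{2Q}$ members, so $|\sigma|\geq 2km\log n=\Omega(k\lambda\log\lambda)$. Your per-seed soundness is $2^{-c\lambda k}$ for a fixed constant $c$. Hence $2^{|\sigma|}\cdot 2^{-c\lambda k}$ blows up for every $C$: both exponents are linear in $k$, and the seed's coefficient is larger.

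If you avoid this by sharing one seed of size independent of $k$ across all slices, a different problem appears. For a fixed oracle the slice outcomes are correlated through $\sigma$, and you only control each slice's marginal success probability. The Chernoff bound behind your ``at least half the slices accept'' rule is then unjustified.

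The missing idea is to decouple the independence parameter from the repetition count, using two levels with different acceptance rules. The paper treats your first-stage OR amplification over $u=\lambda$ slices as one \emph{trial}. It masks only this trial with a $2Q_0$-wise independent $f_K$ on $[u]\times[m]\times\F_q^s$, where $Q_0$ bounds one trial plus its verification, so $\log|\mathcal K_\lambda|=\poly(\lambda)$ regardless of the outer repetitions. It then runs $t$ trials with fresh oracle slices but the \emph{same} key, $G_i^{j,a,K}(x)=H(j,a,i,x)+f_K(a,i,x)\bmod n$, and requires \emph{every} trial to accept.
\begin{itemize}
\item Completeness for a fixed $H$ needs only marginals. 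Trial $j$ is a $Q_0$-query algorithm in $f_K$ (the fixed values $H(j,\cdot)$ merely shift it), so it fails with probability $2^{-\Omega(\lambda)}$. A union bound over $j$ then needs no independence across trials.
\item Soundness for a fixed key follows because the $G^{j,a,K}$ are then independent and uniform. A stopping-time argument over indexed threshold candidates gives $(uL)^t2^{-\Omega(tm)}=2^{-\Omega(t\lambda)}$.
\item The list-recovery factor $L^t$ here is one your bound $\binom{k}{k/2}2^{-\Omega(\lambda k)}$ also omits. It is harmless, since $L=2^{\tilde O(\lambda^{c'})}$ with $c'<1$.
\end{itemize}
Since the number of keys does not grow with $t$, taking $t$ a large enough polynomial makes $|\mathcal K_\lambda|\,2^{-\Omega(t\lambda)}$ negligible.
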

\begin{proof}
    Fix a constant $p_0>0$ that lower bounds the completeness of $\Pi_{\mathsf{base}}$ for all sufficiently large $\lambda$, and set $u=\lambda$.
    Also fix constants $0<\epsilon<\zeta<1-\rho$ satisfying $1-\epsilon<\tau$, as in the proof of Theorem~\ref{thm:yzpoq}.
    In one amplified trial, provide $u$ independent tuples of oracle functions.
    Run the base prover independently on every tuple, verify each answer, and retain the first accepted pair consisting of its slice index and proof.
    If no answer is accepted, return a fixed dummy pair that the verifier rejects.
    For a uniformly random oracle, the probability that all $u$ attempts fail is at most
    $(1-p_0)^u=2^{-\Omega(\lambda)}$.

    We next use the limited-independence lemma of~\cite[Lemma~2.5]{yamakawa2024verifiable}.
    The acceptance probability of a quantum algorithm making at most $Q_0$ queries is identical whether its oracle is a uniformly random function or is sampled from a $2Q_0$-wise independent family.
    Let $Q_0=\poly(\lambda)$ upper bound the total number of queries made by one amplified trial together with its verification.
    Identify $[n]$ with the additive group $\ZZ/n\ZZ$, and let
    \[
        \mathcal F_\lambda
        =
        \{f_K:[u]\times[m]\times\F_q^s\to[n]\}_{K\in\mathcal K_\lambda}
    \]
    be an efficiently computable $2Q_0$-wise independent family.
    Such a family has key length $\poly(\lambda)$: encode the domain injectively in a sufficiently large binary extension field, evaluate a random polynomial of degree $2Q_0-1$, and project the output onto $\log n$ fixed binary coordinates.
    Importantly, the family and its key length do not depend on the outer repetition parameter introduced next.

    Let $t=\poly(\lambda)$ be chosen below, and regard the random oracle as a function
    \[
        H:[t]\times[u]\times[m]\times\F_q^s\to[n].
    \]
    For $j\in[t]$, $a\in[u]$, and $i\in[m]$, define the masked slice
    \[
        G_i^{j,a,K}(x)
        =
        H(j,a,i,x)+f_K(a,i,x)\pmod n.
    \]
    The prover samples $K$ uniformly.
    For every $j\in[t]$, using a fresh workspace and fresh internal randomness, it runs the amplified trial using the $u$ tuples
    $\{G_i^{j,a,K}\}_{i\in[m]}$, and obtains a pair $(a_j,\xv^{(j)})$.
    It sends $K$ and all $t$ pairs to the verifier.
    The verifier accepts if and only if, for every $j$, the base verifier accepts $\xv^{(j)}$ relative to the tuple $\{G_i^{j,a_j,K}\}_{i\in[m]}$.
    Both algorithms are uniform and use polynomially many gates and oracle queries for polynomial $t$.

    We first prove correctness for an arbitrary fixed $H$.
    Fix $j$ and view the composition of the amplified prover and its verifier as a $Q_0$-query algorithm whose oracle is $f_K$.
    Adding the fixed values of $H$ preserves $2Q_0$-wise independence, so the limited-independence lemma lets us replace $f_K$ by a uniformly random function without changing the acceptance probability.
    After this replacement, the $u$ masked tuples are independent uniformly random oracle instances.
    The probability that the $j$th amplified trial fails is therefore $2^{-\Omega(\lambda)}$.
    A union bound over polynomially many $j$ shows that all $t$ trials pass with probability $1-\negl(\lambda)$.
    This argument holds for every fixed $H$.

    It remains to prove soundness for uniformly random $H$.
    First fix a key $K$.
    Translation by the fixed function $f_K$ maps the random oracle $H$ to a collection of independent uniformly random functions $G_i^{j,a,K}$.
    Let a classical prover make $Q=\poly(\lambda)$ queries, including the verifier queries that we append after it outputs.
    Choose constants $0<c<c'<1$ with $Q\leq2^{\lambda^c}$ for all sufficiently large $\lambda$, and use the list size
    $L=2^{\tilde{O}(\lambda^{c'})}$ from Lemma~\ref{lem:frslistrecovery}.

    Apply the threshold argument from Lemma~\ref{lem:yzclassicalsoundness} separately to every pair $(j,a)$.
    For each such pair, at most $L$ codewords ever reach the threshold $K_0=\lceil(1-\zeta)m\rceil$.
    Thus, for each outer index $j$, the prover's eventual pair $(a_j,\xv^{(j)})$ is among at most $uL$ threshold candidates.
    Index the candidates within each $(j,a)$ slice by the order in which they reach the threshold, and fix a tuple of indices $(a_j,b_j)_{j\in[t]}\in([u]\times[L])^t$.
    Let $T_j$ be the stopping time at which the $b_j$th candidate in slice $(j,a_j)$ reaches the threshold, with $T_j=\infty$ if it never does.
    We bound the event that every $T_j$ is finite and all the corresponding candidates are valid without conditioning on the future event that the candidates exist.

    Run the lazy-sampling experiment and stop at the earliest finite $T_j$ among the unprocessed indices, breaking ties by $j$.
    At that stopping time, the candidate is fixed by the current transcript, while at least $\epsilon m$ of its oracle values in its own $(j,a_j)$ slice remain independent and uniform.
    Lemma~\ref{lem:prangeresistance} therefore bounds its conditional probability of validity by a uniform $\delta_\lambda=2^{-\Omega(m)}$.
    Reveal those remaining values for the analysis, add them to the filtration, and continue the experiment, answering later queries from the revealed values when necessary.
    The target candidates for the unprocessed indices have distinct outer indices, so the unrevealed values in their slices remain independent and uniform even after this enlargement of the filtration.
    If a next stopping time is infinite, the event being bounded does not occur; otherwise the same argument applies.
    Induction and the tower property give probability at most $\delta_\lambda^t=2^{-\Omega(tm)}$ that all $t$ indexed candidates exist and are valid.
    Let $E_K$ be the event that the prover outputs this fixed key $K$ and the verifier accepts.
    Union bounding over the at most $(uL)^t$ candidate tuples gives
    \[
        \Pr[E_K]
        \leq
        (uL)^t2^{-\Omega(tm)}
        =
        2^{-\Omega(t\lambda)}.
    \]

    Finally, a cheating prover may choose the key appearing in its proof.
    Since the acceptance event is the union of the $E_K$ over $K\in\mathcal K_\lambda$, another union bound gives soundness at most
    $|\mathcal K_\lambda|2^{-\Omega(t\lambda)}$.
    Since $\log|\mathcal K_\lambda|=\poly(\lambda)$ independently of $t$, we may choose a sufficiently large polynomial $t$ so that this quantity is negligible.
\end{proof}

\section{Ordinary Reed--Solomon Codes in the Plain Model}\label{sec:plainrs}

The large coordinate domains and oracle access in Section~\ref{sec:yzoracle} are used to prove classical soundness, not to obtain the quantum algorithm.
We now give a plain-model variant in which there is no nontrivial folding, every function is represented by its complete truth table, and the code is an ordinary Reed--Solomon code.

\begin{definition}[Plain-model ordinary-RS instances]\label{def:plainrs}
    Fix admissible load data $B_1,\ldots,B_R$ and $\alpha_1,\ldots,\alpha_R$ with $\eta=1$.
    For a prime power $q$ tending to infinity, let $m=q-1$, $n=q$, and $k=\lfloor\rho m\rfloor$ for a fixed rational constant $\rho\in(0,1)$.
    Fix an explicitly represented field $\F_q$ and an explicitly specified generator $\gamma$ of $\F_q^*$.
    Define
    \[
        C_q=\RS_{\F_q,\gamma,k}\subseteq\F_q^m.
    \]
    Partition $[n]$ into sets $S_1,\ldots,S_R$ with $|S_r|=\alpha_rn+O(1)$, and let $A\subseteq[n]^m$ be the resulting special-form load set from Definition~\ref{def:specialloadstates}.
    The field and code parameters, the complete partition, and the truncated-load membership circuits from Definition~\ref{def:loadstates} are public parts of the instance.

    Sample independent uniformly random perfectly balanced functions $H_1,\ldots,H_m:\F_q\to[n]$, and provide their complete truth tables as the input.
    Since both the domain and codomain have size $q$, these functions are precisely uniformly random bijections.
    The search problem is to output an element of $C_q\cap A_{H_1,\ldots,H_m}$, where Definition~\ref{def:foldedsets} is used with $s=1$.
\end{definition}

The truth tables in Definition~\ref{def:plainrs} have $O(q^2\log q)$ bits in total, and the remaining public data have size $\poly(q)$.
Thus the complete instance has polynomial length, the functions are ordinary public inputs, and there is no random oracle.

\begin{theorem}[Plain-model ordinary-RS algorithms]\label{thm:plainrs}
    Use the random variables from Theorem~\ref{thm:maininfinite}, and define
    \[
        p_{\mathsf{stay}}
        =\Pr[K_\src-1\in B_{I_\src}]
         \Pr[K_\tar+1\in B_{I_\tar}].
    \]
    For the instances in Definition~\ref{def:plainrs}, the following hold for all sufficiently large $q$:
    \begin{enumerate}[(i)]
        \item If $q$ is prime and $\rho>1-p_{\mathsf{stay}}$, a uniform quantum algorithm runs in time $\poly(q)$ and outputs an element of $C_q\cap A_{H_1,\ldots,H_m}$ with probability at least $q^{-O(1)}$.
        \item For every prime power $q$, if $\rho>1-p_{\mathsf{stay}}^2$, a uniform quantum algorithm runs in time $\poly(q)$ and outputs such an element with probability $\Omega(1)$.
    \end{enumerate}
    Both probabilities are over the random public truth tables and the algorithm's internal randomness.
    The polynomial exponents and success bounds may depend on the fixed rate and load data.
    A purported output can be verified deterministically in time $\poly(q)$.
\end{theorem}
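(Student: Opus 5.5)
The plan is to instantiate Corollary~\ref{cor:ctlistdecode} with block length $s=1$, exactly as in the proof of Theorem~\ref{thm:yzpositive}, but with public truth tables in place of oracles and with the two ordinary Reed--Solomon decoders of Table~\ref{tab:rsdecoders}.

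\textbf{Nonemptiness and state preparation.} Since $m=q-1$ and $n=q$, we have $m=\eta n+O(1)$ with $\eta=1$. The local CLT argument at the start of the proof of Theorem~\ref{thm:yzpositive} (Corollary~\ref{cor:lclt-central-ratio} applied to independent tilted Poisson loads) then shows that $A\neq\emptyset$ for large $q$. Because each $H_i$ is a bijection, $A_{H_1,\ldots,H_m}$ is also nonempty. To prepare its uniform state, first prepare $\ket A$ by Theorem~\ref{thm:Aprep}. Then apply $H_i^{-1}$ to each coordinate, read reversibly from the public truth table. This is a permutation of basis states, so it maps $\ket A$ exactly to $\ket{A_{H_1,\ldots,H_m}}$ and needs no amplitude amplification.

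\textbf{Translation invariance.} Use the action $H_i^{(\vv)}(x)=H_i(x-v_i)$, implemented by reindexing the truth table. Precomposition with a translation preserves the uniform law on bijections, and the independence across $i$ is kept. The target set satisfies $A_{H^{(\vv)}}=A_H+\vv$.

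\textbf{Fourier mass.} Corollary~\ref{cor:efwfoldedstate} with $s=1$ gives
\[
    \EFW(\ket{A_{H_1,\ldots,H_m}})=1-p_{\mathsf{stay}}+O(q^{-1/2}),
\]
uniformly over the instance, since the bijections are balanced. The remark after the definition of $\EFW$ lets us use $\QFT^\dag$ in place of $\QFT$. Corollary~\ref{cor:fourier-mass} then yields $P_\sparse=\Omega(1)$ at any radius $\zeta$ that exceeds $1-p_{\mathsf{stay}}$ by a fixed constant.

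\textbf{Dual decoder.} By Lemma~\ref{lem:grsduality}, the dual $C_q^\perp$ is a generalized Reed--Solomon code of rate $R^\perp=1-\rho+O(1/q)$.
\begin{itemize}
    \item For (i), take $\zeta$ strictly between $1-p_{\mathsf{stay}}$ and $\rho$ with constant gaps, and apply Theorem~\ref{thm:rscapacity} with a fixed slack. Its hypotheses hold because $q$ is prime, $q\geq M=q-1$, and $R^\perp\leq1-\epsilon$. It lists within radius $1-R^\perp-\epsilon\geq\zeta$ in $q^{O(1)}$ time with lists of size $L=q^{O(1)}$. Corollary~\ref{cor:ctlistdecode} then gives success probability $P_\sparse/L-2^{-\Omega(q)}=q^{-O(1)}$.
    \item For (ii), use Theorem~\ref{thm:guruswamisudan}, which works over every prime power and has radius $1-\sqrt{R^\perp}-\delta$ and list size $O(1/\delta)$. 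We need $1-p_{\mathsf{stay}}<1-\sqrt{1-\rho}$, which is equivalent to $\sqrt{1-\rho}<p_{\mathsf{stay}}$, i.e.\ $\rho>1-p_{\mathsf{stay}}^2$. This is exactly the hypothesis, and choosing $\zeta$ and $\delta$ in the constant gap gives $\Omega(1)$ success.
\end{itemize}

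\textbf{Verification.} Check $\xv\in C_q$ with a parity-check matrix, compute $\bigl(H_i(x_i)\bigr)_i$ from the truth tables, form its histogram, and test membership in the load tables. This takes $\poly(q)$ time.

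I expect the main obstacle to be bookkeeping rather than a new idea. One issue is that the $O(q^{-1/2})$ error and the $O(1/q)$ rate drift must both be absorbed into the fixed gaps, which holds once $q$ is sufficiently large. The other is checking that the constants in Theorem~\ref{thm:maininfinite} depend only on the fixed load data, so that the Fourier-mass bound holds uniformly over all instances as Corollary~\ref{cor:ctlistdecode} requires.
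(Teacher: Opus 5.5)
Your proposal is correct and matches the paper's proof essentially step for step. Both arguments establish nonemptiness of $A$ via the local CLT, get translation invariance from $H_i^{(\vv)}(x)=H_i(x-v_i)$, prepare the state exactly by applying the inverse bijections to $\ket A$, and bound the Fourier mass using Corollary~\ref{cor:efwfoldedstate} with Markov; they then invoke Corollary~\ref{cor:ctlistdecode}, using Jeronimo's capacity decoder for (i) and Guruswami--Sudan for (ii). The uniformity concern you raise is settled in the paper the same way: by Lemma~\ref{lem:foldpreservesexit}, the exit probability does not depend on the particular bijections at all.
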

\begin{proof}
    We first verify that $A$ is nonempty for all sufficiently large $q$.
    Independently sample $K_a\sim\tpoiss_{\nu,B_r}$ for every $a\in S_r$.
    As in the proof of Theorem~\ref{thm:yzpositive}, admissibility verifies the hypotheses of Theorem~\ref{thm:1dlocalclt}.
    Since the weighted mean is $\eta=1$ and $m=n-1$, Corollary~\ref{cor:lclt-central-ratio} gives
    $\Pr[\sum_aK_a=m]=\Theta(n^{-1/2})>0$.
    Thus some allowed load vector sums to $m$, so $A$ is nonempty.

    Let $\calD_q$ be the distribution over the represented instances of Definition~\ref{def:plainrs}, including the complete public truth tables.
    Every target set in its support is nonempty because $A$ is nonempty and all the $H_i$'s are bijections.
    This distribution of representations is translation-invariant.
    Indeed, for $\vv\in\F_q^m$, define $H_i^{(\vv)}(x)=H_i(x-v_i)$.
    Precomposition by a translation preserves the uniform distribution on bijections, and
    \[
        A_{H_1^{(\vv)},\ldots,H_m^{(\vv)}}
        =
        A_{H_1,\ldots,H_m}+\vv.
    \]
    Given the original tables and $\vv$, a reversible lookup for $H_i^{(\vv)}$ is implemented by subtracting $v_i$ from its input before looking up $H_i$.
    Thus the translated-set representations used in Corollary~\ref{cor:ctlistdecode} are also available in time $\poly(q)$.

    Next, for every tuple in the support of $\calD_q$, the corresponding set state is efficiently preparable from the public tables.
    Prepare $\ket{A}$ using Theorem~\ref{thm:Aprep}.
    From the truth table of a bijection $H_i$, its inverse table can be computed in time $\poly(q)$.
    Represent a field element using $\lceil\log q\rceil$ qubits and extend the permutation by fixing the unused basis states.
    Decomposing it into at most $q-1$ transpositions gives a reversible circuit of size $\poly(q)$.
    Applying the inverse permutation in every coordinate maps
    \[
        \frac{1}{\sqrt{|A|}}\sum_{\yv\in A}\ket{\yv}
        \quad\longmapsto\quad
        \frac{1}{\sqrt{|A|}}
        \sum_{\yv\in A}
        \ket{H_1^{-1}(y_1),\ldots,H_m^{-1}(y_m)}
        =
        \ket{A_{H_1,\ldots,H_m}}.
    \]
    The total preprocessing and circuit size are polynomial in $q$, and the preparation error can be made exponentially small in polynomial time.

    The remaining requirements are Fourier sparsity and dual decoding.
    Let $W$ be the normalized Hamming weight obtained by measuring
    $\QFT^\dag\ket{A_{H_1,\ldots,H_m}}$.
    Corollary~\ref{cor:efwfoldedstate} with $s=1$, together with the Fourier-convention observation preceding Corollary~\ref{cor:fourier-exit}, gives uniformly over the public bijections
    \[
        \EE[W]=1-p_{\mathsf{stay}}+O(q^{-1/2}).
    \]
    By Lemma~\ref{lem:grsduality}, the dual is a generalized Reed--Solomon code, and its rate is
    \[
        R_q^\perp=\frac{m-k}{m}=1-\rho+O(q^{-1}).
    \]
    We now choose a decoding radius separately for the two claims.

    For (i), fix a rational constant $\delta>0$ such that
    \[
        1-p_{\mathsf{stay}}<\zeta:=\rho-\delta.
    \]
    Markov's inequality gives
    \[
        \Pr[W\leq\zeta]
        \geq1-\frac{\EE[W]}{\zeta}=\Omega(1).
    \]
    Thus a fixed constant $P_\sparse>0$ lower-bounds this probability uniformly over the public bijections.
    For all sufficiently large $q$, we have $R_q^\perp\leq1-\rho+\delta/2$, so
    $\zeta\leq1-R_q^\perp-\delta/2$.
    Since $q$ is prime and $m=q-1$, Theorem~\ref{thm:rscapacity} with slack $\delta/2$ applies to $C_q^\perp$.
    It lists every codeword in that radius in time $q^{O_\delta(1)}$, with list size $L\leq m^{O_\delta(1)}$.
    The bounded-time version of Corollary~\ref{cor:ctlistdecode} therefore gives success at least
    \[
        \frac{P_\sparse}{L}-2^{-\Omega(q+m)}\geq q^{-O(1)}
    \]
    in polynomial time, including the factor $L/P_\sparse$ and the polynomial restart cap in the reduction's runtime.

    For (ii), fix a rational constant $\delta>0$ such that
    \[
        1-p_{\mathsf{stay}}<\zeta:=1-\sqrt{1-\rho}-\delta.
    \]
    Again Markov's inequality gives a fixed lower bound $P_\sparse>0$ on $\Pr[W\leq\zeta]$, uniformly over the public bijections.
    For sufficiently large $q$,
    $\sqrt{R_q^\perp}\leq\sqrt{1-\rho}+\delta/2$, whence
    $\zeta\leq1-\sqrt{R_q^\perp}-\delta/2$.
    Theorem~\ref{thm:guruswamisudan} (using the decoder of~\cite{guruswamiSudan1999}) gives polynomial-time list decoding at this radius with $L=O(1/\delta)$, since $R_q^\perp$ is bounded away from zero.
    The bounded-time version of Corollary~\ref{cor:ctlistdecode} consequently gives success $\Omega(1)$ in polynomial time.

    In both cases every use of an $H_i$ is a reversible lookup in its public table, so the algorithm is in the plain model.
    Finally, after producing a candidate, the algorithm checks membership in $C_q$, evaluates the public tables in all $m$ coordinates, and checks the resulting histogram.
    It outputs the candidate only if every check passes.
    These checks take time $\poly(q)$ and accept exactly the elements of $C_q\cap A_{H_1,\ldots,H_m}$.
\end{proof}

The first part of Theorem~\ref{thm:plainrs} uses ordinary prime-field capacity decoding~\cite{brakensiek2026capacity,jeronimo2026capacity} and requires no folding.
Its polynomial list bound yields inverse-polynomial average success; it does not give the constant success of the second part.
Repetition on a fixed input alone does not upgrade this average guarantee, since the successful inputs could occupy only an inverse-polynomial fraction of the instance distribution.

We also emphasize that Theorem~\ref{thm:plainrs} makes no classical hardness claim; the online soundness argument from Section~\ref{sec:yzoracle} relies on leaving many values of each exponentially large random function unqueried.
Here each entire truth table has polynomial size and can be read by a polynomial-time classical algorithm, so that argument does not apply.

\section{A Heterogeneous Code-Intersection Example}
\label{sec:heterogeneous-example}

We specialize the oracle separation of Section~\ref{sec:yzoracle} and the plain-model algorithms of
Section~\ref{sec:plainrs} to the three-class example highlighted in the introduction.
For every alphabet size $n$, set $m=n-1$, and let $S_1$
consist of the first $\lfloor n/2\rfloor$ symbols, let $S_2$ consist of the next
$\lfloor n/3\rfloor$ symbols, and let $S_3$ contain the remaining symbols.  Thus
\[
    |S_1|=\left\lfloor\frac n2\right\rfloor,\qquad
    |S_2|=\left\lfloor\frac n3\right\rfloor,\qquad
    |S_3|=n-|S_1|-|S_2|.
\]
Let
\[
    B_1=\{0\},\qquad B_2=\{1,2\},\qquad B_3=\{3,4\},
\]
and, for $\yv\in[n]^m$, write
$c_a(\yv)=|\{i\in[m]:y_i=a\}|$.  Define
\[
    A_{\mathsf{het}}
    =
    \{\yv\in[n]^m:c_a(\yv)\in B_r
      \text{ for every }r\in[3]\text{ and }a\in S_r\}.
\]
For $r\in[3]$, let $A^{(r)}$ be the relaxation retaining only the class-$r$ constraints:
\[
    A^{(r)}
    =
    \{\yv\in[n]^m:c_a(\yv)\in B_r\text{ for every }a\in S_r\}.
\]
Thus
\[
    A_{\mathsf{het}}=A^{(1)}\cap A^{(2)}\cap A^{(3)}.
\]

\subsection{Stability Calculation}
Here we compute the $p_\mathsf{stay}$ associated to $A_\mathsf{het}$ using Theorem~\ref{thm:maininfinite}.
With
\[
    (\alpha_1,\alpha_2,\alpha_3)
    =
    \left(\frac12,\frac13,\frac16\right),
\]
the tilted means in the second and third classes are
\[
    \mu_2(\nu)=\EE_{X \gets \tpoiss_{\nu, B_2}} [X] = \frac{2(1+\nu)}{2+\nu},
    \qquad
    \mu_3(\nu)=\EE_{X \gets \tpoiss_{\nu, B_3}} [X] =\frac{4(3+\nu)}{4+\nu}.
\]
For $\tpoiss_{\nu, (B_1, B_2, B_3), (A_1, A_2, A_3)}$ to have mean 1, $\nu$ must satisfy:
\[
    \frac13\mu_2(\nu)+\frac16\mu_3(\nu)=1
    \quad\Longleftrightarrow\quad
    \frac1{\nu+2}+\frac1{\nu+4}=\frac12.
\]
Solving and noting that $\nu$ must be positive implies $\nu=\sqrt5-1$.  The load family is admissible: its fixed finite allowed-load
sets have polynomial-time membership tests,
this $\nu$ witnesses the required mean, and $1\in B_2-B_2$, so the generated additive subgroup is
$\ZZ$.
Hence we may apply Theorem~\ref{thm:maininfinite}.

Under the target-load distribution from Theorem~\ref{thm:maininfinite}, an increment is legal only
from $1$ to $2$ or from $3$ to $4$.  Hence
\[
    \Pr[K_{\tar}+1\in B_{I_{\tar}}]
    =
    \frac13\frac2{\nu+2}+\frac16\frac4{\nu+4}
    =
    \frac13.
\]
Under the size-biased source-load distribution, a decrement is legal only from $2$ to $1$ or from $4$
to $3$.  Since the mean load is one,
\[
    \Pr[K_{\src}-1\in B_{I_{\src}}]
    =
    \frac13\frac{2\nu}{\nu+2}+\frac16\frac{4\nu}{\nu+4}
    =
    \frac{\nu}{3}.
\]
Therefore
\[
    p_{\mathsf{stay}}
    =
    \frac{\nu}{9}
    =
    \frac{\sqrt5-1}{9},
    \qquad
    1-p_{\mathsf{stay}}
    =
    \frac{10-\sqrt5}{9}.
\]

\subsection{Oracle Setting}

Use the random-oracle Reed--Solomon parameters $q,M,s,n,m,C_\lambda$ and the independent functions
$H_i:\F_q^s\to[n]$ from Section~\ref{sec:yzoracle}, with rate
\[
    \rho=\frac{7}{8}.
\]
For any $A'\subseteq[n]^m$, let its associated verifier accept
$\xv=(\xv_1,\ldots,\xv_m)\in(\F_q^s)^m$ precisely when
\[
    \xv\in C_\lambda
    \quad\text{and}\quad
    (H_1(\xv_1),\ldots,H_m(\xv_m))\in A'.
\]

\begin{corollary}[Three-way heterogeneous separation]
\label{cor:heterogeneous-separation}
    There is a uniform quantum algorithm running in time $\poly(\lambda)$ and making
    $\poly(\lambda)$ quantum queries which, with probability $\Omega(1)$ over the random functions and its
    internal randomness, outputs a word accepted by the verifier for $A_{\mathsf{het}}$.  The same output
    is therefore accepted by all three relaxed verifiers.

    On the other hand, for every fixed $r\in[3]$, every computationally unbounded classical algorithm
    making $\poly(\lambda)$ classical queries outputs a word accepted by the verifier for the relaxed set $A^{(r)}$ with
    probability at most $2^{-\Omega(\lambda)}$.
\end{corollary}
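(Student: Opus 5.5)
For the quantum claim I would apply Theorem~\ref{thm:yzpositive} to the three-class load data. The stability calculation above already checks admissibility with $\eta=1$: the witness is $\nu=\sqrt5-1$, the membership tests are trivial, and $1\in B_2-B_2$. It also gives $1-p_{\mathsf{stay}}=(10-\sqrt5)/9$. The only remaining hypothesis is $\rho=7/8>(10-\sqrt5)/9$. Since $\sqrt5>2.236$, we have $(10-\sqrt5)/9<7.764/9<0.8627<0.875$. Theorem~\ref{thm:yzpositive} then yields a $\poly(\lambda)$-time, $\poly(\lambda)$-query algorithm that outputs an element of $C_\lambda\cap(A_{\mathsf{het}})_{H_1,\ldots,H_m}$ with probability $\Omega(1)$. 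Because $A_{\mathsf{het}}\subseteq A^{(r)}$ for each $r$, every word accepted by the $A_{\mathsf{het}}$ verifier is also accepted by each relaxed verifier.

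For the classical claim I would fix $r$ and rerun Lemma~\ref{lem:yzclassicalsoundness} with $A$ replaced by $A^{(r)}$. The main obstacle is that $A^{(r)}$ is not of the special form of Definition~\ref{def:specialloadstates}, because the symbols outside $S_r$ are unconstrained. I would handle this by viewing $A^{(r)}$ as the special-form load set with classes $S_r$ and $[n]\setminus S_r$, using load sets $B_r$ and $\ZZ_{\geq0}$ respectively. The proof of Lemma~\ref{lem:yzclassicalsoundness} uses only two ingredients. The first is list recovery of $C_\lambda^{(s)}$ (Lemma~\ref{lem:frslistrecovery}), and this does not depend on $A$. The second is Lemma~\ref{lem:prangeresistance}, which I need to check still applies to this two-class data.

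In this two-class data the unconstrained class has $\tau=0$, since $\ZZ_{\geq0}\subseteq\ZZ_{\geq0}$. The constrained class has $\tau_r=\infty$, because each $B_r$ is finite and so omits infinitely many counts. The weighted sum is therefore $\tau=\infty>(1-\epsilon)\eta$ for every $\epsilon$. The proof of Lemma~\ref{lem:prangeresistance} never uses admissibility or the tilted-Poisson witness, only the class sizes $|S_r|=\alpha_r n+O(1)$ and $m=\eta n+O(1)$ with $\eta=1$. Both hold here, as $|S_r|\in\{\lfloor n/2\rfloor,\lfloor n/3\rfloor,n/6+O(1)\}$ and $m=n-1$.

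With this in place I would choose any constants $0<\epsilon<\zeta<1-\rho=1/8$. The hypothesis $\tau>1-\epsilon$ is automatic, so the argument of Lemma~\ref{lem:yzclassicalsoundness} goes through verbatim: threshold candidates from list recovery, lazy sampling, random completion, and a union bound. It gives acceptance probability at most $L\cdot2^{-\Omega(m)}=2^{-\Omega(\lambda)}$ for the $A^{(r)}$ verifier, for each of the three choices of $r$. A sanity check on this approach is that the vulnerable-symbol count in Lemma~\ref{lem:prangeresistance} is at least linear in $n$. This holds because every symbol in $S_r$ becomes vulnerable once $u_r$ is taken larger than $\max B_r$.
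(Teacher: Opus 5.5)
Your proposal is correct and follows the paper's proof essentially step for step. For the quantum side you apply Theorem~\ref{thm:yzpositive} with $\rho=7/8>(10-\sqrt5)/9$; for the classical side you fix $r$, observe that $A^{(r)}$ is a special-form load set with $\tau=\infty$, and run Lemma~\ref{lem:yzclassicalsoundness} verbatim with any $0<\epsilon<\zeta<1/8$. (The paper keeps all three classes and replaces the two dropped $B_j$ by $\ZZ_{\geq0}$ rather than merging them into one class, which changes nothing.) One small imprecision is in your closing sanity check: a symbol of $S_r$ with $c_a\geq u_r>\max B_r$ is not vulnerable, but it already forces rejection, so the linear count holds after that case split, and the lemma's own counting bound supplies it regardless.
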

\begin{proof}
    The chosen rate satisfies
    \[
        \frac{7}{8}
        -
        \frac{10-\sqrt5}{9}
        =
        \frac{8\sqrt5-17}{72}>0.
    \]
    Theorem~\ref{thm:yzpositive} consequently gives the claimed quantum algorithm for
    $A_{\mathsf{het}}$.  Since $A_{\mathsf{het}}\subseteq A^{(r)}$ for every $r$, its successful output is
    accepted by each relaxed verifier.

    Fix $r\in[3]$.  View $A^{(r)}$ as a three-class load set by replacing $B_j$ with
    $\ZZ_{\geq0}$ for $j\neq r$.  The retained $B_r$ is finite, so its tail threshold is $\infty$; the two
    unrestricted classes have tail threshold $0$.  Thus the weighted tail threshold of this relaxed
    family is $\tau=\infty$.  For example, the constants
    \[
        \epsilon=\frac1{32},
        \qquad
        \zeta=\frac1{16}
    \]
    satisfy $0<\epsilon<\zeta<1-\rho$ and $\tau>1-\epsilon$.
    The proof of Lemma~\ref{lem:yzclassicalsoundness} applies verbatim to this relaxed verifier and gives
    success probability $2^{-\Omega(\lambda)}$ for every polynomial-query classical algorithm.
\end{proof}

\subsection{Plain Model}

Instantiate Definition~\ref{def:plainrs} with $n=q$, $m=q-1$, the partition above, and a fixed rational
rate $\rho\in(0,1)$.  The code is the ordinary Reed--Solomon code
\[
    C_q=\RS_{\F_q,\gamma,\lfloor\rho(q-1)\rfloor}\subseteq\F_q^{q-1}.
\]
The functions $H_i:\F_q\to[q]$ are independent uniformly random bijections, supplied as complete public
truth tables.  The search problem is to find
\[
    \xv\in C_q
    \quad\text{such that}\quad
    (H_1(x_1),\ldots,H_m(x_m))\in A_{\mathsf{het}}.
\]
Thus the histogram constraints are imposed after a separate public relabeling in each coordinate.

\begin{corollary}[Plain-model heterogeneous instances]
\label{cor:heterogeneous-plain}
    For all sufficiently large $q$, the following hold for the instances above:
    \begin{enumerate}[(i)]
        \item If $q$ is prime and $\rho>(10-\sqrt5)/9$, a uniform quantum algorithm runs in time
        $\poly(q)$ and finds a solution with probability at least $q^{-O(1)}$.
        In particular, this holds at rate $\rho=7/8$.
        \item For every prime power $q$, if $\rho>(75+2\sqrt5)/81$, a uniform quantum algorithm runs in
        time $\poly(q)$ and finds a solution with probability $\Omega(1)$.
        In particular, this holds at rate $\rho=63/64$.
    \end{enumerate}
    Both probabilities are over the random public truth tables and the algorithm's internal randomness.
\end{corollary}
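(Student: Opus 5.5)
This corollary specializes Theorem~\ref{thm:plainrs} to the three-class load data, so the proof reduces to checking its hypotheses and doing arithmetic. First, the load data must be admissible with $\eta=1$. The stability calculation above already shows this. The witness is $\nu=\sqrt5-1$, the sets $B_r$ are finite, so membership tests are trivial, and $1\in B_2-B_2$ gives the lattice condition. The class sizes also match Definition~\ref{def:specialloadstates}: $|S_r|=\alpha_r n+O(1)$, with $n=q$ and $m=q-1=\eta n+O(1)$. The plain-model instances described here are then exactly those of Definition~\ref{def:plainrs}, with independent uniform bijections supplied as truth tables. Finally, Theorem~\ref{thm:maininfinite} together with the computation above gives
\[
    p_{\mathsf{stay}}=\frac{\sqrt5-1}{9}.
\]

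For part (i), Theorem~\ref{thm:plainrs}(i) requires $q$ prime and $\rho>1-p_{\mathsf{stay}}=(10-\sqrt5)/9$, which is exactly the stated threshold. For the particular rate $7/8$, we reuse the inequality
\[
    \frac78-\frac{10-\sqrt5}{9}=\frac{8\sqrt5-17}{72}>0,
\]
which holds because $8\sqrt5>17$, i.e.\ $320>289$. This yields success probability $q^{-O(1)}$ in time $\poly(q)$.

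For part (ii), Theorem~\ref{thm:plainrs}(ii) requires $\rho>1-p_{\mathsf{stay}}^2$. We compute
\[
    p_{\mathsf{stay}}^2=\frac{(\sqrt5-1)^2}{81}=\frac{6-2\sqrt5}{81},
    \qquad
    1-p_{\mathsf{stay}}^2=\frac{75+2\sqrt5}{81},
\]
which matches the stated threshold. For $\rho=63/64$ we need $63\cdot81>64(75+2\sqrt5)$. This is $5103>4800+128\sqrt5$, i.e.\ $303>128\sqrt5\approx286.2$, and squaring confirms it: $91809>81920$. The rate $63/64$ is rational, as Definition~\ref{def:plainrs} requires.

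There is no real obstacle; the only care needed is confirming the instance format and admissibility match Theorem~\ref{thm:plainrs}'s hypotheses, and the elementary surd inequalities.
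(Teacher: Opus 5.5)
Your proposal is correct and matches the paper's proof. Both check admissibility via $\nu=\sqrt5-1$, take $p_{\mathsf{stay}}=(\sqrt5-1)/9$ from the stability calculation, and apply Theorem~\ref{thm:plainrs}(i) and (ii) with the same surd arithmetic. Your additional checks (rationality of $7/8$ and $63/64$, the class sizes, and squaring to confirm $303>128\sqrt5$) are correct details the paper leaves implicit.
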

\begin{proof}
    The load family is admissible and has $p_{\mathsf{stay}}=(\sqrt5-1)/9$, as calculated above.
    Part (i) follows from Theorem~\ref{thm:plainrs}(i), using
    $1-p_{\mathsf{stay}}=(10-\sqrt5)/9<7/8$.
    For part (ii),
    \[
        1-p_{\mathsf{stay}}^2
        =
        1-\left(\frac{\sqrt5-1}{9}\right)^2
        =
        \frac{75+2\sqrt5}{81},
        \qquad
        \frac{63}{64}-\frac{75+2\sqrt5}{81}
        =
        \frac{303-128\sqrt5}{5184}>0.
    \]
    Theorem~\ref{thm:plainrs}(ii) therefore gives the claimed algorithm with constant success probability.
\end{proof}

As emphasized in Section~\ref{sec:plainrs}, these plain-model guarantees make no classical hardness claim;
the lower bound in Corollary~\ref{cor:heterogeneous-separation} relies on an oracle model with exponentially-large input domains.

\clearpage

\bibliographystyle{alpha}
\bibliography{refs}

\appendix

\section{Proof of the One-Dimensional Local Central Limit Theorem}\label{app:1dlocalclt}

\begin{proof}[Proof of Theorem~\ref{thm:1dlocalclt}]
    For $r\in[R]$, let
    \[
        \phi_r(t)=\EE_{X\sim\calD_r}[e^{itX}]
        \qquad\text{and}\qquad
        \psi_r(t)=e^{-it\mu_r}\phi_r(t).
    \]
    Also write
    \[
        \mu_n=\sum_{r\in[R]}n_r\mu_r,
        \qquad
        V_n=\sum_{r\in[R]}n_r\sigma_r^2,
        \qquad
        \psi_n(t)=\prod_{r\in[R]}\psi_r(t)^{n_r}.
    \]
    Because some $\sigma_r^2$ is positive, every $\alpha_r$ is positive, and $n_r=\alpha_rn+O(1)$, there are constants $v,V>0$ such that $vn\leq V_n\leq Vn$ for all sufficiently large $n$.

    We first control the characteristic function near the origin.
    The finite third absolute moment assumption gives, for each $r$,
    \[
        \psi_r(t)
        =
        1-\frac{\sigma_r^2t^2}{2}+O(|t|^3)
        \qquad (t\to 0).
    \]
    Since there are only finitely many distributions, we may choose constants $\delta_0,C>0$ such that every $\psi_r(t)$ is nonzero for $|t|\leq\delta_0$ and, using the branch of the logarithm that vanishes at $1$,
    \[
        \log\psi_r(t)
        =
        -\frac{\sigma_r^2t^2}{2}+R_r(t),
        \qquad
        |R_r(t)|\leq C|t|^3
    \]
    for every $r$ and $|t|\leq\delta_0$.
    Hence, with $R_n(t)=\sum_r n_rR_r(t)$,
    \[
        \psi_n(t)
        =
        \exp\left(-\frac{V_nt^2}{2}+R_n(t)\right),
        \qquad
        |R_n(t)|\leq Cn|t|^3.
    \]
    Shrinking $\delta_0$ if necessary, we may assume $C\delta_0\leq v/4$.
    The inequality $|e^w-1|\leq |w|e^{|w|}$ then gives constants $C_1,c_1>0$ such that
    \begin{align}
        \left|\psi_n(t)-e^{-V_nt^2/2}\right|
        &=
        e^{-V_nt^2/2}|e^{R_n(t)}-1| \nonumber\\
        &\leq
        C_1n|t|^3e^{-c_1nt^2}
        \label{eq:lclt-small-frequency}
    \end{align}
    whenever $|t|\leq\delta_0$.

    We next control frequencies bounded away from the origin.
    If $|\phi_r(t)|=1$, then the unit-modulus random variable $e^{itX}$ is constant almost surely for $X\sim\calD_r$.
    Consequently, $e^{it(x-y)}=1$ for all $x,y\in\operatorname{supp}(\calD_r)$.
    If $|\phi_r(t)|=1$ for every $r$, the lattice assumption therefore gives $e^{it}=1$.
    Thus, for every $t\in[-\pi,\pi]\setminus\{0\}$, there is an $r$ such that $|\phi_r(t)|<1$.
    For $t$ in the compact set
    \[
        K=[-\pi,-\delta_0]\cup[\delta_0,\pi],
    \]
    define $g(t)=\prod_r|\phi_r(t)|^{\alpha_r/2}$.
    The preceding observation shows that $g(t)<1$ throughout $K$.
    Continuity and compactness give a constant $\theta<1$ such that $g(t)\leq\theta$ on $K$.
    Since $n_r\geq\alpha_rn/2$ for all sufficiently large $n$ and $|\phi_r(t)|\leq1$,
    \begin{equation}
        |\psi_n(t)|
        =
        \prod_r|\phi_r(t)|^{n_r}
        \leq
        g(t)^n
        \leq
        \theta^n
        \label{eq:lclt-large-frequency}
    \end{equation}
    for every $t\in K$.

    We now compare the two Fourier inversion formulas.
    Since $Z$ is integer-valued,
    \[
        \Pr[Z=z]
        =
        \frac{1}{2\pi}
        \int_{-\pi}^{\pi}
        e^{-it(z-\mu_n)}\psi_n(t)\,dt.
    \]
    The Gaussian Fourier transform gives
    \[
        \frac{1}{\sqrt{2\pi V_n}}
        \exp\left(-\frac{(z-\mu_n)^2}{2V_n}\right)
        =
        \frac{1}{2\pi}
        \int_{\RR}
        e^{-it(z-\mu_n)}e^{-V_nt^2/2}\,dt.
    \]
    Taking the difference, splitting the integrals at $\delta_0$, and using that the oscillatory factors have modulus $1$, we obtain
    \begin{align*}
        2\pi
        \left|
        \Pr[Z=z]
        -
        \frac{e^{-(z-\mu_n)^2/(2V_n)}}{\sqrt{2\pi V_n}}
        \right|
        \leq{}&
        \int_{|t|\leq\delta_0}
        \left|\psi_n(t)-e^{-V_nt^2/2}\right|\,dt\\
        &+
        \int_{\delta_0\leq|t|\leq\pi}|\psi_n(t)|\,dt
        +
        \int_{|t|>\delta_0}e^{-V_nt^2/2}\,dt.
    \end{align*}
    By~\eqref{eq:lclt-small-frequency}, the first integral is at most
    \[
        C_1n\int_{\RR}|t|^3e^{-c_1nt^2}\,dt
        =
        O(1/n),
    \]
    where the equality follows from the substitution $u=\sqrt n\,t$.
    By~\eqref{eq:lclt-large-frequency}, the second integral is at most $2\pi\theta^n$.
    Finally, $V_n\geq vn$ implies
    \[
        \int_{|t|>\delta_0}e^{-V_nt^2/2}\,dt
        \leq
        e^{-vn\delta_0^2/4}
        \int_{\RR}e^{-vnt^2/4}\,dt
        =
        O(e^{-c_2n})
    \]
    for some $c_2>0$.
    The three bounds are independent of $z$, so the claimed $O(1/n)$ estimate holds uniformly over $z\in\ZZ$.
\end{proof}

\section{Proof of Theorem~\ref{thm:maininfinite}}

Assume all preconditions of the theorem.
Recall that:
\[
    A = \{\xv \in [n]^m: c_a(\xv) \in B_r \forall r \in [R], a \in S_r\}.
\]
For a value $a \in [n]$, we also let $\class(a)$ be the unique $r \in [R]$ such that $a \in S_r$.

\subsection{Local CLT Corollaries}

\begin{definition}
    Let $n_1, \ldots, n_R \in \ZZ_{> 0}$ and $z \in \ZZ$ be integers.
    We define the quantity $b_{n_1, \ldots, n_R}(z)$ as follows:
    let $\{Z_{r, i}: r \in [R], i \in [n_r]\}$ be independent random variables with $Z_{r, i} \sim \tpoiss_{\nu, B_r}$.
    Then:
    $$b_{n_1, \ldots, n_R}(z) = \Pr\left[\sum_{r \in [R]} \sum_{i \in [n_r]} Z_{r, i} = z\right].$$
\end{definition}

For use throughout this subsection, let $\mu_r$ and $\sigma_r^2$ be the mean and variance of $\tpoiss_{\nu,B_r}$.
Lemma~\ref{lem:tpoisstailbound} gives finite third moments.
Moreover, admissibility gives the lattice condition in Theorem~\ref{thm:1dlocalclt}; it also implies that some $\sigma_r^2$ is positive, since otherwise every $B_r$ would be a singleton and the union of the difference sets could not generate $\ZZ$.
Thus the local CLT applies whenever every class has $\alpha_rn+O(1)$ summands, including after deleting $O(1)$ summands.

\begin{lemma}[Approximate independence after conditioning]\label{lem:approxindep}
    Let $S \subseteq [n]$ be an arbitrary set with $|S| = O(1)$.
    Then for any $\{k_a: a \in S\}$ with $0 \leq k_a \leq O(n^{1/4})$ and $\Pr_{K_a \sim \tpoiss_{\nu, B_{\class(a)}}}[K_a = k_a] > 0$ for all $a \in S$, we have:
    \begin{align*}
        \frac{\Pr_{\substack{K_1, \ldots, K_{n} \\ K_a \sim \tpoiss(\nu, B_r) \forall r \in [R], a \in S_r}} \left[K_a = k_a \forall a \in S \mid K_1 + \ldots + K_{n} = m\right]}{\Pr_{\substack{K_1, \ldots, K_{n} \\ K_a \sim \tpoiss(\nu, B_r) \forall r \in [R], a \in S_r}} \left[K_a = k_a \forall a \in S\right]} &= 1+O(1/\sqrt{n}).
    \end{align*}
\end{lemma}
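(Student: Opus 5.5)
By Bayes' rule and independence of the $K_a$, the ratio in the statement equals
\[
    \frac{\Pr\left[\sum_{a\notin S}K_a=m-\sum_{a\in S}k_a\right]}{\Pr\left[\sum_{a\in[n]}K_a=m\right]},
\]
since conditioning on $K_a=k_a$ for $a\in S$ leaves the remaining loads independent with their original laws. The positivity hypothesis on each $k_a$ ensures the joint event has positive probability, so this rewriting is legitimate.

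Both numerator and denominator are values of $b_{n_1,\ldots,n_R}(\cdot)$, and the plan is to estimate each with Corollary~\ref{cor:lclt-central-ratio}. The denominator is $b_{|S_1|,\ldots,|S_R|}(m)$. The numerator is $b_{n'_1,\ldots,n'_R}(m-\sum_{a\in S}k_a)$ with $n'_r=|S_r|-|S\cap S_r|$, which is still $\alpha_rn+O(1)$ because $|S|=O(1)$. The hypotheses of Theorem~\ref{thm:1dlocalclt} hold in both cases, as noted just before the lemma:
\begin{itemize}
    \item finite third moments come from Lemma~\ref{lem:tpoisstailbound};
    \item some $\sigma_r^2$ is positive;
    \item the lattice condition follows from admissibility.
\end{itemize}

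Next, check that both arguments lie in the central window. With total count $n'=\sum_r n'_r=n-O(1)$, the centering in Corollary~\ref{cor:lclt-central-ratio} is $n'\sum_r\alpha_r\mu_r=n'\eta$, where $\eta=\sum_r\alpha_r\mu_r$ by admissibility. We have
\[
    m=\eta n+O(1),\qquad |n'\eta-n\eta|=O(1),\qquad \sum_{a\in S}k_a=O(n^{1/4}),
\]
the last because each $k_a=O(n^{1/4})$ and $|S|=O(1)$. Hence both targets are within $O(n^{1/4})$ of their respective centers, and Corollary~\ref{cor:lclt-central-ratio} applies to each. Its statement is phrased with $n=\sum_r n_r$ and $n_r=\alpha_rn+O(1)$, and using $n'$ in place of $n$ is harmless since $n'_r=\alpha_rn'+O(1)$ as well.

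This gives
\[
    \text{numerator}=\frac{1+O(n^{-1/2})}{\sqrt{2\pi n'\bar\sigma^2}},\qquad
    \text{denominator}=\frac{1+O(n^{-1/2})}{\sqrt{2\pi n\bar\sigma^2}},
\]
where $\bar\sigma^2=\sum_r\alpha_r\sigma_r^2>0$. Their ratio is $\sqrt{n/n'}\,(1+O(n^{-1/2}))=1+O(n^{-1/2})$, since $n/n'=1+O(1/n)$.

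The only delicate point is uniformity: the $O(\cdot)$ constants must not depend on which $S$ or which $k_a$ are chosen. Uniformity in the $k_a$ holds because Corollary~\ref{cor:lclt-central-ratio} is uniform over $z$ in the window. Uniformity in $S$ holds because only finitely many class profiles $(|S\cap S_r|)_r$ are possible once $|S|$ is bounded, so a single constant covers all of them.
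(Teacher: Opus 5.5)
Your proposal is correct and takes essentially the same route as the paper. Both reduce the ratio to $b_{n'_1,\ldots,n'_R}(m-\sum_{a\in S}k_a)/b_{n_1,\ldots,n_R}(m)$, check that both targets lie within $O(n^{1/4})$ of their centers, and apply the local CLT to each; the paper normalizes by $V=\sum_r n_r\sigma_r^2$ and $V'$ with $V-V'=O(1)$ instead of invoking Corollary~\ref{cor:lclt-central-ratio} with $n'$ in place of $n$, which is only a bookkeeping difference, and your explicit remarks on uniformity in $S$ and the $k_a$ are a welcome addition.
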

\begin{proof}
    For each $r \in [R]$, let $n_r = |S_r|$ and $n'_r = |S_r \backslash (S_r \cap S)|$.
    Note that we have $n_r, n'_r = \alpha_r n + O(1)$.
    The conditional probability of interest can be written as:
    \begin{align*}
        \left(\prod_{a \in S} \Pr[K_a = k_a]\right) \cdot \frac{b_{n'_1, \ldots, n'_R}(m-\sum_{a \in S} k_a)}{b_{n_1, \ldots, n_R}(m)},
    \end{align*}
    so it remains to show that:
    \[
        \frac{b_{n'_1, \ldots, n'_R}(m-\sum_{a \in S} k_a)}{b_{n_1, \ldots, n_R}(m)} = 1 + O(1/\sqrt{n}).
    \]
    Write $V=\sum_r n_r\sigma_r^2$ and $V'=\sum_r n'_r\sigma_r^2$.
    Since $\sum_r\alpha_r\mu_r=\eta$ and $m=\eta n+O(1)$,
    \[
        \left|m-\sum_rn_r\mu_r\right|=O(1).
    \]
    Moreover,
    \[
        \left|
        m-\sum_{a\in S}k_a-\sum_rn'_r\mu_r
        \right|
        =
        O(n^{1/4}).
    \]
    Theorem~\ref{thm:1dlocalclt} therefore gives
    \[
        b_{n_1,\ldots,n_R}(m)
        =\frac{1}{\sqrt{2\pi V}}\left(1+O(n^{-1/2})\right)
    \]
    and
    \[
        b_{n'_1,\ldots,n'_R}\left(m-\sum_{a\in S}k_a\right)
        =\frac{1}{\sqrt{2\pi V'}}\left(1+O(n^{-1/2})\right).
    \]
    Finally, $V-V'=O(1)$ and $V,V'=\Theta(n)$, so $\sqrt{V/V'}=1+O(1/n)$.
    Taking the ratio proves the claim.
\end{proof}

\begin{lemma}[Tail bound after conditioning]\label{lem:tailboundafterconditioning}
    Let $S \subseteq [n]$ be an arbitrary set with $|S| = O(1)$.
    Then for any $\{k_a: a \in S\}$ and $\Pr_{K_a \sim \tpoiss_{\nu, B_{\class(a)}}}[K_a = k_a] > 0$ for all $a \in S$, we have:
    \begin{align*}
        \frac{\Pr_{\substack{K_1, \ldots, K_n \\ K_a \sim \tpoiss(\nu, B_r) \forall r \in [R], a \in S_r}} \left[K_{a} = k_a \forall a \in S \mid K_1 + \ldots + K_{n} = m\right]}{\Pr_{\substack{K_1, \ldots, K_n \\ K_a \sim \tpoiss(\nu, B_r) \forall r \in [R], a \in S_r}} \left[K_a = k_a \forall a \in S\right]} &\leq O(\sqrt{n}).
    \end{align*}
\end{lemma}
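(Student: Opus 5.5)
Write the ratio exactly as in the proof of Lemma~\ref{lem:approxindep}: by independence of the unconditioned loads and Bayes' rule, the conditional probability equals
\[
    \left(\prod_{a\in S}\Pr[K_a=k_a]\right)\cdot
    \frac{b_{n'_1,\ldots,n'_R}\bigl(m-\sum_{a\in S}k_a\bigr)}{b_{n_1,\ldots,n_R}(m)},
\]
with $n_r=|S_r|$ and $n'_r=|S_r\setminus S|$. So it suffices to show that this ratio of point probabilities is $O(\sqrt n)$, uniformly in the $k_a$. These $k_a$ may now be arbitrarily large, which is why only a crude bound is claimed. If $m-\sum_a k_a<0$ the numerator is zero and there is nothing to prove.

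For the denominator, I would reuse the estimate already derived in Lemma~\ref{lem:approxindep}. Since $|m-\sum_r n_r\mu_r|=O(1)$, Theorem~\ref{thm:1dlocalclt} (or Corollary~\ref{cor:lclt-central-ratio}) gives
\[
    b_{n_1,\ldots,n_R}(m)=\frac{1+O(n^{-1/2})}{\sqrt{2\pi V}},\qquad V=\Theta(n).
\]
In particular the denominator is $\Omega(n^{-1/2})$ for all sufficiently large $n$.

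For the numerator, I would use only the trivial bound $b_{n'_1,\ldots,n'_R}(\cdot)\le 1$, since it is a probability. This immediately gives ratio $\le O(\sqrt n)$ and completes the proof. One could sharpen it using the uniform local CLT, since the numerator is at most $O(n^{-1/2})$ because $V'=\Theta(n)$ and the Gaussian term is at most $1/\sqrt{2\pi V'}$; this would give $O(1)$. But the stated $O(\sqrt n)$ needs only the trivial bound.

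The only point requiring care, which I do not expect to be a real obstacle, is the denominator: one must confirm that the local CLT hypotheses hold for the full collection. This was already checked in the preamble to this subsection (admissibility for the lattice condition, positive variance, and finite third moments via Lemma~\ref{lem:tpoisstailbound}), so the constant is uniform over $S$ and the $k_a$, and the bound holds for all sufficiently large $n$.
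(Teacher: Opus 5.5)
Your proposal is correct and matches the paper's proof. As in Lemma~\ref{lem:approxindep}, you reduce to the ratio $b_{n'_1,\ldots,n'_R}(m-\sum_{a\in S}k_a)/b_{n_1,\ldots,n_R}(m)$, bound the numerator by $1$, and lower-bound the denominator by $\Omega(n^{-1/2})$ via Corollary~\ref{cor:lclt-central-ratio}. Your side remark that the uniform local CLT would sharpen the bound to $O(1)$ is also right, but it is not needed for the stated $O(\sqrt{n})$.
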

\begin{proof}
    By following the proof of Lemma~\ref{lem:approxindep}, it suffices to prove that:
    \[
        \frac{b_{n'_1, \ldots, n'_R}(m-\sum_{a \in S} k_a)}{b_{n_1, \ldots, n_R}(m)} \leq O(\sqrt{n}).
    \]
    This follows by crudely upper bounding the numerator by 1 (since it is a probability) and lower bounding the denominator by $\Omega(1/\sqrt{n})$ (by Corollary~\ref{cor:lclt-central-ratio}).
\end{proof}

\subsection{Main Step: Source-Target Decoupling}

\begin{lemma}[Product-law representation of weighted histograms]\label{lem:product-law-histograms}
    Consider a disjoint partition $S_1 \cup \ldots \cup S_R$ of $[n]$.
    Let $\{K_a: a \in [n]\}$ be independent random variables such that $K_a \sim \tpoiss_{\nu, B_r}$ for all $r \in [R], a \in S_r$.
    Assume $\Pr[K_1 + \ldots + K_n = m] > 0$.
    Then for any $c_1, \ldots, c_{n}$ summing to $m$, we have:
    \begin{equation}\label{eq:productlaw}
        \Pr[
            K_1=c_1,\dots,K_{n}=c_{n}
            \mid
            K_1+\ldots+K_n=m
        ]
        =
        \frac{
            \prod_{r \in [R]} \prod_{a \in S_r}\frac{\one[c_a \in B_r]}{c_a!}
        }{
            \sum_{\substack{d_1,\dots,d_{n}\ge 0\\ d_1+\cdots+d_{n}=m}}
            \prod_{r \in [R]} \prod_{a \in S_r}\frac{\one[d_a \in B_r]}{d_a!}
        }.
    \end{equation}
\end{lemma}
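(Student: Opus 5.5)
The plan is a direct computation using independence and Bayes' rule. Since the $K_a$ are independent, the unconditional probability of $\{K_a=c_a \ \forall a\}$ factors over the symbols. Each factor equals the tilted Poisson probability
\[
    \Pr[K_a=c_a]=\frac{\one[c_a\in B_r]\,\nu^{c_a}/c_a!}{Z(\nu,B_r)}
\]
for $a\in S_r$. For any $c_1,\ldots,c_n$ summing to $m$, the event $\{K_a=c_a\ \forall a\}$ is contained in $\{\sum_a K_a=m\}$. The conditional probability is therefore
\[
    \frac{\Pr[K_a=c_a\ \forall a]}{\Pr[\sum_a K_a=m]}.
\]
The denominator is positive by assumption.

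Next, expand the denominator as a sum over all nonnegative $(d_1,\ldots,d_n)$ with $\sum_a d_a=m$ of $\prod_a\Pr[K_a=d_a]$. In both numerator and denominator, every product carries the same overall factor
\[
    \prod_{r\in[R]}Z(\nu,B_r)^{-|S_r|}\cdot\nu^{m}.
\]
This is because $\prod_a\nu^{c_a}=\nu^{\sum_a c_a}=\nu^m$ whenever the loads sum to $m$, and the partition functions depend only on the class structure and not on the loads. Cancelling this common factor leaves exactly the ratio in Equation~\eqref{eq:productlaw}.

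There is no genuine obstacle; the argument is pure bookkeeping. The only points to check are that the denominator sum in Equation~\eqref{eq:productlaw} is nonzero, which follows from $\Pr[\sum_aK_a=m]>0$, and that the cancelled factor is finite and positive. Both hold since each $B_r$ is nonempty, so $0<Z(\nu,B_r)<\infty$, and $\nu>0$.
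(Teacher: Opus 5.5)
Your proposal is correct and follows the same route as the paper's proof. Both factor the unconditional probability using the tilted-Poisson weights, sum over load vectors with total $m$ to obtain $\Pr[\sum_a K_a=m]$, and cancel the common factor $\nu^m\prod_{r}Z(\nu,B_r)^{-|S_r|}$.
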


\begin{proof}
    Ignoring the conditioning for a moment, we have:
    \begin{align*}
        \Pr[K_a = c_a \forall a] &= \frac{1}{\prod_{r \in [R]} (Z(\nu, B_r))^{|S_r|}} \prod_{r \in [R]} \prod_{a \in S_r} \frac{\one[c_a \in B_r] \nu^{c_a}}{c_a!}\\
        &= \frac{\nu^m}{\prod_{r \in [R]} (Z(\nu, B_r))^{|S_r|}} \prod_{r \in [R]} \prod_{a \in S_r} \frac{\one[c_a \in B_r]}{c_a!}.
    \end{align*}
    Summing over all possible sequences $(c_1, \ldots, c_{n})$ implies that:
    \begin{align*}
        \Pr[K_1+\ldots+K_{n} = m] &= 
        \frac{\nu^m}{\prod_{r \in [R]} (Z(\nu, B_r))^{|S_r|}} \sum_{\substack{d_1,\dots,d_{n}\ge 0\\ d_1+\cdots+d_{n}=m}} \prod_{r \in [R]} \prod_{a \in S_r} \frac{\one[d_a \in B_r]}{d_a!}
    \end{align*}
    The conclusion follows by dividing these quantities.
\end{proof}

\begin{lemma}[Hard-wall histogram law]
\label{lem:hard-wall-histogram-law}
The following two distributions are identical:
\begin{itemize}
    \item Sample $\xv \gets A$ uniformly at random.
    Output the histogram $\cv(\xv)$.
    \item Sample $K_1, \ldots, K_n$ independently such that $K_a \sim \tpoiss_{\nu, B_r}$ for all $r \in [R], a \in S_r$.
    Output $(K_1, \ldots, K_n)$ conditioned on $\sum_a K_a = m$.
\end{itemize}
\end{lemma}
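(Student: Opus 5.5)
The plan is to compute both distributions explicitly on an arbitrary histogram $\cv=(c_1,\ldots,c_n)\in\ZZ_{\geq0}^n$ with $\sum_a c_a=m$, and check that they agree.

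\emph{First distribution.} Since $\xv$ is uniform on $A$, we have
\[
    \Pr[\cv(\xv)=\cv]=\frac{|\{\xv\in A:\cv(\xv)=\cv\}|}{|A|}.
\]
Membership in $A$ depends only on the histogram. So the numerator equals $\binom{m}{\cv}$ times the indicator that $c_a\in B_{\class(a)}$ for every $a$, by Lemma~\ref{lem:histogram-counting}. The denominator $|A|$ is the sum of the same expression over all legal histograms summing to $m$, by the corollary following that lemma. Writing $\binom{m}{\cv}=m!/\prod_a c_a!$ and cancelling the common factor $m!$ gives
\[
    \Pr[\cv(\xv)=\cv]
    =\frac{\prod_a \one[c_a\in B_{\class(a)}]/c_a!}
          {\sum_{\dv}\prod_a \one[d_a\in B_{\class(a)}]/d_a!},
\]
where $\dv$ ranges over nonnegative integer vectors summing to $m$.

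\emph{Second distribution.} Lemma~\ref{lem:product-law-histograms} gives exactly the same expression for the conditioned tilted-Poisson law, Equation~\eqref{eq:productlaw}. That lemma requires $\Pr[\sum_a K_a=m]>0$. This holds precisely when some legal histogram sums to $m$, i.e.\ when $A\neq\emptyset$, because each allowed value $k\in B_r$ has positive probability $\nu^k/(k!\,Z(\nu,B_r))$.

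\emph{Conclusion.} Both laws assign the same probability to every histogram summing to $m$. Neither assigns mass to vectors with any other sum. Hence the two distributions are identical.

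\emph{Main obstacle: the truncation.} The load sets in $A$ are $B_r\cap\{0,\ldots,m\}$, whereas the tilted Poissons use the full sets $B_r$. Since any histogram summing to $m$ has every $c_a\leq m$, the indicators $\one[c_a\in B_r]$ and $\one[c_a\in B_r\cap\{0,\ldots,m\}]$ agree on every relevant vector, so the truncation is harmless. The statement also implicitly needs $A$ nonempty for the uniform sample to make sense; this is the same positivity condition noted above, and it is guaranteed for large $n$ by the local CLT (Corollary~\ref{cor:lclt-central-ratio}), as in the proof of Theorem~\ref{thm:yzpositive}.
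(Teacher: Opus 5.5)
Your proposal is correct and follows essentially the same route as the paper: both compare the conditioned tilted-Poisson law from Lemma~\ref{lem:product-law-histograms} with the multinomial count from Lemma~\ref{lem:histogram-counting}, the two expressions differing only by the common factor $m!$. Your explicit treatment of the truncation $B_r\cap\{0,\ldots,m\}$ and of the positivity hypothesis $\Pr[\sum_a K_a=m]>0$ (equivalently $A\neq\emptyset$) is correct, and it covers points the paper's proof leaves implicit.
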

\begin{proof}
    We apply Lemma~\ref{lem:product-law-histograms}.
    The lemma tells us that for any $c_1, \ldots, c_n$ such that $c_a \in B_r$ for all $r \in [R]$ and $a \in S_r$ summing to $m$, we have:
    \begin{align*}
        \Pr[K_1=c_1,\dots,K_n=c_{n}
            \mid
            K_1+\cdots+K_{n}=m]
        &=
        \frac{
            \prod_{a\in [n]}\frac{1}{c_a!}
        }{
            \sum_{\substack{d_1,\dots,d_n\\ d_a \in B_r \forall r \in [R], a \in S_r \\ d_1 + \ldots + d_{n} = m}}
            \prod_{a\in [n]}\frac{1}{d_a!}
        } \\
        &= \frac{
            \frac{m!}{\prod_{a\in[n]} c_a!}
        }{
            \sum_{\substack{d_1,\dots,d_{n}\\ d_a \in B_r \forall r \in [R], a \in S_r \\ d_1 + \ldots + d_{n} = m}}
            \frac{m!}{\prod_{a\in [n]} d_a!}
        } \\
        &= \Pr_{\xv \gets A} [\cv(\xv) = (c_1, \ldots, c_n)],
    \end{align*}
    as desired. The final step above uses Lemma~\ref{lem:histogram-counting}.
\end{proof}

The next estimate removes the conditioning for the two loads touched by one replacement.
The source load is weighted by its size, because choosing a uniform coordinate selects a symbol with probability proportional to its load.

\begin{lemma}[Source-target decoupling]\label{lem:infinitedecoupling}
    Let the $K_a$ be independent, with $K_a\sim\tpoiss_{\nu,B_{\class(a)}}$.
    For distinct $a,b\in[n]$ and functions $g,h:\ZZ_{\geq0}\to\RR$ satisfying
    $0\leq g(k)\leq k$ and $0\leq h(k)\leq1$, we have
    \[
        \EE\left[g(K_a)h(K_b)\,\middle|\,\sum_{c\in[n]}K_c=m\right]
        =\EE[g(K_a)]\EE[h(K_b)]+O(n^{-1/2}).
    \]
    The error is uniform in $a,b,g,h$, with constants depending only on the fixed parameters in Theorem~\ref{thm:maininfinite}.
\end{lemma}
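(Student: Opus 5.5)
We expand the conditional expectation as a sum over the pair of loads $(k,\ell)=(K_a,K_b)$ and split it by size. Put $w(k,\ell)=\Pr[K_a=k]\Pr[K_b=\ell]$ for the unconditioned product weight, and let $\pi(k,\ell)=\Pr[K_a=k,K_b=\ell\mid\sum_cK_c=m]$ be the conditioned weight. Then
\[
    \EE\left[g(K_a)h(K_b)\,\middle|\,\sum_cK_c=m\right]-\EE[g(K_a)]\EE[h(K_b)]
    =\sum_{k,\ell}g(k)h(\ell)\bigl(\pi(k,\ell)-w(k,\ell)\bigr).
\]
Fix a threshold $t=\lfloor n^{1/4}\rfloor$. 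We separate the region $k,\ell\leq t$ (the bulk) from its complement (the tail), where $k>t$ or $\ell>t$.

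\textbf{Bulk.} Apply Lemma~\ref{lem:approxindep} with $S=\{a,b\}$, which has $|S|=2=O(1)$ and loads at most $n^{1/4}$. It gives $\pi(k,\ell)=w(k,\ell)(1+O(n^{-1/2}))$ uniformly whenever $w(k,\ell)>0$. When $w(k,\ell)=0$, both sides vanish, since conditioning cannot create mass. Using $0\leq g(k)h(\ell)\leq k$, the bulk error is at most
\[
    O(n^{-1/2})\sum_{k,\ell}k\,w(k,\ell)=O(n^{-1/2})\,\EE[K_a],
\]
and $\EE[K_a]=O(1)$ because there are finitely many classes.

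\textbf{Tail.} We handle the two sums separately. The unconditioned tail $\sum w\,k$ over $\{k>t\}\cup\{\ell>t\}$ is at most $\EE[K_a\one[K_a>t]]+\EE[K_a]\Pr[K_b>t]$. By Lemma~\ref{lem:tpoisstailbound} applied with threshold $t$, this is $\negl(n^{1/4})$, which is $O(n^{-1/2})$. For the conditioned tail, Lemma~\ref{lem:tailboundafterconditioning} gives $\pi\leq O(\sqrt n)\,w$ pointwise. This bounds the conditioned tail by $O(\sqrt n)$ times the unconditioned tail, which is $O(\sqrt n)\negl(n^{1/4})=O(n^{-1/2})$. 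The bound uses the superpolynomial decay $O(t^3\nu^t/t!)$ in the proof of Lemma~\ref{lem:tpoisstailbound}: that proof controls the third moment, which dominates $k$, so it applies to every relevant tail. Combining bulk and tail gives the claimed error.

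\textbf{Uniformity.} The constants in Lemmas~\ref{lem:approxindep}, \ref{lem:tailboundafterconditioning} and~\ref{lem:tpoisstailbound} depend only on the class data. This is because $K_a$ and $K_b$ each range over one of $R$ fixed distributions. Since $g$ and $h$ enter only through the envelope $0\leq gh\leq k$, the bound is uniform in $a,b,g,h$.

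\textbf{Main obstacle.} The delicate point is that $g$ is unbounded: $g(k)$ may be as large as $k$. A naive multiplicative bound would therefore fail on large loads, where Lemma~\ref{lem:approxindep} does not apply. The plan resolves this by making sure the crude $O(\sqrt n)$ factor from Lemma~\ref{lem:tailboundafterconditioning} is always multiplied by a first-moment tail that decays superpolynomially.
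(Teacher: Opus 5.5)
Your proposal is correct and follows essentially the same route as the paper's proof. You split at loads of order $n^{1/4}$, control the bulk multiplicatively via Lemma~\ref{lem:approxindep} with the envelope $0\leq g(k)h(\ell)\leq k$ and $\EE[K_a]=O(1)$, and bound both tails by Lemma~\ref{lem:tpoisstailbound}, using the crude $O(\sqrt n)$ ratio from Lemma~\ref{lem:tailboundafterconditioning} for the conditioned tail; the remaining differences (your $\lfloor n^{1/4}\rfloor$ versus the paper's $\lceil n^{1/4}\rceil$, and writing $\negl(n^{1/4})$ rather than $\negl(n)$) are immaterial.
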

\begin{proof}
    Expanding over the two loads gives
    \begin{align*}
        &\EE\left[g(K_a)h(K_b)\,\middle|\,\sum_cK_c=m\right]\\
        &\qquad=\sum_{k,\ell\geq0}g(k)h(\ell)
            \Pr\left[K_a=k,K_b=\ell\,\middle|\,\sum_cK_c=m\right],
    \end{align*}
    whereas independence gives
    \[
        \EE[g(K_a)]\EE[h(K_b)]
        =\sum_{k,\ell\geq0}g(k)h(\ell)\Pr[K_a=k]\Pr[K_b=\ell].
    \]
    We compare these sums first when both loads are at most $\lceil n^{1/4}\rceil$, and then bound the two remaining tails.

    For $0\leq k,\ell\leq\lceil n^{1/4}\rceil$, Lemma~\ref{lem:approxindep} gives
    \begin{align*}
        &\left|
            \Pr\left[K_a=k,K_b=\ell\,\middle|\,\sum_cK_c=m\right]
            -\Pr[K_a=k]\Pr[K_b=\ell]
        \right|\\
        &\qquad\leq O(n^{-1/2})\Pr[K_a=k]\Pr[K_b=\ell].
    \end{align*}
    If either marginal probability is zero, both joint probabilities are zero, so the same inequality holds.
    Since the weights $g(k)h(\ell)$ are nonnegative, the triangle inequality bounds the difference of the restricted sums by
    \begin{align*}
        &O(n^{-1/2})
            \sum_{0\leq k,\ell\leq\lceil n^{1/4}\rceil}
            g(k)h(\ell)\Pr[K_a=k]\Pr[K_b=\ell]\\
        &\qquad\leq O(n^{-1/2})
            \sum_{k,\ell\geq0}k\Pr[K_a=k]\Pr[K_b=\ell]\\
        &\qquad=O(n^{-1/2})
            \left(\sum_{k\geq0}k\Pr[K_a=k]\right)
            \left(\sum_{\ell\geq0}\Pr[K_b=\ell]\right)\\
        &\qquad=O(n^{-1/2})\EE[K_a]
            =O(n^{-1/2}).
    \end{align*}
    Here we used $g(k)h(\ell)\leq k$ and the bounded means of the finitely many tilted-Poisson laws.

    For the unconditioned tail, the same pointwise bound gives
    \begin{align*}
        &\sum_{\substack{k,\ell\geq0\\\max\{k,\ell\}>\lceil n^{1/4}\rceil}}
            g(k)h(\ell)\Pr[K_a=k]\Pr[K_b=\ell]\\
        &\qquad\leq
            \sum_{\substack{k>\lceil n^{1/4}\rceil\\\ell\geq0}}
                k\Pr[K_a=k]\Pr[K_b=\ell]
            +\sum_{\substack{k\geq0\\\ell>\lceil n^{1/4}\rceil}}
                k\Pr[K_a=k]\Pr[K_b=\ell]\\
        &\qquad=
            \EE[K_a\one[K_a>\lceil n^{1/4}\rceil]]
            +\EE[K_a]\Pr[K_b>\lceil n^{1/4}\rceil]\\
        &\qquad=\negl(n),
    \end{align*}
    by Lemma~\ref{lem:tpoisstailbound}.
    Applying Lemma~\ref{lem:tailboundafterconditioning} term by term then bounds the conditional tail:
    \begin{align*}
        &\sum_{\substack{k,\ell\geq0\\\max\{k,\ell\}>\lceil n^{1/4}\rceil}}
            g(k)h(\ell)
            \Pr\left[K_a=k,K_b=\ell\,\middle|\,\sum_cK_c=m\right]\\
        &\qquad\leq O(\sqrt n)
            \sum_{\substack{k,\ell\geq0\\\max\{k,\ell\}>\lceil n^{1/4}\rceil}}
                g(k)h(\ell)\Pr[K_a=k]\Pr[K_b=\ell]\\
        &\qquad=O(\sqrt n)\negl(n)=\negl(n).
    \end{align*}
    The difference of the full expectations is therefore bounded by the error on the restricted sums plus the two nonnegative tail sums:
    \begin{align*}
        &\left|
            \EE\left[g(K_a)h(K_b)\,\middle|\,\sum_cK_c=m\right]
            -\EE[g(K_a)]\EE[h(K_b)]
        \right|\\
        &\qquad\leq O(n^{-1/2})+\negl(n)+\negl(n)
            =O(n^{-1/2}).
    \end{align*}
    All bounds are uniform in $a,b,g,h$: the probability estimates are uniform, the weights satisfy the same pointwise bounds, and there are only finitely many load classes.
\end{proof}

\subsection{Completing the Proof}

\begin{proof}[Proof of Theorem~\ref{thm:maininfinite}]
    Take independent loads $K_a\sim\tpoiss_{\nu,B_{\class(a)}}$ for $a\in[n]$.
    First, admissibility and Corollary~\ref{cor:lclt-central-ratio} give
    $\Pr[\sum_aK_a=m]=\Theta(n^{-1/2})>0$ for all sufficiently large $n$.
    Thus an allowed histogram of total load $m$ exists, and $A$ is nonempty.

    Conditional on a word $X$, the source symbol is $a$ with probability $c_a(X)/m$, and the replacement symbol is an independent uniform $b\in[n]$.
    If $a=b$, the word is unchanged; this event has probability exactly $1/n$.
    If $a\neq b$, the replacement is legal precisely when the source load can decrease by one and the target load can increase by one.
    Lemma~\ref{lem:hard-wall-histogram-law} therefore gives the exact identity
    \begin{align*}
        \Pr[X'\in A]
        =\frac1n+\frac1{mn}\sum_{\substack{a,b\in[n]\\a\neq b}}
        \EE\left[
            K_a\one[K_a-1\in B_{\class(a)}]
            \one[K_b+1\in B_{\class(b)}]
            \,\middle|\,\sum_cK_c=m
        \right].
    \end{align*}
    For each $a\neq b$, apply Lemma~\ref{lem:infinitedecoupling} with
    \[
        g(k)=k\one[k-1\in B_{\class(a)}],
        \qquad
        h(\ell)=\one[\ell+1\in B_{\class(b)}].
    \]
    These functions satisfy $0\leq g(k)\leq k$ and $0\leq h(\ell)\leq1$, so
    \begin{align*}
        &\EE\left[
            K_a\one[K_a-1\in B_{\class(a)}]
            \one[K_b+1\in B_{\class(b)}]
            \,\middle|\,\sum_cK_c=m
        \right]\\
        &\qquad=
            \EE[K_a\one[K_a-1\in B_{\class(a)}]]
            \Pr[K_b+1\in B_{\class(b)}]+O(n^{-1/2}).
    \end{align*}
    The error is uniform in $a,b$.  Since $m=\Theta(n)$, summing these errors and including the normalization gives
    \[
        \frac1{mn}\sum_{\substack{a,b\in[n]\\a\neq b}}O(n^{-1/2})
        =O\left(\frac{n(n-1)}{mn}\,n^{-1/2}\right)
        =O(n^{-1/2}).
    \]

    To factor the resulting sum, we insert the missing terms with $a=b$.
    Their total contribution satisfies
    \begin{align*}
        &0\leq\frac1{mn}\sum_{a\in[n]}
            \EE[K_a\one[K_a-1\in B_{\class(a)}]]
            \Pr[K_a+1\in B_{\class(a)}]\\
        &\qquad\leq\frac1{mn}\sum_{a\in[n]}\EE[K_a]\\
        &\qquad=\frac1{mn}\sum_{r\in[R]}|S_r|
            \EE_{K\sim\tpoiss_{\nu,B_r}}[K]
            =\frac{O(n)}{mn}=O(1/n).
    \end{align*}
    Thus the exact identity above becomes
    \begin{align*}
        \Pr[X'\in A]
        &=\frac1{mn}\sum_{a,b\in[n]}
            \EE[K_a\one[K_a-1\in B_{\class(a)}]]
            \Pr[K_b+1\in B_{\class(b)}]\\*
        &\qquad{}+\frac1n+O(1/n)+O(n^{-1/2})\\
        &=
        \left(\frac1m\sum_{a\in[n]}\EE[K_a\one[K_a-1\in B_{\class(a)}]]\right)\\*
        &\qquad{}\cdot\left(\frac1n\sum_{b\in[n]}\Pr[K_b+1\in B_{\class(b)}]\right)
        +O(n^{-1/2}).
    \end{align*}

    It remains to identify the two averages with the source and target probabilities.
    First, the class-size and length assumptions give, for each $r\in[R]$,
    \begin{align*}
        \frac{|S_r|}{m}
        &=\frac{\alpha_rn+O(1)}{\eta n+O(1)}
            =\frac{\alpha_r}{\eta}+O(1/n),\\
        \frac{|S_r|}{n}
        &=\frac{\alpha_rn+O(1)}{n}
            =\alpha_r+O(1/n).
    \end{align*}
    For the source average, group symbols by class and then use the definition of the size-biased source law:
    \begin{align*}
        &\frac1m\sum_{a\in[n]}\EE[K_a\one[K_a-1\in B_{\class(a)}]]\\
        &\qquad=\sum_{r\in[R]}\frac{|S_r|}{m}
            \EE_{K\sim\tpoiss_{\nu,B_r}}[K\one[K-1\in B_r]]\\
        &\qquad=\sum_{r\in[R]}\frac{\alpha_r}{\eta}
            \EE_{K\sim\tpoiss_{\nu,B_r}}[K\one[K-1\in B_r]]+O(1/n)\\
        &\qquad=\sum_{r\in[R]}\sum_{k\geq0}\frac{\alpha_r}{\eta}\,k
            \Pr_{K\sim\tpoiss_{\nu,B_r}}[K=k]\one[k-1\in B_r]+O(1/n)\\
        &\qquad=\Pr[K_\src-1\in B_{I_\src}]+O(1/n).
    \end{align*}
    The $O(1/n)$ error follows because there are finitely many classes and
    $\EE[K\one[K-1\in B_r]]\leq\EE[K]=O(1)$ in each class.
    Similarly, the target average is
    \begin{align*}
        &\frac1n\sum_{b\in[n]}\Pr[K_b+1\in B_{\class(b)}]\\
        &\qquad=\sum_{r\in[R]}\frac{|S_r|}{n}
            \Pr_{K\sim\tpoiss_{\nu,B_r}}[K+1\in B_r]\\
        &\qquad=\sum_{r\in[R]}\alpha_r
            \Pr_{K\sim\tpoiss_{\nu,B_r}}[K+1\in B_r]+O(1/n)\\
        &\qquad=\sum_{r\in[R]}\sum_{k\geq0}\alpha_r
            \Pr_{K\sim\tpoiss_{\nu,B_r}}[K=k]\one[k+1\in B_r]+O(1/n)\\
        &\qquad=\Pr[K_\tar+1\in B_{I_\tar}]+O(1/n).
    \end{align*}
    Here each class probability is at most one, so the finitely many class-size errors again sum to $O(1/n)$.

    Substituting these two calculations into the factored expression gives
    \begin{align*}
        \Pr[X'\in A]
        &=\left(\Pr[K_\src-1\in B_{I_\src}]+O(1/n)\right)\\*
        &\qquad{}\cdot\left(\Pr[K_\tar+1\in B_{I_\tar}]+O(1/n)\right)
            +O(n^{-1/2})\\
        &=\Pr[K_\src-1\in B_{I_\src}]
            \Pr[K_\tar+1\in B_{I_\tar}]
            +O(1/n)+O(n^{-1/2})\\
        &=\Pr[K_\src-1\in B_{I_\src}]
            \Pr[K_\tar+1\in B_{I_\tar}]+O(n^{-1/2}),
    \end{align*}
    where the second equality uses that both probabilities lie in $[0,1]$.
\end{proof}

\end{document}

%% file: figures/figure-style.tex
\usetikzlibrary{arrows.meta,calc,decorations.pathreplacing,patterns,patterns.meta}
\definecolor{FigBlue}{HTML}{2A78D6}    
\definecolor{FigAqua}{HTML}{1BAF7A}
\definecolor{FigOrange}{HTML}{EB6834}
\definecolor{FigInk}{HTML}{1F1E1C}
\definecolor{FigMuted}{HTML}{6F6D67}
\definecolor{FigLine}{HTML}{C3C2B7}
\definecolor{FigSoft}{HTML}{ECEBE6}

%% file: figures/rm-support-size.tex
\begin{figure}[t]
\centering
\begin{tikzpicture}[x=0.40cm, y=0.40cm, line cap=round, line join=round, font=\footnotesize,
    zeropt/.style={circle, fill=FigLine, inner sep=0pt, minimum size=2.8pt},
    given/.style={circle, draw=FigInk, line width=0.6pt, fill=white, inner sep=0pt,
      minimum size=10.5pt},
    val/.style={given, text=FigBlue, font=\footnotesize\bfseries},
    hl/.style={draw=FigOrange, line width=1.1pt},
    hlnote/.style={text=FigOrange!85!black, font=\scriptsize},
    note/.style={text=FigMuted, font=\scriptsize},
    head/.style={text=FigInk, font=\small\bfseries, anchor=base west, inner sep=0pt},
    sub/.style={text=FigInk, font=\footnotesize\itshape, anchor=base west, inner sep=0pt},
    stmt/.style={text=FigInk, font=\footnotesize, anchor=north west, inner sep=0pt},
    goal/.style={text=FigInk, font=\footnotesize, fill=FigBlue!7, draw=FigBlue!45,
      line width=0.5pt, rounded corners=2pt, inner xsep=5pt, inner ysep=4pt, align=flush left},
    link/.style={-{Latex[length=2.2mm]}, draw=FigInk, line width=0.7pt},
  ]
  \node[text=FigInk, font=\small, anchor=base] at (16.7, 13.4)
    {Example over $\FF_3$: the set $S$ of $11$ points in $\FF_3^4$ is given.};

  \node[head] at (-3.3, 11.75) {Dual formulation};
  \node[sub] at (-3.3, 10.7) {Find a nonzero kernel vector};
  \node[stmt] at (-3.3, 10.05) {%
    \begin{tabular}{@{}l@{\ \ }l@{}}
      \textbf{Given} & the evaluation matrix $\matB$, determined by $S$\\
      \textcolor{FigBlue}{\textbf{Find}} & $\yv$ with every entry nonzero and $\matB\yv=\mathbf 0$
    \end{tabular}};

  \node[head] at (18.25, 11.75) {Primal formulation};
  \node[sub] at (18.25, 10.7) {Find a polynomial with prescribed support};
  \node[stmt] at (18.25, 10.05) {%
    \begin{tabular}{@{}l@{\ \ }l@{}}
      \textbf{Given} & the positions $S$ (outlined)\\
      \textcolor{FigBlue}{\textbf{Find}} & $f$ with $\deg f\le 6$, nonzero exactly on $S$
    \end{tabular}};

  \path[fill=FigBlue!7, rounded corners=2pt] (0.6, 5.1) rectangle (11.4, 6.1);
  \path[fill=FigOrange!12, draw=FigOrange, line width=0.8pt, rounded corners=3pt]
    (1.58, -0.42) rectangle (2.42, 6.1);
  \node[hlnote, anchor=north] at (2, -0.65) {column of $(2,2,0,1)$};

  \node[text=FigBlue, anchor=base west, inner sep=0pt] at (0.5, 6.55) {One possible solution $\yv$};
  \foreach \y [count=\j] in {1,1,2,2,2,1,1,2,1,1,1}
    \node[text=FigBlue, font=\footnotesize\bfseries] at (\j, 5.6) {$\y$};

  \node[text=FigInk, anchor=east] at (-1.25, 2) {$\matB={}$};
  \foreach \lab [count=\r from 0] in {1, X_1, X_2, X_3, X_4}
    \node[note, font=\footnotesize, anchor=east] at (0.2, 4-\r) {$\lab$};
  \foreach \e [count=\j] in {1,1,1,1,1,1,1,1,1,1,1}
    \node[text=FigInk] at (\j, 4) {$\e$};
  \foreach \e [count=\j] in {1,2,0,0,2,0,1,2,0,1,2}
    \node[text=FigInk] at (\j, 3) {$\e$};
  \foreach \e [count=\j] in {1,2,0,2,1,1,2,1,0,2,2}
    \node[text=FigInk] at (\j, 2) {$\e$};
  \foreach \e [count=\j] in {0,0,0,0,1,1,1,1,2,2,2}
    \node[text=FigInk] at (\j, 1) {$\e$};
  \foreach \e [count=\j] in {0,1,2,2,0,1,2,2,0,1,1}
    \node[text=FigInk] at (\j, 0) {$\e$};
  \draw[FigInk, line width=0.5pt] (0.75, 4.55) -- (0.5, 4.55) -- (0.5, -0.55) -- (0.75, -0.55);
  \draw[FigInk, line width=0.5pt] (11.25, 4.55) -- (11.5, 4.55) -- (11.5, -0.55) -- (11.25, -0.55);

  \draw[link] (13.0, 3.05) -- (16.4, 3.05);
  \node[note, anchor=south, align=center] at (14.7, 3.2) {extend by zero\\outside $S$};
  \draw[link] (16.4, 2.35) -- (13.0, 2.35);
  \node[note, anchor=north, align=center] at (14.7, 2.2) {restrict\\to $S$};

  \begin{scope}[shift={(19.5, -1.7)}]
    \foreach \t in {2.5, 5.5} {
      \draw[FigLine, line width=0.3pt, dash pattern=on 1pt off 1.5pt] (\t, -0.5) -- (\t, 8.5);
      \draw[FigLine, line width=0.3pt, dash pattern=on 1pt off 1.5pt] (-0.5, \t) -- (8.5, \t);
    }
    \foreach \j in {0,...,8}
      \foreach \k in {0,...,8}
        \node[zeropt] at (\j, \k) {};
    \foreach \j/\k/\y in {4/0/1, 8/1/1, 0/2/2, 2/2/2, 7/3/2, 1/4/1, 5/5/1,
                          7/5/2, 0/6/1, 5/7/1, 8/7/1}
      \node[val] at (\j, \k) {$\y$};
    \node[val, hl] at (8, 1) {$1$};
    \node[hlnote, anchor=west] at (8.5, 1) {$f(2,2,0,1)=1$};

    \foreach \lab [count=\j from 0] in {00,01,02,10,11,12,20,21,22} {
      \node[note] at (\j, -0.9) {$\lab$};
      \node[note, anchor=east] at (-0.55, \j) {$\lab$};
    }
    \node[note] at (4, -1.8) {$x_1x_2$};
    \node[note, anchor=south east] at (-0.5, 8.55) {$x_3x_4$};

    \node[goal, anchor=north west, text width=7.0cm] at (-1.25, -2.6)
      {Choose the nonzero values so that the resulting function has degree at most~$6$.};

    \node[given] at (9.8, 8.25) {};
    \node[anchor=base west, inner sep=0pt] at (10.55, 7.95) {Given support $S$};
    \node[text=FigBlue, font=\footnotesize\bfseries, anchor=base] at (9.8, 6.0) {$1,2$};
    \node[anchor=base west, inner sep=0pt] at (10.55, 6.0) {Computed nonzero};
    \node[anchor=base west, inner sep=0pt] at (10.55, 5.0) {values in $\FF_3$};
    \node[zeropt] at (9.8, 3.8) {};
    \node[anchor=base west, inner sep=0pt] at (10.55, 3.55) {Forced zeros};
    \node[anchor=base west, inner sep=0pt] at (10.55, 2.55) {outside $S$};
  \end{scope}

\end{tikzpicture}
\caption{The selected set $S$ is the input; the displayed nonzero values illustrate one valid
output.  In this example over $\FF_3$ with $v=4$ variables, $S$ has $m=11$ points, and $\matB$
evaluates the $n=5$ monomials $1,X_1,\ldots,X_4$ of degree at most one at these points.  The dual
formulation (left) asks for $\yv$ with no zero entry and $\matB\yv=\mathbf 0$; each entry of $\yv$
is drawn above its column.  The primal formulation (right) asks for a polynomial $f$ of degree at
most $6$ that is nonzero exactly on $S$.  Extending $\yv$ by zero outside $S$, or restricting $f$
to $S$, converts one kind of solution into the other.}
\label{fig:rm-support-size}
\end{figure}

%% file: figures/histogram-csp.tex
\begin{figure}[t]
\centering
\begin{tikzpicture}[line cap=round, line join=round, font=\footnotesize,
    symbox/.style={draw=FigLine, fill=white, line width=0.45pt, inner sep=0pt,
      minimum width=0.4cm, minimum height=0.4cm, text=FigInk},
    hibox/.style={symbox, draw=FigInk, line width=0.9pt},
    gridpt/.style={circle, draw=FigLine, fill=white, line width=0.35pt,
      inner sep=0pt, minimum size=2.8pt},
    wordpt/.style={circle, fill=FigBlue, inner sep=0pt, minimum size=5.5pt},
    hiring/.style={circle, draw=FigInk, line width=0.7pt, inner sep=0pt, minimum size=9.5pt},
    permitted/.style={rectangle, rounded corners=1pt, fill=FigMuted!40, inner sep=0pt,
      minimum width=0.24cm, minimum height=0.2cm},
    countpt/.style={circle, fill=FigBlue, inner sep=0pt, minimum size=4.8pt},
    relabel/.style={-{Latex[length=1.3mm]}, draw=FigMuted!55, line width=0.45pt},
    relabel hi/.style={-{Latex[length=1.8mm]}, draw=FigInk, line width=0.9pt},
    brace/.style={decorate, decoration={brace, amplitude=3pt}, draw=FigMuted, line width=0.5pt},
    note/.style={text=FigMuted, font=\scriptsize},
    rowlabel/.style={text=FigInk, anchor=east},
  ]
  \def\dx{0.5}    
  \def\dy{0.3}    
  \def\yrow{4.25} 
  \def\xrow{5.55} 
  \def\cz{6.9}    
  \def\sep{6.45}  

  \node[note, anchor=east] at (0.2, \xrow+0.53) {coordinate $i$};
  \foreach \i in {1,...,12}
    \node[note] at ({\i*\dx}, \xrow+0.53) {$\i$};

  \node[rowlabel] at (0.2, \xrow) {codeword $\xv\in C$};
  \foreach \v [count=\i] in {8,12,5,0,10,9,10,0,5,12,8,6}
    \node[symbox] (x\i) at ({\i*\dx}, \xrow) {$\v$};

  \node[rowlabel] at (0.2, \yrow) {relabeled word $\yv$};
  \foreach \v [count=\i] in {11,8,12,7,12,11,9,12,8,10,11,12}
    \node[symbox] (y\i) at ({\i*\dx}, \yrow) {$\v$};

  \foreach \i in {1,2,4,5,6,7,8,10,11,12}
    \draw[relabel] (x\i) -- (y\i);
  \foreach \i in {3,9} {
    \node[hibox] at (x\i) {$5$};
    \draw[relabel hi] (x\i) -- (y\i);
  }
  \node[hibox] at (y3) {$12$};
  \node[hibox] at (y9) {$8$};
  \node[text=FigInk, fill=white, inner sep=1pt, anchor=west]
    at ({3*\dx+0.06}, {(\xrow+\yrow)/2}) {$\pi_3$};
  \node[text=FigInk, fill=white, inner sep=1pt, anchor=west]
    at ({9*\dx+0.06}, {(\xrow+\yrow)/2}) {$\pi_9$};
  \node[anchor=west, align=left, text=FigInk] at (6.55, {(\xrow+\yrow)/2})
    {$y_i=\pi_i(x_i)$\\[-1pt]{\scriptsize\color{FigMuted}each coordinate $i$ has its own}\\[-3pt]{\scriptsize\color{FigMuted}permutation $\pi_i$ of $\FF_{13}$}};

  \foreach \xa/\xb in {0.25/6.25, \cz-0.25/\cz+1.45} {
    \foreach \a in {0,...,6}
      \path[fill=FigSoft, rounded corners=1.2pt] (\xa, {\a*\dy-0.12}) rectangle (\xb, {\a*\dy+0.12});
    \foreach \a in {7,...,10}
      \path[fill=FigAqua!14, rounded corners=1.2pt] (\xa, {\a*\dy-0.12}) rectangle (\xb, {\a*\dy+0.12});
    \foreach \a in {11,12}
      \path[fill=FigOrange!14, rounded corners=1.2pt] (\xa, {\a*\dy-0.12}) rectangle (\xb, {\a*\dy+0.12});
  }
  \draw[FigMuted, line width=0.6pt] (\sep, -0.2) -- (\sep, {12*\dy+0.2});

  \foreach \i in {1,...,12}
    \foreach \a in {0,...,12}
      \node[gridpt] at ({\i*\dx}, {\a*\dy}) {};
  \foreach \v [count=\i] in {11,8,12,7,12,11,9,12,8,10,11,12}
    \node[wordpt] at ({\i*\dx}, {\v*\dy}) {};
  \node[hiring] at ({3*\dx}, {12*\dy}) {};
  \node[hiring] at ({9*\dx}, {8*\dy}) {};

  \foreach \a in {0,...,12}
    \node[note, anchor=east] at (0.2, {\a*\dy}) {$\a$};
  \node[note, rotate=90, anchor=south] at (-0.3, {6*\dy}) {symbol $a$};

  \foreach \a in {0,...,6}
    \node[permitted] at (\cz, {\a*\dy}) {};
  \foreach \a in {7,...,10}
    \foreach \k in {1,2}
      \node[permitted] at (\cz+0.3*\k, {\a*\dy}) {};
  \foreach \a in {11,12}
    \foreach \k in {3,4}
      \node[permitted] at (\cz+0.3*\k, {\a*\dy}) {};
  \foreach \a/\c in {0/0, 1/0, 2/0, 3/0, 4/0, 5/0, 6/0, 7/1, 8/2, 9/1, 10/1, 11/3, 12/4}
    \node[countpt] at (\cz+0.3*\c, {\a*\dy}) {};
  \foreach \k in {0,...,4}
    \node[note] at (\cz+0.3*\k, -0.36) {$\k$};
  \node[text=FigInk] at (\cz+0.6, -0.74) {$c_a(\yv)$};

  \draw[brace] (\cz+1.65, {12*\dy+0.1}) -- (\cz+1.65, {11*\dy-0.1});
  \draw[brace] (\cz+1.65, {10*\dy+0.1}) -- (\cz+1.65, {7*\dy-0.1});
  \draw[brace] (\cz+1.65, {6*\dy+0.1}) -- (\cz+1.65, {-0.1});
  \node[anchor=west, align=left] at (\cz+1.85, {11.5*\dy})
    {$B_a=\{3,4\}$\\[-1pt]{\scriptsize\color{FigMuted}2 symbols, approximately one sixth}};
  \node[anchor=west, align=left] at (\cz+1.85, {8.5*\dy})
    {$B_a=\{1,2\}$\\[-1pt]{\scriptsize\color{FigMuted}4 symbols, approximately one third}};
  \node[anchor=west, align=left] at (\cz+1.85, {3*\dy})
    {$B_a=\{0\}$\\[-1pt]{\scriptsize\color{FigMuted}7 symbols, approximately half}};

  \node[countpt] at (1.4, -1.3) {};
  \node[anchor=west, text=FigInk] at (1.56, -1.3) {observed count $c_a(\yv)$};
  \node[permitted] at (5.3, -1.3) {};
  \node[anchor=west, text=FigInk] at (5.5, -1.3) {permitted counts $B_a$};
\end{tikzpicture}
\caption{An example over $\FF_{13}$ with twelve coordinates.  The code constraint requires $\xv\in C$,
where $C$ is a Reed--Solomon code with evaluation points $1,\ldots,12$ and dimension at least $3$; this
holds since $x_i=i^2+i+6$ is quadratic in $i$.  Each coordinate $i$ then relabels its symbol by its own
permutation $\pi_i$ of $\FF_{13}$, giving $y_i=\pi_i(x_i)$ without moving any position; the two
highlighted coordinates carry the same symbol $5$ but receive different labels.  The histogram
constraints apply to $\yv$, which need not lie in $C$: the symbols $0,\ldots,6$ must not occur, each of
$7,\ldots,10$ must occur once or twice, and each of $11$ and $12$ three or four times.  In this finite
example, the proportions $1/2$, $1/3$ and $1/6$ are rounded to classes of $7$, $4$ and $2$ symbols.  On
the right, each observed count $c_a(\yv)$ lies in its set $B_a$ of permitted counts.}
\label{fig:histogram-csp}
\end{figure}

%% file: figures/rm-phase-diagrams.tex
\begin{figure}[t]
\centering
\begin{tikzpicture}[
    pairdqi swatch/.style={fill=FigBlue!22, draw=FigBlue, line width=0.6pt},
    q2 swatch/.style={pattern={Lines[angle=-45, distance=2.4pt, line width=0.55pt]},
      pattern color=FigAqua!85!black, draw=FigAqua!85!black, line width=0.6pt},
    prior swatch/.style={pattern={Lines[angle=45, distance=2.4pt, line width=0.55pt]},
      pattern color=FigOrange!90!black, draw=FigOrange!90!black, line width=0.6pt},
    trivial swatch/.style={fill=FigSoft, draw=FigMuted, line width=0.6pt, dash pattern=on 2pt off 1.5pt},
    trivial label/.style={font=\footnotesize, text=FigMuted},
  ]
  \begin{groupplot}[
      group style={group size=3 by 1, horizontal sep=0.6cm, y descriptions at=edge left},
      scale only axis, width=4.4cm, height=3.9cm,
      xmin=0, ymin=1, ymax=3, ytick={1,1.5,2,2.5,3},
      enlargelimits=false, clip=true,
      xlabel={$\rho$}, ylabel={$c$},
      axis line style={FigMuted}, tick style={FigMuted, line width=0.4pt},
      major tick length=2pt,
      tick label style={font=\scriptsize, text=FigInk},
      label style={font=\footnotesize, text=FigInk},
      title style={font=\small, text=FigInk},
      xlabel style={yshift=2pt}, ylabel style={rotate=-90, xshift=2pt},
      grid=major, grid style={FigLine!60, line width=0.3pt},
      pairdqi fill/.style={fill=FigBlue!22, draw=none},
      q2 fill/.style={draw=none, pattern={Lines[angle=-45, distance=3pt, line width=0.55pt]},
        pattern color=FigAqua!85!black},
      prior fill/.style={draw=none, pattern={Lines[angle=45, distance=3pt, line width=0.55pt]},
        pattern color=FigOrange!90!black},
      trivial fill/.style={fill=FigSoft, draw=none},
      pairdqi edge/.style={draw=FigBlue, line width=0.9pt},
      q2 edge/.style={draw=FigAqua!85!black, line width=0.9pt},
      prior edge/.style={draw=FigOrange!90!black, line width=0.9pt},
      trivial edge/.style={draw=FigMuted, line width=0.7pt, dash pattern=on 3pt off 2pt},
    ]
    \nextgroupplot[title={$q=3$}, xmax=1, xtick={0,0.25,0.5,0.75,1}]
      \addplot[pairdqi fill] table {figures/data/rm-phase-q3-pairdqi.dat} |- (axis cs:0,10) -- cycle;
      \addplot[prior fill] table {figures/data/rm-phase-q3-quadratic.dat} |- (axis cs:0,10) -- cycle;
      \addplot[q2 fill] table {figures/data/rm-phase-q3-q2.dat} |- (axis cs:0,10) -- cycle;
      \addplot[trivial fill] table {figures/data/rm-phase-q3-saturation.dat} |- (axis cs:0,10) -- cycle;
      \addplot[prior edge] table {figures/data/rm-phase-q3-quadratic.dat};
      \addplot[pairdqi edge] table {figures/data/rm-phase-q3-pairdqi.dat};
      \addplot[q2 edge] table {figures/data/rm-phase-q3-q2.dat};
      \addplot[trivial edge] table {figures/data/rm-phase-q3-saturation.dat};
      \node[trivial label] at (axis cs:0.72,2.4) {trivial};

    \nextgroupplot[title={$q=5$}, xmax=2, xtick={0,0.5,1,1.5,2}]
      \addplot[pairdqi fill] table {figures/data/rm-phase-q5-pairdqi.dat} |- (axis cs:0,10) -- cycle;
      \addplot[prior fill] table {figures/data/rm-phase-q5-quadratic.dat} |- (axis cs:0,10) -- cycle;
      \addplot[q2 fill] table {figures/data/rm-phase-q5-q2.dat} |- (axis cs:0,10) -- cycle;
      \addplot[trivial fill] table {figures/data/rm-phase-q5-saturation.dat} |- (axis cs:0,10) -- cycle;
      \addplot[prior edge] table {figures/data/rm-phase-q5-quadratic.dat};
      \addplot[pairdqi edge] table {figures/data/rm-phase-q5-pairdqi.dat};
      \addplot[q2 edge] table {figures/data/rm-phase-q5-q2.dat};
      \addplot[trivial edge] table {figures/data/rm-phase-q5-saturation.dat};
      \node[trivial label] at (axis cs:1.44,2.4) {trivial};

    \nextgroupplot[title={$q=7$}, xmax=3, xtick={0,0.5,1,1.5,2,2.5,3}]
      \addplot[pairdqi fill] table {figures/data/rm-phase-q7-pairdqi.dat} |- (axis cs:0,10) -- cycle;
      \addplot[prior fill] table {figures/data/rm-phase-q7-quadratic.dat} |- (axis cs:0,10) -- cycle;
      \addplot[q2 fill] table {figures/data/rm-phase-q7-q2.dat} |- (axis cs:0,10) -- cycle;
      \addplot[trivial fill] table {figures/data/rm-phase-q7-saturation.dat} |- (axis cs:0,10) -- cycle;
      \addplot[prior edge] table {figures/data/rm-phase-q7-quadratic.dat};
      \addplot[pairdqi edge] table {figures/data/rm-phase-q7-pairdqi.dat};
      \addplot[q2 edge] table {figures/data/rm-phase-q7-q2.dat};
      \addplot[trivial edge] table {figures/data/rm-phase-q7-saturation.dat};
      \node[trivial label] at (axis cs:2.16,2.4) {trivial};
  \end{groupplot}

  \node[anchor=north, font=\footnotesize, text=FigInk, inner sep=0pt, align=left]
    at ($(group c2r1.south)+(0,-0.95cm)$) {%
    \tikz[baseline=-0.6ex]{\path[pairdqi swatch] (0,-0.1) rectangle (0.32,0.1);}\;
      Quantum: two-fold multiplication (Theorem~\ref{thm:rm-pair-dqi})\\[2pt]
    \tikz[baseline=-0.6ex]{\path[q2 swatch] (0,-0.1) rectangle (0.32,0.1);}\;
      Classical: three-fold multiplication (Lemma~\ref{lem:classical-cube})\\[2pt]
    \tikz[baseline=-0.6ex]{\path[prior swatch] (0,-0.1) rectangle (0.32,0.1);}\;
      Prior classical algorithms: quadratic scale, $c>2$~\cite{ivanyos2012nil2,ivanyosSantha2015diagonal,imran2024zerosum,kothari2025sis}\\
    \hphantom{\tikz[baseline=-0.6ex]{\path[prior swatch] (0,-0.1) rectangle (0.32,0.1);}}\;
      (benchmark for comparison: their guarantee is for uniformly random $\matB$)\\[2pt]
    \tikz[baseline=-0.6ex]{\path[trivial swatch] (0,-0.1) rectangle (0.32,0.1);}\;
      trivial: $c\cntexp_q(\rho)>1$};
\end{tikzpicture}
\caption{Asymptotic sufficient regimes for the Reed--Muller support problem, with
$n=\cnt_q(v,\lfloor\rho v\rfloor)$ and $m=\min(\lfloor n^c\rfloor,q^v)$.
The axes give the degree per variable $\rho$ and the sampling exponent $c$.
Blue shading marks our quantum algorithm using two-fold multiplication; green hatching marks our
classical algorithm using three-fold multiplication, with thresholds from Equation~\eqref{eq:rm-overview-thresholds}.
Both guarantees give polynomial-time algorithms succeeding with high probability on a random $S$.
Strictly above the dashed curve $c\cntexp_q(\rho)=1$, the problem is eventually trivial:
$S=\FF_q^v$ and the all-ones vector is a solution (gray).
Orange hatching marks the nontrivial region with $c>2$.
For uniformly random $\matB$, prior classical
algorithms~\cite{ivanyos2012nil2,ivanyosSantha2015diagonal,imran2024zerosum,kothari2025sis}
find a kernel vector with no zero coordinates when $m\geq Kn^2$ for a constant $K$.
Every fixed $c>2$ eventually satisfies this bound; whether $c=2$ does depends on $K$.
That guarantee does not cover the Reed--Muller matrices considered here, so the orange region is a
benchmark for comparison, not a classical guarantee for these instances.
The candidate regimes for quantum advantage are blue with no hatching: covered by our quantum guarantee,
below $c=2$, and covered by neither our three-fold classical guarantee nor the trivial regime.
The figure makes no classical hardness claim.}
\label{fig:rm-phase-diagrams}
\end{figure}